\newtheorem{lemma}{Lemma}
\newtheorem{proposition}{Proposition}
\newtheorem{thm}{Theorem}
\newtheorem{definition}{Definition}
\newtheorem{corollary}{Corollary}
\newtheorem{assumption}{Assumption}
\def\E{\mathbb{E}} % the symbol E for expectation used the sans serif letter
\def\vec{\mathrm{vec}} % the symbol Var for covariance used the sans serif letter
\newcommand{\tp}{\intercal}
\newcommand{\brm}[1]{\bm{\mathrm{#1}}}
\newcommand{\bigO}{\ensuremath{\mathop{}\mathopen{}\mathcal{O}\mathopen{}}}
\newcommand{\smallO}{ \scalebox{0.7}{$\mathcal{O}$}}
\newcommand{\bigOp}{\bigO_\mathrm{p}}
\newcommand{\R}{\mathbb{R}}
\newcommand{\balpha}{\bm{\alpha}}
\newcommand{\bbeta}{\bm{\beta}}
\newcommand{\btheta}{\bm{\theta}}
\newcommand{\bX}{\brm{X}}
\newcommand{\argmin}{\ensuremath{\operatornamewithlimits{arg\,min}}}
\newcommand{\BNTR}{BroadcasTR}
\newcommand{\LS}{{\rm LS}}
\newcommand{\PLS}{{\rm PLS}}
\title{Broadcasted Nonparametric Tensor Regression}
\author[1,2]{Ya Zhou}
\author[2]{Raymond K. W. Wong}
\author[1]{Kejun He\thanks{\baselineskip=10pt Address for correspondence: Kejun He, Center for Applied Statistics, Institute of Statistics and Big Data, Renmin University of China, Beijing, China. Email: \href{mailto:kejunhe@ruc.edu.cn}{kejunhe@ruc.edu.cn}}}
\affil[1]{Center for Applied Statistics, Institute of Statistics and Big Data, Renmin University of China, Beijing, China}
\affil[2]{Department of Statistics, Texas A\&M University, College Station, TX, USA}
\date{}
\begin{document}
\maketitle
	
\begin{abstract}
	We propose a novel use of a broadcasting operation, which distributes univariate functions to all entries of the tensor covariate, to model the nonlinearity in tensor regression nonparametrically. 
	A penalized estimation and the corresponding algorithm are proposed. 
	Our theoretical investigation, which allows the dimensions of the tensor covariate to diverge, indicates that the proposed estimation yields a desirable convergence rate. 
	We also provide a minimax lower bound, which characterizes the optimality of the proposed estimator for a wide range of scenarios.
	Numerical experiments are conducted to confirm the theoretical findings, and they show that the proposed model has advantages over its existing linear counterparts.
\end{abstract}
Keywords: Nonlinear regression; Tensor low rank; Polynomial splines; 
Minimax lower bound; Elastic-net penalization. 

\section{Introduction}\label{sec:intro}
In recent years, tensor data have appeared widely across many different areas, such as clinical applications \citep{wang2014clinical}, computer vision \citep{lu2013multilinear}, genomics \citep{durham2018predictd}, neuroscience \citep{Zhou-Li-Zhu13}, and recommender systems \citep{zhu2018fairness}.
Uncovering the relationship between a tensor covariate $\brm{X}=(X_{i_1,i_2,\dots,i_D})\in\R^{p_1\times p_2\times \dots\times p_D}$ and a response variable often leads to an enhanced understanding of scientific and engineering problems.

Three major types of regression models are commonly used for tensor covariates, with different forms of response.
The first is scalar-on-tensor regression, i.e., the response is a scalar
\citep{Zhou-Li-Zhu13, zhao2014tensor, hou2015online, chen2019non}.
Within this category, there are methods that focus particularly on image covariates
\citep[][]{reiss2010functional,Zhou-Li14, wang2017generalized, kang2018scalar}.
The second is vector-on-tensor regression in which we have a vector response \citep{miranda2018tprm}.
The last method is tensor-on-tensor regression with a tensor output
\citep{hoff2015multilinear, lock2018tensor, raskutti2019convex}. 
In this work, we focus on the scalar-on-tensor regression model, and we denote the regression function by $m$.

Most tensor regression models have a strong assumption that the tensor covariate is able to predict the response through (known transformations of) linear functions. To date, very few studies have gone beyond this limitation.
On the application side, \citet{zhao2014tensor} and \citet{hou2015online} used Gaussian processes to model nonlinear effects of tensor covariates in video surveillance applications and neuroimaging analyses.
Their methods are geared toward prediction, but they lack interpretability and theoretical justification.
Moreover, the performance of these approaches relies heavily upon the choice of kernel function, which is not easy to design for efficiently harnessing the tensor structure.

Another class of methods incorporates nonlinearity through a more explicit function space by imposing low-rank structures on covariates.
\citet{kanagawa2016gaussian} considered a regression model with respect to a rank-one tensor covariate, namely, $\brm{X}=\brm{x}_1\circ\brm{x}_2\circ \cdots\circ \brm{x}_D$, where $\circ$ denotes the outer product and $\brm{x}_d$ is a $p_d$-dimensional vector, $d =1, \dots, D$. 
The authors then proposed a particular regression function $m(\brm{X})=\sum_{r=1}^R \prod_{d=1}^D g_{d,r}(\brm{x}_{d})$, where $g_{d,r}$ is a $p_d$-variate function to be estimated for $d =1, \dots, D$ and $r= 1,\dots, R$. 
\citet{imaizumi2016doubly} extended this work to a higher-rank tensor covariate $\mathbf{X}$ that is assumed to have the smallest CANDECOMP/PARAFAC (CP) decomposition
$\brm{X}=\sum_{q=1}^Q \lambda_q \brm{x}_{q,1} \circ \brm{x}_{q,2} \circ\cdots \circ  \brm{x}_{q,D}$,
with the unit Euclidean norm $\Vert \mathbf{x}_{q,d} \Vert_2=1$ and $\lambda_Q \ge \lambda_{Q-1} \ge \cdots \ge \lambda_1\ge 0$.
The regression function $m$ in \citet{imaizumi2016doubly} is modeled by
\begin{equation} \label{imaizumi}
	m(\brm{X})=\sum_{r=1}^R \sum_{q=1}^Q \lambda_q\prod_{d=1}^D g_{d,r}(\brm{x}_{q,d}),
\end{equation}
where $g_{d,r}$'s are unknown functions as in \citet{kanagawa2016gaussian}.
%where $g_{d,r}$ is an unknown $p_d$-variate function for $d =1, \dots, D$ and $r= 1,\dots, R$.
When $Q=1$, \eqref{imaizumi} recovers the model of \citet{kanagawa2016gaussian}.
To significantly reduce the number of unknown functions to be estimated, a small value of $Q$ is usually recommended.
However, simultaneously estimating $D R$ unknown multivariate functions, $g_{d,r}$'s, remains a challenging task.
For example, given a $64\times 64\times 64$ third-order image covariate (i.e., $D=3$; $p_1=p_2=p_3=64$), model \eqref{imaizumi} involves $3R$ unknown multivariate functions, each of which has input dimension of 64, and thus the corresponding nonparametric estimation is hindered by the curse of dimensionality.
This aligns with the finding of \citet{imaizumi2016doubly} that the asymptotic convergence rate of this model grows exponentially with $\max_d{p_d}$.
Furthermore, this model is difficult to interpret since nonlinear modeling is directly built upon the CP representation of the covariates, which may not be unique \citep{stegeman2007kruskal}.
As noted by one reviewer, in a recent manuscript \citep{hao2019sparse} a sparse additive tensor regression model was proposed that can be regarded as a generalization of the spline approximation \eqref{eqn:broadcastSplineExp} (in Section~\ref{sec:3}) of our model \eqref{eqn:broadcast} (in Section~\ref{sec:2}). Like in the aforementioned models, the added flexibility may likely result in better predictions. However, their model is based upon a low-rank assumption of the finite-dimensional tensor of spline coefficients. As the number of spline basis functions changes, the interpretation of the low-rank structure also varies, which results in ambiguity in the target nonparametric function class and is difficult to interpret.

Therefore, although these existing nonlinear models demonstrate successes
in certain applications, they suffer from the curse of dimensionality and/or possess weak interpretability.
In this article, we propose an alternative that addresses both of these issues. Our proposed model extends the low-rank tensor linear model developed by \citet{Zhou-Li-Zhu13}, which we briefly describe as follows.
Given a vector covariate $\brm{z}\in\mathbb{R}^{p_0}$, a tensor
covariate $\brm{X}\in\mathbb{R}^{p_1\times p_2\times \cdots\times p_D}$  and a response variable $y\in\mathcal{Y}\subseteq\mathbb{R}$, \citet{Zhou-Li-Zhu13} proposed a generalized tensor linear model through a predetermined link function $g$:
\[
	g\{\E(y|\brm{z},\bX)\} = \nu + \bm{\gamma}^\intercal\brm{z} + \langle \brm{B}, \brm{X}\rangle,
\]
where $\nu \in\R$, $\bm{\gamma}\in\R^{p_0}$, and $\brm{B}\in\R^{p_1\times p_2\times\dots\times p_D}$ are unknown parameters, and $ \langle \cdot , \cdot \rangle$ denotes the componentwise inner product, i.e., $\langle \brm{B}, \brm{X}\rangle=\sum_{i_1=1}^{p_1}\sum_{i_2=1}^{p_2}\cdots \sum_{i_D=1}^{p_D} B_{i_1,\dots,i_D}X_{i_1,\dots, i_D}$. 
For notational simplicity, we write $\sum_{i_1=1}^{p_1}\sum_{i_2=1}^{p_2}\cdots \sum_{i_D=1}^{p_D} $ as $\sum_{i_1,\dots, i_d}$ as long as the starting and ending indices are clear from the context.
In particular, the coefficient tensor $\brm{B}$ is assumed to admit a CP decomposition $\brm{B}=\sum^R_{r=1}\bbeta_{r,1} \circ \bbeta_{r,2} \circ\cdots \circ \bbeta_{r,D}$, where $\bbeta_{r,d}\in\mathbb{R}^{p_d}$ for $d= 1, \dots, D$ and $r=1, \dots, R$, and $R$ is the CP rank.
Combined with a sparsity-inducing penalty, \citet{Zhou-Li-Zhu13} and \citet{Zhou-Li14} showed that the low-rank coefficient tensor $\brm{B}$ can be used to identify the regions (entries) of $\brm{X}$ that are relevant for predicting the response variable. 
Apart from low-CP-rank structures, we note that there is another popular and successful low-rank modeling based on the multilinear rank \citep[e.g.,][]{li2018tucker, zhang2020islet}. In this paper, we focus only on low-rank modeling via the CP rank.

In many real-world applications, entries within some regions of the tensor (especially images) share similar effects due to certain spatial structures.
For example, \citet{Zhou-Li-Zhu13} and \citet{miranda2018tprm} both provided evidence that entries within a spatially-clustered region of a brain are associated similarly with (an indicator of) some diseases. 
Motivated by these observations and the possibility of nonlinear effects, we propose to ``broadcast'' similar nonlinear relationships to different entries of the tensor covariate.
At a high level, we model nonlinear effects by univariate nonparametric functions, which are supposed to be applied to an individual entry.
These univariate functions are then shared by every entry to achieve a clustered effect.
We call the operation of distributing a univariate function to all entries ``broadcasting".
Additional scaling coefficients are used to linearly scale the effects of the univariate functions. 
By regularizing these scaling coefficients, we can restrict the effects of certain univariate functions to smaller regions.
As shown by \citet{Zhou-Li-Zhu13} and \citet{Zhou-Li14}, a lasso-type regularization method alone may result in poor performance in region selection, while an additional low-rank constraint/regularization may produce more successful results.
Therefore we also restrict the scaling coefficients to be low-rank.
The proposed model can produce a reasonably complex yet interpretable approximation of the true regression function such that its main trend can be recovered.
We explain how to interpret the model and the proposed estimation in Sections~\ref{sec:broadcast} and~\ref{ssec:estimation}, respectively.

We summarize the main contributions of this work. Within the proposed model, all the aforementioned ideas are integrated into a (penalized) least squares framework based on spline approximation.
We develop an alternative updating algorithm as well as asymptotic rates of convergence for the proposed estimations.
Our theory includes the tensor linear model \citep{Zhou-Li-Zhu13} as a special case. 
Unlike \citet{Zhou-Li-Zhu13}, ours is of a high-dimensional nature, which allows $p_1,\dots,p_D$ to diverge with the sample size. 
We believe that this asymptotic framework is more relevant for many applications where the data, such as images, involve large values of $p_j$'s compared to the sample size. 
To construct the asymptotic analysis, we provide a novel restricted eigenvalue result, as well as a new entropy bound. 
To characterize the optimality, we also obtain a minimax lower bound. 
The rate of this lower bound matches that of the error upper bound of our estimator in a wide range of scenarios. 
Through extensive numerical experiments, we demonstrate the power of the proposed broadcasted nonparametric tensor regression. 
Overall, the proposed method responds to the growing need for an interpretable nonlinear tensor regression model with rigorous theoretical supports.

The remainder of this paper is organized as follows. 
In Section~\ref{sec:2}, we introduce the broadcasted nonparametric model.
The proposed estimation method with the algorithm and the corresponding theoretical results are presented in Sections~\ref{sec:3} and~\ref{sec:4}, respectively.
The practical performance of the proposed method can be found in Section~\ref{sec:5}.
We provide additional discussions of the proposed method in Section~\ref{sec:discuss}.
Technical details are given in a separate online supplemental document.

\section{Model} \label{sec:2}
Consider a tensor covariate 
$\mathbf{X}\in \mathcal{X}:=\mathcal{I}^{p_1\times\cdots \times p_D}$, 
where $\mathcal{I}\subseteq\mathbb{R}$.
Unless otherwise specified, we assume $\mathcal{I}=[0,1]$ for simplicity. 
Throughout this paper, we focus on the general model
\begin{equation}\label{eqn:genReg}
	y =  m(\bX) + \epsilon,
\end{equation}
where $m:\mathcal{X}\rightarrow \R$ is an unknown regression function of interest and $\epsilon$ is a random error with a mean of zero.
The observed data $\{(y_i, \bX_i)\}_{i=1}^n$ are modeled as i.i.d.~copies of $(y,\bX)$.  
In this section, we propose an interpretable nonparametric model for the regression function $m$.

\subsection{Common nonparametric strategies: curse of dimensionality}

As discussed in Section~\ref{sec:intro}, existing methods for nonparametric tensor regression either lack interpretability or suffer from a slow rate of convergence due to the curse of dimensionality. 
The issue of dimensionality also occurs if one adopts common multivariate nonparametric regression models, such as additive models, by directly flattening the tensor covariate into a vector. 
We briefly discuss these standard nonparametric regression methods here, and we highlight the issue of dimensionality, which motivates the proposed model in Section~\ref{sec:broadcast}.

One of the most general models for the regression function $m$ is an unstructured (smooth) mapping from $\mathcal{X}$ to $\R$.
Despite its flexibility, this model suffers heavily from high dimensionality.
For a typical $64\times 64\times 64$ image, we need to estimate a $64^3$-variate function in a nonparametric sense, which is generally impractical.

A common alternative in the literature of nonparametric regression
is an additive form of the regression function \citep{Stone85, Hastie-Tibshirani90}, i.e.,
\begin{equation}\label{eqn:additiveMod}
	m(\brm{X})= \nu + \sum_{i_1, \dots, i_D} m_{i_1, \dots, i_D}(X_{i_1, \dots, i_D}),
\end{equation}
where $\nu \in \mathbb{R}$ is an intercept term and $m_{i_1, \dots, i_D}$'s are unknown univariate functions. 
For identifiability, a common assumption is that $m_{i_1, \dots, i_D}$'s are centered, i.e., $\int_0^1 m_{i_1, \dots, i_D}(x) \mathrm{d}x = 0$ for $i_d=1,\dots, p_d$ and $d=1,\dots D$.
However, this model simultaneously estimates $s=\prod_{d=1}^D p_d$ univariate functions, and consistent estimation is generally impossible for $s \ge n$.
In this case, the sparsity assumption
\cite[][]{lin2006component, Meier-Van-De-Geer-Buhlmann09, ravikumar2009sparse, huang2010variable,Raskutti-Wainwright-Yu12, fan2011nonparametric, chen2018error}
may enable consistent estimation of the regression function.
Nevertheless, general sparse estimators, when applied to a vectorized tensor covariate, ignore the potential tensor structure and may produce a large bias, especially when the sample size $n$ is much smaller than $s$. 

Another common modeling approach is single index models \citep[][]{ichimura1993semiparametric,horowitz1996direct}, in the sense that
\[
	m(\brm{X}) = g\bigg(\sum_{i_1,\dots,i_D}a_{i_1,\dots,i_D}X_{i_1,\dots,i_D}\bigg),
\]
where $g$ is an unknown univariate function and $a_{i_1,\dots,i_D}$'s are unknown coefficients.
Although there is only one univariate function to be estimated, this model involves a large number of coefficients, sometimes larger than the sample size.
One can also impose a sparsity assumption on the coefficients. The readers are referred to \citet{alquier2013sparse}, \citet{radchenko2015high}, and references therein for more details on this approach.
However, this approach ignores the tensor structure, and this problem can be aggravated by more complicated index models such as the additive index model and multiple indices model.

We propose a novel nonlinear regression model that makes use of the tensor structure and has significantly fewer parameters to be estimated when compared with those in existing nonlinear alternatives as surveyed in Section~\ref{sec:intro}.
Our model is closely related to additive models, but we have overcome the aforementioned problems.

\subsection{Low-rank modeling with broadcasting}
\label{sec:broadcast}
As mentioned above, additive models involve too many functions.
A simple remedy is to restrict all entries to share the same function:
\[
	m(\brm{X})=\frac{1}{s}\sum_{i_1, \dots, i_D} f(X_{i_1,\dots, i_D}),
\]
where $f$ is a univariate function residing in a function class
$\mathcal{H}$ to be specified later and the scaling $s^{-1}$ is introduced to match
our proposed model  \eqref{eqn:broadcast}.
In other words, we \textit{broadcast}\footnote{A term widely used for similar operations in programming languages such as Python and R.}
the same function $f$ to every entry.
We formally define the broadcasting operator 
$\mathcal{B} : \mathcal{H} \times \mathcal{X} \to  \mathbb{R}^{p_1 \times \cdots \times p_D}$ by 
\[
	(\mathcal{B}(f, \mathbf{X}))_{i_1, \dots, i_D}=f(X_{i_1, \dots, i_D})
	\quad \mbox{for } i_d =1,\dots,p_d ~~ \mbox{and}~~ d=1,\dots,D.
\]
Figure~\ref{fig:tenreg:broadcast} depicts an example of the broadcasting operation.
In many real-life applications, entries within some regions of the tensor (especially images) have similar effects due to certain spatial structures, such as a spatially-clustered effect.
For instance, \citet{Zhou-Li-Zhu13} showed that voxels within two brain subregions have a similar association with attention deficit hyperactivity disorder.
\citet{miranda2018tprm} demonstrated that voxels within several subregions of a brain have a spatially-clustered association with Alzheimer's disease.
Hence, broadcasting a nonlinear relationship (with the response) is a well-motivated modeling strategy.
However, the assumption that \textit{every} entry has the same nonlinear effect on the response is very restrictive.
Specifically, in image data, there are usually only one or a few clusters of entries that are related to the response.
Therefore, we move beyond a simple broadcasting structure to achieve more adaptive modeling.

\begin{figure}[htpb]
	\centering
	\scalebox{0.75}{\begin{tikzpicture}[x=10pt,y=10pt]
		\definecolor{fillColor}{RGB}{255,255,255}
		\path[use as bounding box,fill=fillColor,fill opacity=0.00] (1,0) rectangle (29, 10);
		\begin{scope}
		\definecolor{drawColor}{RGB}{0,0,0}
		\definecolor{Color1}{RGB}{27.34766,152.27595,2.346938}
		\definecolor{Color2}{RGB}{10.45350 ,42.99840 ,174.897663}
		\definecolor{Color3}{RGB}{17.38909,111.01420  ,14.839368}
		\definecolor{Color4}{RGB}{32.64921,114.50291 ,185.566351}
		\definecolor{Color5}{RGB}{78.23972,217.17453 , 86.712197}
		\definecolor{Color6}{RGB}{27.03349 ,46.77495 ,103.548175}
		\definecolor{Color7}{RGB}{227.52884, 17.06524 , 26.200784}
		\definecolor{Color8}{RGB}{106.35789, 74.84528 , 98.419407}
		\definecolor{Color9}{RGB}{247.23560, 15.17681 ,117.576048}
		\definecolor{Color10}{RGB}{18.77518,222.40733 ,244.617004}
		\definecolor{Color11}{RGB}{213.74583,192.64365 , 58.817100}
		\definecolor{Color12}{RGB}{167.69398,113.26028 ,243.650050}
		\definecolor{Color13}{RGB}{214.37211 ,39.97938  ,93.617792}
		\definecolor{Color14}{RGB}{211.33603, 98.65785  , 9.523142}
		\definecolor{Color15}{RGB}{210.37305,146.29335 ,117.031716}
		\definecolor{Color16}{RGB}{59.63274,147.39147 ,185.389349}
		
		\definecolor{Color17}{RGB}{ 116.03629 193.29162 171.86789}
		\definecolor{Color18}{RGB}{ 98.59451 220.20219  20.28738}
		\definecolor{Color19}{RGB}{143.13294  15.52774 181.79779}
		\definecolor{Color20}{RGB}{ 87.28554 240.67911 231.05086}
		
		\definecolor{Color21}{RGB}{ 129.5846 104.4097 176.8328}
		\definecolor{Color22}{RGB}{101.6609 216.9854 201.4594}
		\definecolor{Color23}{RGB}{ 120.8637 117.8650 101.0426}
		\definecolor{Color24}{RGB}{ 92.94002 230.44065 125.66912}

		%%% rectangles
		% 1 
		\path[draw=drawColor,line width= 0.1pt,line join=round, fill=Color1] ( -3.5+13, 7) rectangle (-1.5+13,9);
		\path[draw=drawColor,line width= 0.1pt,line join=round, fill=Color2] ( -3.5+2+13, 7) rectangle (-1.5+2+13,9);
		\path[draw=drawColor,line width= 0.1pt,line join=round, fill=Color3] ( -3.5+4+13, 7) rectangle (-1.5+4+13,9);
		\path[draw=drawColor,line width= 0.1pt,line join=round, fill=Color4] ( -3.5+6+13, 7) rectangle (-1.5+6+13,9);
		
		\path[draw=drawColor,line width= 0.1pt,line join=round, fill=Color5] ( -3.5+13, 5) rectangle (-1.5+13,7);
		\path[draw=drawColor,line width= 0.1pt,line join=round, fill=Color6] ( -3.5+2+13, 5) rectangle (-1.5+2+13,7);
		\path[draw=drawColor,line width= 0.1pt,line join=round, fill=Color7] ( -3.5+4+13, 5) rectangle (-1.5+4+13,7);
		\path[draw=drawColor,line width= 0.1pt,line join=round, fill=Color8] ( -3.5+6+13, 5) rectangle (-1.5+6+13,7);
		
		\path[draw=drawColor,line width= 0.1pt,line join=round, fill=Color9] ( -3.5+13, 3) rectangle (-1.5+13,5);
		\path[draw=drawColor,line width= 0.1pt,line join=round, fill=Color10] ( -3.5+2+13, 3) rectangle (-1.5+2+13,5);
		\path[draw=drawColor,line width= 0.1pt,line join=round, fill=Color11] ( -3.5+4+13, 3) rectangle (-1.5+4+13,5);
		\path[draw=drawColor,line width= 0.1pt,line join=round, fill=Color12] ( -3.5+6+13, 3) rectangle (-1.5+6+13,5);
		
		\path[draw=drawColor,line width= 0.1pt,line join=round, fill=Color13] ( -3.5+13, 1) rectangle (-1.5+13,3);
		\path[draw=drawColor,line width= 0.1pt,line join=round, fill=Color14] ( -3.5+2+13, 1) rectangle (-1.5+2+13,3);
		\path[draw=drawColor,line width= 0.1pt,line join=round, fill=Color15] ( -3.5+4+13, 1) rectangle (-1.5+4+13,3);
		\path[draw=drawColor,line width= 0.1pt,line join=round, fill=Color16] ( -3.5+6+13, 1) rectangle (-1.5+6+13,3);
		
		% 2
		\path[draw=Color1,line width= 0.1pt,line join=round, fill=Color1] ( 27.8+10-14, 7.5) rectangle (28.55+10-14,8.3);
		\path[draw=Color2,line width= 0.1pt,line join=round, fill=Color2] ( 27.8+2+10-14, 7.5) rectangle (28.55+2+10-14,8.3);
		\path[draw=Color3,line width= 0.1pt,line join=round, fill=Color3] ( 27.8+4+10-14, 7.5) rectangle (28.55+4+10-14,8.3);
		\path[draw=Color4,line width= 0.1pt,line join=round, fill=Color4] ( 27.8+6+10-14, 7.5) rectangle (28.55+6+10-14,8.3);
		
		\path[draw=Color5,line width= 0.1pt,line join=round, fill=Color5] ( 27.8+10-14, 7.5-2) rectangle (28.55+10-14,8.3-2);
		\path[draw=Color6,line width= 0.1pt,line join=round, fill=Color6] ( 27.8+2+10-14, 7.5-2) rectangle (28.55+2+10-14,8.3-2);
		\path[draw=Color7,line width= 0.1pt,line join=round, fill=Color7] ( 27.8+4+10-14, 7.5-2) rectangle (28.55+4+10-14,8.3-2);
		\path[draw=Color8,line width= 0.1pt,line join=round, fill=Color8] ( 27.8+6+10-14, 7.5-2) rectangle (28.55+6+10-14,8.3-2);
		
		\path[draw=Color9,line width= 0.1pt,line join=round, fill=Color9] ( 27.8+10-14, 7.5-4) rectangle (28.55+10-14,8.3-4);
		\path[draw=Color10,line width= 0.1pt,line join=round, fill=Color10] ( 27.8+2+10-14, 7.5-4) rectangle (28.55+2+10-14,8.3-4);
		\path[draw=Color11,line width= 0.1pt,line join=round, fill=Color11] ( 27.8+4+10-14, 7.5-4) rectangle (28.55+4+10-14,8.3-4);
		\path[draw=Color12,line width= 0.1pt,line join=round, fill=Color12] ( 27.8+6+10-14, 7.5-4) rectangle (28.55+6+10-14,8.3-4);
		
		\path[draw=Color13,line width= 0.1pt,line join=round, fill=Color13] ( 27.8+10-14, 7.5-6) rectangle (28.55+10-14,8.3-6);
		\path[draw=Color14,line width= 0.1pt,line join=round, fill=Color14] ( 27.8+2+10-14, 7.5-6) rectangle (28.55+2+10-14,8.3-6);
		\path[draw=Color15,line width= 0.1pt,line join=round, fill=Color15] ( 27.8+4+10-14, 7.5-6) rectangle (28.55+4+10-14,8.3-6);
		\path[draw=Color16,line width= 0.1pt,line join=round, fill=Color16] ( 27.8+6+10-14, 7.5-6) rectangle (28.55+6+10-14,8.3-6);

		\path[draw=drawColor,line width= 0.1pt,line join=round] ( 36+1-14, 1) rectangle (44+1-14,9);
		\draw[color=drawColor, line width=0.1pt, fill=gray!30] (36+1-14,3) --++ (8, 0) (36+1-14,5) --++ (8, 0) 
		(36+1-14,7) --++ (8, 0) (36+1-14,9) --++ (8, 0);
		\draw[color=drawColor, line width=0.1pt] (38+1-14,1) --++ (0, 8) (40+1-14,1) --++ (0, 8)
		(42+1-14,1) --++ (0, 8) (44+1-14,1) --++ (0, 8);

		%% 3 rectangle
		% X_{31}
		\node[text=drawColor,anchor=base,inner sep=0pt, outer sep=0pt, scale=  0.7] at (26.35+1+10-14,7.7) { $f$} ;
		\node[text=drawColor,anchor=base,inner sep=0pt, outer sep=0pt, scale=  0.7] at (26.6+1+10-14,7.7) { $($} ;
		\node[text=drawColor,anchor=base,inner sep=0pt, outer sep=0pt, scale=  0.7] at (27.75+1+10-14,7.7) { $)$} ;
		\node[text=drawColor,anchor=base,inner sep=0pt, outer sep=0pt, scale=  0.7] at (28.35+1+10-14,7.7) { $f$} ;
		\node[text=drawColor,anchor=base,inner sep=0pt, outer sep=0pt, scale=  0.7] at (28.6+1+10-14,7.7) { $($} ;
		\node[text=drawColor,anchor=base,inner sep=0pt, outer sep=0pt, scale=  0.7] at (29.75+1+10-14,7.7) { $)$} ;
		\node[text=drawColor,anchor=base,inner sep=0pt, outer sep=0pt, scale=  0.7] at (30.35+1+10-14,7.7) { $f$} ;
		\node[text=drawColor,anchor=base,inner sep=0pt, outer sep=0pt, scale=  0.7] at (30.6+1+10-14,7.7) { $($} ;
		\node[text=drawColor,anchor=base,inner sep=0pt, outer sep=0pt, scale=  0.7] at (31.75+1+10-14,7.7) { $)$} ;
		\node[text=drawColor,anchor=base,inner sep=0pt, outer sep=0pt, scale=  0.7] at (32.35+1+10-14,7.7) { $f$} ;
		\node[text=drawColor,anchor=base,inner sep=0pt, outer sep=0pt, scale=  0.7] at (32.6+1+10-14,7.7) { $($} ;
		\node[text=drawColor,anchor=base,inner sep=0pt, outer sep=0pt, scale=  0.7] at (33.75+1+10-14,7.7) { $)$} ;

		\node[text=drawColor,anchor=base,inner sep=0pt, outer sep=0pt, scale=  0.7] at (26.35+1+10-14,5.7) { $f$} ;
		\node[text=drawColor,anchor=base,inner sep=0pt, outer sep=0pt, scale=  0.7] at (26.6+1+10-14,5.7) { $($} ;
		\node[text=drawColor,anchor=base,inner sep=0pt, outer sep=0pt, scale=  0.7] at (27.75+1+10-14,5.7) { $)$} ;
		\node[text=drawColor,anchor=base,inner sep=0pt, outer sep=0pt, scale=  0.7] at (28.35+1+10-14,5.7) { $f$} ;
		\node[text=drawColor,anchor=base,inner sep=0pt, outer sep=0pt, scale=  0.7] at (28.6+1+10-14,5.7) { $($} ;
		\node[text=drawColor,anchor=base,inner sep=0pt, outer sep=0pt, scale=  0.7] at (29.75+1+10-14,5.7) { $)$} ;
		\node[text=drawColor,anchor=base,inner sep=0pt, outer sep=0pt, scale=  0.7] at (30.35+1+10-14,5.7) { $f$} ;
		\node[text=drawColor,anchor=base,inner sep=0pt, outer sep=0pt, scale=  0.7] at (30.6+1+10-14,5.7) { $($} ;
		\node[text=drawColor,anchor=base,inner sep=0pt, outer sep=0pt, scale=  0.7] at (31.75+1+10-14,5.7) { $)$} ;
		\node[text=drawColor,anchor=base,inner sep=0pt, outer sep=0pt, scale=  0.7] at (32.35+1+10-14,5.7) { $f$} ;
		\node[text=drawColor,anchor=base,inner sep=0pt, outer sep=0pt, scale=  0.7] at (32.6+1+10-14,5.7) { $($} ;
		\node[text=drawColor,anchor=base,inner sep=0pt, outer sep=0pt, scale=  0.7] at (33.75+1+10-14,5.7) { $)$} ;
		
		\node[text=drawColor,anchor=base,inner sep=0pt, outer sep=0pt, scale=  0.7] at (26.35+1+10-14,3.7) { $f$} ;
		\node[text=drawColor,anchor=base,inner sep=0pt, outer sep=0pt, scale=  0.7] at (26.6+1+10-14,3.7) { $($} ;
		\node[text=drawColor,anchor=base,inner sep=0pt, outer sep=0pt, scale=  0.7] at (27.75+1+10-14,3.7) { $)$} ;
		\node[text=drawColor,anchor=base,inner sep=0pt, outer sep=0pt, scale=  0.7] at (28.35+1+10-14,3.7) { $f$} ;
		\node[text=drawColor,anchor=base,inner sep=0pt, outer sep=0pt, scale=  0.7] at (28.6+1+10-14,3.7) { $($} ;
		\node[text=drawColor,anchor=base,inner sep=0pt, outer sep=0pt, scale=  0.7] at (29.75+1+10-14,3.7) { $)$} ;
		\node[text=drawColor,anchor=base,inner sep=0pt, outer sep=0pt, scale=  0.7] at (30.35+1+10-14,3.7) { $f$} ;
		\node[text=drawColor,anchor=base,inner sep=0pt, outer sep=0pt, scale=  0.7] at (30.6+1+10-14,3.7) { $($} ;
		\node[text=drawColor,anchor=base,inner sep=0pt, outer sep=0pt, scale=  0.7] at (31.75+1+10-14,3.7) { $)$} ;
		\node[text=drawColor,anchor=base,inner sep=0pt, outer sep=0pt, scale=  0.7] at (32.35+1+10-14,3.7) { $f$} ;
		\node[text=drawColor,anchor=base,inner sep=0pt, outer sep=0pt, scale=  0.7] at (32.6+1+10-14,3.7) { $($} ;
		\node[text=drawColor,anchor=base,inner sep=0pt, outer sep=0pt, scale=  0.7] at (33.75+1+10-14,3.7) { $)$} ;
		
		\node[text=drawColor,anchor=base,inner sep=0pt, outer sep=0pt, scale=  0.7] at (26.35+1+10-14,1.7) { $f$} ;
		\node[text=drawColor,anchor=base,inner sep=0pt, outer sep=0pt, scale=  0.7] at (26.6+1+10-14,1.7) { $($} ;
		\node[text=drawColor,anchor=base,inner sep=0pt, outer sep=0pt, scale=  0.7] at (27.75+1+10-14,1.7) { $)$} ;
		\node[text=drawColor,anchor=base,inner sep=0pt, outer sep=0pt, scale=  0.7] at (28.35+1+10-14,1.7) { $f$} ;
		\node[text=drawColor,anchor=base,inner sep=0pt, outer sep=0pt, scale=  0.7] at (28.6+1+10-14,1.7) { $($} ;
		\node[text=drawColor,anchor=base,inner sep=0pt, outer sep=0pt, scale=  0.7] at (29.75+1+10-14,1.7) { $)$} ;
		\node[text=drawColor,anchor=base,inner sep=0pt, outer sep=0pt, scale=  0.7] at (30.35+1+10-14,1.7) { $f$} ;
		\node[text=drawColor,anchor=base,inner sep=0pt, outer sep=0pt, scale=  0.7] at (30.6+1+10-14,1.7) { $($} ;
		\node[text=drawColor,anchor=base,inner sep=0pt, outer sep=0pt, scale=  0.7] at (31.75+1+10-14,1.7) { $)$} ;
		\node[text=drawColor,anchor=base,inner sep=0pt, outer sep=0pt, scale=  0.7] at (32.35+1+10-14,1.7) { $f$} ;
		\node[text=drawColor,anchor=base,inner sep=0pt, outer sep=0pt, scale=  0.7] at (32.6+1+10-14,1.7) { $($} ;
		\node[text=drawColor,anchor=base,inner sep=0pt, outer sep=0pt, scale=  0.7] at (33.75+1+10-14,1.7) { $)$} ;
		
		%% 4 rectangle

		%%% notations
		\node[text=drawColor,anchor=base,inner sep=0pt, outer sep=0pt, scale=  2.] at (-3.25-3+5,4.25) {$\mathcal{B}$};
		\node[text=drawColor,anchor=base,inner sep=0pt, outer sep=0pt, scale=  2.] at (-3.25-3+9.5,4.25) {$f$};
		\node[text=drawColor,anchor=base,inner sep=0pt, outer sep=1pt, scale=  2.5] at (-1.75-3+5,4.25) { $ \Bigg ($};
		\node[text=drawColor,anchor=base,inner sep=0pt, outer sep=1pt, scale=  2.5] at (8.75-3+13,4.25) { $ \Bigg )$};
		\node[text=drawColor,anchor=base,inner sep=0pt, outer sep=0pt, scale=  2] at (10-3-1, 4.25) { $,$} ;

		\node[text=drawColor,anchor=base,inner sep=0pt, outer sep=0pt, scale= 2.5] at (23-2,4.4) {$  = $ };

		% description
		%%%%left hand side
		\node[text=drawColor,anchor=base,inner sep=0pt, outer sep=0pt, scale= 1] at (0.5+9, 9.5){};
		\node[text=drawColor,anchor=base,inner sep=0pt, outer sep=0pt, scale= 1] at (0.5+13,-0.5) {$\brm{X}$};
		
		%%%right hand side
		\node[text=drawColor,anchor=base,inner sep=0pt, outer sep=0pt, scale= 1] at (30+1+10-14,9.5){};

		\end{scope}
		\end{tikzpicture}  }
	\caption{An example of the broadcasting operation for a tensor covariate of order 2, i.e., $D=2$.
	Different colors represent different possible values that the tensor entries may take.}\label{fig:tenreg:broadcast}
\end{figure}

For any two tensors $\brm A=(A_{i_1,\dots,i_D})$ and $\brm B=(B_{i_1,\dots,i_D})$ of the same dimensions, we define $\langle \brm A,\brm B\rangle=\sum_{i_1,\dots,i_D} A_{i_1,\dots,i_D}B_{i_1,\dots,i_D}$.
Motivated by \citet{Zhou-Li-Zhu13}, we use the (low-rank) tensor structure to discover important regions of the tensor so as to broadcast a nonparametric model for such regions.
We propose the following broadcasted nonparametric regression model
\begin{equation}\label{eqn:broadcast}
	m(\brm{X}) = \nu + \frac{1}{s}\sum^R_{r=1} \langle
	\bbeta_{r,1} \circ \bbeta_{r,2} \circ\cdots \circ \bbeta_{r,D},
	F_r(\brm{X}) \rangle,
\end{equation}
where
$\nu\in\mathbb{R}$, $F_r(\mathbf{X})=\mathcal{B}(f_r, \brm{X})$,  and $\bm{\beta}_{r,d}\in\mathbb{R}^{p_d}$ for $r=1, \dots, R$ and $d=1,\dots, D$.
Here $f_r\in\mathcal{H}$ admits a nonparametric modeling specified by the (infinite-dimensional) function class $\mathcal{H}$.
To separate the constant effect from $f_r$'s, we impose the conditions $\int_0^1 f_r(x) \mathrm{d}x=0$ for $r=1,\dots, R$.
Following convention \citep[e.g.,][]{Stone85}, $\mathcal{H}$ is assumed to be a class of smooth functions with some H\"older conditions; the details are specified in Section~\ref{sec:4}.
In this model, there are $R$ different components, each of which is composed of a univariate function $f_r$ to be broadcasted, and a rank-one scaling (coefficient) tensor $\bm \beta_{r,1} \circ\cdots \circ \bm \beta_{r,D}$ to linearly scale the effect across different entries.
The proposed model involves only univariate broadcasted functions and rank-$1$ scaling tensors. 
Thus, this model significantly reduces the problem of estimating the $(\prod_{d=1}^D p_d)$-variate regression function in the general model \eqref{eqn:genReg} to the problem of estimating $R$ univariate broadcasted functions and the associated $(\sum_{d=1}^D p_d)$-dimensional scaling coefficients.
We note that a further structure assumption on the scaling coefficients can be imposed to alleviate the estimation difficulty when $p_d$'s are large compared with the sample size.
Our estimation framework allows the incorporation of a general penalty function for this purpose.
Section~\ref{sec:3} contains the details.
For example, if an appropriate sparsity-inducing penalty is adopted, a component can be made specifically concentrated on a subregion of the tensor.
We demonstrate the scaling effect in Figure~\ref{fig:tenreg:OneComponent}.
Several components can be combined to characterize different nonlinear effects adapted to different subregions.
We provide two simple examples of $D=2$ depicted in Figure~\ref{fig:tenreg}, where the shaded regions correspond to nonzero entries in the corresponding scaling tensors.
In the left panel, there are two rank-one regions (shaded) with different nonlinear functions; in the right panel, there is a rank-two region formed by two scaling tensors with a shared nonlinear effect ($f_1=f_2$).

\begin{figure}[htpb]
	\centering
	\scalebox{0.72}{\begin{tikzpicture}[x=10pt,y=10pt]
		\definecolor{fillColor}{RGB}{255,255,255}
		\path[use as bounding box,fill=fillColor,fill opacity=0.00] (-2,0) rectangle (40, 10);
		\begin{scope}
		\definecolor{drawColor}{RGB}{0,0,0}
		\definecolor{Color1}{RGB}{27.34766,152.27595,2.346938}
		\definecolor{Color2}{RGB}{10.45350 ,42.99840 ,174.897663}
		\definecolor{Color3}{RGB}{17.38909,111.01420  ,14.839368}
		\definecolor{Color4}{RGB}{32.64921,114.50291 ,185.566351}
		\definecolor{Color5}{RGB}{78.23972,217.17453 , 86.712197}
		\definecolor{Color6}{RGB}{27.03349 ,46.77495 ,103.548175}
		\definecolor{Color7}{RGB}{227.52884, 17.06524 , 26.200784}
		\definecolor{Color8}{RGB}{106.35789, 74.84528 , 98.419407}
		\definecolor{Color9}{RGB}{247.23560, 15.17681 ,117.576048}
		\definecolor{Color10}{RGB}{18.77518,222.40733 ,244.617004}
		\definecolor{Color11}{RGB}{213.74583,192.64365 , 58.817100}
		\definecolor{Color12}{RGB}{167.69398,113.26028 ,243.650050}
		\definecolor{Color13}{RGB}{214.37211 ,39.97938  ,93.617792}
		\definecolor{Color14}{RGB}{211.33603, 98.65785  , 9.523142}
		\definecolor{Color15}{RGB}{210.37305,146.29335 ,117.031716}
		\definecolor{Color16}{RGB}{59.63274,147.39147 ,185.389349}
		
		\definecolor{Color17}{RGB}{ 116.03629 193.29162 171.86789}
		\definecolor{Color18}{RGB}{ 98.59451 220.20219  20.28738}
		\definecolor{Color19}{RGB}{143.13294  15.52774 181.79779}
		\definecolor{Color20}{RGB}{ 87.28554 240.67911 231.05086}
		
		\definecolor{Color21}{RGB}{ 129.5846 104.4097 176.8328}
		\definecolor{Color22}{RGB}{101.6609 216.9854 201.4594}
		\definecolor{Color23}{RGB}{ 120.8637 117.8650 101.0426}
		\definecolor{Color24}{RGB}{ 92.94002 230.44065 125.66912}

		%%% ranktangle
		% 1 
		\path[draw=drawColor,line width= 0.1pt,line join=round, fill=Color1] ( -3.5+13, 7) rectangle (-1.5+13,9);
		\path[draw=drawColor,line width= 0.1pt,line join=round, fill=Color2] ( -3.5+2+13, 7) rectangle (-1.5+2+13,9);
		\path[draw=drawColor,line width= 0.1pt,line join=round, fill=Color3] ( -3.5+4+13, 7) rectangle (-1.5+4+13,9);
		\path[draw=drawColor,line width= 0.1pt,line join=round, fill=Color4] ( -3.5+6+13, 7) rectangle (-1.5+6+13,9);
		
		\path[draw=drawColor,line width= 0.1pt,line join=round, fill=Color5] ( -3.5+13, 5) rectangle (-1.5+13,7);
		\path[draw=drawColor,line width= 0.1pt,line join=round, fill=Color6] ( -3.5+2+13, 5) rectangle (-1.5+2+13,7);
		\path[draw=drawColor,line width= 0.1pt,line join=round, fill=Color7] ( -3.5+4+13, 5) rectangle (-1.5+4+13,7);
		\path[draw=drawColor,line width= 0.1pt,line join=round, fill=Color8] ( -3.5+6+13, 5) rectangle (-1.5+6+13,7);
		
		\path[draw=drawColor,line width= 0.1pt,line join=round, fill=Color9] ( -3.5+13, 3) rectangle (-1.5+13,5);
		\path[draw=drawColor,line width= 0.1pt,line join=round, fill=Color10] ( -3.5+2+13, 3) rectangle (-1.5+2+13,5);
		\path[draw=drawColor,line width= 0.1pt,line join=round, fill=Color11] ( -3.5+4+13, 3) rectangle (-1.5+4+13,5);
		\path[draw=drawColor,line width= 0.1pt,line join=round, fill=Color12] ( -3.5+6+13, 3) rectangle (-1.5+6+13,5);
		
		\path[draw=drawColor,line width= 0.1pt,line join=round, fill=Color13] ( -3.5+13, 1) rectangle (-1.5+13,3);
		\path[draw=drawColor,line width= 0.1pt,line join=round, fill=Color14] ( -3.5+2+13, 1) rectangle (-1.5+2+13,3);
		\path[draw=drawColor,line width= 0.1pt,line join=round, fill=Color15] ( -3.5+4+13, 1) rectangle (-1.5+4+13,3);
		\path[draw=drawColor,line width= 0.1pt,line join=round, fill=Color16] ( -3.5+6+13, 1) rectangle (-1.5+6+13,3);

		% 2
		\draw[color=drawColor, line width=0.1pt] (11-15,4) --++ (8, 0) --++(0,2) --++(-8,0) -- ++(0, -2);
		\path[draw=drawColor,line width= 0.1pt,line join=round, fill=Color17] (8-15, 5) rectangle (10-15,7);
		\path[draw=drawColor,line width= 0.1pt,line join=round, fill=Color18] (8-15, 3) rectangle (10-15,5);
		
		\path[draw=drawColor,line width= 0.1pt,line join=round, fill=Color19] (13-15, 4) rectangle (15-15,6);
		\path[draw=drawColor,line width= 0.1pt,line join=round, fill=Color20] (15-15, 4) rectangle (17-15,6);

		\draw[color=drawColor, line width=0.1pt] (8-15,1) --++ (0, 8)--++ (2, 0) --++ (0, -8) --++(-2, 0);

		% 3
		\path[draw=drawColor,line width= 0.1pt,line join=round] ( 26+1, 1) rectangle (34 +1,9);
		% row
		\draw[color=drawColor, line width=0.1pt, fill=gray!30] (26+1,3) --++ (8, 0) (26+1,5) --++ (8, 0) 
		(26+1,7) --++ (8, 0) (26+1,9) --++ (8, 0);
		
		\path[draw=Color1,line width= 0.1pt,line join=round, fill=Color1] ( 27.8+10, 7.5) rectangle (28.55+10,8.3);
		\path[draw=Color2,line width= 0.1pt,line join=round, fill=Color2] ( 27.8+2+10, 7.5) rectangle (28.55+2+10,8.3);
		\path[draw=Color3,line width= 0.1pt,line join=round, fill=Color3] ( 27.8+4+10, 7.5) rectangle (28.55+4+10,8.3);
		\path[draw=Color4,line width= 0.1pt,line join=round, fill=Color4] ( 27.8+6+10, 7.5) rectangle (28.55+6+10,8.3);
		
		\path[draw=Color5,line width= 0.1pt,line join=round, fill=Color5] ( 27.8+10, 7.5-2) rectangle (28.55+10,8.3-2);
		\path[draw=Color6,line width= 0.1pt,line join=round, fill=Color6] ( 27.8+2+10, 7.5-2) rectangle (28.55+2+10,8.3-2);
		\path[draw=Color7,line width= 0.1pt,line join=round, fill=Color7] ( 27.8+4+10, 7.5-2) rectangle (28.55+4+10,8.3-2);
		\path[draw=Color8,line width= 0.1pt,line join=round, fill=Color8] ( 27.8+6+10, 7.5-2) rectangle (28.55+6+10,8.3-2);
		
		\path[draw=Color9,line width= 0.1pt,line join=round, fill=Color9] ( 27.8+10, 7.5-4) rectangle (28.55+10,8.3-4);
		\path[draw=Color10,line width= 0.1pt,line join=round, fill=Color10] ( 27.8+2+10, 7.5-4) rectangle (28.55+2+10,8.3-4);
		\path[draw=Color11,line width= 0.1pt,line join=round, fill=Color11] ( 27.8+4+10, 7.5-4) rectangle (28.55+4+10,8.3-4);
		\path[draw=Color12,line width= 0.1pt,line join=round, fill=Color12] ( 27.8+6+10, 7.5-4) rectangle (28.55+6+10,8.3-4);
		
		\path[draw=Color13,line width= 0.1pt,line join=round, fill=Color13] ( 27.8+10, 7.5-6) rectangle (28.55+10,8.3-6);
		\path[draw=Color14,line width= 0.1pt,line join=round, fill=Color14] ( 27.8+2+10, 7.5-6) rectangle (28.55+2+10,8.3-6);
		\path[draw=Color15,line width= 0.1pt,line join=round, fill=Color15] ( 27.8+4+10, 7.5-6) rectangle (28.55+4+10,8.3-6);
		\path[draw=Color16,line width= 0.1pt,line join=round, fill=Color16] ( 27.8+6+10, 7.5-6) rectangle (28.55+6+10,8.3-6);

		% column
		
		\draw[color=drawColor, line width=0.1pt, fill=gray!30] (28+1,1) --++ (0, 8) (30+1,1) --++ (0, 8)
		(32+1,1) --++ (0, 8) (34+1,1) --++ (0, 8);
		
		% 4
		\path[draw=drawColor,line width= 0.1pt,line join=round] ( 36+1, 1) rectangle (44+1,9);
		% row
		\draw[color=drawColor, line width=0.1pt, fill=gray!30] (36+1,3) --++ (8, 0) (36+1,5) --++ (8, 0) 
		(36+1,7) --++ (8, 0) (36+1,9) --++ (8, 0);
		\path[draw=drawColor,line width= 0.1pt,line join=round, fill=Color21] (39-10, 5) rectangle (41-10,7);
		\path[draw=drawColor,line width= 0.1pt,line join=round, fill=Color22] (41-10, 5) rectangle (43-10,7);
		\path[draw=drawColor,line width= 0.1pt,line join=round, fill=Color23] (39-10, 3) rectangle (41-10,5);
		\path[draw=drawColor,line width= 0.1pt,line join=round, fill=Color24] (41-10, 3) rectangle (43-10,5);
		
		% column
		\draw[color=drawColor, line width=0.1pt] (38+1,1) --++ (0, 8) (40+1,1) --++ (0, 8)
		(42+1,1) --++ (0, 8) (44+1,1) --++ (0, 8);

		%%%%entries
		% X_{31}
		\node[text=drawColor,anchor=base,inner sep=0pt, outer sep=0pt, scale=  0.7] at (26.35+1+10-0.1,7.7) { $f$} ;
		\node[text=drawColor,anchor=base,inner sep=0pt, outer sep=0pt, scale=  0.4] at (26.35+1+10+0.05,7.7-0.15) { $r$} ;

		\node[text=drawColor,anchor=base,inner sep=0pt, outer sep=0pt, scale=  0.7] at (26.6+1+10,7.7) { $($} ;
		\node[text=drawColor,anchor=base,inner sep=0pt, outer sep=0pt, scale=  0.7] at (27.75+1+10,7.7) { $)$} ;
		\node[text=drawColor,anchor=base,inner sep=0pt, outer sep=0pt, scale=  0.7] at (28.35+1+10-0.1,7.7) { $f$} ;
		\node[text=drawColor,anchor=base,inner sep=0pt, outer sep=0pt, scale=  0.4] at (28.35+1+10+0.05,7.7-0.15) { $r$} ;
		
		\node[text=drawColor,anchor=base,inner sep=0pt, outer sep=0pt, scale=  0.7] at (28.6+1+10,7.7) { $($} ;
		\node[text=drawColor,anchor=base,inner sep=0pt, outer sep=0pt, scale=  0.7] at (29.75+1+10,7.7) { $)$} ;
		\node[text=drawColor,anchor=base,inner sep=0pt, outer sep=0pt, scale=  0.7] at (30.35+1+10-0.1,7.7) { $f$} ;
		\node[text=drawColor,anchor=base,inner sep=0pt, outer sep=0pt, scale=  0.4] at (30.35+1+10+0.05,7.7-0.15) { $r$} ;
		
		\node[text=drawColor,anchor=base,inner sep=0pt, outer sep=0pt, scale=  0.7] at (30.6+1+10,7.7) { $($} ;
		\node[text=drawColor,anchor=base,inner sep=0pt, outer sep=0pt, scale=  0.7] at (31.75+1+10,7.7) { $)$} ;
		\node[text=drawColor,anchor=base,inner sep=0pt, outer sep=0pt, scale=  0.7] at (32.35+1+10-0.1,7.7) { $f$} ;
		\node[text=drawColor,anchor=base,inner sep=0pt, outer sep=0pt, scale=  0.4] at (32.35+1+10+0.05,7.7-0.15) { $r$} ;
		
		\node[text=drawColor,anchor=base,inner sep=0pt, outer sep=0pt, scale=  0.7] at (32.6+1+10,7.7) { $($} ;
		\node[text=drawColor,anchor=base,inner sep=0pt, outer sep=0pt, scale=  0.7] at (33.75+1+10,7.7) { $)$} ;

		\node[text=drawColor,anchor=base,inner sep=0pt, outer sep=0pt, scale=  0.7] at (26.35+1+10-0.1,5.7) { $f$} ;
		\node[text=drawColor,anchor=base,inner sep=0pt, outer sep=0pt, scale=  0.4] at (26.35+1+10+0.05,5.7-0.15) { $r$} ;
		
		\node[text=drawColor,anchor=base,inner sep=0pt, outer sep=0pt, scale=  0.7] at (26.6+1+10,5.7) { $($} ;
		\node[text=drawColor,anchor=base,inner sep=0pt, outer sep=0pt, scale=  0.7] at (27.75+1+10,5.7) { $)$} ;
		\node[text=drawColor,anchor=base,inner sep=0pt, outer sep=0pt, scale=  0.7] at (28.35+1+10-0.1,5.7) { $f$} ;
		\node[text=drawColor,anchor=base,inner sep=0pt, outer sep=0pt, scale=  0.4] at (28.35+1+10+0.05,5.7-0.15) { $r$} ;

		\node[text=drawColor,anchor=base,inner sep=0pt, outer sep=0pt, scale=  0.7] at (28.6+1+10,5.7) { $($} ;
		\node[text=drawColor,anchor=base,inner sep=0pt, outer sep=0pt, scale=  0.7] at (29.75+1+10,5.7) { $)$} ;
		\node[text=drawColor,anchor=base,inner sep=0pt, outer sep=0pt, scale=  0.7] at (30.35+1+10-0.1,5.7) { $f$} ;
		\node[text=drawColor,anchor=base,inner sep=0pt, outer sep=0pt, scale=  0.4] at (30.35+1+10+0.05,5.7-0.15) { $r$} ;
		
		\node[text=drawColor,anchor=base,inner sep=0pt, outer sep=0pt, scale=  0.7] at (30.6+1+10,5.7) { $($} ;
		\node[text=drawColor,anchor=base,inner sep=0pt, outer sep=0pt, scale=  0.7] at (31.75+1+10,5.7) { $)$} ;
		\node[text=drawColor,anchor=base,inner sep=0pt, outer sep=0pt, scale=  0.7] at (32.35+1+10-0.1,5.7) { $f$} ;
		\node[text=drawColor,anchor=base,inner sep=0pt, outer sep=0pt, scale=  0.4] at (32.35+1+10+0.05,5.7-0.15) { $r$} ;

		\node[text=drawColor,anchor=base,inner sep=0pt, outer sep=0pt, scale=  0.7] at (32.6+1+10,5.7) { $($} ;
		\node[text=drawColor,anchor=base,inner sep=0pt, outer sep=0pt, scale=  0.7] at (33.75+1+10,5.7) { $)$} ;
		
		\node[text=drawColor,anchor=base,inner sep=0pt, outer sep=0pt, scale=  0.7] at (26.35+1+10-0.1,3.7) { $f$} ;
		\node[text=drawColor,anchor=base,inner sep=0pt, outer sep=0pt, scale=  0.4] at (26.35+1+10+0.05,3.7-0.15) { $r$} ;
		
		\node[text=drawColor,anchor=base,inner sep=0pt, outer sep=0pt, scale=  0.7] at (26.6+1+10,3.7) { $($} ;
		\node[text=drawColor,anchor=base,inner sep=0pt, outer sep=0pt, scale=  0.7] at (27.75+1+10,3.7) { $)$} ;
		\node[text=drawColor,anchor=base,inner sep=0pt, outer sep=0pt, scale=  0.7] at (28.35+1+10-0.1,3.7) { $f$} ;
		\node[text=drawColor,anchor=base,inner sep=0pt, outer sep=0pt, scale=  0.4] at (28.35+1+10+0.05,3.7-0.15) { $r$} ;
		
		\node[text=drawColor,anchor=base,inner sep=0pt, outer sep=0pt, scale=  0.7] at (28.6+1+10,3.7) { $($} ;
		\node[text=drawColor,anchor=base,inner sep=0pt, outer sep=0pt, scale=  0.7] at (29.75+1+10,3.7) { $)$} ;
		\node[text=drawColor,anchor=base,inner sep=0pt, outer sep=0pt, scale=  0.7] at (30.35+1+10-0.1,3.7) { $f$} ;
		\node[text=drawColor,anchor=base,inner sep=0pt, outer sep=0pt, scale=  0.4] at (30.35+1+10+0.05,3.7-0.15) { $r$} ;
		
		\node[text=drawColor,anchor=base,inner sep=0pt, outer sep=0pt, scale=  0.7] at (30.6+1+10,3.7) { $($} ;
		\node[text=drawColor,anchor=base,inner sep=0pt, outer sep=0pt, scale=  0.7] at (31.75+1+10,3.7) { $)$} ;
		\node[text=drawColor,anchor=base,inner sep=0pt, outer sep=0pt, scale=  0.7] at (32.35+1+10-0.1,3.7) { $f$} ;
		\node[text=drawColor,anchor=base,inner sep=0pt, outer sep=0pt, scale=  0.4] at (32.35+1+10+0.05,3.7-0.15) { $r$} ;
		
		\node[text=drawColor,anchor=base,inner sep=0pt, outer sep=0pt, scale=  0.7] at (32.6+1+10,3.7) { $($} ;
		\node[text=drawColor,anchor=base,inner sep=0pt, outer sep=0pt, scale=  0.7] at (33.75+1+10,3.7) { $)$} ;
		
		\node[text=drawColor,anchor=base,inner sep=0pt, outer sep=0pt, scale=  0.7] at (26.35+1+10-0.1,1.7) { $f$} ;
		\node[text=drawColor,anchor=base,inner sep=0pt, outer sep=0pt, scale=  0.4] at (26.35+1+10+0.05,1.7-0.15) { $r$} ;

		\node[text=drawColor,anchor=base,inner sep=0pt, outer sep=0pt, scale=  0.7] at (26.6+1+10,1.7) { $($} ;
		\node[text=drawColor,anchor=base,inner sep=0pt, outer sep=0pt, scale=  0.7] at (27.75+1+10,1.7) { $)$} ;
		\node[text=drawColor,anchor=base,inner sep=0pt, outer sep=0pt, scale=  0.7] at (28.35+1+10-0.1,1.7) { $f$} ;
		\node[text=drawColor,anchor=base,inner sep=0pt, outer sep=0pt, scale=  0.4] at (28.35+1+10+0.05,1.7-0.15) { $r$} ;
		
		\node[text=drawColor,anchor=base,inner sep=0pt, outer sep=0pt, scale=  0.7] at (28.6+1+10,1.7) { $($} ;
		\node[text=drawColor,anchor=base,inner sep=0pt, outer sep=0pt, scale=  0.7] at (29.75+1+10,1.7) { $)$} ;
		
		\node[text=drawColor,anchor=base,inner sep=0pt, outer sep=0pt, scale=  0.7] at (30.35+1+10-0.1,1.7) { $f$} ;
		\node[text=drawColor,anchor=base,inner sep=0pt, outer sep=0pt, scale=  0.4] at (30.35+1+10+0.05,1.7-0.15) { $r$} ;
		
		\node[text=drawColor,anchor=base,inner sep=0pt, outer sep=0pt, scale=  0.7] at (30.6+1+10,1.7) { $($} ;
		\node[text=drawColor,anchor=base,inner sep=0pt, outer sep=0pt, scale=  0.7] at (31.75+1+10,1.7) { $)$} ;
		\node[text=drawColor,anchor=base,inner sep=0pt, outer sep=0pt, scale=  0.7] at (32.35+1+10-0.1,1.7) { $f$} ;
		\node[text=drawColor,anchor=base,inner sep=0pt, outer sep=0pt, scale=  0.4] at (32.35+1+10+0.05,1.7-0.15) { $r$} ;
		\node[text=drawColor,anchor=base,inner sep=0pt, outer sep=0pt, scale=  0.7] at (32.6+1+10,1.7) { $($} ;
		\node[text=drawColor,anchor=base,inner sep=0pt, outer sep=0pt, scale=  0.7] at (33.75+1+10,1.7) { $)$} ;
		
		%% 4 rectangle

		%%% notations
		\node[text=drawColor,anchor=base,inner sep=0pt, outer sep=0pt, scale=  2.] at (-3.25-3+13,4.25) {$F$};
		\node[text=drawColor,anchor=base,inner sep=0pt, outer sep=0pt, scale=  1.25] at (-3.25-3+13+0.6,4.25-0.25) {$r$};
		\node[text=drawColor,anchor=base,inner sep=0pt, outer sep=1pt, scale=  2.5] at (-1.75-3+13,4.25) { $ \Bigg ($};
		\node[text=drawColor,anchor=base,inner sep=0pt, outer sep=1pt, scale=  2.5] at (8.75-3+13,4.25) { $ \Bigg )$};
		\node[text=drawColor,anchor=base,inner sep=0pt, outer sep=0pt, scale=  2] at (10-3-1.5, 1) { $,$} ;
		\node[text=drawColor,anchor=base,inner sep=0pt, outer sep=0pt, scale=  2] at (35+1, 1) { $,$} ;
		\node[text=drawColor,anchor=base,inner sep=0pt, outer sep=0pt, scale= 3] at (-4.25-4,4.25) {$ \Bigg  \langle $ };
		\node[text=drawColor,anchor=base,inner sep=0pt, outer sep=0pt, scale= 3] at (20.5,4.25) {$ \Bigg  \rangle $ }; 
		\node[text=drawColor,anchor=base,inner sep=0pt, outer sep=0pt, scale= 2.5] at (23,4.4) {$  = $ };
		\node[text=drawColor,anchor=base,inner sep=0pt, outer sep=0pt, scale= 3] at (25.5,4.25) {$ \Bigg  \langle $ };
		\node[text=drawColor,anchor=base,inner sep=0pt, outer sep=0pt, scale= 3] at (45.5+1,4.25) {$ \Bigg  \rangle $ };
		
		% description
		%%%%left hand side
		\node[text=drawColor,anchor=base,inner sep=0pt, outer sep=0pt, scale= 1] at (0.5+13, 9.5){Broadcasting operation};
		\node[text=drawColor,anchor=base,inner sep=0pt, outer sep=0pt, scale= 1] at (0.5+13,-0.5){$F_r(\mathbf{X})=\mathcal{B}(f_r, \mathbf{X})$};
		\node[text=drawColor,anchor=base,inner sep=0pt, outer sep=0pt, scale= 1] at (13-13,9.5){Rank-1 scaling};
		\node[text=drawColor,anchor=base,inner sep=0pt, outer sep=0pt, scale= 1] at (9-14,-0.5){$\bbeta_{r,1}$};
		\node[text=drawColor,anchor=base,inner sep=0pt, outer sep=0pt, scale= 1] at (15-14,-0.5){$\bbeta_{r,2}^\tp$};
		
		%%%right hand side
		\node[text=drawColor,anchor=base,inner sep=0pt, outer sep=0pt, scale= 1] at (30+1+10,9.5){Broadcasted};
		\node[text=drawColor,anchor=base,inner sep=0pt, outer sep=0pt, scale= 1] at (30+1+10,-0.5){$
			F_r(\mathbf{X})$};
		
		\node[text=drawColor,anchor=base,inner sep=0pt, outer sep=0pt, scale= 1] at (40+1-10,9.5){Rank-1 scaling};
		\node[text=drawColor,anchor=base,inner sep=0pt, outer sep=0pt, scale= 1] at (40+1-10,-0.5){$\bbeta_{r,1} \circ \bbeta_{r,2}$};
		
		\end{scope}
		\end{tikzpicture}  }
	\caption{An example of the $r$-th component in the broadcasted model \eqref{eqn:broadcast} for $D=2$, with sparsity in the scaling tensor. The white elements in $\bm{\beta}_{r,1}$, $\bm{\beta}_{r,2}$, and $\brm{\beta}_{r,1}\circ \bm{\beta}_{r,2}$ represent zero entries.}
	\label{fig:tenreg:OneComponent}
\end{figure}

\begin{figure}[htpb]
	\centering
	\scalebox{0.7}{
	\begin{tikzpicture}[x=1pt,y=1pt]
	\definecolor{fillColor}{RGB}{255,255,255}
	\path[use as bounding box,fill=fillColor,fill opacity=0.00] (0,0) rectangle (100,100);
	\begin{scope}
	\definecolor{drawColor}{RGB}{0,0,0}
	
	\path[draw=drawColor,line width= 1.7pt,line join=round] ( 0, 0) rectangle (100,100);
	
	\path[draw=drawColor,line width= 1.7pt,line join=round, fill=gray!30] ( 19, 19) rectangle (40,40);
	
	\path[draw=drawColor,line width= 1.7pt,line join=round, fill=gray!30] (59,49) rectangle (80,80);
	
	\node[text=drawColor,anchor=base,inner sep=0pt, outer sep=0pt, scale=  1.5] at (30.5,26.5) {$f_1$};
	
	\node[text=drawColor,anchor=base,inner sep=0pt, outer sep=0pt, scale=  1.5] at (70.5,61.5) {$f_2$};
	\end{scope}
	\end{tikzpicture}}
	\hspace{0.1cm}
	\scalebox{0.7}{\begin{tikzpicture}[x=1pt,y=1pt]
		\definecolor{fillColor}{RGB}{255,255,255}
		\path[use as bounding box,fill=fillColor,fill opacity=0.00] (0,0) rectangle (100,100);
		\begin{scope}
		%\path[clip] (  0,  0) rectangle (100,100);
		\definecolor{drawColor}{RGB}{0,0,0}
		
		\path[draw=drawColor,line width= 1.7pt,line join=round] ( 0, 0) rectangle (100,100);

		\draw[color=drawColor, line width=1.7pt, fill=gray!30] (9,80) -- ++(80,0) -- ++(0,-20) -- ++(-30,0) -- ++(0,-40) -- ++(-20,0) -- ++(0,40) -- ++(-30,0) -- ++(0,20);

		\node[text=drawColor,anchor=base,inner sep=0pt, outer sep=0pt, scale=  1.5] at (50,65) {$f_1=f_2$};
		\end{scope}
		\end{tikzpicture}}
	\caption{Examples of the broadcasted model \eqref{eqn:broadcast} for $D=2$.}\label{fig:tenreg}
\end{figure}

Additionally, our model can be expressed as a special case of additive models:
\begin{equation}\label{eqn:entrywise}
	m(\bX)
	= \nu + \sum_{i_1,\dots,i_D}\left[\frac{1}{s} \sum^R_{r=1} \left(\prod_{d=1}^D\beta_{r,d,i_d}\right) f_r(X_{i_1,\dots,i_D}) \right]
	=: \nu + \sum_{i_1,\dots,i_D} m_{i_1,\dots,i_D}(X_{i_1,\dots,i_D}),
\end{equation}
where $\beta_{r,d,i_d}$ is the $i_d$-th element of $\bbeta_{r,d}$ for $i_d =1,\dots, p_d$, $r= 1,\dots, R$, and $d=1,\dots, D$.
The conditions imposed on our model, $\int_0^1 f_r(x) \mathrm{d}x=0$ for $r=1,\dots, R$, imply that all the entry-wise functions $m_{i_1,\dots,i_D}$'s are centered (i.e., $\int_0^1 m_{i_1,\dots,i_D}(x) \mathrm{d}x = 0$).
Each entry-wise function models the nonlinear effect of the corresponding entry on the response.
For example, the $L_2$-norm of $m_{i_1,\dots,i_D}$ represents the average contribution of the $(i_1,\dots,i_D)$-th entry. 
In contrast to the general additive model \eqref{eqn:additiveMod}, our model produces highly structured modeling of these functions (essentially through a linear combination of $R$ univariate nonparametric functions with spatially-structured scaling coefficients), which substantially reduces the difficulty in nonparametric estimation.
We further explain this point in Section~\ref{ssec:estimation}.

We remark that, similar to the tensor linear model \citep{Zhou-Li-Zhu13}, the parameterization in the proposed model \eqref{eqn:broadcast} is unidentifiable, i.e., the broadcasted functions and scaling tensors are not uniquely determined.
For instance, one can multiply $\bm{\beta}_{r,1}$ by 10, and divide $\bm{\beta}_{r,2}$ by 10, and still obtain the same $m$.
Another example is a permutation of the components.
However, to understand the nonlinear effect of entries, only the identification of the entry-wise functions $m_{i_1, \dots, i_D}$'s and the regression function $m$ is needed; thus, such non-identifiability is generally not an issue.
However, some of these identifiability issues lead to algorithmic instability; thus, several restrictions are introduced in Section~\ref{sec:3} to obtain an efficient algorithm.
For a complete discussion on parameter identification, we refer interested readers to Section~C of the supplementary material (SM), where sufficient conditions similar to Kruskal's uniqueness condition \citep[]{kruskal1989rank} are provided.

\section{The proposed estimator and its computation}\label{sec:3}
In this section, we develop an estimation procedure for the proposed model.
We propose an estimator of the regression function $m$ and a corresponding estimation algorithm.
Due to the structure of the proposed model, we can also construct the corresponding estimator for the entry-wise functions $m_{i_1,\dots,i_D}$'s. 
The regression function is directly applicable for prediction. 
The entry-wise functions are more suitable for the presentation and interpretation of the entry-wise effects.
Due to the difficulty of nonparametric estimation, successful estimation of the entry-wise functions may require a larger sample size (as demonstrated in Section~\ref{sec:syndata}).

\subsection{Spline approximation and penalized estimation}\label{ssec:estimation}

The broadcasted functions $f_r$, $r = 1, \dots, R$, are approximated by B-spline functions of order $\zeta$, i.e., 
\begin{equation}\label{eqn:broadBsplineApprox}
	f_r(x) \approx \sum^K_{k=1} \alpha_{r,k} b_k(x),
\end{equation}
where $\mathbf{b}(x)=(b_1(x), \dots, b_K(x))^\tp$ is a vector of B-spline basis functions and $\alpha_{r,k}$'s are the corresponding spline coefficients.
By writing $\balpha_r=(\alpha_{r,1}, \dots, \alpha_{r,K})^\tp$ and by ignoring the spline approximation error, the regression function \eqref{eqn:broadcast} can be approximated by
\begin{equation}\label{eqn:broadcastSplineExp}
	m(\brm{X}) \approx \nu+ \frac{1}{s} \sum^R_{r=1} \left\langle
	\bbeta_{r,1} \circ \bbeta_{r,2} \circ\cdots \circ \bbeta_{r,D} \circ \balpha_r,
	\Phi(\brm{X})
	\right\rangle,
\end{equation}
where
$\Phi:\mathcal{X}\rightarrow\mathbb{R}^{p_1\times \cdots \times p_D\times K}$ is defined by $(\Phi(\brm{X}))_{i_1,\dots,i_D,k} = b_k(X_{i_1,\dots, i_D})$. 
We remark that \eqref{eqn:broadcastSplineExp} is not the proposed model but is merely an approximation of the proposed nonparametric model \eqref{eqn:broadcast}.
Let $u_k=\int_0^1 b_k(x) \mathrm{d}x$. 
We consider the following optimization problem
\begin{equation}\label{eqn:ConstraintOptimization}
\begin{aligned}
	&\argmin_{\nu, \mathbf{A}}  \sum_{i=1}^n \bigg(y_i - \nu- \frac{1}{s}\left \langle \mathbf{A}, \Phi(\mathbf{X}_i)  \right\rangle \bigg)^2\\
	&\text{s.t.} \quad \mathbf{A}= \sum^R_{r=1} 
	\bbeta_{r,1} \circ \bbeta_{r,2} \circ\cdots \circ \bbeta_{r,D} \circ \balpha_r, \\
	& \quad \quad \sum_{k=1}^K\alpha_{r,k}u_{k}=0, \quad r=1,\dots,R. \\
\end{aligned}
\end{equation}
Then the estimated regression function is
\begin{equation}\label{eqn:ResconstruOfBSpline}
	\hat{m}_{\LS{}}(\mathbf{X}) = \hat{\nu}_{\LS{}} + \frac{1}{s}\left \langle \hat{\mathbf{A}}_{\LS{}}, \Phi(\mathbf{X})  \right\rangle,  
\end{equation}
where $(\hat{\nu}_{\LS{}}, \hat{\mathbf{A}}_{\LS{}})$ is a solution of \eqref{eqn:ConstraintOptimization} 
with $\hat{\mathbf{A}}_{\LS{}} = \sum^R_{r=1} 
	\hat{\bbeta}_{\LS{},r,1} \circ \hat{\bbeta}_{\LS{},r,2} \circ\cdots \circ \hat{\bbeta}_{\LS{},r,D} \circ \hat{\balpha}_{\LS{},r}$. 
Correspondingly, each entry-wise function $m_{i_1,\dots,i_D}$ can be estimated as
\begin{equation}\label{eqn:construct_entry_wise}
	\hat{m}_{\LS{},i_1,\dots,i_D}(X_{i_1,\dots, i_D})
	=  \frac{1}{s} \sum^R_{r=1} \left(\prod_{d=1}^D\hat{\beta}_{\LS{},r,d,i_d}\right) \left\{\sum^K_{k=1} \hat{\alpha}_{\LS{},r,k} \, b_k(X_{i_1,\dots,i_D}) \right\},
\end{equation}
where $\hat{\beta}_{\LS{},r,d,i_d}$ is the $i_d$-th element of $\hat{\bbeta}_{\LS{},r,d}$ for $i_d=1,\dots, p_d$, and $ \hat{\alpha}_{\LS{},r, k}$ is the $k$-th element of $ \hat{\balpha}_{\LS{},r}$ for $k=1,\dots, K$.
The proposed BroadcasTR model can be regarded as a working model to approximate the true regression function $m$. When the approximation is reasonably good, $\hat{m}_{\LS{}}$ recovers the major trend of $m$, and $\hat{m}_{\LS{},i_1,\dots,i_D}$ interprets the main (nonlinear) effect of the $(i_1,\dots,i_D)$-th entry.

Directly solving \eqref{eqn:ConstraintOptimization} is not computationally efficient since it involves too many linear constraints. To further simplify the optimization problem, we remove the constraints by using an equivalent truncated power basis \citep{ruppert2003semiparametric}. We let  $\{\tilde{b}_k \}_{k=1}^K$ denote the truncated power basis:
\[\begin{gathered}
	\tilde{b}_1(x)=1, ~ \tilde{b}_2(x)=x, \dots, \tilde{b}_{\zeta}(x)= x^{\zeta-1}, \\ 
	\tilde{b}_{\zeta+1}(x)=(x-\xi_2)^{\zeta-1}_+, \dots,  \tilde{b}_K(x)=(x-\xi_{K-\zeta+1})^{\zeta-1}_+,
\end{gathered}\]
where $\zeta$ and $(\xi_2, \dots, \xi_{K-\zeta+1})$ are the order and the interior knots of the aforementioned B-splines, respectively.
With these basis functions, we consider the optimization
\begin{equation}\label{eqn:UnconstraintOptimization}
\begin{aligned}
	&\argmin_{\tilde{\nu}, \tilde{\mathbf{A}}}  \sum_{i=1}^n \bigg(y_i - \tilde{\nu} - \frac{1}{s} \big\langle \tilde{\mathbf{A}}, \tilde{\Phi}(\mathbf{X}_i)  \big \rangle \bigg)^2\\
	&\text{s.t.} 
	\quad {\tilde{\mathbf{A}}}= \sum^R_{r=1}  \bbeta_{r,1} \circ \bbeta_{r,2} \circ\cdots \circ \bbeta_{r,D} \circ \tilde{\balpha}_r,\\
\end{aligned}
\end{equation}
where $ \tilde{\Phi}: \mathcal{X}\rightarrow \mathbb{R}^{p_1\times \dots \times p_D \times (K-1)}$ is defined by $(\tilde{\Phi}(\brm{X}))_{i_1,\dots,i_D,k} = \tilde{b}_{k+1}(X_{i_1,\dots,i_D})$ for $k=1,\dots, K-1$, and $\tilde{\balpha}_r \in \mathbb{R}^{K-1}$ is the vector of coefficients.
Compared with \eqref{eqn:ConstraintOptimization}, the mean zero constraints are removed by reducing one degree of freedom in the basis functions. Lemma~A.1 in Section~A.1 of the SM shows that the optimization \eqref{eqn:UnconstraintOptimization} results in the same estimated regression function, i.e.,
\begin{equation*} 
	\hat{m}_{\LS{}}(\mathbf{X})=\tilde{\nu}_{\LS{}}+\frac{1}{s} \big\langle \tilde{\mathbf{A}}_{\LS{}}, \tilde{\Phi}(\mathbf{X}) \big \rangle,
\end{equation*}
where $(\tilde{\nu}_{\LS{}}, \tilde{\mathbf{A}}_{\LS{}})$ is a solution of \eqref{eqn:UnconstraintOptimization}. 
Like \eqref{eqn:construct_entry_wise}, the same estimated entry-wise functions can be reconstructed accordingly.

With the spline approximation, our method involves $1+ R(\sum_{d=1}^D p_d) + R(K-1)$ free scalar parameters. 
In contrast, a general additive model \eqref{eqn:additiveMod} needs to estimate $1+ (K-1) \prod_{d=1}^D p_d $ scalar parameters when the same basis is used to approximate each entry-wise function.
When the sample size is relatively small, it is difficult to obtain a good estimate of these parameters.
A possible remedy is to apply penalization, which should be chosen according to the expected structure of the regression function. An example of such a structure is sparsity, as discussed in the previous sections.
For notational simplicity, let $\btheta = (\mathbf B_1, \dots, \mathbf B_D)$ with $\mathbf B_d = (\bbeta_{1,d},\dots, \bbeta_{R,d})$, $d=1,\dots,D$.
We propose a penalized estimation by solving
\begin{equation}\label{eqn:constraintpenalizedintheory}
\begin{aligned}
	&\argmin_{\tilde{\nu}, \tilde{\mathbf{A}}}  \sum_{i=1}^n \bigg(y_i - \tilde{\nu}- \frac{1}{s} \big\langle \tilde{\mathbf{A}}, \tilde{\Phi}(\mathbf{X}_i)  \big\rangle \bigg)^2 + G(\btheta) \\ 
	&\qquad \text{s.t.} \quad \tilde{\mathbf{A}}= \sum^R_{r=1} \bbeta_{r,1} \circ \bbeta_{r,2} \circ\cdots \circ \bbeta_{r,D} \circ \tilde{\balpha}_r\\
	& \qquad \qquad  \Vert \tilde{\balpha}_r\Vert_2^2 =1, \quad r=1,\dots,R,
\end{aligned}
\end{equation}
where $G$ is a penalty function (that may depend on the sample size $n$).

The corresponding estimated regression function $\hat{m}_{\PLS{}}$ and the estimated entry-wise functions $\hat{m}_{\PLS{},i_1,\dots,i_D}$'s can be reconstructed similarly to \eqref{eqn:ResconstruOfBSpline} and \eqref{eqn:construct_entry_wise}, respectively.
Obviously, $\hat{m}_{\PLS{}}=\hat{m}_{\LS{}}$ if $G(\btheta)\equiv 0$, as are the entry-wise functions.
For convenience of algorithmic derivation, we focus on a class of additive penalty functions:
\begin{equation}\label{def:G_penalty}
	G(\btheta)= \sum_{d=1}^D P_d(\bbeta_{1,d}, \dots, \bbeta_{R,d}),
\end{equation}
where $P_d$ is a penalty function applied to the parameters in the $d$-th component. 
We note that an additive structure is assumed across the $D$ components, while the componentwise penalty $P_d$ remains general.
This class includes many common penalty functions such as the lasso penalty \citep{tibshirani1996regression} and the elastic-net penalty \citep{zou2005regularization}.
The additive structure is not crucial and can be relaxed at the cost of complicating the presentation of the proposed algorithm.
Note that the magnitudes of $\bbeta_{r,1},\dots, \bbeta_{r,D}$ and $\tilde{\bm{\alpha}}_r$ are not identified. 
The penalization of $\bm{\beta}_{r,d}$'s may enlarge $\tilde{\bm{\alpha}}_r$. In \eqref{eqn:constraintpenalizedintheory}, the unit-norm restrictions for $\tilde{\balpha}_r$'s are introduced to prevent these scaling issues.

\subsection{Algorithm}\label{ssec:alg}

We propose a scale-adjusted blockwise descent algorithm to solve \eqref{eqn:constraintpenalizedintheory}. 
Let us recall $\mathbf{B}_d=(\bbeta_{1,d},\dots,\bbeta_{R,d})$ for $d=1,\dots,D$. 
Analogously, we denote $\tilde{\mathbf{B}}_{D+1}=(\tilde{\balpha}_1,\dots, \tilde{\balpha}_R)$. 
For convenience, we use $L(\tilde{\nu}, \btheta, \tilde{\mathbf{B}}_{D+1})$ to denote the squared loss in \eqref{eqn:constraintpenalizedintheory}, i.e.,
\[
	L(\tilde{\nu}, \btheta, \tilde{\mathbf{B}}_{D+1})=\sum_{i=1}^n \bigg(y_i - \tilde{\nu}- \frac{1}{s}\sum^R_{r=1}   \big\langle  \bbeta_{r,1} \circ \bbeta_{r,2} \circ\cdots \circ \bbeta_{r,D} \circ \tilde{\balpha}_r, \tilde{\Phi}(\mathbf{X}_i)  \big\rangle  \bigg)^2. 
\]
The whole objective function can then be written as $LG(\tilde{\nu}, \btheta, \tilde{\mathbf{B}}_{D+1})=L(\tilde{\nu}, \btheta, \tilde{\mathbf{B}}_{D+1})+ G(\btheta)$. 
Before proposing the algorithm, we first rearrange $\langle\cdot, \cdot \rangle$ in $LG$ by using the Khatri-Rao product and mode-$d$ matricization. 
The Khatri-Rao product is defined as a columnwise Kronecker product for two matrices with the same column number \citep{smilde2005multi}. More precisely, letting $ \mathbf B=(\mathbf b_1, \dots, \mathbf b_L) \in \mathbb{R}^{I \times L}$ and $\mathbf B^\prime =(\mathbf b_1^\prime, \dots, \mathbf b_L^\prime)\in \mathbb{R}^{J \times L}$ be two generic matrices that have the same number of columns, their Khatri-Rao product $\mathbf B \odot \mathbf B^\prime \in \mathbb{R}^{I J \times L}$ is defined as 
$$
	\mathbf B \odot \mathbf B^\prime = [\mathbf b_1 \otimes \mathbf b_1^\prime \quad \mathbf b_2 \otimes \mathbf b_2^\prime  \quad \cdots \quad  \mathbf b_L \otimes \mathbf b_L^\prime ],
$$
where $\otimes$ denotes the Kronecker product \citep{kolda2009tensor}. 
The mode-$d$ matricization of a tensor $\mathbf A \in \mathbb{R}^{p_1 \times p_2 \times \cdots \times p_D}$, denoted by $\mathbf A_{(d)}$, maps a tensor into a matrix according to its $d$-th mode, such that the $(i_1, \dots, i_D)$-th element of $\mathbf A$ becomes the $(i_d,j)$-th element of $\mathbf A_{(d)}$, where $j = 1+ \sum_{d^\prime=1,d^\prime \ne d}^D (i_{d^\prime} - 1) \prod_{d^{\prime\prime} = 1, d^{\prime\prime} \ne d}^{d^\prime - 1} p_{d^{\prime\prime}}$ \citep{kolda2009tensor}.
Observe that
$$\begin{aligned}
	\sum^R_{r=1} \big\langle
	\bbeta_{r,1} \circ \bbeta_{r,2} \circ\cdots \circ \tilde{\balpha}_r, \tilde{\Phi}(\brm{X})
	\big\rangle & = \big\langle \mathbf{B}_d, \tilde{\Phi}(\brm{X})_{(d)}\mathbf{B}_{-d} \big\rangle  \\  & 
	= \big\langle \text{vec} \{\tilde{\Phi}(\brm{X})_{(d)}\mathbf{B}_{-d} \}, \text{vec} (\mathbf{B}_d ) \big\rangle, 
\end{aligned}$$
where $\mathbf{B}_{-d}= \mathbf{B}_{1} \odot \cdots \odot \mathbf{B}_{d-1} \odot \mathbf{B}_{d+1} \odot \cdots \odot \tilde{\mathbf{B}}_{D+1}$,
$\mathrm{vec}(\cdot)$ is a vectorization operator \citep{kolda2009tensor} and
$\tilde{\Phi}(\brm{X})_{(d)}$ is the mode-$d$ matricization of $\tilde{\Phi}(\brm{X})$. 
We can thus alternatively update $\mathbf{B}_d$, $d=1, \dots, D$, by a  penalized linear regression. 
As for $\tilde{\mathbf{B}}_{D+1}$, it can be updated by optimization on the oblique manifold \citep{selvan2012descent} due to the restriction that the norm of each column of $\tilde{\mathbf{B}}_{D+1}$ is 1.

The magnitude shift of $\bbeta_{r,d}$'s for $d=1,\dots, D$ requires special attention.
As an example, we can multiply $\bbeta_{r,d_1}$ by $10$ and divide $\bbeta_{r,d_2}$ by $10$, $d_1 \ne d_2$, without changing the value of the squared loss. This manipulation can, however, change the value of the penalty $G(\btheta)$.
To improve algorithmic convergence, we propose a rescaling strategy for the penalty in the form of \eqref{def:G_penalty}.
For $r=1, \dots, R$, we solve the following optimization problem
\begin{equation}\label{eqn:rescale}
\begin{aligned}
	&  \qquad \qquad \argmin_{\rho_{r,d},\, r=1,\dots,R,\, d=1,\dots,D}
	\sum_{d=1}^D P_d(\rho_{1,d}\bbeta_{1,d}, \dots, \rho_{R,d}\bbeta_{R,d}) \\
	& \qquad \qquad \qquad \qquad \text{s.t.} \quad \prod_{d=1}^D \rho_{r,d} =1, \\ %\quad   \\
	& \text{with} ~ \rho_{r,d}>0 ~ \text{and}~ \rho_{r,d}=1 ~ \mbox{if $\bbeta_{r,d}=\brm{0}$}, ~ r=1,\dots,R~ \text{and}~ d=1,\dots,D, 
\end{aligned}
\end{equation}
and we replace $\bbeta_{r,d}$ with $\hat{\rho}_{r,d}\bbeta_{r,d}$ at the end of each iterative step of solving \eqref{eqn:constraintpenalizedintheory} (see Algorithm~\ref{algo:algorithmI}), where $\{\hat{\rho}_{r,d}: r=1,\dots, R, \, d=1, \dots, D \}$ is the minimizer of \eqref{eqn:rescale}. This replacement step never increases the objective value (as shown in Proposition~\ref{prop:rescaledecrease} below). 
Our theory and algorithm work under a penalty of the additive form \eqref{def:G_penalty} that fulfills Assumption~\ref{assm:G_penalty}.
To avoid complication, we defer the discussion of the general cases to Section~A.2 of the SM. 
Now, we provide more discussion of the elastic-net penalty
\begin{equation}\label{eqn:def:elastic_net}
	P_d(\bbeta_{1,d}, \dots,\bbeta_{R,d}) = \lambda_1\sum_{r=1}^R \bigg \{ \frac{1}{2} (1-\lambda_2)\Vert \bbeta_{r,d}\Vert_2^2 +\lambda_2 \Vert \bbeta_{r,d} \Vert_1  \bigg \}, 
\end{equation}
where $\|\cdot\|_1$ is the $\ell_1$-norm of a vector, $\bm\lambda=(\lambda_1,\lambda_2)$ with $\lambda_1 \ge 0$ and $\lambda_2 \in [0,1]$, and $\bm \lambda$ is allowed to depend on $n$.
In this case, as described in Section~A.2 of the SM, \eqref{eqn:rescale} can be written as a convex problem in an equivalent parametrization. 
For $\lambda_2 \in (0,1)$,  the method of Lagrange multipliers and Newton's method can be used to solve \eqref{eqn:rescale}. 
As for the special boundary cases, i.e., $\lambda_2 \in \{0, 1\}$, we can obtain the closed-form solutions
\[
	\hat{\rho}_{r,d}=\left\{
	\begin{aligned}
	&\frac{1}{\Vert \bbeta_{r,d} \Vert_1 }  \prod_{d=1}^{D} \Vert \bbeta_{r,d}\Vert_1^{1/D},
	\quad \text{if} \ \lambda_2=1,\\
	&\frac{1}{\Vert \bbeta_{r,d} \Vert_2 } \prod_{d=1}^{D} \Vert \bbeta_{r,d}\Vert_2^{1/D}, \quad \text{if} \ \lambda_2=0.\\
\end{aligned}
\right.
\]
Furthermore, with the elastic-net penalty, the updating of $\tilde{\mathbf{B}}_{D+1}$ in Algorithm~\ref{algo:algorithmI} can be relaxed to a standard quadratically constrained quadratic program.
Therefore, the dual ascent method and second-order cone programming can be used for this blockwise updating. 
\begin{proposition}\label{prop:rescaledecrease}
	Fix $\tilde{\nu}$ and $\tilde{\mathbf{B}}_{D+1}$.
	Let
	$$\begin{aligned}
		\Theta(\btheta)=\Bigg\{\btheta^{\bm{\rho}} &: \btheta^{\bm{\rho}}=(\mathbf{B}_1 
		\bm{\rho}_1,\dots, \mathbf{B}_D \bm{\rho}_D )~ \mbox{and} ~ \bm{\rho}_d=\mathrm{diag}(\rho_{1,d}, \dots, \rho_{R,d}),\\
		&\qquad \qquad \mbox{with}~\prod_{d=1}^D \rho_{r,d}=1,~ \rho_{r,d} >0, ~ \mbox{and} ~ \rho_{r,d} =1 ~ \mbox{if} ~ \bbeta_{r,d}=\mathbf 0 \Bigg\},
	\end{aligned}$$
	where $\mathrm{diag}(\rho_{1,d}, \dots, \rho_{R,d}) \in \mathbb{R}^{R \times R}$ is a diagonal matrix with diagonal entries $\rho_{1,d}, \dots, \rho_{R,d}$.
	Suppose \eqref{eqn:rescale} has a solution, denoted by $\hat{\rho}_{r,d} $ for $r = 1, \dots R$ and $d = 1, \dots, D$.
	Let 
	$\bar{\btheta}=(\bar{\mathbf{B}}_1,\dots, \bar{\mathbf{B}}_D)$, 
	where $\bar{\mathbf{B}}_d=( \hat{\rho}_{1,d} \bbeta_{1,d},\dots, \hat{\rho}_{R,d} \bbeta_{R,d})$, $d = 1, \dots, D$. Then 
	\[
		LG(\tilde{\nu}, \bar{\btheta}, \tilde{\mathbf{B}}_{D+1}) = \min_{\btheta^{\bm{\rho}} \in \Theta (\btheta) } LG(\tilde{\nu}, \btheta^{\bm{\rho}}, \tilde{\mathbf{B}}_{D+1}).
	\]
	Furthermore, for the elastic-net penalty \eqref{eqn:def:elastic_net}, if $\bbeta_{r,d} \ne \mathbf{0}$, $r=1,\dots, R$ and $d=1, \dots, D$, then 
	\[
		LG(\tilde{\nu}, \bar{\btheta}, \tilde{\mathbf{B}}_{D+1}) <  LG(\tilde{\nu}, \btheta^{\bm{\rho}}, \tilde{\mathbf{B}}_{D+1}), \quad \forall \btheta^{\bm{\rho}} \in \Theta(\btheta), \quad \btheta^{\bm{\rho}} \ne \bar{\btheta}.
	\]
\end{proposition}
Proposition~\ref{prop:rescaledecrease} shows that $\bar{\btheta}$ is the unique minimizer over $\Theta(\btheta)$.
This fixes the scaling indeterminacy and improves the practical convergence property, thus enhancing the numerical performance.
A numerical comparison between the algorithms with and without the rescaling strategy is presented in Section~\ref{sec:5}. 
The convergence of Algorithm~\ref{algo:algorithmI} is shown in Proposition~\ref{prop:convergence} and its proof is deferred to Section~A.4 of the SM.
Before presenting Proposition~\ref{prop:convergence}, we introduce an assumption on the penalty $G$, as follows.

\begin{algorithm}[h!] 
	\caption{Scale-adjusted block relaxation algorithm. \label{algo:algorithmI}}
	\DontPrintSemicolon
	\SetKwInOut{Input}{Input}\SetKwInOut{Output}{Output}
	\Input{$\big( \tilde{\nu}^{(0)}, \btheta^{(0)}, \tilde{\mathbf{B}}_{D+1}^{(0)} \big)= \big( \tilde{\nu}^{(0)},{\mathbf{B}}_1^{(0)},\dots, {\mathbf{B}}_D^{(0)}, \tilde{\mathbf{B}}_{D+1}^{(0)} \big)$, $\epsilon >0$ and $t =0$.}
	\Repeat{$-LG( \tilde{\nu}^{(t+1)}, \btheta^{(t+1)}, \tilde{\mathbf{B}}_{D+1}^{(t+1)} )+LG( \tilde{\nu}^{(t)}, \btheta^{(t)}, \tilde{\mathbf{B}}_{D+1}^{(t)} ) \le \epsilon $.}{
		\For{ $d$ from $1, \dots, D$}{$\mathbf{B}_d^{(t+1)}=\argmin_{\mathbf{B}_d} LG(\tilde{\nu}^{(t)}, \mathbf{B}_1^{(t+1)},\dots,\mathbf{B}_{d-1}^{(t+1)}, \mathbf{B}_d, \mathbf{B}_{d+1}^{(t)},\dots, \mathbf{B}_{D}^{(t)},\tilde{\mathbf{B}}_{D+1}^{(t)})$;}
		
		$\tilde{\mathbf{B}}_{D+1}^{(t+1)}= \argmin_{\tilde{\mathbf{B}}_{D+1}} LG(\tilde{\nu}^{(t)}, \mathbf{B}_1^{(t+1)},\dots, \mathbf{B}_{D}^{(t+1)},\tilde{\mathbf{B}}_{D+1})$,  such that the norm of each column of $\tilde{\mathbf{B}}_{D+1}$ is 1; \;
		$\tilde{\nu}^{(t+1)}=\argmin_{\tilde{\nu}}  LG \big(\tilde{\nu}, \mathbf{B}_1^{(t+1)},\dots, \mathbf{B}_{D}^{(t+1)},\tilde{\mathbf{B}}_{D+1}^{(t+1)} \big) $; \;
		
		Replace ${\mathbf{B}}_d^{(t+1)}$ by $\big(\hat{\rho}_{1,d} \bbeta_{1,d}^{(t+1)},\dots, \hat{\rho}_{R,d} \bbeta_{R,d}^{(t+1)}\big)$, where $\hat\rho_{r,d}^{(t+1)}$, for $r= 1, \dots, R$ and $d=1, \dots, D$, are obtained by solving \eqref{eqn:rescale}; \;
		$t = t+1$;  \;
	}
	\Output{$(\hat{\nu}, \hat{\btheta}, \hat{\mathbf B}_{D+1} )  = ( \tilde{\nu}^{(t)}, \btheta^{(t)}, \tilde{\mathbf{B}}_{D+1}^{(t)} )$.} 	
\end{algorithm}	
\begin{assumption}	\label{assm:G_penalty}
	$G(\btheta)$ is a continuous function of $\btheta$. 
	There exists $\varpi(n)>0$ (i.e., a sequence with respect to the sample size $n$) and a bounded constant $S_G>0$ (independent of $\btheta$ and $n$) such that $\varpi(n) \sum_{r=1}^R \sum_{d=1}^D \Vert \bbeta_{r,d} \Vert^2 \le   \{ G(\btheta) \}^{S_G}$.
\end{assumption}
Roughly speaking, Assumption~\ref{assm:G_penalty} requires that the penalty dominates some norm of the CP parameters of the scaling coefficient tensor, namely $\{\bbeta_{r,d}\}$ (due to equivalent norms in finite-dimensional spaces). 
As shown in \cite{zhou2020cp}, using a norm-related penalty in terms of CP parameters is a solution to the issue of CP degeneracy \citep{kolda2009tensor} in tensor regression problems. 
CP degeneracy is a phenomenon that the magnitude of the iterative CP parameters diverges as the number of iterations in an algorithm goes to infinity.
Assumption~\ref{assm:G_penalty} of $G$ implies Assumption~3 in \cite{zhou2020cp}. 
Therefore, the proposed algorithm for the penalized estimator overcomes CP degeneracy and guarantees stable results.
Furthermore, Assumption~\ref{assm:G_penalty} is crucial for obtaining an improved statistical error bound for the penalized estimator.
We provide more discussion by comparing the unpenalized and penalized estimators in Corollaries~\ref{thm:convergencerates} and~\ref{cor:pen}, respectively.
Note that many commonly used penalties satisfy Assumption~\ref{assm:G_penalty}, for example, the ridge, the lasso, the elastic-net, and the group lasso penalties.

\begin{proposition}\label{prop:convergence}
	Under Assumption~\ref{assm:G_penalty}, 
	if the penalty function $G$ is strictly convex, then any accumulation point of the sequence $(\tilde{\nu}^{(t)}, \btheta^{(t)}, \tilde{\mathbf B}_{D+1}^{(t)})$ generated by Algorithm~\ref{algo:algorithmI} is a stationary point of $LG$, and all accumulation points of the sequence share the same objective value.
	Furthermore, denoting one accumulation point by $(\tilde{\nu}^{\star}, \btheta^{\star}, \tilde{\mathbf B}_{D+1}^{\star})$, 
	the sequence of objective values $LG(\tilde{\nu}^{(t)}, \btheta^{(t)}, \tilde{\mathbf B}_{D+1}^{(t)})$ converges to $LG(\tilde{\nu}^{\star}, \btheta^{\star}, \tilde{\mathbf B}_{D+1}^{\star})$. 
\end{proposition}

The proof of Proposition~\ref{prop:convergence} is deferred to Section~A.4
of the SM.
The conditions of Proposition~\ref{prop:convergence} are mild. For instance, if $G$ is the elastic-net penalty (i.e., \eqref{def:G_penalty} with \eqref{eqn:def:elastic_net}), then taking $\lambda_1 >0$ and $\lambda_2 <1$ satisfies the conditions.

Since the objective functions \eqref{eqn:UnconstraintOptimization} and \eqref{eqn:constraintpenalizedintheory} are non-convex, initialization is an important component of the algorithm. 
We adopt a sequential downsizing strategy for initializing the algorithm.
A detailed description of this strategy is provided in Section~D
of the SM due to space limitations.
For the remainder of the paper, all numerical results of our methods are based on the elastic-net penalty.
Overall, our method includes several tuning parameters $(\zeta, K, R, \lambda_1, \lambda_2)$, where $\zeta$ is the order of the spline basis, $K$ is the number of basis functions, $R$ is the CP rank, and $(\lambda_1, \lambda_2)$ are the tuning parameters in \eqref{eqn:def:elastic_net}.
Now, we describe our parameter choices.
First, $\zeta$ can be fixed as $4$ (cubic spline) to alleviate the computational burden, and this choice is commonly used in nonparametric regression literature \citep{huang2010variable}.
For $K$, we follow \cite{fan2014nonparametric} to fix $K = \lceil 2 n^{1/5} \rfloor$, where $\lceil \cdot \rfloor$ denotes rounding to the nearest integer. 
The knots are data-driven and chosen as equally spaced quantiles. 
For $(R, \lambda_1, \lambda_2)$, we adopt the hold-out method in our numerical studies because of its computational efficiency. More precisely, we randomly split the available data into two subsets: a training set with $80 \%$ samples and a validation set with $ 20 \%$ samples. 
We set the validation set aside, and use Algorithm \ref{algo:algorithmI} to fit \BNTR{} on the training set. 
The smoothing parameters $(R, \lambda_1, \lambda_2)$ are selected by minimizing the validation error,
\begin{equation}\label{eqn:validErr}
	\frac{1}{n_{\mathrm{valid}}} \sum_{i=1}^{n_{\mathrm{valid}}}(y_{\mathrm{valid},i} - \hat{y}_{\mathrm{valid},i})^2,
\end{equation}
over the prespecified grids of corresponding smoothing parameters, where $n_{\mathrm{valid}}$ is the size of the validation set and $\hat{y}_{\mathrm{valid}, i}$ is the predicted value of the $i$-th observation $y_{\mathrm{valid},i}$ in the validation set. 
The estimated regression function is the model fit to the training subset by using the smoothing parameter values corresponding to the lowest validation error \eqref{eqn:validErr}.
The grids of the hold-out method used in our experiments are given in Section~\ref{sec:5}.

\section{Theoretical study}\label{sec:4}
Throughout the theoretical analysis, we assume that the proposed \BNTR{} model is well-specified, i.e., the true regression function $m_0$ has the representation 
\[
	m_0(\mathbf{X})= \nu_0+\frac{1}{s}\sum_{r=1}^{R_0} \left \langle  \bbeta_{0r,1} \circ  \dots  \circ \bbeta_{0r,D}, F_{0r}(\mathbf{X}) \right \rangle,
\]
where $F_{0r}=\mathcal{B}(f_{0r}, \brm{X})$ with $\int_0^1 f_{0r}(x) \mathrm{d}x=0$, $f_{0r} \in \mathcal{H}$, $r = 1, \dots, R_0$, 
and $\mathcal{H}$ is the function class specified in Assumption~\ref{assm:functionSmoothness}.
In our analysis, we need the following regularity assumptions.

\begin{assumption} \label{assm:density}
	The tensor covariate $\mathbf{X} \in [0,1]^{p_1 \times \cdots \times p_D}$ has a {probability density function $g$}, which 
	is bounded away from zero and infinity on  $[0,1]^{p_1 \times \cdots \times p_D}$.
	In other words,	there exist constants $S_1,S_2>0$ such that $S_1\le g(\brm{x})\le S_2$ for all $\brm{x}\in[0,1]^{p_1\times\cdots\times p_D}$.
\end{assumption}
Before presenting the assumption related to random error, we first define a sub-Gaussian random variable and its sub-Gaussian norm.

\begin{definition}[sub-Gaussian random variable] \label{def:subGaussian}
	We say that a random variable $X$ is sub-Gaussian if there exists a positive constant $S$ such that
	$(\mathbb{E}|X|^p)^{{1}/{p}} \le S \sqrt{p}$ for all $p\ge 1$.
	The minimum value of $S$ is the sub-Gaussian norm of $X$,  denoted by $\Vert X \Vert_{\psi_2}$.
\end{definition}

\begin{assumption}\label{assm:errSubGaussian}
	The random errors, $\bm{\epsilon}=(\epsilon_1,\dots,\epsilon_n)^\tp$, are independent and identically distributed. Each $\epsilon_i$ is a sub-Gaussian  random variable with a mean of 0 and a sub-Gaussian norm $\sigma < \infty$.
\end{assumption}

\begin{assumption}\label{assm:functionSmoothness}
	The true broadcasted functions $f_{0r} \in \mathcal{H}$, $ r=1,\dots,R_0$.
	Here $\mathcal{H}$ is the space of functions from $[0,1]$ to $\mathbb{R}$ satisfying the H\"older condition of order $\omega$, i.e., 
	\begin{equation*}
	\begin{aligned}
		\mathcal{H}=\big \{g: \vert g^{(\iota)}(x_1) - g^{(\iota)}(x_2) \vert \le 	\varsigma \vert x_1 -x_2 \vert^{\omega},  \ \forall \ x_1, x_2 \in [0,1] \big \},
	\end{aligned}
	\end{equation*}
	for some constant $\varsigma >0$, where $\iota$ is a nonnegative integer and $g^{(\iota)}$ is the $\iota$-th derivative of $g$, such that $\omega \in (0,1]$ and $\tau=\iota+\omega >1/2$.
\end{assumption}

\begin{assumption} \label{assm:splineMashRatio}
	The order of the B-splines used in \eqref{eqn:broadBsplineApprox} satisfies $\zeta \ge \tau$.
	We let $0=\xi_1  < \xi_2 < \cdots <\xi_{K-\zeta+2}=1$ denote the knots of the B-spline basis, and we assume that 
	\[
		h_n=\max_{k=1,\dots,K-\zeta+1} \vert  \xi_{k+1}-\xi_k \vert \asymp K^{-1} \quad \text{and} \quad h_n \Big/\min_{k=1,\dots,K-\zeta+1} \vert  \xi_{k+1}-\xi_k \vert \le S_3,
	\]
	for some constant $S_3>0$. 
\end{assumption}

Assumptions~\ref{assm:density}, \ref{assm:functionSmoothness} and~\ref{assm:splineMashRatio} are common in nonparametric regression models.
In particular, Assumption~\ref{assm:density} requires an upper and lower boundedness conditions on the density of the tensor covariate, which is commonly used in the literature for nonparametric regression \cite[e.g.,][]{Stone85,huang2003local,hall2007nonparametric}.
Note that this boundedness condition can be relaxed to a certain restrictive assumption related to the largest and least eigenvalues on the expected value of the Gram matrix of ``design'' in model \eqref{eqn:broadcastSplineExp}, i.e., $\int_\mathcal{X} \vec\{\Phi(\mathbf{X})\}^\tp \vec\{\Phi(\mathbf{X})\} g(\mathbf{X}) \mathrm{d} \mathbf{X} $ \citep[see, e.g.,][]{wang2011estimation}.
However, this relaxation is difficult to verify in practice and does not provide additional insight into this problem. The requirement that the support of the tensor covariate is compact also limits this work, which may be overcome by estimating the regression function locally instead of on the support as a whole. We defer this approach as a future research topic.
Assumptions~\ref{assm:functionSmoothness} and~\ref{assm:splineMashRatio} regularize the space where the true broadcasted functions lie in, and they guarantee that the functions can be approximated well by B-spline functions.
Indeed, it follows from these assumptions and Lemma~5 in \cite{Stone85} that there exist $\bm{\alpha}_{0,r}=(\alpha_{0r,1},\dots, \alpha_{0r,K})^\tp$ for $r=1,\dots,R$, such that 
\begin{equation} \label{eqn:approximation_spline}
	\bigg\Vert f_{0r}- \sum_{k=1}^K \alpha_{0r,k} b_k \bigg\Vert_{\infty}=\bigO(K^{-\tau}),
\end{equation}
where the $L_{\infty}$-norm of a univariate function $f$ is defined as $\Vert f \Vert_{\infty}=\sup_{x} \vert f(x)\vert$. Although we assume $\int_0^1 f_{0r}(x) \mathrm{d}x=0$, Lemma~B.3 (in the SM) still implies that there exist $\balpha_{0,r}$, $r=1,\dots,R$, satisfying \eqref{eqn:approximation_spline} with 
\begin{equation}\label{eqn:approximation_spline_mean0}
	\sum_{k=1}^K \int_0^1 \alpha_{0r,k} b_k(u) \mathrm{d}u=0.
\end{equation}
Despite this mild difference in parameter identification, similar assumptions can be found in \cite{zhou1998local} and \cite{huang2010variable}. 
Recall that we recommend the choice of commonly used cubic splines (i.e., $\zeta=4$) in Section~\ref{sec:3} to implement our method when prior information about the H\"older smoothness condition of the broadcasted functions is unavailable. 
The gap between the requirement of $\zeta \geq \tau$ in Assumption~\ref{assm:splineMashRatio} and the actual implementation of $\zeta =4$ does not cause a severe problem in our theoretical results since the function space with higher order of H\"older smoothness is embedded in the space with lower order of H\"older smoothness \citep{miranda2013partial}. 
Therefore, \eqref{eqn:approximation_spline} remains correct by simply plugging $\tau=4$ when cubic splines are used in practice, but the true broadcasted functions indeed lie in the space with higher order of H\"older smoothness.
This is also true for our major theoretical results (Theorems~\ref{thm:convergencerates} and~\ref{thm:minimax_nonlinear} below).
Besides, the sub-Gaussianity condition in Assumption~\ref{assm:errSubGaussian} is a standard tail condition for the error.

We present the convergence rates of $\hat{m}_{\PLS{}}$. 
The function norm for a tensor function $m$ is defined as $\Vert m \Vert= [\{\mathbb{E}_{\mathbf{X}}m^2(\mathbf{X})\}]^{1/2}$, which is equivalent to the $L_2$-norm, $\|m\|_{L_2}=\{\int m^2(\brm{X})\mathrm{d}\brm{X}\}^{1 /2}$, due to Assumption~\ref{assm:density}.
To simplify the notation, we write $\mathbf{B}_{0r}= \bbeta_{0r,1} \circ  \cdots  \circ \bbeta_{0r,D}$. 
Theorem~\ref{thm:convergencerateswithpenalty} provides a non-asymptotic error upper bound for the general form of the proposed estimator, where $p_i$, $K$, $R$, and $R_0$ are allowed to go to infinity with the sample size $n$. 
In Corollaries~\ref{thm:convergencerates} and~\ref{cor:pen}, we consider two special cases of Theorem \ref{thm:convergencerateswithpenalty}.

\begin{thm} \label{thm:convergencerateswithpenalty}
	Let $\hat{m}_{\PLS{}}$ be the estimated regression function reconstructed from \eqref{eqn:constraintpenalizedintheory}.
	If Assumptions~\ref{assm:G_penalty}--\ref{assm:splineMashRatio}
	hold, 
	$R\ge R_0$ and $ n\ge C_1 {h_n^{-2-2/(\log h_n)} (\log^{-2} h_n} ) \big( \Delta+\sum_{i=1}^D Rp_i+RK\big)$,
	for some large enough constant $C_1>0$, then	
	\begin{equation}\label{eqn:thm:penalty:result2}
		\Vert \hat{m}_{\PLS{}} - m_0\Vert_{}^2 \le 
		C_2 \bigg\{\frac{\Delta +\sum_{i=1}^D Rp_i+RK}{n} \bigg\} +C_3 \bigg \{ \frac{\sum_{r=1}^{R_0}\Vert \mathrm{vec} (\mathbf{B}_{0r})\Vert_1}{s} \bigg \}^2\frac{1}{K^{2\tau}} + C_4\frac{G_0}{n},
	\end{equation}
	with probability at least 
	\[
		1-C_5\exp\Bigg\{- C_6\bigg(\Delta+R\sum_{i=1}^Dp_i+RK \bigg) \Bigg\}, 
	\]
	where $C_2,\dots,C_6>0$ are constants, $G_0$ is defined in (B.8) of the SM, and
	\begin{equation}  \label{eqn:thm:penalty:Delta}
		\Delta = \min \{R^{D+1}, R^2\log \delta_{\rm pen} \}. 
	\end{equation}
	Here, if one adopts a penalty (see Corollary~\ref{cor:pen} for an example) 
	such that the bias term $G_0/n$ is dominated by the previous two terms on the right-hand side of \eqref{eqn:thm:penalty:result2}, then
	$$
		\log\delta_{\rm pen}=\bigO\left(\log\left(\max\left\{n, \frac{1}{\varpi(n)}, \max\{\beta_{0r,d,l}\}, \max\{\alpha_{0r,k}\}, \nu_0 	\right\}\right)\right).
	$$
	The explicit definition of $\delta_{\rm pen} $ is shown in (B.25) of the SM.
\end{thm}

Theorem~\ref{thm:convergencerateswithpenalty} involves the complicated quantities $G_0$ and $\delta_{\rm pen}$ that are specified in the SM.
Note that, from \eqref{eqn:thm:penalty:Delta}, $\Delta$ is upper bounded by both $R^{D+1}$ and $R^2 \log \delta_{\rm pen}$.
Ignoring $R^2\log\delta_{\rm pen}$, our theorem still guarantees a bound for $\Delta$ and the error of the proposed estimator.
We demonstrate that $R^2\log\delta_{\rm pen}$ leads to a better bound in a wide range of settings.
For the unpenalized estimator, one can show that $G_0=0$ and $\Delta=R^{D+1}$.
As a solid special case of Theorem~\ref{thm:convergencerateswithpenalty}, the following Corollary~\ref{thm:convergencerates} explicitly states the corresponding results for the unpenalized estimator.

\begin{corollary}\label{thm:convergencerates}
	Suppose $\hat{m}_{\LS{}}(\mathbf{X})$ is the estimated regression function reconstructed from \eqref{eqn:UnconstraintOptimization}.
	If Assumptions~\ref{assm:density}--\ref{assm:splineMashRatio} hold, $R\ge R_0$,	and $n> C_7 {h_n^{-2-2/(\log h_n)} (\log^{-2} h_n} ) \big( R^{D+1}+\sum_{i=1}^D Rp_i+RK\big)  $
	for a sufficiently large constant $C_7>0$, we have the following result:
	\begin{equation}\label{eqn:thm:result2}
	\begin{aligned}
		\Vert \hat{m}_{\LS{}} -m_0\Vert^2
		&= \bigOp \Bigg( \frac{R^{D+1}+\sum_{i=1}^D Rp_i+RK}{n} \Bigg) \\
		&\quad \quad \quad \quad \quad + \bigOp \Bigg(\bigg \{ \frac{\sum_{r=1}^{R_0}\Vert \mathrm{vec} (\mathbf{B}_{0r})\Vert_1}{s} \bigg \}^2\frac{1}{K^{2\tau}}\Bigg).
	\end{aligned}
	\end{equation}
\end{corollary}

The proof of Theorem~\ref{thm:convergencerates} is not straightforward. 
To see this, we rewrite the regression function as $m(\mathbf{\mathbf{X}}_i)=\nu + \mathbf{z}_i^\tp \mathbf{a}$, where $\mathbf{z}_i=\vec\{\Phi(\mathbf{X}_i)\}$ and $\mathbf{a} =\vec(\mathbf{A})$.
The main challenge in studying the convergence rates is to determine the upper and the lower bounds for the eigenvalues of the Gram matrix of ``design'', i.e., $\mathbf{Z}^\tp \mathbf{Z}/n$, where $\mathbf{Z}=(\mathbf{z}_1, \dots, \mathbf{z}_n)^\tp$.
For a fixed number of predictors,
\citet{huang2010variable} derived the bounds of the eigenvalues by using Lemma~3 of \citet{Stone85} and Lemma~6.2 of \citet{zhou1998local}.
Directly using the results of \citet{Stone85} can result in a diminishing lower bound at an exponential rate of $s$ when the number of predictors $s$ goes to infinity with the sample size $n$ \citep[see, e.g.,][]{chen2018error}.
Therefore, a new study of eigenvalue bounds is necessary to carefully harness the low-CP-rank structure of $\brm{A}$. 
In our proof, we fill in the gap by obtaining well-controlled bounds of the restricted eigenvalues over a set of low-CP-rank tensors, which hold with high probability when the sample size is of order $K^2\big\{ \min(R^{D+1}, R^2\log \delta_{\rm pen})+\sum_{i=1}^D Rp_i+RK\big\}$;
see the more precise version (B.15), (B.24), and (B.76) in the SM.
In addition, to derive a potentially tighter bound ($R^2\log \delta_{\rm pen}$) in $\Delta$, we have also developed a novel entropy result (Lemma~B.9 of the SM).

Suppose we choose a penalty to make the bias term dominated by other terms on the right-hand side of \eqref{eqn:thm:penalty:result2}. Roughly speaking, the first and second terms in \eqref{eqn:thm:result2} correspond to the estimation error and the approximation error, respectively.
The estimation error roughly scales with the number of effective parameters in the model.
For different combinations of orders between the parameters $(R, R_0, p_i)$ and the sample size $n$, we can tune the number of basis functions $K$ to achieve the best rate of convergence. 
Let
\[
	\delta_1 =
	\Delta +\sum_{i=1}^D Rp_i \quad \text{and}  \quad \delta_2=\bigg \{ \frac{\sum_{r=1}^{R_0}\Vert \mathrm{vec} (\mathbf{B}_{0r})\Vert_1}{s} \bigg \}^2.
\]	
If $n^{1/(2\tau+1)} = \bigO(\delta_1\delta_2^{-1/(2\tau+1)}R^{-2\tau/(2\tau+1)})$, the best rate is $\delta_1/n$ when $K$ satisfies 
$$
	(n\delta_2/\delta_1)^{1/2\tau} \lesssim K \lesssim \delta_1/R,
$$ 
where $a \lesssim b$ means $a = \bigO(b)$.
However, if $\delta_1\delta_2^{-1/(2\tau+1)}R^{-2\tau/(2\tau+1)}= \smallO(n^{1/(2\tau+1)})$, letting $K\asymp (n\delta_2/R)^{1/(2\tau+1)}$ results in the best rate $(R/n)^{2\tau/(2\tau+1)}\delta_2^{1/(2\tau+1)}$. 
In one special case, when $p_i$, $R$, and $R_0$ do not increase with $n$, choosing $K \asymp n^{1/(2\tau+1)}$ leads to the optimal rate of convergence $n^{-2\tau/(2\tau+1)}$, as in \citet{stone1982optimal}. 
Theorem~\ref{thm:convergencerates} indeed generalizes the canonical results to tensor low-rank modeling with broadcasting.

Note that Theorem~\ref{thm:convergencerateswithpenalty} involves the term $\delta_{\rm pen}$ when penalization is effective.
As explained before, this can lead to a tighter bound for $\Delta$, so the convergence rate of the proposed estimator could be improved.
While the explicit interplay between different terms is detailed in Theorem~\ref{thm:convergencerateswithpenalty}, we provide an example in Corollary~\ref{cor:pen} to demonstrate this. Denote
\begin{equation}\label{def:M00}
\begin{aligned}
	\mathcal{M}_{00}  = \bigg \{m(\bm X) &:  m(\mathbf{X})= \nu+\sum_{r=1}^{R} \left \langle  \frac{\bbeta_{r,1}}{p_1} \circ  \dots  \circ \frac{\bbeta_{r,D}}{p_D}, F_{r}(\mathbf{X}) \right \rangle,  \\
	& \qquad \nu \le V_1,~
	\bigg \Vert \frac{\bbeta_{r,d}}{p_d} \bigg \Vert_1  \le V_2, ~ \bbeta_{r,d} \in \mathbb{R}^{p_d}, ~ \mbox{and} ~ \int_0^1 f_{r}^2(t) \mathrm{dt}  \le V_3 \bigg \},
\end{aligned}
\end{equation}
where $V_1, V_2, V_3 >0$ are constants.

\begin{corollary}\label{cor:pen}
	Suppose the same conditions of Theorem~\ref{thm:convergencerateswithpenalty} hold.
	Assume $m_0 \in \mathcal{M}_{00}$. If we adopt the ridge penalty (i.e., \eqref{def:G_penalty} with \eqref{eqn:def:elastic_net} and $\lambda_2 =0$) and choose a $\lambda_1$ that satisfies (B.27)	of the SM, we then have $\log \delta_{\rm pen} \le  V_4K\log n$, where $V_4 >0$ is a constant, and the bias term is dominated by its preceding two terms. 
	Thus, \eqref{eqn:thm:penalty:result2} can be written as 
	\[
		\Vert \hat{m}{\PLS{}} - m_0\Vert_{}^2 \le  
		V_5 \bigg\{\frac{\min(R^{D+1}, R^2K\log n) +\sum_{i=1}^D Rp_i+RK}{n} \bigg\} +V_6 \frac{R^2}{K^{2\tau}},
	\]
	where $V_5, V_6 >0$ are constants.
\end{corollary}
The proof of Corollary~\ref{cor:pen} is deferred to Section~B.4 of the SM.
One can obtain similar results for our estimator combined with the elastic-net penalty for any fixed $\lambda_2\in [0,1]$, and the corresponding error upper bounds have the same order as the one shown in Corollary~\ref{cor:pen}.
Further improvement and discussions are deferred to Section~B.7 of the SM.

To evaluate the optimality of our upper bounds, we provide a minimax lower bound as follows.
\begin{thm}\label{thm:minimax_nonlinear}
	Suppose that the random error $\epsilon$ is a Gaussian random variable with variance $\sigma^2 < \infty$. 	
	Under Assumptions~\ref{assm:density} and~\ref{assm:functionSmoothness}, if $\max_d {p_d} \ge V_7R$, where $V_7> 0$ is a constant, we have	
	\begin{equation}\label{eqn:thm:minimaxlower}
		\inf_{\hat{m}} \sup_{m \in \mathcal{M}_{00}} \mathbb{E}(\Vert \hat{m} - m \Vert^2 ) \ge V_8
		\max \bigg \{  \min \bigg \{\frac{\sum_d p_d R}{n  },   
		V_9 \bigg \},   \bigg(\frac{R}{n} \bigg )^{\frac{2\tau}{2\tau + 1}}\bigg \},
	\end{equation}
	where $V_8, V_9 >0$ are constants and $\mathcal{M}_{00}$ is defined in \eqref{def:M00}.
\end{thm}
The proof of Theorem~\ref{thm:minimax_nonlinear} is deferred to Section~B.5 of the SM. 
First, we discuss the general case where $R_0$ is allowed to increase with the sample size $n$. 
Suppose that $R \ge R_0$ and $R\asymp R_0$. 
When $\sum_dp_d R/n$ dominates $(R/n)^{2\tau/(2\tau+1)}$ in \eqref{eqn:thm:minimaxlower}, the upper bound in Theorem~\ref{thm:convergencerateswithpenalty} matches the minimax lower bound in terms of order, under mild conditions. 
To be more specific, we assume that the size of the tensor satisfies
\begin{equation}\label{minimax:discuss:con}
\begin{aligned}
	\max\big [ (R_0n)^{\frac{1}{2\tau+1}}, \min \{R_0^D, R_0\log\delta_{\rm pen} \} \big]
	\lesssim
	\sum_d p_d
	\lesssim
	\frac{n^{(2\tau -1)/(2\tau +1)} \log^3 n}{R_0\log n -R_0 (2\tau +1)}.
\end{aligned}
\end{equation}
Then, with a proper choice of $K$ and penalty such that $G_0/n$ is dominated by the first two terms on the right-hand side of \eqref{eqn:thm:penalty:result2},
our estimator achieves the minimax optimal rate.
For the elastic-net penalty with a fixed $\lambda_2\in[0,1]$ and an appropriate choice of $\lambda_1$, the lower bound in the condition of \eqref{minimax:discuss:con} can be relaxed to
\begin{equation}\label{minimax:discuss:con2}
	\min \big[ \max\{(R_0n)^{\frac{1}{2\tau+1}}, R_0^D \}, R_0 n^{\frac{1}{2\tau+1}} (\log n)^{\frac{2\tau}{2\tau+1}} \big ].
\end{equation}
If $R_0$ is bounded, then the condition of \eqref{minimax:discuss:con} can be replaced by 
\[
	\sum_d p_d = \bigO\left(\frac{n^{(2\tau -1)/(2\tau +1)} \log^3 n}{\log n -(2\tau +1)}\right).
\]
Further discussions on the lower bound are deferred to Section~B.7 of the SM.

Finally, we note that there exists a gap between the estimators studied in Proposition~\ref{prop:convergence} and Theorem~\ref{thm:convergencerateswithpenalty}. 
Theorem~\ref{thm:convergencerateswithpenalty} provides the convergence results of the global optimum of \eqref{eqn:constraintpenalizedintheory}, whereas Proposition~\ref{prop:convergence} shows that any accumulation point of the sequence generated by Algorithm~\ref{algo:algorithmI} is a stationary point, which is not necessarily the global optimum.
However, our empirical experiments confirm that the output of the proposed algorithm has good performance, and they indicate that an essential theoretical property that bridges the gap tends to hold with high probability in practice. 
Further theoretical insights and empirical exploration about this gap are presented in Section~F of the SM.

\section{Experiments}\label{sec:5}

To evaluate the empirical performance of the proposed broadcasted nonparametric tensor regression (\BNTR{}) with the elastic-net penalty, we compared \BNTR{} with two alternatives upon fully synthetic (Section~\ref{sec:syndata}), real (Section~\ref{sec:realdata}), and simulated real (Section~\ref{sec:simuMonkey}) data sets.
These alternatives are (i) elastic-net regression on the vectorized tensor predictor (ENetR) \citep{zou2005regularization} and
(ii) tensor linear regression (TLR) \citep{Zhou-Li-Zhu13}.
Throughout the numerical experiments, ENetR and TLR were implemented by the R package ``glmnet'' 
\citep{friedman2010regularization} and the MATLAB toolbox ``TensorReg'' \citep{Zhou-Li-Zhu13}, respectively.
Since the proposed rescaling strategy \eqref{eqn:rescale} can be applied to the computation of tensor linear regression, we also considered this algorithmic modification in our study.
To distinguish this modification, we use TLR and TLR-rescaled to represent the algorithm of \citet{Zhou-Li-Zhu13} and our algorithm with a scaling strategy, respectively. 
As suggested by one reviewer, we conducted additional numerical experiments. 
These results are briefly summarized in Section~\ref{ssec:otherExpr} and detailed reports are given in the SM.

The tuning parameters were selected as described in Section~\ref{sec:3}. For the grids of $R$ and $(\lambda_1,\lambda_2)$, we followed the suggestions of \cite{he2018boosted}.
Specifically, the following grids were considered in the synthetic experiments: $R \in \{1,2,3,4,5\}$, $\lambda_1 \in \{10^{-2},5\times 10^{-1}, 10^{-1}, \dots, 10^2, 5\times 10^2, 10^3\}$ and $\lambda_2 \in \{0,0.5,1\}$ for \BNTR{}, TLR, and TLR-rescaled. 
For ENetR, we used the same grids of $\lambda_1$ and $\lambda_2$. 
In Sections~\ref{sec:realdata} and~\ref{sec:simuMonkey}, due to the larger sample sizes of the corresponding data, we considered the rank $R \in \{1,2,3,4,5,6,7,8\}$, and penalized parameters $\lambda_1 \in \{10^{-2}, 2.5 \times 10^{-2}, 5 \times 10^{-2}, 7.5 \times 10^{-2}, 10^{-1}, \dots$, $10^2, 2.5 \times 10^2, 5 \times 10^2, 7.5 \times 10^2, 10^3 \}$, $\lambda_2 \in \{0, 0.5, 1\}$ for \BNTR{}, TLR, TLR-rescaled, and the same grids of $\lambda_1$ and $\lambda_2$ for ENetR. 
We found that these grids were wide enough in our experiments because the boundary grid points are seldom selected. 
The code for implementing our method can be found online at \url{https://github.com/yazhou2019/BNTR} to reproduce all numerical results in this paper.

\subsection{Synthetic data}\label{sec:syndata}
We fix the dimension of $\mathbf{X}$ to $64\times 64$, as in \citet{Zhou-Li-Zhu13}, and consider five different regression functions:
\begin{itemize}
	\item[Case~1:]
	\begin{center}
		$y=m_1(\mathbf{X})+\epsilon_1=1+ \langle \mathbf{B}_1 , \mathbf{X} \rangle + \epsilon_1,$
	\end{center}
	\item[Case~2:]
\begin{center}
	$y=m_2(\mathbf{X})+\epsilon_3=1+ \langle \mathbf{B}_{2}, F_1(\mathbf{X}) \rangle + \epsilon_2,$
\end{center}
	\item[Case~3:]
	\begin{center}
		$y=m_3(\mathbf{X})+\epsilon_2=1 + \langle \mathbf{B}_{3} , F_{2}(\mathbf{X}) \rangle +  \langle  \mathbf W_1, F_{4}(\mathbf{X}) \rangle  +  \langle  \mathbf W_2, F_{5}(\mathbf{X}) \rangle +\epsilon_3,$
	\end{center}
	\item[Case~4:]
	\begin{center}
		$y=m_4(\mathbf{X})+\epsilon_4=1+  \langle \mathbf{B}_{41}, F_1(\mathbf{X}) \rangle+ \langle \mathbf{B}_{42}, F_3(\mathbf{X}) \rangle + \epsilon_4,$
	\end{center}
	\item[Case~5:]
		\begin{center}
			$y=m_5(\mathbf{X})+\epsilon_5=1 + \langle \mathbf{B}_{5}, F_3(\mathbf{X}) \rangle + \epsilon_5,$
	\end{center}
\end{itemize}
where $F_1,\dots,F_5$: $[0,1]^{64 \times 64}\rightarrow \mathbb{R}^{64 \times 64}$ are specified as
\begin{align*}
	(F_1(\mathbf{X}))_{i_1,i_2}&=f_1(X_{i_1,i_2})=X_{i_1,i_2}^2 \exp(X_{i_1,i_2}^2)  - 0.5X_{i_1,i_2} \exp(X_{i_1,i_2}),\\
	(F_2(\mathbf{X}))_{i_1,i_2}&=f_2(X_{i_1,i_2})=(4X_{i_1,i_2}^2 -2X_{i_1,i_2})/(X_{i_1,i_2}^2-X_{i_1,i_2}-2),\\
	(F_3(\mathbf{X}))_{i_1,i_2}&=f_3(X_{i_1,i_2})= 3X_{i_1,i_2}^2 - 2X_{i_1,i_2}, \\
	(F_4(\mathbf{X}))_{i_1,i_2}&=f_4(X_{i_1,i_2})= 0.5  \sin(2\pi X_{i_1,i_2}), \\
	(F_5(\mathbf{X}))_{i_1,i_2}&=f_5(X_{i_1,i_2})= 2X_{i_1,i_2} \sinh(X_{i_1,i_2}-0.5), 
\end{align*}
for $i_1= 1, \dots, 64$ and $i_2 = 1, \dots, 64$.
Here, $\mathbf{B}_1$, $\mathbf{B}_2$, $\mathbf{B}_{3}$, $\mathbf{B}_{41}$, and $\mathbf{B}_{42}$ are scaling (coefficient) matrices and their shapes are depicted in the first column of Figure~\ref{plot:Region_com_models}.
Their ranks are 2, 4, 4, 2 and 2, respectively.
The corresponding CP parameters are independently generated from $\mathrm{Unif}\{(-1, -0.5) \cup (0.5, 1)\}$.
There are two additional rank-8 scaling matrices, $\mathbf W_1$ and $\mathbf W_2$, in Case~3 to violate the exact low-CP-rank structure; their nonzero entries follow $\mathcal{N}(0, 0.1^2)$.
In Case~5, $\mathbf{B}_{5}$ is a binary matrix of a butterfly shape as depicted in the left-bottom corner of Figure~\ref{plot:Region_com_models}. 
In brief, these regression functions are used to illustrate five different situations:
(1) a linear setting with one important rank-2 subregion; 
(2) a nonlinear setting with two separate important rank-2 subregions that share the same nonlinearity (but different scalings); 
(3) a high-rank nonlinear setting that can be well approximated by a low-rank (rank-4) structure;
(4) a low-rank nonlinear setting with two separate important rank-2 subregions that exhibit different nonlinearities;
and (5) a high-rank nonlinear setting that cannot be well approximated by a low-rank structure.

For each case, the tensor covariate $\mathbf{X}$ and the error $\varepsilon_j$ were generated such that $X_{i_1,i_2}\sim \mathrm{Uniform}[0,1]$ and $\epsilon_j \sim \mathcal{N}(0, \sigma_j^2)$ independently across $i_1=1,\ldots,64$ and $i_2=1,\ldots, 64$.
The parameter $\sigma_j$ was set to $10\%$ of the standard deviation of entries of $m_j(\brm{X})$.
We generated 50 simulated datasets independently for each sample size $n=500$, $750$ and $1000$.
We included TLR, TLR-rescaled, ENetR, and \BNTR{} in our study.
The grids of tuning parameters used for the hold-out method are given at the beginning of Section~\ref{sec:5}.

To evaluate the estimation performance on the regression function, we define the integrated squared error (ISE) of the regression function as
\begin{equation*} 
	\mbox{ISE}=\Vert \hat{m}-m_0\Vert_{L_2}^2 =\int_{\mathcal{X}} \{\hat{m}(\brm{X}) - m_0(\brm{X}) \}^2\mathrm{d}\brm{X},
\end{equation*}
where $m_0$ and $\hat{m}$ are the true and estimated functions, respectively. 
The average ISEs of the proposed and alternative methods are summarized in Table~\ref{estimation_synthetic}. 
For nonlinear settings, i.e., Cases~2--5, \BNTR{} outperforms the other methods significantly. 
In particular, \BNTR{} reduces the average ISE by $47\%$--$92\%$ in Case~2, $57\%$--$74\%$ in Case~3, $52\%$--$92\%$ in Case~4, and $26\%$--$50\%$ in Case~5, across $n=500$, $750$, and $1000$, compared with that of the best alternative method in each case and sample size.
Note that our model is only approximately correct in Case~3. The estimation performance on the regression functions in Case~3 shows that BroadcasTR is able to capture the major trend of the true regression function.
For Case~1, which is the linear setting and favors the alternative methods, \BNTR{} remains competitive. 
It outperforms ENetR by showing $85\%$--$93\%$ reduction in terms of the average ISE across $n=500$, $750$, and $1000$, and performs closely to TLR and TLR-rescaled.
TLR-rescaled performs better than TLR, which indicates that the proposed rescaling strategy leads to significant improvements.

\begin{table}
	\caption{Estimation performance for the synthetic data. Reported are the averages of ISE and the corresponding standard deviations (in parentheses) based on 50 data replicates. In the first column, $n$ is the total sample size. The best performance is shown in boldface.
	\label{estimation_synthetic}\vspace{0.5ex} }
\centering
		\fbox{
			\begin{tabular}{c|ccccc}  
				\multirow{2}*{$n$}& \multirow{2}*{Case}& \multirow{2}*{TLR}  & \multirow{2}*{TLR-rescaled}   & \multirow{2}*{ENetR} & \multirow{2}*{\BNTR{}}   \\ 
				\\  \hline
				\multirow{4}*{$500$}&1 &0.647 (0.137) & \textbf{0.451} (0.084) &6.047 (0.24) &0.924 (0.503) \\
			
				&2 &28.5 (2.692) &27.405 (1.44) &28.6 (1.075) &\textbf{14.549} (3.15) \\
				&3&5.468 (0.577) &5.482 (0.618) &8.103 (0.217) &\textbf{2.351} (0.544) \\
				&{4} &31.991 (2.874) &31.193 (2.444) &31.442 (1.3) &\textbf{14.926} (5.064) \\
				&{5}&20.747 (1.769) &20.05 (1.524) &21.569 (0.695) &\textbf{14.833} (1.21) \\
				\hline
				\multirow{4}*{$750$}&1 &{0.431} (0.078) &\textbf{0.323} (0.044) &5.519 (0.23) &0.425 (0.092) \\
				
				&2&23.725 (1.53) &23.918 (1.411) &28.052 (0.825) &\textbf{3.974} (1.941) \\
				&3&3.98 (0.463) &4.026 (0.361) &7.826 (0.158) &\textbf{1.192} (0.318) \\
				
				&4&26.094 (3.01) &25.354 (2.333) &30.777 (0.859) &\textbf{4.300} (1.899) \\
				&{5}&17.902 (1.147) &17.908 (0.947) &21.247 (0.726) &\textbf{11.172} (1.026) \\
				\hline
				\multirow{4}*{$1000$}&1 &0.361 (0.052) &\textbf{0.266} (0.036) &4.455 (0.307) &0.311 (0.05) \\
				&2&20.855 (1.612) &20.91 (1.176) &27.512 (0.82) &\textbf{1.692} (0.849) \\
				&3&3.213 (0.276) &3.331 (0.427) &7.435 (0.275) &\textbf{0.842} (0.095) \\

				&4&21.446 (1.618) &21.002 (1.119) &30.031 (0.688) &\textbf{1.77}6 (0.909) \\
				&{5}&16.798 (0.913) &16.737 (0.62) &20.611 (0.517) &\textbf{8.378} (0.827) 
			\end{tabular}  
	}
\end{table}

To evaluate the estimation performance on the entry-wise functions, we note that the estimated entry-wise effect can be summarized by $\|\hat{m}_{i_1,\dots,i_D}\|_{L_2}=\{\int_0^1 \hat{m}_{i_1,\dots,i_D}^2(x) \mathrm{d}x\}^{1/2}$, the $L_2$-norm of the estimated entry-wise function. 
Specifically, $\hat{m}_{i_1,\dots,i_D}$ is a nonlinear function for \BNTR{} and a linear function for the alternatives.
By combining these entry-wise nonlinear effects, we obtain a tensor of dimension $p_1\times\cdots\times p_D$, where the $(i_1,\dots,i_D)$-th element is $\|\hat{m}_{i_1,\dots,i_D}\|_{L_2}$.
Below, it is called the norm tensor of the corresponding regression method.
The important subregions for \BNTR{}, TLR, TLR-rescaled, and ENetR are identified by their norm tensors (matrices), which signify the effect of each entry on the tensor covariate. 
For each method, the norm tensor corresponding to the simulated dataset of the upper median ISE performance among 50 replicates (i.e., the 26th best performing replicate) of $n=1000$ is depicted in Figure~\ref{plot:Region_com_models}, and the norm tensor of the truth is also depicted at the leftmost column of Figure~\ref{plot:Region_com_models}. 
This shows that \BNTR{}, TLR, and TLR-rescaled have similar region selection results for Case~1 (a low-rank setting with linear effects), whereas \BNTR{} is much better than TLR and TLR-rescaled for Cases~2--5 (nonlinear settings). 
Note that in Case~3, our model is only approximately correct. The superior performance of \BNTR{} on the norm tensor implies that the proposed method is able to capture the major trend of the true regression function. 
In all these cases, ENetR is unable to identify the important regions, which provides empirical evidence that incorporating the tensor low-rank structure can improve region selection, hence enhancing interpretability.
We also present the region identification performance of \BNTR{} for smaller sample sizes ($n= 500$ and $750$) in Figure~S.1, Section~E.1 of the SM. 
It is not surprising that when the sample size increases, the accuracy of the identified regions of our proposed method improves. 
From the above results, we observe that Case~5 is a more difficult setting for all methods due to its high-CP-rank nature.
\BNTR{} remains the most competitive among the competing methods.

\begin{figure}[!h]
	\centering
	\includegraphics[width=0.85\textwidth]{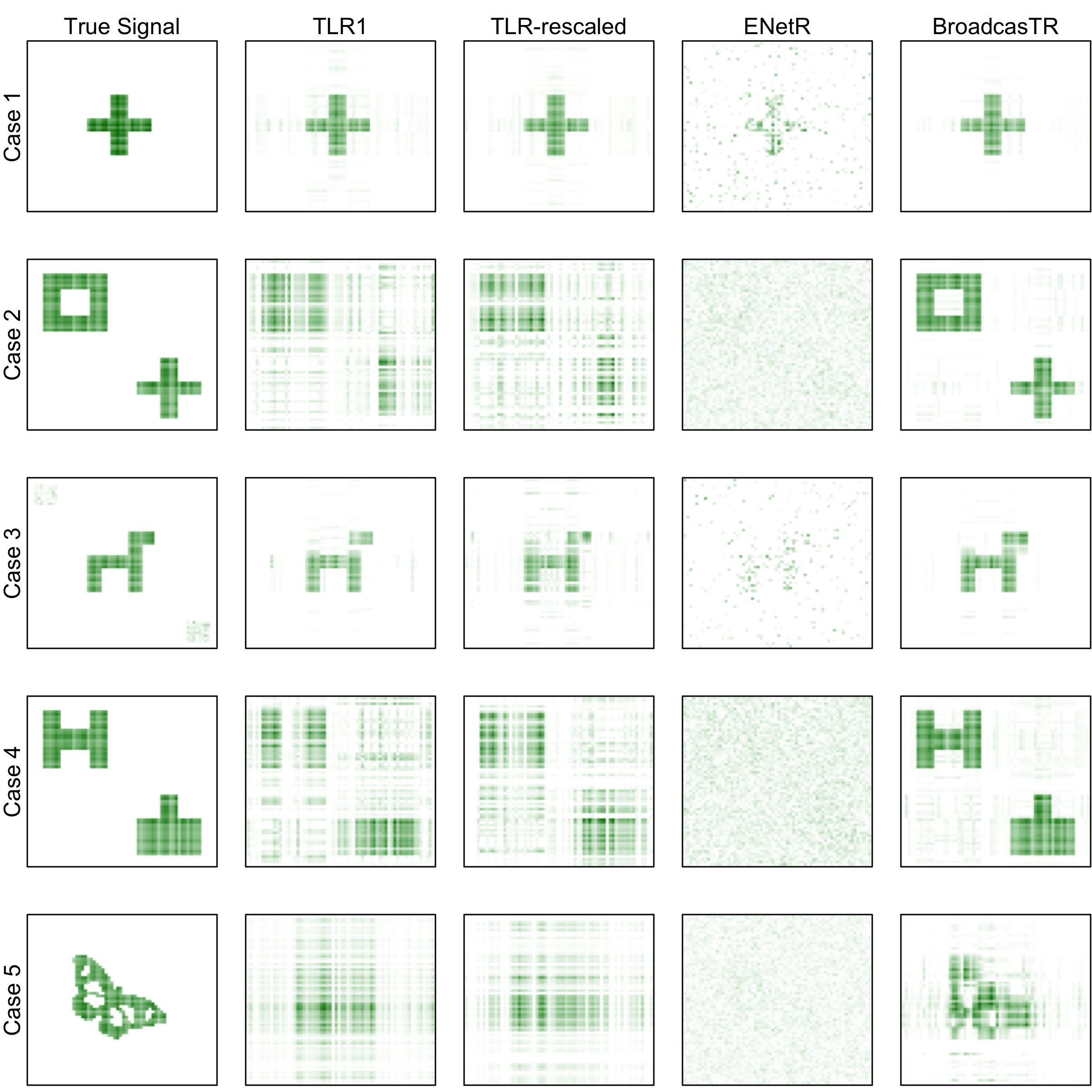}
	
	\includegraphics[width=0.5\textwidth]{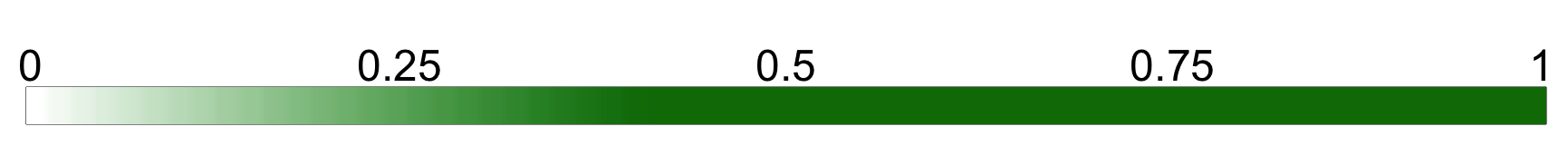}
	
	\caption{Region selection of the competing methods for $n=1000$. 
	The first column presents the true norm tensors in Cases~1--5.
	The remaining four columns display the estimated norm tensors corresponding to the replicate of the upper median ISE performance for the competing methods. The columns from left to right correspond to TLR, TLR-rescaled, ENetR, and \BNTR{}, respectively.
	The plots in all columns share the same color scheme as shown in the color bar at the bottom. \label{plot:Region_com_models} }
\end{figure}

\subsection{Monkey electrocorticography data}\label{sec:realdata}
We also evaluated the proposed and alternative methods on the publicly available monkey electrocorticography (ECoG) dataset \citep{shimoda2012decoding}. 
The corresponding tensor covariate is a preprocessed ECoG signal
\citep{shimoda2012decoding}, organized as a third-order tensor of dimensions $64\times 10\times 10$ (channel $\times$ frequency $\times$ time). 
The response variable is the movement distance of a monkey's left shoulder marker in a particular direction.
The data preprocessing procedure is similar to that of \citet{shimoda2012decoding} and \citet{hou2015online}. 
Following \citet{hou2015online}, we chose $10000$ observations from the whole dataset starting from the second minute of the experiments. 
The corresponding details are given in Section~H of the SM.
The dataset was then randomly split into three different subsets, i.e., a training set, a validation set, and a test set of sizes $4000$, $1000$, and $5000$, respectively.

To measure performance, we used the mean squared prediction error (MSPE), defined as
\begin{equation}\label{eqn:mspeDef}
	\text{MSPE}=\frac{1}{n_{\mathrm{test}}}\sum_{i=1}^{n_{\mathrm{test}}}(y_{\mathrm{test},i} - \hat{y}_{\mathrm{test},i})^2,
\end{equation}
where $n_{\mathrm{test}}$ is the size of the test set, and $\hat{y}_{\mathrm{test},i}$ is the prediction value of the $i$-th observed value $y_{\mathrm{test},i}$ in the test set.
We fit the model 10 times with 10 different training/validation/test splits.
The average MSPEs over these 10 fittings are reported in Table~\ref{monkeydata}, from which we can see that \BNTR{} performs significantly better than the others in terms of prediction.

\begin{table}
	\caption{Prediction performance for the monkey electrocorticography data. Reported are the averages of MSPE and the corresponding standard deviations (in parentheses) based on 10 random splits. The best performance is shown in boldface. \label{monkeydata} \vspace{0.5ex}}
	\centering
	\fbox{
		\begin{tabular}{c|cccc}    
			Data & TLR & TLR-rescaled & ENetR & \BNTR{}  \\
			\hline
			Monkey & 3.1703 (0.0418) & 
			3.0923 (0.0699)& 3.1256 (0.0431) & 
			$\bm{2.5468} $ (0.0961)  \\
		\end{tabular}
	}
\end{table}  

To quantify performance of the proposed \BNTR{} method on the entry-wise level, we calculate the norm tensor of channel $\times$ frequency $\times$ time through the $L_2$-norm of the estimated entry-wise functions, as described in Section~\ref{sec:syndata}. 
Like \cite{shimoda2012decoding}, the contributions of  channels can be summarized by summing the entries of the norm tensor over frequency and time. 
We observe that channels 13, 14, 15, 50, and 52 are more important, which is consistent with existing findings \citep{schaeffer2016switching,ran2022hybrid}.

\subsection{Simulated monkey electrocorticography data}\label{sec:simuMonkey}

In addition to Section~\ref{sec:syndata}, we also consider a more realistic simulation according to the monkey electrocorticography data.  
Let $\{(\bm X_i, y_i): i=1, \dots, 10000\}$ denote the real observations of the monkey electrocorticography data. 
We generated a synthetic dataset according to the fitted regression function of BroadcasTR with the upper median performance in MSPE, defined in \eqref{eqn:mspeDef}, among 10 random splits (i.e., the 6th best performing split).
Let $\hat{m}$ denote this fitted regression function. 
The covariates $\bm X_i$'s used in this experiment are real observations. 
The response variable $\tilde{y}$ was generated according to 
$$
	\tilde{y}_i = \hat{m}(\bm X_i) + \tilde{\epsilon}_i , \quad i =1, \dots, 10000,
$$
where each $\tilde{\epsilon}_i$ was sampled with replacement from the residuals with respect to the fitted regression $\hat{m}$.
As in Section~\ref{sec:realdata}, the generated dataset was randomly split into three different subsets, i.e., a training set, a validation set, and a test set of sizes $4000$, $1000$, and $5000$, respectively; and this was repeated 10 times. 
In each replicate of the random split, we trained the competing methods in Section~\ref{sec:syndata} (i.e., TLR, TLR-rescaled, ENetR, and BroadcastR) on the training set, selected the tuning paper based on the validation set, and assessed the performance on the test set in terms of prediction.
We report the MSPE results in Table~\ref{tab:monkDataSimu} based on 10 random splits.  
BroadcasTR performs significantly better than the others, and this result is consistent with the analysis in Section~\ref{sec:realdata}.  

\begin{table}
	\caption{Prediction performance for the simulated monkey electrocorticography data. Reported are the averages of MSPE and the corresponding standard deviations (in parentheses) based on 10 random splits. The best performance is shown in boldface. \label{tab:monkDataSimu} \vspace{0.5ex}}
	\centering
	\fbox{
		\begin{tabular}{c|cccc}    
			Data & TLR & TLR-rescaled & ENetR & BroadcasTR  \\
			\hline
			Monkey simulation & 3.3061 (0.0052) & 3.2850 (0.0037)& 3.2936 (0.0036) & 
			$\bm{2.6727} $ (0.0045)  \\ 
		\end{tabular} }
\end{table}

\subsection{Other experiments}\label{ssec:otherExpr}
We conducted an additional simulation experiment with a different design of covariates, and we also compared our method with existing nonlinear tensor regression models. 
These results are consistent with those in Section~\ref{sec:syndata}.
Details are provided in Section~E of the SM.
An additional real data analysis, based on a neuroimaging dataset from the Alzheimer's Disease Neuroimaging Initiative \citep[ADNI,][]{mueller2005alzheimer} is presented in Section~G of the SM. 
These results are consistent with those of monkey electrocorticography data in Section~\ref{sec:realdata}. 
Moreover, our finding that nonlinearity can help improve the prediction of Alzheimer's disease has been confirmed in the literature \citep[e.g.,][]{2018Multiscale}.
The subregions selected by the proposed \BNTR{} are interpretable by domain knowledge; see, e.g., \cite{raichle2001default} and \cite{salat2001selective}.
We refer interested readers to Section~G of the SM.

\section{Discussion}\label{sec:discuss}

\subsection{Approximation of reality}\label{sec:approximate}
The proposed nonparametric tensor regression model, \BNTR{}, is at best an \textit{approximation} of reality. 
As such, the goal is to recover the main trend of the true regression function. 
We do not recommend relying on our model when the approximation is poor.
The proposed BroadcasTR model is motivated by the spatially-clustered effects that are commonly observed in tensor data \citep{miranda2018tprm} and the confirmed benefit of including nonlinear effects of tensor covariates \citep{ran2022hybrid,2018Multiscale} in some applications. 
The constraints that appear in the estimation procedure are in keeping with our model assumption.
For these applications, even if the model is just an approximation, the parameters associated with the approximation should still be meaningful. 
However, it is certainly natural to ask whether the parameters (approximately) convey the intended meaning.
A fully data-adaptive tool to (e.g.) test for model misspecification is beyond the scope of this work.

\subsection{Sample size and design}
The proposed method involves the estimation of several nonparametric univariate functions and rank-1 tensors. 
Accurate estimation of these quantities requires a substantial sample size. 
As in most estimation procedures, the sample size required to achieve a certain level of accuracy depends on many unknown factors (e.g., the signal-to-noise ratio, the complexity of the regression function, and the design).
This highlights the importance of uncertainty quantification. The lack of such tools is a limitation of our work.

Our theory requires the Gram matrix for basis functions to satisfy some eigenvalue conditions. See the discussion of Assumption~\ref{assm:density} in Section~\ref{sec:4}.
This requirement is shared by linear regression as well as other spline methods.
In our numerical experiments, we show that our method works well for a wide range of designs. 
Specifically, independent and/or uniform design is not required for our method to work in practice.

\section*{Data availability}
The data that support the findings of this study are available at \href{http://neurotycho.org/epidural-ecog-food-tracking-task}{http://neurotycho.org}. 
The source code for the implementation of the proposed method is available at \url{https://github.com/yazhou2019/BNTR}.

\section*{Acknowledgments}
The authors thank the editors, the associate editor, and the anonymous reviewers for comments that helped significantly improve this work. 
This research was supported by Public Computing Cloud, Renmin University of China.
The research of Raymond K.~W.~Wong was partially supported by the U.S. NSF (DMS-1806063, DMS-1711952 and CCF-1934904).
The research of Kejun He was partially supported by NSFC (No.11801560).

\section*{Conflict of interest}
None declared.

\bibliographystyle{rss}
\bibliography{tenreg}

\newpage
\appendix
\setcounter{equation}{0}
\setcounter{equation}{0}
\setcounter{table}{0}
\setcounter{figure}{0}
\renewcommand{\theequation}{A.\arabic{equation}}
\renewcommand{\thetable}{S.\arabic{table}}
\renewcommand{\thefigure}{S.\arabic{figure}}
\renewcommand{\theassumption}{S.\arabic{assumption}}
\renewcommand{\thelemma}{A.\arabic{lemma}}

\vspace*{-10pt}
\begin{center}
\Large \bf Supplementary Material
\end{center}
\section{Algorithmic analysis}

\subsection{Equivalent function classes}\label{Appendixequivalentbasis}

To begin with, we present some notations which will be used later. 
Let \begin{equation}\label{eqn:def:mathcal_J}
	\mathcal{J}=\{\bm{j} = (i_1,\cdots,i_D):\,1\leq i_d\leq p_d,\, d = 1, \dots, D \}.
\end{equation}
By noting that $s=\Pi_{d=1}^D p_d$ (as in the main paper), we have the cardinality $|\mathcal{J}|=s$.
Recall that $\{b_k(x)\}_{k=1}^K$ is the B-spline basis with order $\zeta$ and knots $0=\xi_1  < \xi_2 < \cdots <\xi_{K-\zeta+2}=1$, and $\{\tilde{b}_k(x)\}_{k=1}^K$ is the truncated power basis as
\[\begin{gathered}
	\tilde{b}_1(x)=1, ~ \tilde{b}_2(x)=x, \dots, 	\tilde{b}_{\zeta}(x)= x^{\zeta-1}, \\ \tilde{b}_{\zeta+1}(x)=(x-\xi_2)^{\zeta-1}_+,  \dots,  \tilde{b}_K(x)=(x-\xi_{K-\zeta+1})^{\zeta-1}_+.
\end{gathered}\]
Denote $u_k=\int_0^1 b_k(x)\mathrm{d}x$ and $\tilde{u}_k=\int_0^1 \tilde{b}_k(x)\mathrm{d}x$. 
Let $\Phi(\brm{X})$ and $\check{\Phi}(\brm{X}) \in \mathbb{R}^{p_1 \times p_2 \times \cdots \times p_D \times K}$ be the tensor formed from the bases $\{b_k(x)\}_{k=1}^K$ and $\{\tilde{b}_k(x)\}_{k=1}^K$, respectively, as $(\Phi(\brm{X}))_{\mathbf {j},k} = b_k(X_{\mathbf {j}})$ and $(\check{\Phi}(\brm{X}))_{\mathbf {j},k} = \tilde{b}_k(X_{ \mathbf {j}})$, $\mathbf {j} \in \mathcal{J}$, $k=1,\ldots, K$. 
We further define two function classes,
\[\mathcal{M}_1=\bigg\{m(\mathbf{X}):   m(\brm{X}) = \nu_1 + \frac{1}{s}\sum^R_{r=1} \left\langle
\bbeta_{1r,1} \circ \bbeta_{1r,2} \circ\cdots \circ \bbeta_{1r,D} \circ \balpha_{1r},
\Phi(\brm{X})
\right\rangle, \sum_{k=1}^K\alpha_{1r,k}u_{k}=0\bigg\},
\]
and
\[\mathcal{M}_2=\bigg\{m(\mathbf{X}):   m(\brm{X}) = \nu_2 + \frac{1}{s}\sum^R_{r=1} \left\langle
\bbeta_{2r,1} \circ \bbeta_{2r,2} \circ\cdots \circ \bbeta_{2r,D} \circ \balpha_{2r}, \check{\Phi}(\brm{X}) \right\rangle,  \sum_{k=1}^K\alpha_{2r,k}\tilde{u}_{k}=0 \bigg\},
\]
where $\nu_l \in \mathbb{R}$,  $\bbeta_{lr,d} \in \mathbb{R}^{p_d}$, and $\mathbf {\alpha}_{lr}=(\alpha_{lr,1}, \cdots, \alpha_{lr,K})^\tp \in \mathbb{R}^{K}$, $r=1, \ldots, R$ and $d=1,\ldots, D$, for $l=1,2$. 
Recall that  
$ \tilde{\Phi}(\brm{X}) \in \mathbb{R}^{p_1\times \ldots \times p_D \times (K-1)}$ is defined by 
$(\tilde{\Phi}(\brm{X}))_{\mathbf {j},k} = \tilde{b}_{k+1}(X_{\mathbf {j}})$, $\mathbf {j} \in \mathcal{J}$, $k=1,\ldots, K-1$, i.e., the power basis with the constant function removed. 
We correspondingly define the following function class without the linear constraints, i.e., 
\[\mathcal{M}_3=\bigg\{m(\mathbf{X}):   m(\brm{X}) = \nu_3 + \frac{1}{s}\sum^R_{r=1} \left\langle
\bbeta_{3r,1} \circ \bbeta_{3r,2} \circ\cdots \circ \bbeta_{3r,D} \circ \balpha_{3r},
\tilde{\Phi}(\brm{X}) \right\rangle \bigg\},
\]
where $\nu_3 \in \mathbb{R}$,  $\bbeta_{3r,d} \in \mathbb{R}^{p_d}$, and $\balpha_{3r}=(\alpha_{3r,1}, \cdots, \alpha_{3r,K-1})^\tp \in \mathbb{R}^{K-1}$ for $r=1,\ldots, R$ and $d=1,\ldots, D$.

The following Lemma \ref{lem:optimization} says that these three function classes are equivalent. Thus, solving the constrained optimization problem as in \eqref{eqn:ConstraintOptimization} of the main paper is equivalent to solving the unconstrained optimization problem \eqref{eqn:UnconstraintOptimization} of the main paper.
\begin{lemma}\label{lem:optimization}
	$\mathcal{M}_1=\mathcal{M}_2=\mathcal{M}_3$.
\end{lemma} 

\begin{proof}	
	First, we show $\mathcal{M}_1=\mathcal{M}_2$. 
	By the property of spline basis \citep[see, e.g., Chapter 3 of][]{ruppert2003semiparametric}, there exists an invertible matrix $\mathbf{Q}$ such that $\mathbf{b}(x)=\mathbf{Q}\tilde{\mathbf{b}}(x)$, where $\mathbf{b}(x)=(b_1(x), \ldots, b_K(x))^\tp$ and $\tilde{\mathbf{b}}(x)=(\tilde{b}_1(x), \ldots, \tilde{b}_K(x))^\tp$. 
	Thus, $\{\tilde{b}_k(x)\}_{k=1}^K$ and $\{b_k(x)\}_{k=1}^K$ are equivalent bases, and it is then straightforward to see $\mathcal{M}_1=\mathcal{M}_2$. 
	
	Second, we prove $\mathcal{M}_2 = \mathcal{M}_3$. For $l=2$ and $3$, we denote 
	\[
		\mathbf{B}_{lr}= \bbeta_{lr,1}  \circ  \ldots  \circ \bbeta_{lr,D}, \quad r = 1,\dots, R,
	\]
	and let $\mathbf{J} \in \mathbb{R}^{p_1 \times \cdots \times p_D}$ be the tensor of which all the entries are 1.
	For any $m_2(\mathbf{X}) = \nu_2 + (1/s) \sum_{r=1}^{R} \langle \mathbf{B}_{2r}\circ \balpha_{2r}, \check{\Phi}(\brm{X}) \rangle \in \mathcal{M}_2$, we take $\mathbf{B}_{3r}=\mathbf{B}_{2r}$, $r=1,\dots, R$, $\nu_3=\nu_2 + (1/s) \sum_{r=1}^R \langle \mathbf{B}_{2r}, \alpha_{2r,1} \mathbf{J} \rangle $, and  $\alpha_{3r,k}=\alpha_{2r,k+1}$ for $k=1,\ldots,K-1$ and $r=1,\dots, R$. Using the above $\{\nu_3, \{\mathbf{B}_{3r}\}_{r=1}^R, \{\balpha_{3r}\}_{r=1}^R \}$, we construct  $m_3(\mathbf{X}) = \nu_3 + (1/s) \sum_{r=1}^{R} \langle \mathbf{B}_{3r}\circ \balpha_{3r}, \tilde{\Phi}(\brm{X}) \rangle \in \mathcal{M}_3$. It can be seen that $m_3(\mathbf{X}) =m_2(\mathbf{X})$, and thus $\mathcal{M}_2 \subset \mathcal{M}_3$. 
	Conversely, for any $m_3(\mathbf{X}) = \nu_3 + (1/s) \sum_{r=1}^{R} \langle \mathbf{B}_{3r}\circ \balpha_{3r}, \tilde{\Phi}(\brm{X}) \rangle \in \mathcal{M}_3$, denote $\sum_{k=1}^{K-1}\alpha_{3r,k}\tilde{u}_{k+1}=C_r$. 
	As $\tilde{u}_1 =\int_0^1 \tilde{b}_1(x)\mathrm{d}x  \ne 0$, we can construct $\alpha_{2r,1}=-C_r/\tilde{u}_1$, and $\alpha_{2r,k}=\alpha_{3r,k-1}$ for $k=2,\ldots,K$ and $r=1,\dots,R$. These $\balpha_{2r}$'s satisfy the constraints in $\mathcal{M}_2$ for $r=1,\dots, R$. 
	Further taking $\nu_2=\nu_3 + (1/s)\sum_{r=1}^R \langle \mathbf{B}_{3r}, C_r/\tilde{u}_1 \mathbf{J} \rangle$ and $\mathbf{B}_{2r}=\mathbf{B}_{3r}$, and construct $m_2(\mathbf{X}) =\nu_2 + (1/s) \sum_{r=1}^{R} \langle \mathbf{B}_{2r}\circ \balpha_{2r}, \check{\Phi}(\brm{X}) \rangle \in \mathcal{M}_2$. 
	It can be verified that  $m_2(\mathbf{X}) = m_3(\mathbf{X}) $. 
	Therefore, we conclude $\mathcal{M}_3 \subset \mathcal{M}_2$, and hence $\mathcal{M}_3 = \mathcal{M}_2$. \hfill$\blacksquare$
\end{proof}

\subsection{Rescaling strategy}\label{Appendixrescale}
For a general penalty function $G(\cdot)$ in the additive form \eqref{def:G_penalty} of the main paper, we let $\tilde{\rho}_{r,d} = \log \rho_{r,d} $ and the rescaling strategy \eqref{eqn:rescale} of the main paper can be written as 
\begin{equation}\label{general:rescaling}
\begin{aligned}
	&\qquad \argmin_{\tilde{\rho}_{r,d}, \, d \in \{d : \Vert  \bbeta_{r,d} \Vert_2 \ne \mathbf{0} \}} 
	\sum_{d=1}^D P_d \{ \exp( \tilde{\rho}_{1,d})\bbeta_{1,d}, \ldots, \exp( \tilde{\rho}_{R,d})\bbeta_{R,d}\} \\
	&  \text{s.t.} \quad \sum_{d=1}^D \tilde{\rho}_{r,d} = 0 \text{ with }  \tilde{\rho}_{r,d} = 0  \text{ if }  \bbeta_{r,d} = \mathbf 0, ~ r=1,\ldots,R  \text{ and }  d=1,\ldots,D.
\end{aligned}
\end{equation}
In general, \eqref{general:rescaling} does not have a closed-form solution, but we can use the Lagrangian and Newton's methods to solve \eqref{general:rescaling}. In particular, when the elastic-net penalty (i.e., \eqref{eqn:def:elastic_net} of the main paper) is used in \eqref{def:G_penalty} of the main paper, the optimization problem \eqref{general:rescaling} analogously becomes
\begin{equation}\label{eqn:lagrangedual}
\begin{aligned}
	& \argmin_{\tilde{\rho}_{r,d}, \, d \in \{d : \Vert  \bbeta_{r,d} \Vert_2 \ne \mathbf{0} \}} \sum_{d=1}^D\  \frac{1}{2}(1-\lambda_2) \Vert  \bbeta_{r,d} \Vert_2^2 \exp^2( \tilde{\rho}_{r,d} ) +\lambda_2 \Vert \bbeta_{r,d} \Vert_1 \exp( \tilde{\rho}_{r,d} )  \\
	&  \text{s.t.} \quad \sum_{d=1}^D \tilde{\rho}_{r,d} = 0 \text{ with }  \tilde{\rho}_{r,d} = 0  \text{ if }  \bbeta_{r,d} = \mathbf 0, ~ r=1,\ldots,R \text{ and } d=1,\ldots,D, 
\end{aligned}
\end{equation}
which is a convex optimization problem. As for the special boundary cases of $\lambda_2$ (i.e., $\lambda_2 \in \{0, 1\}$), the closed-form solutions as presented in the main text can be obtained by directly solving \eqref{eqn:lagrangedual}.

\subsection{Proof of Proposition \ref{prop:rescaledecrease}}
\begin{proof}
When the elastic-net penalty (i.e., \eqref{eqn:def:elastic_net} of the main paper) is employed, solving \eqref{eqn:rescale} of the main paper is equivalent to solving \eqref{eqn:lagrangedual}. Since \eqref{eqn:lagrangedual} is a strictly convex problem when $\bbeta_{r,d}\ne \mathbf{0}$ for $r=1,\ldots, R$ and $d=1, \ldots, D$, it has a unique minimizer. Hence, 	
\[
		LG(\tilde{\nu}, \bar{\btheta}, \tilde{\mathbf{B}}_{D+1}) <  LG(\tilde{\nu}, \btheta^{\bm{\rho}}, \tilde{\mathbf{B}}_{D+1}) 
\]
for any $\btheta^{\bm{\rho}} \in \Theta(\btheta)$ with $\btheta^{\bm{\rho}} \ne \bar{\btheta}$.
\hfill$\blacksquare$
\end{proof}

\subsection{Proof of Proposition \ref{prop:convergence}}
\label{AppendixConverAlgorithm}
\begin{proof}
Due to the restriction that the norm of each column of $\tilde{\mathbf{B}}_{D+1}$ is 1, it implies that $\tilde{\mathbf B}_{D+1}$ is bounded.
Solving the root of the partial derivative of objective function w.r.t. $\tilde{\nu}$ leads to the profile solution
\begin{equation}\label{eqn:update_for_nu}
	\tilde{\nu} = \frac{1}{n}\sum_{i=1}^n \bigg(y_i - \frac{1}{s}\sum^R_{r=1}   \big\langle  \bbeta_{r,1} \circ \bbeta_{r,2} \circ\cdots \circ \bbeta_{r,D} \circ \tilde{\balpha}_r, \tilde{\Phi}(\mathbf{X}_i)  \big\rangle  \bigg)^2.
\end{equation}
By Assumption \ref{assm:G_penalty} of the main paper, $\btheta$ and $\tilde{\nu}$ are bounded due to 
\[
 \{(\tilde{\nu}, \btheta, \tilde{\mathbf B}_{D+1}): LG(\tilde{\nu}, \btheta, \tilde{\mathbf B}_{D+1}) \le  LG(\tilde{\nu}^{(0)}, \btheta^{(0)}, \tilde{\mathbf B}_{D+1}^{(0)}) \}.
\]
Therefore, $(\tilde{\nu}^{(t)}, \btheta^{(t)}, \tilde{\mathbf B}_{D+1}^{(t)}) $ is a bounded sequence and there exists at least one convergent sub-sequence. Suppose $(\tilde{\nu}^{(n_t)}, \btheta^{(n_t)}, \tilde{\mathbf B}_{D+1}^{(n_t)})$ is a convergent sub-sequence and denote 
$$\lim_{t\to \infty} (\tilde{\nu}^{(n_t)}, \btheta^{(n_t)}, \tilde{\mathbf B}_{D+1}^{(n_t)}) = (\tilde{\nu}^{\star}, \btheta^{\star}, \tilde{\mathbf B}_{D+1}^{\star}).$$ 
By the continuity of $LG(\cdot)$, we have 
\[\lim_{t\to \infty} LG(\tilde{\nu}^{(n_t)}, \btheta^{(n_t)}, \tilde{\mathbf B}_{D+1}^{(n_t)}) = LG(\tilde{\nu}^{\star}, \btheta^{\star}, \tilde{\mathbf B}_{D+1}^{\star}).
\]
Note that the constraints on $\tilde{\mathbf B}_{D+1}$ forms an oblique manifold \citep{2008Optimization}
\[
\mathcal{OB} := \{\mathbf B \in \mathbb{R}^{K \times R}: \rm{diag}(\mathbf B^\tp \mathbf B ) = \mathbf I_R \}, 
\]
where $\mathbf I_R \in \mathbb{R}^{R\times R}$ is an identity matrix. Since $\tilde{\mathbf B}^{(n_{t})}_{D+1}$ minimizes the block $LG(\tilde{\nu}^{(n_{t})}, \btheta^{(n_{t})}, \cdot)$ due to Algorithm \ref{algo:algorithmI} of the main paper, we have 
$$\mathrm{grad}\{ LG(\tilde{\nu}^{(n_{t})}, \btheta^{(n_{t})}, \tilde{\mathbf B}^{(n_{t})})\} := \mathcal{P}_{\tilde{\mathbf B}^{(n_{t})}} \{ \partial LG(\tilde{\nu}^{(n_{t})}, \btheta^{(n_{t+1})}, \tilde{\mathbf B}^{(n_{t})})/\partial  \tilde{\mathbf B}_{D+1} \} = \mathbf 0,$$ 
where $ \mathcal{P}_{\tilde{\mathbf B}^{(n_{t})}}(\cdot)$ is the projection onto the tangent space of $\mathcal{OB}$ at $\mathbf B_{D+1}$. The continuity of $\mathrm{grad}(\cdot)$ shows that \citep{selvan2012descent}
\begin{equation}\label{proof:prop:Oblique1}
	\mathrm{grad}\{ LG(\tilde{\nu}^{\star}, \btheta^{\star}, \tilde{\mathbf B}^{\star}_{D+1})\} = \mathbf 0.
\end{equation}
Denote the algorithmic map to for the block updating of $\tilde{\nu}$ and $\btheta$ as $M_{\nu}(\cdot)$ and $M_{\btheta}(\cdot)$, respectively. 
By \eqref{eqn:update_for_nu}, we know $M_{\nu}(\cdot)$ is continuous. 
According to the iterative steps of Algorithm \ref{algo:algorithmI} of the main paper, it can be shown that 
\[\begin{aligned}
		LG(\tilde{\nu}^{(n_{t+1})}, \btheta^{(n_{t+1})}, \tilde{\mathbf B}_{D+1}^{(n_{t+1})}) 
	& \le  LG(M_v(\tilde{\nu}^{(n_t)}, \btheta^{(n_t)}, \tilde{\mathbf B}_{D+1}^{(n_t)}),\btheta^{(n_t)}, \tilde{\mathbf B}_{D+1}^{(n_t)})  \\
	& \le  LG(\tilde{\nu}^{(n_t)}, \btheta^{(n_t)}, \tilde{\mathbf B}_{D+1}^{(n_t)}),
\end{aligned}\]
which yields 
\[
	LG(M_v(\tilde{\nu}^{\star}, \btheta^{\star}, \tilde{\mathbf B}_{D+1}^{\star}), \btheta^{\star}, \tilde{\mathbf B}_{D+1}^{\star}) = LG(\tilde{\nu}^{\star}, \btheta^{\star}, \tilde{\mathbf B}_{D+1}^{\star}).
\]
Thus, we have
\begin{equation}\label{proof:prop:Oblique2}
	\partial LG(\tilde{\nu}^{\star}, \btheta^{\star}, \tilde{\mathbf B}_{D+1}^{\star})/\partial (\tilde{\nu})=0.
\end{equation}
We next show that $M_{\btheta}$ is continuous. 
Since ${M}_{\btheta}(\cdot)$ is a composition of $D$ block updating, we only need to show the $d$-th updating map $M_d(\cdot)$ for the $d$-th mode is continuous. For notational simplicity, we let
\[
	LGD(\varkappa_d, \mathbf B_d) = LG (\tilde{\nu}, \btheta, \tilde{\mathbf B}_{D+1}),
\]
where $\varkappa_d=(\tilde{\nu}, \mathbf B_{-d})$ and $\mathbf{B}_{-d}= \mathbf{B}_{1} \odot \cdots \odot \mathbf{B}_{d-1} \odot \mathbf{B}_{d+1} \odot \cdots \odot \tilde{\mathbf{B}}_{D+1}$. 
The $d$-th block updating can be represented as 
\[
	M_d(\tilde{\nu}, \btheta, \tilde{\mathbf B}_{D+1}) = (\mathbf B_1 ,\ldots, \cdots,\mathbf B_{d-1}, \bar{M}_d(\varkappa_d),\mathbf B_{d+1},   \cdots, \tilde{\mathbf{B}}_{D+1}),
\]
where $\bar{M}_d(\cdot)$ is the corresponding map of updating. 
To show $M_{\btheta}$ is continuous, we thus only need to show $\bar{M}_d(\cdot)$ is continuous for each $d$. 
In the following proof, we focus on a fixed $d$. 
Suppose $\varkappa^i$ is any sequence that converges to $\bar{\varkappa}$. 
We only need to show 
\begin{equation}\label{eqn:prop:limit}
	\lim_{i \to \infty}\bar{M}_d(\varkappa^i) = \bar{M}_d(\bar{\varkappa}).
\end{equation}
By definition and Assumption \ref{assm:G_penalty} of the main paper, $\bar{M}_d(\varkappa^i)$ is bounded and hence a convergent sub-sequence exists. 
Suppose $\bar{M}_d(\varkappa^{l_i})$ is any convergent sub-sequence and denote $\lim_{i \to \infty} \bar{M}_d(\varkappa^{l_i}) = \mathbf M_{\rm tem}$. By the continuity of $LGD(\cdot)$, we have 
\[
	\lim_{i \to \infty}LGD(\varkappa^{l_i}, \bar{M}_d(\varkappa^{l_i})) = LGD(\bar{\varkappa}, \mathbf M_{\rm tem}),
\]
\[
	\lim_{i \to \infty}LGD(\bar{\varkappa}, \bar{M}_d(\varkappa^{l_i}))  =LGD(\bar{\varkappa}, \mathbf M_{\rm tem}),
\]
and 
\[
	\lim_{i \to \infty} LGD(\varkappa^{l_i}, \bar{M}_d(\bar{\varkappa})) = LGD(\bar{\varkappa}, \bar{M}_d(\bar{\varkappa})).
\]
Thus, for any $\varXi >0$, when $i$ is big enough, we have
\[
	LGD(\varkappa^{l_i}, \bar{M}_d(\varkappa^{l_i})) -  \varXi  \le LGD(\varkappa^{l_i}, \bar{M}_d(\bar{\varkappa}))  - \varXi  \le LGD(\bar{\varkappa}, \bar{M}_d(\bar{\varkappa})) \le   LGD(\bar{\varkappa}, \bar{M}_d(\varkappa^{l_i})),
\]
which yields
\[
	LGD(\bar{\varkappa}, \bar{M}_d(\bar{\varkappa})) = LGD(\bar{\varkappa}, \mathbf M_{\rm tem}). 
\]
The strict convexity of the penalty function and the optimality of $M_{\rm tem}$ imply that $\bar{M}_d(\bar{\varkappa}) = \mathbf M_{\rm tem}$. 
We then have  $\lim_{i\to \infty}\bar{M}_d(\varkappa^i) = \bar{M}_d(\bar{\varkappa})$, which shows \eqref{eqn:prop:limit}. Thus, $\bar{M}_d(\cdot)$ is continuous and so is $M_{\btheta}(\cdot)$. 
The iterative steps of Algorithm \ref{algo:algorithmI} shows that
\[
\begin{aligned}
	& LG(\tilde{\nu}^{(n_{t+1})}, \btheta^{(n_{t}+1)}, \tilde{\mathbf B}_{D+1}^{(n_{t}+1)}) \\
	& \qquad  \le LG( M_{\nu}( \tilde{\nu}^{(n_{t})}, \btheta^{(n_t)}, \tilde{\mathbf B}_{D+1}^{(n_t)}),
	M_{\btheta}( M_{\nu}( \tilde{\nu}^{(n_{t})}, \btheta^{(n_t)}, \btheta^{(n_t)}, \tilde{\mathbf B}_{D+1}^{(n_t)}), \tilde{\mathbf B}_{D+1}^{(n_{t})}) \\
	& \qquad \le LG(\tilde{\nu}^{(n_{t})}, \btheta^{(n_{t})}, \tilde{\mathbf B}_{D+1}^{(n_{t})}),
\end{aligned}
\]
which yields
\[
	LG(\tilde{\nu}^{\star}, M_{\btheta}(\tilde{\nu}^{\star}, \btheta^{\star}, \tilde{\mathbf B}_{D+1}^{\star}),\tilde{\mathbf B}_{D+1}^{\star}) = LG(\tilde{\nu}^{\star}, \btheta^{\star}, \tilde{\mathbf B}_{D+1}^{\star}).
\]
Thus,
\begin{equation}\label{proof:prop:Oblique3}
	\mathbf 0 \in {\partial_{\btheta} LG(\tilde{\nu}^{\star}, \btheta^{\star}, \tilde{\mathbf B}_{D+1}^{\star})},
\end{equation}
where $\partial_{\btheta} (\cdot) $ denotes the sub-gradient. 
It follows from \eqref{proof:prop:Oblique1}, \eqref{proof:prop:Oblique2}, and \eqref{proof:prop:Oblique3} that $(\tilde{\nu}^{\star}, \btheta^{\star}, \tilde{\mathbf B}_{D+1}^{\star})$ is a stationary point.

Note that the objective value  $LG(\tilde{\nu}^{(t)}, \btheta^{(t)}, \tilde{\mathbf B}_{D+1}^{(t)})$ is bounded and monotonic. We then have 
$$
	\lim_{t\to \infty} LG(\tilde{\nu}^{(t)}, \btheta^{(t)}, \tilde{\mathbf B}_{D+1}^{(t)}) = LG(\tilde{\nu}^{\star}, \btheta^{\star}, \tilde{\mathbf B}_{D+1}^{\star}), 
$$ 
which finishes the proof. \hfill$\blacksquare$
\end{proof}

\setcounter{equation}{0}
\renewcommand{\theequation}{B.\arabic{equation}}
\renewcommand{\thelemma}{B.\arabic{lemma}}
\renewcommand{\thethm}{B.\arabic{thm}}

\section{Asymptotic study}
\subsection{Notations} 
We use $C$ with or without subscripts to represent generic constants that may change values from line to line.
The Hilbert-Schmidt norm of a generic tensor $\mathbf{A}$ is
defined as $\Vert \mathbf{A} \Vert_{HS}=\langle  \mathbf{A}, \mathbf{A} \rangle^{1/2}$.

The concept of Gaussian width  \citep{chandrasekaran2012convex, vershynin2018high} and  $\gamma$-functionals \citep{talagrand2005generic, banerjee2015estimation}  will be used throughout our proofs. We provide their definitions at the beginning of technical results.

\begin{definition}[Gaussian width]
	For any set $\mathcal{P} \subset \mathbb{R}^p$, the Gaussian width of the set $\mathcal{P}$ is defined as
	\[
	w(\mathcal{P})=\mathbb{E}_{\mathbf{x}}  \sup_{\mathbf{a} \in \mathcal{P}} \langle  \mathbf{a}, \mathbf{x}  \rangle, 
	\]
	where the expectation is over $\mathbf{x} \sim N(\mathbf{0},\mathbf{I}_{p \times p})$, a vector of independently standard Gaussian random variables.
\end{definition}

\begin{definition}[$\gamma$-functionals]
	Consider a metric space $(T,d)$ and for a finite set $\mathcal{A} \subset T$, let $|\mathcal{A}|$ denote its cardinality. An admissible sequence  is an increasing sequence of subsets $\{\mathcal{A}_n, n\ge 0 \}$ of $T$, such that $\vert \mathcal{A}_0 \vert=1$ and $| \mathcal{A}_n| =2^{2^n}$ for $n \ge 1$. Given $\alpha >0$, we define the $\gamma_{\alpha}$-functional as
	\[
	\gamma_{\alpha}(T,d)=\inf \sup_{t\in T} \sum_{n=0}^{\infty} \mathrm{Diam}\{A_n(t)\},
	\]
	where $A_n(t)$ is the unique element of $\mathcal{A}_n$ that contains $t$, $\mathrm{Diam}\{A_n(t)\}$ is the diameter of $A_n$ according to the metric $d$, and the infimum is over all admissible sequences of $T$.
\end{definition}

For convenience, we use a mapping $\Omega: \mathbb{R}^{p_1 \times \ldots \times p_D \times K} \times \mathbb{R} \to  \mathbb{R}^{p_1 \times \ldots \times p_D \times K}$ to represent the operator of absorbing the constant into the coefficients of the B-spline basis for the first predictor. 
More precisely, $\Omega$ is defined by
\begin{equation}\label{eqn:operaterI}
\mathbf{A}^{\flat}=\Omega(\mathbf{A},\nu),
\end{equation}
where $\mathbf{A}^{\flat}_{i_1,\cdots,i_D,k}=\mathbf{A}_{i_1,\cdots,i_D,k}$, for $(i_1,\cdots,i_D) \ne (1,\cdots,1)$ and $\mathbf{A}^{\flat}_{1,\ldots,1,k}=\mathbf{A}_{1,\ldots,1,k}+s\nu $, $k=1,\ldots,K$.
It then follows from the property of B-spline functions that
\begin{equation*} 
	\nu+ \frac{1}{s}\langle \mathbf{A}, \Phi(\mathbf{X})  \rangle = \frac{1}{s}\langle \mathbf{A}^{\flat}, \Phi(\mathbf{X})  \rangle.
\end{equation*}
This property simplifies the development of the asymptotic theory since $\brm{A}^{\flat}$ still enjoys a CP structure.

We also write $\mathbf{A}_0=\sum_{r=1}^{R_0}  \mathbf{B}_{0r} \circ \bm{\alpha}_{0r}$ for $r=1,\ldots,R$, where $\bm{\alpha}_{0r}$ satisfies \eqref{eqn:approximation_spline} and \eqref{eqn:approximation_spline_mean0} of the main paper. 
We define
\begin{equation}\label{tildeh_n}
	\tilde{h}_n=\max \bigg \{\frac{h_n^{1/(-\log h_n)}}{(-2\log h_n)}, h_n \bigg\}.
\end{equation}
With this definition, we have 
\begin{equation}\label{tildeh_n_asym}
	\tilde{h}_n^2 h_n^{-2} \asymp {h_n^{-2-2/(\log h_n)} (\log^{-2} h_n} ),
\end{equation}
which is a quantity presented in the sample size requirements of  Theorem \ref{thm:convergencerateswithpenalty} and Corollary \ref{thm:convergencerates} of the main paper.

%%%%%%%%%%%%%%%%%%
%%% Proof of Theorem 2 %%%%%
%%%%%%%%%%%%%%%%%%
\subsection{Proof of Theorem \ref{thm:convergencerateswithpenalty}}
\begin{proof}
Let $(\hat{\mathbf{A}}_{\PLS{}}, \hat{\nu}_{\PLS{}})$ denote a solution to \eqref{eqn:constraintpenalizedintheory} of the main paper with 
\[
	\hat{\mathbf{A}}_{\PLS{}}= \sum^R_{r=1} 
	\hat{\bbeta}_{r,1} \circ \hat{\bbeta}_{r,2} \circ\cdots \circ \hat{\bbeta}_{r,D} \circ \hat{\balpha}_r.
\]
In the following, we write
\begin{equation}\label{eqn:def:G_hat}
	\hat{G}= G(\hat{\btheta}_{\PLS{}}),
\end{equation}
where $ \hat{\btheta}_{\PLS{}}=(\hat{\mathbf{B}}_1,\ldots, \hat{\mathbf{B}}_D)$ with $\hat{\mathbf{B}}_d = (\hat{\bbeta}_{1,d}, \ldots, \hat{\bbeta}_{R,d})$ for $d=1,\ldots, D$. 
We further denote $\hat{\mathbf{B}}_{D+1} = (\hat{\balpha}_{1}, \ldots, \hat{\balpha}_{R})$. 
By Lemma \ref{lem:optimization}, there exists $\check{\nu}_{\PLS{}} \in \mathbb{R}$ and 
\[
	 \check{\mathbf{A}}_{\PLS{}}= \sum^R_{r=1} 
	\hat{\bbeta}_{r,1} \circ \hat{\bbeta}_{r,2} \circ\cdots \circ \hat{\bbeta}_{r,D} \circ \check{\balpha}_r \in \mathbb{R}^{p_1 \times \ldots \times p_D \times K} ,
\]
such that
\begin{equation}\label{eqn:pen_est_equivalent}
	\check{\nu}_{\PLS{}} + \frac{1}{s}\big\langle \check{\mathbf{A}}_{\PLS{}}, \Phi(\mathbf{X})  \big\rangle 
	= \hat{\nu}_{\PLS{}} + \frac{1}{s}\big\langle \hat{\mathbf{A}}_{\PLS{}}, \tilde{\Phi}(\mathbf{X})  \big\rangle, 
\end{equation}
where 
$\check{\balpha}_r = (\check{\alpha}_{r,1}, \ldots, \check{\alpha}_{r, K})^\tp$ satisfying
\begin{equation}\label{eqn:pen_est_restrict}
	\sum_{k=1}^K\check{\alpha}_{r,k}u_{k}=0
\end{equation}
with $u_k=\int_0^1b_k(x)\mathrm{d}x$.
Recall that 
\[
	\mathbf{A}_0= \sum^{R_0}_{r=1} 
	\bbeta_{0r,1} \circ \bbeta_{0r,2} \circ\cdots \circ \bbeta_{0r,D} \circ \balpha_{0r}, ~ \balpha_{0r}=(\alpha_{0r,1}, \ldots, \alpha_{0r,K})^\tp, ~ \mbox{and} ~ \sum_{k=1}^K \alpha_{0r,k} u_k=0.
\]
The proof of Lemma \ref{lem:optimization} has shown that there exists $\tilde{\nu}_0 \in \mathbb{R} $ and 
\begin{equation}\label{eqn:def_Atilde_0}
	   \tilde{\mathbf{A}}_{0} = \sum_{r=1}^{R_0} \bbeta_{0r,1} \circ \bbeta_{0r,2} \circ\cdots \circ \bbeta_{0r,D} \circ \tilde{\balpha}_{0r} \in \mathbb{R}^{p_1 \times \ldots \times p_D \times (K-1)},
\end{equation}
such that
\begin{equation*}\label{eqn:pen_true_equivalent}
   \tilde{\nu}_0 + \frac{1}{s} \big\langle \tilde{\mathbf{A}}_{0}, \tilde{\Phi}(\mathbf{X})  \big\rangle = \nu_0 + \frac{1}{s} \big\langle \mathbf{A}_0, \Phi(\mathbf{X})  \big\rangle.
\end{equation*} 
To achieve the restrictions in the optimization problem \eqref{eqn:constraintpenalizedintheory} of the main paper, we normalize $\tilde{\balpha}_{0r}$ in $\tilde{\mathbf{A}}_{0}$ by
\[
	\begin{aligned}
	    \tilde{\mathbf{A}}_{0} & = \sum_{r=1}^{R_0} ( \Vert \tilde{\balpha}_{0r}\Vert_2 \cdot  \bbeta_{0r,1}) \circ  \bbeta_{0r,2} \circ\cdots \circ  \bbeta_{0r,D} \circ \frac{\tilde{\balpha}_{0r}}{\Vert \tilde{\balpha}_{0r}\Vert_2 }. 
	 \end{aligned}
\]
Using the rescaling strategy \eqref{eqn:rescale} of the main paper on $\{\Vert \tilde{\balpha}_{0r}\Vert_2\bbeta_{0r,1}, \bbeta_{0r,2}, \ldots, \bbeta_{0r,D} \}$ for $r=1,\ldots,R$, we get a solution of scales $\{\rho_{0r,d}:\, d=1, \dots, D, \, r=1,\dots,R\}$. 
Denoting $\tilde{\bbeta}_{0r,1}=\rho_{0r,1} \Vert \tilde{\balpha}_{0r}\Vert_2\bbeta_{0r,1}$ and $\tilde{\bbeta}_{0r,d}=\rho_{0r,d}\bbeta_{0r,d}$ for $d=2,\ldots, D$, we then have
\[
	 \tilde{\mathbf{A}}_{0}   =  \sum_{r=1}^{R_0} \tilde{\bbeta}_{0r,1} \circ \tilde{\bbeta}_{0r,2} \circ \cdots \circ \tilde{\bbeta}_{0r,D} \circ \frac{\tilde{\balpha}_{0r}}{\Vert \tilde{\balpha}_{0r}\Vert_2}.
\]
Let 
\begin{equation}\label{app:G_0}
   G_0 = G(\tilde{\btheta}_0),
\end{equation}
where $\tilde{\btheta}_0=(\tilde{\mathbf{B}}_{0,1},\dots,\tilde{\mathbf{B}}_{0,D})$ with $\tilde{\bm B}_{0,d} = (\tilde{\bbeta}_{01,d}, \dots, \tilde{\bbeta}_{0R,d} )$ for $d=1,\ldots,D$. 
Based on \eqref{eqn:pen_est_equivalent}--\eqref{app:G_0}, we obtain
\begin{equation}\label{eqn:LS_inequality}
	\sum_{i=1}^n \bigg ( y_i-\check{\nu}_{\PLS{}}-\frac{1}{s}\langle \check{\mathbf{A}}_{\PLS{}}, \Phi(\mathbf{X}_i)\rangle \bigg )^2 +\hat{G}\le \sum_{i=1}^n \bigg ( y_i-\nu_0-\frac{1}{s}\langle \mathbf{A}_0, \Phi(\mathbf{X}_i)\rangle \bigg )^2+G_0.
\end{equation}
Let $\check{\mathbf{A}}_{\PLS{}}^{\flat}=\Omega(\check{\mathbf{A}}_{\PLS{}},\check{\nu}_{\PLS{}})$, and $\mathbf{A}_{0}^{\flat}=\Omega(\mathbf{A}_0,\nu_0)$. 
Using $\hat{G}  \ge 0$, it implies that 
\begin{equation}\label{proof:thm4:less}
	\sum_{i=1}^n \bigg ( y_i-\frac{1}{s} \langle  \check{\mathbf{A}}_{\PLS{}}^{\flat}, \Phi(\mathbf{X}_i)\rangle \bigg )^2 \le \sum_{i=1}^n \bigg ( y_i-\frac{1}{s} \langle \mathbf{A}_{0}^{\flat},  \Phi(\mathbf{X}_i)\rangle \bigg )^2+G_0.
\end{equation}
Let  $\mathbf{A}^{\sharp}_{\PLS{}}=\check{\mathbf{A}}_{\PLS{}}^{\flat}-\mathbf{A}_{0}^{\flat}$, $\mathbf{a}^{\sharp}_{\PLS{}}=\text{vec}(\mathbf{A}^{\sharp}_{\PLS{}})$, $\mathbf{a}_0^{\flat}=\text{vec}(\mathbf{A}_{0}^{\flat})$, and  
\begin{equation}\label{eqn:defZ}
	\mathbf{Z}=(\mathbf{z}_1,\ldots,\mathbf{z}_n)^\tp \in \mathbb{R}^{n \times sK},
\end{equation}
where $\mathbf{z}_i=\vec\{\Phi(\mathbf{X}_i)\}$, $i=1,\ldots,n$. 
Using \eqref{proof:thm4:less} and working out the squares, we obtain
\begin{equation}\label{eqn:finitesamplethm:penalty:1}
	\frac{1}{s^2}\Vert \mathbf{Z}\mathbf{a}^{\sharp}_{\PLS{}} \Vert_2^2 \le 2 \left \langle \frac{1}{s}\mathbf{Z}\mathbf{a}^{\sharp}_{\PLS{}}, \bm{\epsilon} \right \rangle +2 \left \langle \frac{1}{s}\mathbf{Z}\mathbf{a}^{\sharp}_{\PLS{}}, \mathbf{y}-\bm{\epsilon}-\frac{1}{s}\mathbf{Z}\mathbf{a}_0^{\flat} \right \rangle+G_0,
\end{equation}
where $\mathbf{y}=(y_1,\cdots,y_n)^\tp$. 
We will finish the proof by taking the union of probabilities of the events \eqref{thm:finalbound1} and \eqref{thm:finalbound2} in Sections \ref{thm:proof:withoutrestriction} and \ref{thm:proof:withrestriction}, respectively.  \hfill$\blacksquare$
\end{proof}

\subsubsection{Bound of CP parameters without restriction}\label{thm:proof:withoutrestriction}
In this subsection, we show the upper bound of $\Vert \hat{m}_{\rm PLS} - m_0\Vert  $ when there is no scale restriction on CP parameters. 
By \eqref{eqn:pen_est_restrict}, Lemmas \ref{lem:lemma1} and \ref{lem:lemma5} (presented in Section \ref{ssec:AsympTech}),  we have  $\sum_{k=1}^KA_{\bm{j},k}^{\sharp}u_k=0$ for $\bm{j} \in \mathcal{J}/\{(1, \cdots, 1)\}$. 
Since $\text{rank}(\mathbf{A}_0^{\flat}) \le R_0+1$, $\text{rank}(\check{\mathbf{A}}_{\PLS{}}^{\flat})\le R+1$, it is trivial to see $\text{rank}(\mathbf{A}^{\sharp}_{\PLS{}})\le R_0+R+2$. 
To finish the proof, we will find the upper bound of the right hand side and the lower bound of the left hand side with respect to $\Vert \mathbf{a}^{\sharp}_{\PLS{}} \Vert_2$ in \eqref{eqn:finitesamplethm:penalty:1}.
	
First, we will find the upper bound of $ \langle \mathbf{Z}\mathbf{a}^{\sharp}_{\PLS{}}, \bm{\epsilon} \rangle$. To simplify the notations, let 
\begin{equation}\label{eqn:P1}
	\mathcal{P}_1=\bigg\{\frac{\vec(\mathbf{A})}{\Vert  \mathbf{A}\Vert _{HS}}: \sum_{k=1}^K A_{\bm{j},k}u_k=0, \   \text{for} \ \bm{j} \in \mathcal{J}/\{(1,\ldots,1) \}, ~ \text{rank} (\mathbf{A}) \le R_1 \bigg\},
\end{equation}
where $R_1 =R+R_0+2$. Noting that $R_0 \leq R$, we have 
\begin{equation}\label{eqn:relationBetwRs}
		R_1  \le 2R+2.
\end{equation}
By Lemma \ref{lem:lemma3} (in Section \ref{ssec:AsympTech}), to show 
\begin{equation}\label{eqn:finitesamplethm:2}
	C_1nh_n\Vert \mathbf{a}^{\sharp} \Vert_2^2 \le \Vert \mathbf{Z} \mathbf{a}^{\sharp} \Vert_2^2 \le C_2nh_n\Vert \mathbf{a}^{\sharp} \Vert_2^2,
\end{equation}
with probability at least $1-2 \mathrm{exp}\{-C_3 w^2(\mathcal{P}_1)\}$, we only need to prove $n >  C \tilde{h}_n^2h_n^{-2} w^2(\mathcal{P}_1)$.
By \eqref{eqn:relationBetwRs} and Lemma \ref{lem:gaussianwidth} (in Section \ref{ssec:AsympTech}), the Gaussian width satisfies
\begin{equation}\label{eqn:GaussianwidthsP1}
	w( \mathcal{P}_1) \leq C_4 \bigg(R_1^{D+1}+R_1\sum_{d=1}^Dp_d + R_1 K\bigg)^{1/2}\le C_5\bigg(R^{D+1}+R\sum_{i=1}^D p_i+RK\bigg)^{1/2}.
\end{equation}
Due to the assumption that $n >  C_6\tilde{h}_n^2h_n^{-2} (R^{D+1}+R\sum_{i=1}^D p_i+RK)$, \eqref{eqn:finitesamplethm:2} is thus satisfied with probability at least 
\[
	1-2\exp\Bigg\{- C\bigg(R^{D+1}+R\sum_{i=1}^Dp_i+RK \bigg) \Bigg\}.
\]
By \eqref{eqn:P1}--\eqref{eqn:GaussianwidthsP1} and Lemma \ref{lem:lemma4} (in Section \ref{ssec:AsympTech}), we have the following upper bound 
\begin{equation}\label{eqn:nonasymptotic_errorbound}
\begin{aligned}
	\langle \mathbf{Z}\mathbf{a}^{\sharp}_{\PLS{}}, \bm{\epsilon} \rangle & \le C_7\Vert \mathbf{a}^{\sharp}_{\PLS{}} \Vert_2 \bigg\{nh_n \bigg(R^{D+1}+\sum_{i=1}^D R p_i+RK \bigg) \bigg \}^{1/2},
\end{aligned}
\end{equation}
with probability at least 
\[
	1-C_{8}\exp\Bigg\{- C_{9}\bigg(R^{D+1}+R\sum_{i=1}^Dp_i+RK \bigg) \Bigg\}.
\]

Second, we find the upper bound of $\langle \mathbf{Z}\mathbf{a}^{\sharp}_{\PLS{}}, \mathbf{y}-\bm{\epsilon}-\mathbf{Z}\mathbf{a}_0^{\flat}  \rangle$. Note that
\begin{equation}\label{eqn:trueModelL2}
\begin{aligned}
	\bigg\Vert \mathbf{y}-\bm{\epsilon}-\frac{1}{s}\mathbf{Z}\mathbf{a}_0^{\flat} \bigg\Vert_2^2 &=\sum_{i=1}^n \bigg|\frac{1}{s}\sum_{r=1}^{R_0} \langle \mathbf{B}_{0r}, F_r(\brm{X}_i) \rangle -  \langle \mathbf{A}_0, \Phi(\brm{X}_i) \rangle \bigg|^2 \\
	& \le \sum_{i=1}^n \bigg\{ \frac{1}{s}\sum_{r=1}^{R_0} \big| \langle \mathbf{B}_{0r}, F_r(\brm{X}_i) \rangle -  \langle \mathbf{B}_{0r} \circ \bm{\alpha}_{0r}, \Phi(\brm{X}_i) \rangle \big| \bigg \}^2 \\
	& \le \sum_{i=1}^n\bigg\{ \frac{1}{s}\sum_{r=1}^{R_0} \frac{C_1}{K^\tau} \Vert \text{vec} (\mathbf{B}_{0r})\Vert_1  \bigg \}^2 \\
	& = C_2 \bigg \{ \frac{\sum_{r=1}^{R_0}\Vert \text{vec} (\mathbf{B}_{0r})\Vert_1}{s} \bigg \}^2\frac{n}{K^{2\tau}}. 
\end{aligned}
\end{equation}
Using the Cauchy-Schwarz inequality, \eqref{eqn:finitesamplethm:2}, \eqref{eqn:GaussianwidthsP1}, and \eqref{eqn:trueModelL2}, it shows that
\begin{equation}\label{eqn:nonasymptoticbound2}
\begin{aligned}
	\left \langle \frac{1}{s}\mathbf{Z}\mathbf{a}^{\sharp}_{\PLS{}}, \mathbf{y}-\bm{\epsilon}-\frac{1}{s}\mathbf{Z}\mathbf{a}_0^{\flat} \right \rangle &\le \bigg \Vert \mathbf{y}-\bm{\epsilon}-\frac{1}{s}\mathbf{Z}\mathbf{a}_0^{\flat} \bigg\Vert_2 \bigg \Vert \frac{1}{s}\mathbf{Z}\mathbf{a}^{\sharp}_{\PLS{}} \bigg\Vert_2  \\
	& \le  \frac{C_3}{s}\Vert \mathbf{Z}\mathbf{a}^{\sharp}_{\PLS{}}\Vert_2\bigg \{ \frac{\sum_{r=1}^{R_0}\Vert \text{vec} (\mathbf{B}_{0r})\Vert_1}{s} \bigg \}\frac{\sqrt{n}}{K^{\tau}} \\
	&\le \frac{C_4}{s}\Vert \mathbf{a}^{\sharp}_{\PLS{}} \Vert_2   \bigg \{ \frac{\sum_{r=1}^{R_0}\Vert \text{vec} (\mathbf{B}_{0r})\Vert_1}{s} \bigg \}\frac{n\sqrt{h_n}}{K^{\tau}},\\
\end{aligned}
\end{equation}
with probability at least 
\[
	1-C_{5}\exp\Bigg\{- C_{6}\bigg(R^{D+1}+R\sum_{i=1}^Dp_i+RK \bigg) \Bigg\}.
\]

Third, applying \eqref{eqn:finitesamplethm:2}, \eqref{eqn:nonasymptotic_errorbound}, and \eqref{eqn:nonasymptoticbound2} to \eqref{eqn:finitesamplethm:penalty:1}, we get
\begin{equation}\label{eqn:secondorderinequality}
	\frac{C_7}{s^2}\Vert\mathbf{a}^{\sharp}_{\PLS{}}\Vert^2_2  \le \frac{\delta_3}{s}\Vert \mathbf{a}^{\sharp}_{\PLS{}} \Vert_2 +\frac{1}{nh_n}G_0,
\end{equation}
with probability at least 
\begin{equation}\label{eqn:probabilityofpenalizedmethod}
	1-C_1\exp\Bigg\{- C_2\bigg(R^{D+1}+R\sum_{i=1}^Dp_i+RK \bigg) \Bigg\}, 
\end{equation}
where
\[
	\delta_3= C_{3} \bigg\{\frac{K \big(R^{D+1}+\sum_{i=1}^D Rp_i+RK \big)}{n} \bigg\}^{{1}/{2}} +C_{4} \bigg \{ \frac{\sum_{r=1}^{R_0}\Vert \text{vec} (\mathbf{B}_{0r})\Vert_1}{s} \bigg \}\frac{1}{K^{\tau-1/2}}.
\]
By solving the second order inequality \eqref{eqn:secondorderinequality}, we obtain
\[
	\frac{C_5}{s}\Vert \mathbf{a}^{\sharp}_{\PLS{}} \Vert_2 \le \frac{\{\delta_3^2+4G_0/(nh_n) \}^{1/2}+\delta_3}{2},
\]
with the probability at least \eqref{eqn:probabilityofpenalizedmethod}. Further, by Assumption \ref{assm:density} and \eqref{lem:eqn:papulationbound} of Lemma \ref{lem:lemma2} (in Section \ref{ssec:AsympTech}), we have
\begin{equation*}\label{eqn:thm2:final}
	\Vert \hat{m}_{\PLS{}}-m_0\Vert_{}^2 \le C_6 h_n\frac{1}{s^2} \Vert \hat{\mathbf{A}}^{\flat}_{\PLS{}} -\mathbf{A}_{0}^{\flat}\Vert_{HS}^2=C_6 h_n\frac{1}{s^2} \Vert  \mathbf{a}^{\sharp}_{\PLS{}} \Vert_2^2 \, .	
\end{equation*}
The displayed two equations in the above together with \eqref{tildeh_n_asym} imply that
\begin{equation}\label{thm:finalbound1}
	\Vert \hat{m}_{\PLS{}} -m_0\Vert_{}^2 \le  \frac{C_{7}\{\delta_3^2+(4KG_0)/n\}}{K},
\end{equation}
with the probability at least \eqref{eqn:probabilityofpenalizedmethod},
which completes the proof of this case.

\subsubsection{Bound of CP parameters with restriction}\label{thm:proof:withrestriction}
In this subsection, we show the upper bound of $\Vert \hat{m}_{\rm PLS} - m_0\Vert  $ when CP parameters are restricted by the penalization-induced scale constraint.
Let 
\[
	\delta_B =  \frac{1}{\varpi(n)} \bigg [ \bigg \{ \frac{\sum_{r=1}^{R_0}\Vert \vec (\mathbf{B}_{0r})\Vert_1}{s} \bigg \}^2\frac{C_4}{K^{2\tau}} + C_5n + G_0 \bigg]^{S_G}
\]
and
\[
	\delta_v = 
	\frac{C_1}{s} \sum_{r=1}^{R_0} \Vert \vec(\mathbf{B}_{0r}) \Vert_1 + \vert \nu_0 \vert  + C_2 + C_3 \sqrt{K}R
	\delta_B^{D/2}/\sqrt{s}.
\] 
By Lemma \ref{lemma:boundedpara_from_pen} (in Section \ref{ssec:AsympTech}), we have 
\[
	\sum_{d=1}^D \sum_{r=1}^R \Vert \hat{\bbeta}_{r,d} \Vert_2^2  \le \delta_B 
~ \mbox{ and } ~ 
\]
with probability at least $1 - C_6 \exp(- C_7n)$. 
It then shows that
\begin{equation}\label{thm:A:pen:bounded}
	\mathbf{A}^{\sharp}_{\PLS{}} = \mathbf{A}^{\sharp}_{\PLS{}} \mathbf{1}_{\{\sum_{d=1}^D \sum_{r=1}^R \Vert \hat{\bbeta}_{r,d} \Vert_2^2  \le \delta_B,\, \vert \hat{\nu} \vert \le \delta_v\}} \, ,
\end{equation}
with probability at least $1 - C_6 \exp(- C_7n)$, where $\mathbf{1}_{\{\cdot \}}$ is an indicator function.
The assumption of sample size $n$ implies that
\begin{equation}\label{eqn:newthm:eigenbound_sample_complexity}
	n > C_8\tilde{h}_n^2 h_n^{-2}  \bigg( R^2\log \delta_{\rm pen}+ R\sum_{d=1}^{D+1} p_d \bigg),
\end{equation}
where
\begin{equation}\label{def:delta_pen}
\begin{aligned}
	\delta_{\rm pen}&=n\max \bigg\{ C_5 \Big (\delta_v s +\sqrt{s}R \delta_B^{D/2} \Big)^{2/D}, ~ C_6R K^2, \\
 	& \qquad \qquad \qquad \sum_{r=1}^{R_0} \Vert\bbeta_{0r,d}\Vert_2^2 + (s \nu_0)^{2/D}, ~ \sum_{r=1}^{R_0} \Vert \balpha_{0r} \Vert_2^2 + 1 \bigg \}.
\end{aligned}
\end{equation}
Using the arguments as in the proof of Lemma \ref{thm:ratesnew} (in Section \ref{ssec:AsympTech}) and applying \eqref{thm:A:pen:bounded} to \eqref{eqn:finitesamplethm:penalty:1}, we thus obtain
\begin{equation}\label{thm:finalbound2}
\begin{aligned}
	\Vert \hat{m}_{\PLS{}}- m_0 \Vert^2 &\le 
	C_9 \frac{R^2\log \delta_{\rm pen}+ R\sum_{d=1}^{D} p_d + RK}{n} \\
	& \qquad \qquad +
	C_1  \bigg \{ \frac{\sum_{r=1}^{R_0}\Vert \text{vec} (\mathbf{B}_{0r})\Vert_1}{s} \bigg \}^2\frac{1}{K^{2\tau}} + \frac{C_2}{n}G_0 ,
\end{aligned}
\end{equation}
with probability at least
\[
	1 - C_3 \exp \bigg \{-C_4 \bigg(R^2\log \delta_{\rm pen}+ R\sum_{d=1}^{D} p_d + RK\bigg) \bigg\}.
\]

\subsection{Proof of Corollary \ref{thm:convergencerates}}
\begin{proof}
It is a special case of Theorem \ref{thm:convergencerateswithpenalty}, where the bias term becomes $G_0 = 0$ due to elimination of the penalty function. 
Using the proof in Section \ref{thm:proof:withoutrestriction} under Assumptions \ref{assm:density}--\ref{assm:splineMashRatio} of the main paper, we directly obtain the result of \eqref{eqn:thm:result2} in Corollary \ref{thm:convergencerates} of the main paper. \hfill$\blacksquare$
\end{proof}

\subsection{Proof of Corollary \ref{cor:pen}}\label{proof:of:cor:elastic}
\begin{proof}
When we use the ridge penalty (i.e., \eqref{eqn:def:elastic_net} of the main paper with $\lambda_2 =0$) as the penalty function in \eqref{def:G_penalty} of the main paper, by Lemma \ref{lem:elastic_assump} (in Section \ref{ssec:AsympTech}), this penalty function satisfies Assumption \ref{assm:G_penalty} with $\varpi(n) = C\lambda_1$ and $S_G=1$.
Using Lemma \ref{lemma:G_0_bound} (in Section \ref{ssec:AsympTech}), we then have
\[\begin{aligned}
	\delta_B & \le   \frac{C}{\lambda_1} \bigg [ \bigg \{ \frac{\sum_{r=1}^{R_0}\Vert \vec (\mathbf{B}_{0r})\Vert_1}{s} \bigg \}^2\frac{1}{K^{2\tau}} + n +
	G_0 \bigg] \\
	& \le \frac{1}{\lambda_1} \bigg\{C_1R_0^2+ C_2n +  C_3 \lambda_1 R_0 (K/C_4)^{2K/D}\sum_{d=1}^D p_d^2 \bigg\} \\
	& \le  \frac{C_1R_0^2+ C_2n }{\lambda_1} + C_3 R_0 (K/C_4)^{2K/D}\sum_{d=1}^D p_d^2. 
\end{aligned}\]
Taking $\lambda_1$ to satisfy
\begin{equation}\label{eqn:cor:lambda_range1}
	\frac{C_5}{R_0 (K/C_4)^{2K/D}\sum_{d=1}^D p_d^2} \le \lambda_1 \le C_6 \frac{R^2 + \sum_{d}p_d R + KR}{C_3 R_0 (K/C_4)^{2K/D}\sum_{d=1}^D p_d^2}\, ,
\end{equation}
it shows that
\[
	G_0 \le C\bigg(R^2 + \sum_d p_d R + KR\bigg)
\]
and
\[
	\delta_B \le C_1n^{C_2} (K/C_3)^{C_4K}.
\]
It further implies $\delta_{\rm pen} \le C_1n^{C_2} (K/C_3)^{C_4K}$, and thus
\[
	\log \delta_{\rm pen}  \le C_2 \log n + C_4 K \log (K/C_3) \le C_5 K \log n \, ,
\]
which completes the proof. \hfill$\blacksquare$
\end{proof}

\subsection{Proof of Theorem \ref{thm:minimax_nonlinear}}
\label{proof:thm:minimax_nonlinear}

\begin{proof}
We separate the proof for two scenarios of $R\ge 4$ and $R\le 3$, and present $R\ge 4$ first.
When $R\ge 4$, by Lemma \ref{lemma:packing_set} 
(presented in Section \ref{ssec:AsympTech}) and the proof of Theorem 4 in \cite{suzuki2015convergence}, we begin with showing that there exists a set $\mathcal{A}_M \subset \mathbb R^{p_1 \times \cdots \times p_D \times M}$ satisfying
\begin{equation}\label{eqn:AM_lowerbound}
	\vert 	\mathcal{A}_M  \vert \ge C_1^{R(D+1)} \big ( {1}/{\varrho} \big)^{C_2R\sum_{d=1}^Dp_d - C_3 R(D+1) + C_4MR} 
\end{equation}
and 
\begin{equation}\label{eqn:AMAMprime_lower}
	\Vert \mathbf A_M - \mathbf A_M^\prime   \Vert^2 \ge C_5 (\gamma^2)^D \varrho^2 M, 
\end{equation}
where $\mathbf A_M, \mathbf A_M^\prime  \in \mathcal{A}_M$, $\mathbf A_M \ne \mathbf A_M^\prime$, and $\max_d p_d \ge R$. 
To specifically construct this set, we first use Lemma \ref{lemma:packing_set} (in Section \ref{ssec:AsympTech}) to obtain that there exists a subset $\mathcal{W}  \subset \{\mathbf W  \in \R^{M \times R }, \, \Vert \mathbf W_{:,r} \Vert_2^2 = M, \, r=1,\dots, R \}$ such that 
\[
	\vert \tilde{ \mathcal{W}} \vert \ge 	C_1^{R}\bigg( \frac{1}{\varrho}  \bigg)^{MR/4- 2R}, ~~
	\Vert \mathbf W - \mathbf W^\prime \Vert_2^2 \ge C_2\varrho^2 MR,
~~
\mbox{and}~~
	\Vert \mathbf W_{:,r}- \mathbf W_{:,r}^\prime \Vert_2^2  \le M,
\]
where $0<C_1,C_2 <1 $, $0 < \varrho \le C < 1$, $\mathbf W, \mathbf W^\prime \in \tilde{\mathcal{W}} $ with $\mathbf W\ne \mathbf W^\prime $, and $\mathbf W_{:,r}$ and $\mathbf W^\prime_{:,r}$ are respectively the $r$-th columns of $\mathbf W$ and $\mathbf W^\prime$.
We let $\mathcal{B}_d$ be the set which fulfills the requirements in Lemma \ref{lemma:packing_set} (in Section \ref{ssec:AsympTech}) for $d=1,\ldots,  D$, and define $\mathcal{A}_M$ as 
\[\begin{aligned}
	\mathcal{A}_M = \bigg\{ {\mathbf{A}}_M : \mathbf A_M &= \sum_{r=1}^R \bbeta_{r,1} \circ \cdots \circ \bbeta_{r, D} \circ \mathbf  w_r , ~ (\bbeta_{1,d}, \ldots \bbeta_{R,d}) \in \mathcal{B}_d, ~ d=1,\ldots,  D,  \\
	& \qquad \qquad  (\mathbf  w_1, \ldots, \mathbf  w_R) \in \mathcal{W} \bigg  \}. 
\end{aligned}\]
The cardinality requirements of $\mathcal{B}_d$'s and $\mathcal{W}$ directly yield \eqref{eqn:AM_lowerbound}. 
Suppose $\mathbf A_M, \mathbf A_M^\prime \in \mathcal{A}_M$ and $\mathbf A_M \ne \mathbf A_M^\prime$, we then have 
\begin{equation}\label{eqn:AMAMprimeminus}
	\begin{aligned}
		 \Vert \mathbf A_M - \mathbf A_M^\prime \Vert_{HS}^2  & = \bigg \Vert \sum_{r=1}^R \big( \bbeta_{r,1} \circ \cdots \circ \bbeta_{r, D} \circ \mathbf w_r -  \bbeta_{r,1}^\prime \circ \cdots \circ \bbeta_{r, D}^\prime \circ \mathbf  w_r^\prime \big) \bigg \Vert^2 \\
		 & \leq  \sum_{r=1}^R  \Vert  \bbeta_{r,1} \circ \cdots \circ \bbeta_{r, D} \circ \mathbf w_r -  \bbeta_{r,1}^\prime \circ \cdots \circ \bbeta_{r, D}^\prime \circ\mathbf w_r^\prime  \Vert^2. 
	\end{aligned}
\end{equation}
Due to $\mathbf A_M \ne \mathbf A_M^\prime$, there must at least one component be different. There are two possible cases as follows.
	\leftmargini=14mm
\begin{itemize}
	\item[Case 1:] For some $d \in \{1,\dots, D \}$, $(\bbeta_{1,d}, \ldots, \bbeta_{R,d})$ and $(\bbeta_{1,d}^\prime, \ldots, \bbeta_{R,d}^\prime)$  are different.
	\item[Case 2:]$(\mathbf w_1, \ldots, \mathbf w_R)$ and $(\mathbf w_1 ^\prime, \ldots,\mathbf w_R^\prime)$  are different.
\end{itemize}
Our following analysis is based upon these two cases.

For Case 1, by Lemma \ref{lemma:packing_set} (presented in Section \ref{ssec:AsympTech}), we have 
\[
	\Vert (\bbeta_{1,d}, \ldots, \bbeta_{R,d}) - (\bbeta_{1,d}^\prime, \ldots, \bbeta_{R,d}^\prime) \Vert^2 \ge C \gamma^2\varrho^2 R ,
\]
which yields that there exists a $r$ satisfying
\[
	\Vert \bbeta_{r,d} -\bbeta_{r,d} \Vert^2 \ge C \gamma^2\varrho^2.
\]
By the inequality (26) in \cite{suzuki2015convergence} and its related equation, we have 
\begin{align}\label{eqn:case1_lowerbound}
	&\Vert  \bbeta_{r,1} \circ \cdots \circ \bbeta_{r, D} \circ \mathbf w_r -  \bbeta_{r,1}^\prime \circ \cdots \circ \bbeta_{r, D}^\prime \circ \mathbf w_r^\prime  \Vert^2 \nonumber\\
	& \qquad =\Vert  \bbeta_{r,d} \circ \bbeta_{r,1}\cdots \bbeta_{r, d-1}  \circ \bbeta_{r,d+1} \cdots \circ \bbeta_{r, D} \circ \mathbf w_r  \nonumber\\
	& \qquad \qquad \qquad - \bbeta_{r,d}^\prime \circ \bbeta_{r,1}^\prime \cdots \bbeta_{r, d-1}^\prime  \circ \bbeta_{r,d+1}^\prime   \cdots\circ \bbeta_{r, D}^\prime \circ \mathbf w_r^\prime  \Vert^2   \nonumber\\
	& \qquad = \Vert \bbeta_{r,d} - \bbeta_{r,d}^\prime \Vert^2 \Vert \mathbf U_{-d}\Vert^2   + \Vert \mathbf U_{-d} - \mathbf U_{-d}^\prime \Vert^2 \Big(\Vert \bbeta_{r,d} \Vert^2 - \frac{1}{2} \Vert \bbeta_{r,d} - \bbeta_{r,d}^\prime \Vert^2 \Big) \nonumber\\
	& \qquad \geq \Vert \bbeta_{r,d} - \bbeta_{r,d}^\prime \Vert^2	\prod_{d^\prime \ne d} \Vert \bbeta_{r,d}\Vert^2 \Vert \mathbf w_r \Vert^2  \nonumber\\
	& \qquad \geq C \varrho^2 (\gamma^2)^{D} M,
\end{align}
where $	\mathbf U_{-d} =  \bbeta_{r,1}\cdots \bbeta_{r, d-1}  \circ \bbeta_{r,d+1} \cdots  \bbeta_{r, D} \circ \mathbf w_r$ and $\mathbf U_{-d} ^\prime$ is defined analogously.
For Case 2, similar to Case 1, there exists a value of  $r$ satisfying $\Vert \mathbf w_r - \mathbf w_r^\prime \Vert^2 \ge C_2 \varrho^2 M$, and we have 
\begin{equation}\label{eqn:case2_lowerbound}
	\Vert  \bbeta_{r,1} \circ \cdots \circ \bbeta_{r, D} \circ \mathbf w_r -  \bbeta_{r,1}^\prime \circ \cdots \circ \bbeta_{r, D}^\prime \circ \mathbf w_r^\prime  \Vert^2  \ge 	C \varrho^2 (\gamma^2)^{D} M.
\end{equation}
Using\eqref{eqn:AMAMprimeminus}--\eqref{eqn:case2_lowerbound}, it shows that \eqref{eqn:AMAMprime_lower} is satisfied.

Define $\phi_k(x)$ as (6.5) of \cite{yang2015minimax}. Denote
\[
	\mathcal{F}_M = \bigg\{ f_{\mathbf w}(x) = \sum_{k=1}^M w_k \phi_k(x), ~ \mathbf w =(w_1, \ldots, w_M)^\tp \in \mathbb{R}^M \bigg\}, 
\]
where $M$ is the minimal integer large than $1/(2\bar{h})$ and $\bar{h} \in (0, 1/2)$. 
For any $f_{\mathbf w} \in \mathcal{F}_M$, we have
\[
	\int_0^1 f_{\mathbf w}^2 (x) \mathrm{d}x = \Vert \mathbf w \Vert_2^2C \bar{h}^{2\tau + 1} 
~ \mbox{ and } ~
	\int_0^1 f_{\mathbf w} (x) \mathrm{d}x = 0.
\]
By definition, for any $\mathbf B_r^{(1)}, \mathbf B_r^{(2)} \in \mathbb{R}^{p_1 \times \cdots \times p_D}$, and $f_{\mathbf w_{1,r}}, f_{\mathbf w_{2,r}} \in \mathcal{F}_M$, we have
\begin{equation*}\label{eqn:minimax:f_bound1}
	\begin{aligned}
	& \bigg \Vert \Big\langle \sum_{r=1}^R\mathbf B_{r}^{(1)} , F_{\mathbf w_{1,r}}(\mathbf X) \Big\rangle - \Big\langle \sum_{r=1}^R  \mathbf B_{r}^{(2)} , F_{\mathbf  w_{2,r}}(\mathbf X) \Big\rangle \bigg \Vert^2 \\
	 &\qquad = \bigg \Vert	\Big\langle \sum_{r=1}^R \mathbf B_{r}^{(1)} \circ \mathbf w_{1,r} , \Phi_w(\mathbf X) \Big\rangle - \Big\langle \sum_{r=1}^R \mathbf B_{r}^{(2)} \circ \mathbf w_{2,r} , \Phi_w(\mathbf X) \Big\rangle \bigg \Vert^2  \\
	 & \qquad =C  \bigg \Vert	\sum_{r=1}^R \mathbf B_{r}^{(1)} \circ \mathbf w_{1,r} -  \sum_{r=1}^R \mathbf B_{r}^{(2)} \circ \mathbf w_{2,r}  \bigg \Vert^2  \bar{h}^{2\tau + 1} ,
	\end{aligned}
\end{equation*}
where $\{\Phi_w(\mathbf X)\}_{i_1,\ldots,i_D, k} = \phi_k(X_{i_1,\ldots,i_D})$ and $\{F_{\mathbf w_{l,r}}(\mathbf X)\}_{i_1,\ldots,i_D, k} = f_{\mathbf w, l}(X_{i_1,\ldots,i_D})$ for $l=1,2$.
Denote 	
\begin{equation}\label{eqn:defM03}
\begin{aligned}
	\mathcal{M}_{03}  = \bigg \{m(\mathbf X) :  m(\mathbf{X})=  \sum_{r=1}^{R} & \left \langle  \tilde{\bbeta}_{r,1} \circ  \ldots  \circ \tilde{\bbeta}_{r,D} \circ \mathbf w_r, \Phi_w(\mathbf{X}) \right \rangle,  \\
	& 
	\sum_{r=1}^{R}  \tilde{\bbeta}_{r,1} \circ  \ldots  \circ \tilde{\bbeta}_{r,D} \circ \mathbf w_r \in \mathcal{A}_M  
	\bigg \}.
\end{aligned}
\end{equation}
Thus, the above discussions lead to 
\[
	\vert \mathcal{M}_{03} \vert \ge C_1^{R} \prod_{d=1}^D \bigg ( \frac{1}{{\varrho} / \gamma^D} \bigg)^{C_2Rp_d - C_3R + C_4R/\bar{h}}
\]
and 
\[
	\Vert m  - m ^\prime \Vert^2 \ge  C_5 {\varrho}^2  \bar{h}^{2\tau},
\]
where $m, m^\prime \in \mathcal{M}_{03}$ and $m \ne m^\prime$. 
In order to use Theorem 1 in \cite{yang1999information}, we only need to find a proper order of $(\varrho\bar{h}^\tau)^2$ such that
\begin{equation}\label{eqn:minimax_sufficient}
	C n (\varrho \bar{h}^{\tau})^2 \le \log  \bigg [ C_1^{R(D+1)} \bigg ( \frac{1}{\varrho / \gamma^D} \bigg)^{\{{C_2R\sum_{d=1}^Dp_d - C_3 R(D+1) + C_4MR} \}/\bar{h}} \,
\bigg ]
\end{equation}
is satisfied. Since $\max_d{p_d}$ is bounded from below, it can be shown that $\bar{h} \le C_5$ together with
\[
	\varrho^2 =  C_6 \min \bigg \{\frac{\sum_d p_d R  }{n \bar{h}^{2\tau} }, ~ \gamma^{2D}  \bigg \},
\]
is one sufficient condition of \eqref{eqn:minimax_sufficient}. 
This implies that
\[
	(\varrho\bar{h}^\tau)^2 =  C\min \bigg\{\frac{\sum_d p_d R}{n  }, ~ \gamma^{2D}\bigg \}.
\]
On the other hand, $\varrho = C_7\gamma^D$ and 
\[
	\bar{h}^{2 \tau} = C_8 \bigg (\frac{R}{n\varrho^2} \bigg )^{\frac{2\tau}{2\tau + 1}}
\]
will also yield \eqref{eqn:minimax_sufficient}, which implies
\[
	(\varrho\bar{h}^\tau)^2 =  C_9 \bigg (\frac{R}{n}  \bigg )^{\frac{2\tau}{2\tau + 1}} \varrho^{\frac{2}{2\tau+1}} =  C \gamma ^{\frac{2D\tau}{2\tau + 1}}\bigg (\frac{R}{n}  \bigg)^{\frac{2\tau}{2\tau + 1}} .
\]
Now, using the proofs of Theorem 1 in \cite{yang1999information} and Theorem 4 in \cite{suzuki2015convergence}, we obtain
\begin{equation}\label{minimax:final}
	\inf \sup_{ m \in \mathcal{M}_{03}} \mathbb E \Vert m - \hat{m} \Vert^2  \ge \max \bigg \{  \min \bigg \{\frac{\sum_d p_d R}{n  },   ~ \gamma^{2D}  \bigg \},  ~ \gamma ^{\frac{2D\tau}{2\tau + 1}} \bigg(\frac{R}{n} \bigg )^{\frac{2\tau}{2\tau + 1}}\bigg \}.
\end{equation}

Finally, recall
\[
	\begin{aligned}
	\mathcal{M}_{00}  = \bigg \{m(\mathbf X) & :  m(\mathbf{X})= \nu+\sum_{r=1}^{R} \left \langle  \frac{\bbeta_{r,1}}{p_1} \circ  \ldots  \circ \frac{\bbeta_{r,D}}{p_D}, F_{r}(\mathbf{X}) \right \rangle,  \\
	& \qquad \nu \le C_1,~
	\bigg \Vert \frac{\bbeta_{r,d}}{p_d} \bigg \Vert_1  \le \gamma , ~ \bbeta_{r,d} \in \mathbb{R}^{p_d}, ~ \int_0^1 f_{r}^2(t) \mathrm{dt}   \le C_2 \bigg \}.
	\end{aligned}
\] 
Let
\[
	\begin{aligned}
	\mathcal{M}_{01}  = \bigg \{m(\mathbf X) & :  m(\mathbf{X})= \nu+\sum_{r=1}^{R} \left \langle  \frac{\bbeta_{r,1}}{p_1} \circ  \ldots  \circ \frac{\bbeta_{r,D}}{p_D}, F_{r}(\mathbf{X}) \right \rangle,  \\
	& \qquad \nu \le C_1,
	\bigg \Vert \frac{\bbeta_{r,d}}{p_d} \bigg \Vert_2^2  \le \gamma^2 , ~ \bbeta_{r,d} \in \mathbb{R}^{p_d}, ~ \int_0^1 f_{r}^2(t) \mathrm{dt}   \le C_2 \bigg \},
	\end{aligned}
\] 
and
\[
	\begin{aligned}
	\mathcal{M}_{02}  = \bigg \{m(\mathbf X) & :  m(\mathbf{X})= \sum_{r=1}^{R} \left \langle  \tilde{\bbeta}_{r,1} \circ  \ldots  \circ \tilde{\bbeta}_{r,D}, F_{r}(\mathbf{X}) \right \rangle,  \\
	& \qquad
	\Vert \tilde{\bbeta}_{r,d} \Vert_2^2  \le \gamma^2 , 
	\tilde{\bbeta}_{r,d} \in \mathbb{R}^{p_d},  \int_0^1 f_{r}^2(t) \mathrm{dt}   \le C_2 \bigg \}.
	\end{aligned}
\] 
Due to the definition of $\mathcal{M}_{03}$ in \eqref{eqn:defM03}, it is straightforward to check that
\[
	\mathcal{M}_{03} \subset \mathcal{M}_{02} \subset \mathcal{M}_{01} \subset \mathcal{M}_{00}.
\]
Thus,
\[
	\inf \sup_{ m  \in \mathcal{M}_{00}} \mathbb E \Vert m - \hat{m} \Vert^2  \ge 	\inf \sup_{ m  \in \mathcal{M}_{03}} \mathbb E \Vert m - \hat{m} \Vert^2,
\]
which finishes the proof for $R\ge 4$. 

When $R \le 3$, we use the result of Lemma \ref{lemma:small_R} to replace that of \eqref{eqn:tildeBd_lowerconvering} in the proof of Lemma \ref{lemma:packing_set} (presented in Section \ref{ssec:AsympTech}). We then follow the similar arguments to obtain a representation same as \eqref{minimax:final}, which leads to the conclusion for $R \le 3$. \hfill$\blacksquare$ 
\end{proof}

\subsection{Technical results}\label{ssec:AsympTech}
\begin{lemma}\label{lem:lemma1}
	Suppose $\mathbf{A} \in \mathbb{R}^{p_1 \times \ldots \times p_D \times K}$ has a CP decomposition as
	\[
	\mathbf{A}=\sum_{r=1}^R  \bbeta_{r,1}  \circ  \ldots  \circ \bbeta_{r,D} \circ \bm{\alpha}_r,
	\]
	where $\bm{\alpha}_r=(\alpha_{r,1}, \cdots, \alpha_{r,K})^\tp \in \mathbb{R}^K$, and $\bbeta_{r,d} \in \mathbb{R}^{p_d}$ for $d=1, \ldots, D$ and $r=1,\ldots, R$. 
	If $\mathbf{u} \in \{(u_1,\cdots,u_K)^\tp: \,
	\sum_{k=1}^K \alpha_{r,k}u_k=0, \, r=1,\ldots, R \}$, then 
	\[
	\sum_{k=1}^KA_{\bm{j},k}u_k=0 \quad \text{for} \quad \bm{j} \in \mathcal{J},
	\]
	where $\mathcal{J}$ is defined in \eqref{eqn:def:mathcal_J}.
\end{lemma}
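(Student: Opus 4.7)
The plan is to verify the identity by directly expanding an arbitrary entry of $\mathbf{A}$ through its CP decomposition, swapping the order of summation, and invoking the hypothesis on $\mathbf{u}$. No measure-theoretic or analytic machinery is needed; the statement is essentially an algebraic consequence of multilinearity combined with the orthogonality condition $\sum_k \alpha_{r,k} u_k = 0$.

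Concretely, fix an arbitrary index $\bm{j} = (i_1, \ldots, i_D) \in \mathcal{J}$. From the CP decomposition, the $(i_1, \ldots, i_D, k)$-entry of $\mathbf{A}$ can be written as
\[
A_{i_1,\ldots,i_D,k} \;=\; \sum_{r=1}^R \beta_{r,1,i_1}\,\beta_{r,2,i_2}\,\cdots\,\beta_{r,D,i_D}\,\alpha_{r,k},
\]
where $\beta_{r,d,i_d}$ denotes the $i_d$-th component of $\bbeta_{r,d}$. Multiplying by $u_k$ and summing over $k$, then interchanging the two finite sums and factoring out the terms that do not depend on $k$, I would obtain
\[
\sum_{k=1}^K A_{i_1,\ldots,i_D,k}\,u_k \;=\; \sum_{r=1}^R \Bigl(\prod_{d=1}^D \beta_{r,d,i_d}\Bigr) \sum_{k=1}^K \alpha_{r,k} u_k.
\]
By the hypothesis on $\mathbf{u}$, the inner sum $\sum_{k=1}^K \alpha_{r,k} u_k$ vanishes for every $r = 1, \ldots, R$, and therefore the entire expression is zero. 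Since $\bm{j}$ was arbitrary in $\mathcal{J}$, the conclusion follows.

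There is essentially no obstacle here; the only care needed is clean bookkeeping of the indices $(r,d,i_d,k)$ and a clear statement that the interchange of the two finite sums is unconditional. I would include one or two lines of notation at the start (fixing $\beta_{r,d,\ell}$ as the $\ell$-th component of $\bbeta_{r,d}$) to make the display unambiguous, and then conclude with one sentence asserting arbitrariness of $\bm{j}$.
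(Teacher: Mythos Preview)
Your proposal is correct and follows essentially the same approach as the paper's own proof: expand $A_{\bm{j},k}$ via the CP decomposition, swap the two finite sums over $r$ and $k$, and apply the hypothesis $\sum_k \alpha_{r,k}u_k=0$ for each $r$. The only cosmetic difference is that the paper abbreviates $\prod_{d=1}^D \beta_{r,d,i_d}$ as $B_{r,\bm{j}}$ (the $\bm{j}$-th entry of $\mathbf{B}_r = \bbeta_{r,1}\circ\cdots\circ\bbeta_{r,D}$), whereas you write the product out explicitly.
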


\begin{proof}
	For simplicity, for $r=1,
	\ldots, R$, let
	\[
	\mathbf{B}_r= \bbeta_{r,1}  \circ  \ldots  \circ \bbeta_{r,D}.
	\]
	Since $\sum_{k=1}^K\alpha_{r,k}u_{k}=0$ for $\bm{j} \in \mathcal{J}$, we have $\sum_{k=1}^KB_{r,\bm{j}}\alpha_{r,k}u_{k}=0$ for $\bm{j} \in \mathcal{J}$, where $B_{r,\bm{j}}$ is $\bm{j}$-th entry of $\mathbf{B}_r$, $r=1,\ldots, R$. Therefore,
	\[
		\sum_{k=1}^K A_{\bm{j}.k}u_k=\sum_{k=1}^K \sum_{r=1}^R  B_{r,\bm{j}}\alpha_{r,k}u_{k}= \sum_{r=1}^R \sum_{k=1}^K  B_{r,\bm{j}}\alpha_{r,k}u_{k}=0,   \quad  \forall \bm{j} \in \mathcal{J},
	\]	
	which completes the proof. \hfill$\blacksquare$
\end{proof}

\begin{lemma}\label{lem:lemma5}
	Suppose $\int_0^1 f_r(u) \mathrm{d}u=0$, $r=1,\ldots, R$. If Assumptions \ref{assm:functionSmoothness} and \ref{assm:splineMashRatio} of the main paper hold, then there exist $\alpha_{0r,k}$ for $k=1, \ldots, K$ and $r=1,\dots, R$, such that 
	\[
		\bigg \Vert f_r- \sum_{k=1}^K \alpha_{0r,k} b_k  \bigg \Vert_{\infty}=\bigO(K^{-\tau}),
	\]
	where $\sum_{k=1}^K\alpha_{0r,k}u_k=0$ and $u_k=\int_0^1 b_k(x) \mathrm{d}x$.
\end{lemma}

\begin{proof}
	Following from Assumptions \ref{assm:functionSmoothness} and \ref{assm:splineMashRatio} of the main paper, Lemma 5 of \cite{Stone85} shows that for each $r$, there exists a spline function $f_{1r}$ under the basis $\{b_k(x)\}_{k=1}^K$ such that 
	\[
	\Vert f_r- f_{1r} \Vert_{\infty}=\bigO(K^{-\tau}).
	\]
	Let $f_{2r}=f_{1r}-\int_0^1 f_{1r}(u) \mathrm{d}u$. We then have
	\[
		\Vert  f_r-f_{2r} \Vert_{\infty} \le \Vert f_r-f_{1r} \Vert_{\infty}+\bigg|\int_0^1 f_{1r}(u) \mathrm{d}u \bigg|.
	\]
	Since
	\[
	\begin{aligned}
	\bigg|\int_0^1 f_{1r}(u) \mathrm{d}u \bigg| &= \bigg|\int_0^1\{ f_{1r}(u)-f(u)\} \mathrm{d}u+\int_0^1 f(u) \mathrm{d}u \bigg|\\
	& \le  \Vert f_r- f_{1r} \Vert_{\infty}\\
	&=  \bigO(K^{-\tau}),
	\end{aligned}
	\]
	it is straightforward to get
	\[
	\Vert f_r- f_{2r} \Vert_{\infty}=\bigO(K^{-\tau}).
	\]
	The proof is completed by noting that the integral of $f_{2r}$ over its domain $[0,1]$ is zero and it can be expanded by the basis $\{b_k(x)\}_{k=1}^K$. \hfill$\blacksquare$
\end{proof}

\begin{lemma}\label{lem:lemma6}
	Let $\mathbf{A} \in \mathbb{R}^{p_1 \times \ldots \times p_D \times K}$. To simplify the notations, denote $p_{D+1}=K$. 
	Let $\Gamma_2=\{\mathbf{a}: \Vert \mathbf{a} \Vert_2 \le 1, \ \mathbf{a}=\vec (\mathbf{A}), \ \mathrm{rank}(\mathbf{A})\le R \}$. Then the covering number of $\Gamma_2$ satisfies
	\begin{equation}
	\label{eqn:coveringnumber}
	N(\epsilon, \Gamma_2,l_2) \le	\bigg(\frac{3D+4}{\epsilon} \bigg)^{R^{D+1}+ R\sum_{d=1}^{D+1} p_d}.
	\end{equation}
\end{lemma}

\begin{proof}
	Since the CP decomposition is a special case of the Tucker decomposition \citep{kolda2009tensor}, $\mathbf{A}$ can be represented as 
	\begin{equation}\label{eqn:lemma6:1}
		\mathbf{A} =\mathbf{I} \times_1 \mathbf{B}_1 \times_2 \cdots \times_{D} \mathbf{B}_{D} \times_{D+1} \mathbf{B}_{D+1},
	\end{equation}
	where $\mathbf{I} \in \mathbb{R}^{R\times R\ldots \times R}$ is a diagonal tensor of which all the diagonal entries are 1, $\mathbf{B}_d \in \mathbb{R}^{p_d \times R}$, and $\times_d$ denotes the $d$-mode (matrix) product of a tensor with a matrix \citep{kolda2009tensor}. Let $r_d=\mathrm{rank}(\mathbf{B}_d)$. Through the QR decomposition, we get $\mathbf{B}_d=\mathbf{Q}_d\mathbf{R}_d$, where $\mathbf{Q}_d^\tp \mathbf{Q}_d=\mathbf{I}_{r_d}$ with $\mathbf{I}_{r_d} \in \mathbb{R}^{r_d \times r_d}$ the identity matrix.  Applying the argument to \eqref{eqn:lemma6:1}, we have
	\begin{equation}\label{eqn:proof:QRCP}
	\begin{aligned}
		\mathbf{A} &=(\mathbf{I} \times_1 \mathbf{B}_1 \times_2 \cdots \times_{D} \mathbf{B}_{D}) \times_{D+1} (\mathbf{Q}_{D+1} \mathbf{R}_{D+1})\\
		&= (\mathbf{I} \times_1 \mathbf{B}_1 \times_2 \cdots \times_{D} \mathbf{B}_{D} \times_{D+1}\mathbf{R}_{D+1} ) \times_{D+1}\mathbf{Q}_{D+1} \\
		&=\{( \mathbf{I} \times_{D+1} \mathbf{R}_{D+1}) \times_1 \mathbf{B}_1 \times_2 \cdots \times_D \mathbf{B}_D\} \times_{D+1} \mathbf{Q}_{D+1} \\
		&= \{ ( \mathbf{I} \times_{D+1} \mathbf{R}_{D+1}) \times_1 \mathbf{B}_1 \times_2 \cdots \times_D (\mathbf{Q}_D\mathbf{R}_D) \} \times_{D+1} \mathbf{Q}_{D+1}  \\
		&=  \{( \mathbf{I} \times_D \mathbf{R}_D \times_{D+1} \mathbf{R}_{D+1}) \times_1 \mathbf{B}_1 \times_2 \cdots \times_{D-1} \mathbf{B}_{D-1} \} \times_D \mathbf{Q}_D \times_{D+1} \mathbf{Q}_{D+1}  \\
		&= \cdots \\
		&=(\mathbf{I}\times_1 \mathbf{R}_1 \times_2 \cdots \times_{D+1} \mathbf{R}_{D+1}) \times_1 \mathbf{Q}_1 \times_2 \cdots \times_{D+1} \mathbf{Q}_{D+1}.
	\end{aligned}
	\end{equation}
	In other words, the CP decomposition will lead a higher-order singular value decomposition \citep[HOSVD, see, e.g.,][]{de2000multilinear}. By Lemma 2 of \citet{rauhut2017low}, we obtain
	\[
		N(\epsilon, \Gamma_2,l_2) \le \bigg(\frac{3D+4}{\epsilon} \bigg)^{\Pi_{d=1}^{D+1} r_d +\sum_{d=1}^{D+1} p_d r_d}.
	\]
	Thus, \eqref{eqn:coveringnumber} is shown by noting that $r_d  \le R$ for $d=1, \ldots, D+1$. \hfill$\blacksquare$
\end{proof}

\begin{lemma}\label{lem:lemma2}
	Suppose $\mathbf{A} \in \mathbb{R}^{p_1 \times \ldots \times p_D \times K}$ and $\mathbf{U} \in \mathbb{R}^{p_1 \times \ldots \times p_D}$ is a random tensor with its entry  $U_{\bm{j}} \overset{i.i.d.}{\sim} U(0,1)$, for $\bm{j} \in \mathcal{J}$, where $\mathcal{J}$ is defined in \eqref{eqn:def:mathcal_J}. 
	Recall that $(\Phi(\mathbf{X}))_{\bm{j},k}=b_k(X_{\bm{j}})$, where $\{ b_k(x)\}_{k=1}^K$ be a B-spline basis, $x \in [0,1]$. 
	Under Assumptions \ref{assm:density} and \ref{assm:splineMashRatio}, if $\sum_{k=1}^KA_{\bm{j},k}u_k=0$ with $u_k=\int_0^1 b_k(x)\mathrm{d}x$ for $\bm{j} \in \mathcal{J}_1 := \mathcal{J}/\{(1,\ldots,1)\}$, we then have  
	\begin{equation}\label{lem:eqn:papulationbound}
		C_1C_{\zeta} h_n \Vert \mathbf{A} \Vert_{HS}^2 \le \mathbb{E}\{\langle \mathbf{A}, \Phi(\mathbf{X}) \rangle^2 \} \le C_2h_n \Vert \mathbf{A} \Vert_{HS}^2,
	\end{equation}
	and
	\begin{equation}\label{lem:eqn:subgaussianbound}
		\Vert  \langle \mathbf{A}, \Phi(\mathbf{X}) \rangle \Vert_{\psi_2}^2 \le C_3 \tilde{h}_n\Vert \mathbf{A} \Vert_{HS}^2,
	\end{equation}
	where $C_1$, $C_2$, $C_3$, and $C_{\zeta}$ are positive constants, $C_{\zeta}$ depends on the order $\zeta$ of the B-spline basis, and $\tilde{h}_n$ is defined in \eqref{tildeh_n}. 
\end{lemma}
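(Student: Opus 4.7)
The plan is to reduce to sums of independent univariate summands. Setting $g_{\bm{j}}(x)=\sum_{k=1}^K A_{\bm{j},k}b_k(x)$, I would write $\langle\mathbf{A},\Phi(\mathbf{X})\rangle=\sum_{\bm{j}\in\mathcal{J}} g_{\bm{j}}(X_{\bm{j}})$, a sum of independent random variables under the i.i.d.\ uniform hypothesis on the entries. The constraint $\sum_k A_{\bm{j},k}u_k=0$ for $\bm{j}\in\mathcal{J}_1$ is equivalent to $\mathbb{E}[g_{\bm{j}}(X_{\bm{j}})]=0$ for $\bm{j}\neq(1,\ldots,1)$; this cancellation is the structural key to both parts.

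For Part (i), I would expand the square and use independence to obtain $\mathbb{E}[\langle\mathbf{A},\Phi(\mathbf{X})\rangle^2] = \sum_{\bm{j}}\mathbb{E}[g_{\bm{j}}(X_{\bm{j}})^2] + \sum_{\bm{j}\neq\bm{j}'}\mathbb{E}g_{\bm{j}}(X_{\bm{j}})\,\mathbb{E}g_{\bm{j}'}(X_{\bm{j}'})$, where the off-diagonal sum vanishes because any distinct pair $(\bm{j},\bm{j}')$ contains at least one index with zero mean. Combining the classical B-spline norm equivalence $c_1 h_n\sum_k a_k^2 \leq \int_0^1(\sum_k a_k b_k(x))^2dx \leq c_2 h_n\sum_k a_k^2$ (e.g., Lemma 6.2 of \citet{zhou1998local} and Lemma 3 of \citet{Stone85}) with the density bounds from Assumption \ref{assm:density} yields $\mathbb{E}[g_{\bm{j}}(X_{\bm{j}})^2]\asymp h_n\sum_k A_{\bm{j},k}^2$, with the lower-bound constant $C_\zeta$ depending on the spline order. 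Summing over $\bm{j}$ recovers \eqref{lem:eqn:papulationbound}.

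Part (ii) is the main obstacle: a direct application of Hoeffding to a sum of bounded summands would only yield an $O(\|\mathbf{A}\|_{HS}^2)$ sub-Gaussian variance proxy, missing the required $\tilde{h}_n\asymp 1/\log(1/h_n)$ factor. I would extract this logarithmic improvement by exploiting that each summand has variance much smaller than the square of its bound. Since at most $\zeta$ of the $b_k(x)$ are nonzero at any point (local support of order-$\zeta$ B-splines) and $b_k\in[0,1]$, Cauchy--Schwarz gives $|g_{\bm{j}}(X_{\bm{j}})|\leq M_{\bm{j}}:=\sqrt{\zeta}(\sum_k A_{\bm{j},k}^2)^{1/2}$, while Part (i) provides $\sigma_{\bm{j}}^2:=\mathrm{Var}(g_{\bm{j}}(X_{\bm{j}}))\leq Ch_n\sum_k A_{\bm{j},k}^2$, so that $M_{\bm{j}}^2/\sigma_{\bm{j}}^2\gtrsim 1/h_n$. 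The central technical inequality is then: for a centered $Z$ with $|Z|\leq 2M$ and $\mathbb{E}Z^2\leq \sigma^2$, the elementary moment bound $(\mathbb{E}|Z|^p)^{1/p}\leq (2M)^{1-2/p}\sigma^{2/p}$ combined with optimization of $p^{-1/2}(\mathbb{E}|Z|^p)^{1/p}$ over $p\geq 1$ (the maximum occurs at $p^\ast=4\log(2M/\sigma)$) yields $\|Z\|_{\psi_2}\leq CM/\sqrt{\log(M/\sigma)}$. Applying this to $Z_{\bm{j}}=g_{\bm{j}}(X_{\bm{j}})-\mathbb{E}g_{\bm{j}}(X_{\bm{j}})$ gives $\|Z_{\bm{j}}\|_{\psi_2}^2\lesssim \sum_k A_{\bm{j},k}^2/\log(1/h_n)$.

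Finally, the rotation bound for sums of independent centered sub-Gaussians gives $\|\sum_{\bm{j}} Z_{\bm{j}}\|_{\psi_2}^2\leq C\sum_{\bm{j}}\|Z_{\bm{j}}\|_{\psi_2}^2\lesssim \|\mathbf{A}\|_{HS}^2/\log(1/h_n)$. The deterministic shift $\mathbb{E}\langle\mathbf{A},\Phi(\mathbf{X})\rangle=\sum_k A_{(1,\ldots,1),k}u_k$ is bounded in magnitude by $(\sum_k A_{(1,\ldots,1),k}^2)^{1/2}\|\bm{u}\|_2$, and since $\sum_k b_k\equiv 1$ implies $\sum_k u_k=1$ with each $u_k\asymp h_n$, I have $\|\bm{u}\|_2^2\asymp Kh_n^2\asymp h_n$; hence this shift contributes at most $O(\sqrt{h_n}\|\mathbf{A}\|_{HS})$ in $\psi_2$-norm. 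The triangle inequality therefore yields $\|\langle\mathbf{A},\Phi(\mathbf{X})\rangle\|_{\psi_2}^2\lesssim \|\mathbf{A}\|_{HS}^2/\log(1/h_n) + h_n\|\mathbf{A}\|_{HS}^2$. Using the identity $h_n^{1/(-\log h_n)}=e^{-1}$, the definition \eqref{tildeh_n} simplifies to $\tilde{h}_n\asymp \max\{1/\log(1/h_n), h_n\}$, completing \eqref{lem:eqn:subgaussianbound}. The main hurdle is the sharp $\psi_2$-bound for bounded random variables with small variance; the remainder is careful accounting.
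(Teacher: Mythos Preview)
Your strategy is essentially the paper's: decompose $\langle\mathbf{A},\Phi(\cdot)\rangle$ as a sum of independent univariate spline terms, exploit the mean-zero constraint to kill cross terms for Part~(i), and for Part~(ii) use rotation invariance for independent centered sub-Gaussians together with a sharp $\psi_2$ bound on each summand. Your derivation of that sharp bound via the interpolation $(\mathbb{E}|Z|^p)^{1/p}\le (2M)^{1-2/p}\sigma^{2/p}$ and optimization at $p^\ast=4\log(2M/\sigma)$ is equivalent to what the paper does: the paper instead invokes the de~Boor $L_q$ estimate $\|\sum_k A_{\bm{j},k}b_k\|_{L_q}\le Ch_n^{1/q}\|\mathbf{A}_{\bm{j}}\|_2$ for all $q\ge 2$ directly and maximizes $h_n^{1/q}/\sqrt{q}$, which lands at the same $q\asymp\log(1/h_n)$ and the same $\tilde h_n$.

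There is, however, one genuine gap in your write-up. Assumption~\ref{assm:density} only bounds the \emph{joint} density of $\mathbf{X}$ above and below; it does not make the entries $X_{\bm{j}}$ independent, nor does $\sum_k A_{\bm{j},k}u_k=0$ imply $\mathbb{E}[g_{\bm{j}}(X_{\bm{j}})]=0$ under the marginal of $X_{\bm{j}}$. So your expansion $\mathbb{E}[\langle\mathbf{A},\Phi(\mathbf{X})\rangle^2]=\sum_{\bm{j}}\mathbb{E}[g_{\bm{j}}(X_{\bm{j}})^2]+\text{(vanishing cross terms)}$ and your use of the rotation bound for independent sub-Gaussians are both invalid for $\mathbf{X}$ as stated. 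The paper avoids this by running exactly your argument with the auxiliary uniform tensor $\mathbf{U}$ (where independence and $\mathbb{E}_U[g_{\bm{j}}(U_{\bm{j}})]=0$ genuinely hold), and only at the very end transferring to $\mathbf{X}$ via $S_1\,\mathbb{E}_U[f(\mathbf{U})]\le \mathbb{E}_X[f(\mathbf{X})]\le S_2\,\mathbb{E}_U[f(\mathbf{U})]$ applied to $f=|\langle\mathbf{A},\Phi(\cdot)\rangle|^q$ for each $q\ge 1$. This single change repairs both parts; everything else you wrote goes through.
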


\begin{proof}
	We prove the population bound \eqref{lem:eqn:papulationbound} at first. Let $\mathbf{A}_{\bm{j}}=(A_{\bm{j},1},\cdots, A_{\bm{j},K})^\tp$ for $\bm{j} \in \mathcal{J}$. 
	By the property of a B-spline basis \citep[see, e.g.,][]{de1973quasi, de1976splines} and Assumption \ref{assm:splineMashRatio},  
	\begin{equation}\label{eqn:lemma2:1}
	C_{\zeta} \Vert \mathbf{A}_{\bm{j}} \Vert_q \le h_n^{-{1}/{q}} \bigg \Vert \sum_{k=1}^K A_{\bm{j},k}b_k(U_{\bm{j}}) \bigg \Vert_q \le  C \Vert \mathbf{A}_{\bm{j}} \Vert_q , ~ \mbox{ for } ~ 1\le q \le + \infty,
	\end{equation}
	where $C_{\zeta}$ and $C $ are two positive constants and $C_{\zeta}$ depends on the order $\zeta$ of the B-spline basis.
	By the independence and the mean-zero restriction for $\bm{j} \in \mathcal{J}_1$, we have
	\[
	\mathbb{E}\{\langle \mathbf{A}, \Phi(\mathbf{U}) \rangle^2 \}= \sum_{\bm{j} \in \mathcal{J}}\mathbb{E}\Bigg[ \bigg\{ \sum_{k=1}^KA_{\bm{j},k}b_k(U_{\bm{j}}) \bigg\}^2 \Bigg].
	\]
	Taking $q=2$ in \eqref{eqn:lemma2:1} yields 
	\[
	C_{\zeta}h_n \Vert \mathbf{A}_{\bm{j}}\Vert_2^2  \le \mathbb{E}\Bigg[ \bigg\{ \sum_{k=1}^K A_{\bm{j},k}b_k(U_{\bm{j}}) \bigg\}^2 \Bigg] \le  Ch_n \Vert \mathbf{A}_{\bm{j}}\Vert_2^2,
	\]
	and we then have
	\begin{equation}\label{eqn:uniformandgeneral1}
	C_{\zeta} h_n \Vert \mathbf{A} \Vert_{HS}^2 \le \mathbb{E}\{\langle \mathbf{A}, \Phi(\mathbf{U}) \rangle^2 \} \le Ch_n \Vert \mathbf{A} \Vert_{HS}^2.
	\end{equation}
	By Assumption \ref{assm:density}, it shows that 
	\begin{equation}\label{eqn:uniformandgeneral2}
	C_1 \mathbb{E}\{\langle \mathbf{A}, \Phi(\mathbf{U}) \rangle^2 \} \le \mathbb{E}\{\langle \mathbf{A}, \Phi(\mathbf{X}) \rangle^2 \} \le C_4\mathbb{E}\{\langle \mathbf{A}, \Phi(\mathbf{U}) \rangle^2 \}.
	\end{equation}
	Combing \eqref{eqn:uniformandgeneral1} and \eqref{eqn:uniformandgeneral2} implies
	\[
	C_1C_{\zeta} h_n \Vert \mathbf{A} \Vert_{HS}^2 \le \mathbb{E}\{\langle \mathbf{A}, \Phi(\mathbf{X}) \rangle^2 \} \le C_2h_n \Vert \mathbf{A} \Vert_{HS}^2,
	\]
	which completes the proof of  \eqref{lem:eqn:papulationbound}.
	
	Now, we turn to prove the sub-Gaussian norm bound \eqref{lem:eqn:subgaussianbound}. Note that
	\[
	\Vert  \langle \mathbf{A}, \Phi(\mathbf{U}) \rangle \Vert_{\psi_2} \le  \bigg \Vert   \sum_{\bm{j} \in \mathcal{J}_1} \sum_{k=1}^KA_{\bm{j},k}b_k(U_{\bm{j}}) \bigg \Vert_{\psi_2}  + \bigg \Vert \sum_{k=1}^KA_{1,\cdots,1, k}b_k(U_{1, \cdots,1}) \bigg\Vert_{\psi_2},
	\]
	which implies
	\begin{equation}\label{eq:lemma:sub_guassianupperbound1}
	\Vert  \langle \mathbf{A}, \Phi(\mathbf{U}) \rangle \Vert_{\psi_2}^2 \le  2  \bigg\Vert   \sum_{\bm{j} \in \mathcal{J}_1} \sum_{k=1}^KA_{\bm{j},k}b_k(U_{\bm{j}}) \bigg \Vert_{\psi_2}^2  +2 \bigg \Vert \sum_{k=1}^KA_{1,\cdots,1,k}b_k(U_{1,\cdots,1}) \bigg \Vert_{\psi_2}^2.
	\end{equation}
	Using the independence property of $\mathbf{U}$, the mean-zero restriction of $\mathbf{A}$, and Proposition 2.6.1 of \citet{vershynin2018high}, we obtain
	\begin{equation}
	\label{eq:lemma:sub_guassianupperbound2}
	\bigg \Vert   \sum_{\bm{j} \in \mathcal{J}_1} \sum_{k=1}^KA_{\bm{j},k}b_k(U_{\bm{j}}) \bigg \Vert_{\psi_2}^2  \le C_5 \sum_{\bm{j} \in \mathcal{J}_1} \bigg \Vert  \sum_{k=1}^KA_{\bm{j},k}b_k(U_{\bm{j}}) \bigg \Vert_{\psi_2}^2.
	\end{equation}
	It then follows from \eqref{eq:lemma:sub_guassianupperbound1} and \eqref{eq:lemma:sub_guassianupperbound2} that
	\[
	\Vert  \langle \mathbf{A}, \Phi(\mathbf{U}) \rangle  \Vert_{\psi_2}^2 \le  2C_5 \sum_{\bm{j} \in \mathcal{J}_1} \bigg \Vert  \sum_{k=1}^KA_{\bm{j},k}b_k(U_{\bm{j}}) \bigg \Vert_{\psi_2}^2  +2 \bigg\Vert \sum_{k=1}^KA_{1,\cdots,1,k}b_k(U_{1,\cdots,1}) \bigg\Vert_{\psi_2}^2.
	\]
	Therefore, 
	\begin{equation}
	\label{eq:lemma:sub_guassianupperbound3}
	\begin{aligned}
	&\Vert  \langle \mathbf{A}, \Phi(\mathbf{U}) \rangle \Vert_{\psi_2}^2 \\
	& \quad \le  (2C_5+2) \sum_{\bm{j} \in \mathcal{J}} \bigg\Vert  \sum_{k=1}^KA_{\bm{j},k}b_k(U_{\bm{j}}) \bigg \Vert_{\psi_2}^2  +(2C_5+2) \bigg \Vert \sum_{k=1}^KA_{1,\cdots,1,k}b_k(U_{1,\cdots,1}) \bigg \Vert_{\psi_2}^2 \\
	& \quad =(2C_5+2) \sum_{\bm{j} \in \mathcal{J}} \bigg\Vert  \sum_{k=1}^KA_{\bm{j},k}b_k(U_{\bm{j}}) \bigg\Vert_{\psi_2}^2 .
	\end{aligned}
	\end{equation}
	We now consider the sub-Gaussian norm  of $\sum_{k=1}^KA_{\bm{j},k}b_k(U_{\bm{j}})$. When $q = 1$, by \eqref{eqn:lemma2:1}, we have
	\begin{equation}\label{lemma2:q=1}
		{\bigg\Vert \sum_{k=1}^K A_{\bm{j},k}b_k(U_{\bm{j}}) \bigg\Vert_1 } \le 2 \frac{\Vert \sum_{k=1}^K A_{\bm{j},k}b_k(U_{\bm{j}})\Vert_2 }{\sqrt{2}} \le C \sqrt{h_n} \Vert \mathbf{A}_{\bm{j}} \Vert_2.
	\end{equation}
	Similarly, when $q \ge 2 $, we obtain
	\begin{equation}
	\label{lemma2:q>=2}
	\frac{\Vert  \sum_{k=1}^K A_{\bm{j},k}b_k(U_{\bm{j}})\Vert_q }{\sqrt{q}} \le C\frac{h_n^{1/q}}{\sqrt{q}} \Vert \mathbf{A}_{\bm{j}}\Vert_q \le C\frac{h_n^{1/q}}{\sqrt{q}} \Vert \mathbf{A}_{\bm{j}}\Vert_2.
	\end{equation}
	Since $f(x)={h_n^{1/x}}/{\sqrt{x}}$ gets the maximum at $x=-2\log h_n$, it implies that
	\begin{equation}\label{lemma2:fun_mini}
	\frac{h_n^{1/q}}{\sqrt{q}} \Vert \mathbf{A}_{\bm{j}}\Vert_2 \le \frac{h_n^{1/(-2\log h_n)}}{(-2\log h_n)^{1/2}}  \Vert \mathbf{A}_{\bm{j}}\Vert_2. 
	\end{equation}
	Due to the definition of $\tilde{h}_n$ in \eqref{tildeh_n} and using \eqref{eq:lemma:sub_guassianupperbound3}--\eqref{lemma2:fun_mini}, we get
	\begin{equation}
	\Vert  \langle \mathbf{A}, \Phi(\mathbf{U}) \rangle \Vert_{\psi_2}^2 \le (2C_5+2)\tilde{h}_nC^2   \Vert \mathbf{A} \Vert_{HS}^2.
	\end{equation}
	Note that for $ q \ge 1$,
	\[
		\frac{1}{\sqrt{q}} \big[ \mathbb{E}\{|\langle \mathbf{A}, \Phi(\mathbf{X}) \rangle |^q \} \big]^{{1}/{q}}  \le C \frac{1}{\sqrt{q}} \big[ \mathbb{E}\{|\langle \mathbf{A}, \Phi(\mathbf{U}) \rangle |^q \} \big]^{{1}/{q}}  \le C \Vert  \langle \mathbf{A}, \Phi(\mathbf{U}) \rangle \Vert_{\psi_2}.
	\]
	Therefore,
	\[
		\Vert  \langle \mathbf{A}, \Phi(\mathbf{X}) \rangle \Vert_{\psi_2}^2 \le C_3\tilde{h}_n \Vert \mathbf{A} \Vert_{HS}^2,
	\]
	which completes the proof of \eqref{lem:eqn:subgaussianbound}. \hfill$\blacksquare$
\end{proof}

\begin{lemma}\label{lem:gaussianwidth}
	Let $\mathbf{A} \in \mathbb{R}^{p_1 \times \ldots \times p_D \times K} $,
	and 
	\begin{equation}\label{lem:eqn:gaussianwidth}
	\mathcal{P}=\bigg\{\frac{\vec(\mathbf{A})}{\Vert  \mathbf{A}\Vert _{HS}}: \sum_{k=1}^K A_{\bm{j},k}u_k=0, \ \text{for} \ \bm{j} \in \mathcal{J}/\{(1,\ldots,1) \}, \ \mathrm{rank} (\mathbf{A}) \le R\bigg\},
	\end{equation}
	where  $u_k=\int_0^1b_k(x)\mathrm{d}x$ and $\mathcal{J}$ is defined in \eqref{eqn:def:mathcal_J}. The Gaussian width satisfying
	\begin{equation}\label{eqn:lemma:guassian:width}
	w(\mathcal{P}) \le  C \bigg(R^{D+1}+R\sum_{d=1}^Dp_d + RK\bigg)^{1/2}.
	\end{equation}
\end{lemma}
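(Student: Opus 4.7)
My plan is to drop the mean-zero linear restriction (which only shrinks $\mathcal{P}$), embed the resulting enlargement into a low-dimensional smooth image via the higher-order SVD (HOSVD), and then combine a volumetric covering bound with Dudley's entropy integral.

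First, since the linear constraints in \eqref{lem:eqn:gaussianwidth} only restrict the set, $w(\mathcal{P}) \le w(\tilde{\mathcal{P}})$ where $\tilde{\mathcal{P}} := \{\vec(\mathbf{A})/\Vert\mathbf{A}\Vert_{HS}:\mathrm{rank}(\mathbf{A})\le R\}$. For any tensor of CP rank at most $R$, every mode-$d$ matricization has rank at most $R$, and HOSVD therefore produces a Tucker factorization $\mathbf{A}=\mathbf{C}\times_1\mathbf{U}_1\times_2\cdots\times_{D+1}\mathbf{U}_{D+1}$ with each $\mathbf{U}_d^\tp\mathbf{U}_d=\mathbf{I}_R$ and core $\mathbf{C}\in\mathbb{R}^{R\times\cdots\times R}$ satisfying $\Vert\mathbf{A}\Vert_{HS}=\Vert\mathbf{C}\Vert_{HS}$. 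Thus $\tilde{\mathcal{P}}$ is contained in the image of $\mathcal{B}\times\mathcal{O}_1\times\cdots\times\mathcal{O}_{D+1}$ under the multilinear map $\Phi(\mathbf{C},\mathbf{U}_1,\ldots,\mathbf{U}_{D+1}):=\mathbf{C}\times_1\mathbf{U}_1\times\cdots\times_{D+1}\mathbf{U}_{D+1}$, where $\mathcal{B}$ is the unit HS-ball in $\mathbb{R}^{R\times\cdots\times R}$ and each $\mathcal{O}_d$ is the Stiefel manifold of matrices with orthonormal columns in $\mathbb{R}^{p_d\times R}$ ($d\le D$) or $\mathbb{R}^{K\times R}$ ($d=D+1$).

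Second, I construct an $\epsilon$-net of the parameter domain. Standard volumetric estimates give an $\epsilon$-net of $\mathcal{B}$ of size $(3/\epsilon)^{R^{D+1}}$, and $\epsilon$-nets in the operator norm of each $\mathcal{O}_d$ of size $(C_0/\epsilon)^{p_d R}$ for $d\le D$ (resp.\ $(C_0/\epsilon)^{KR}$ for $d=D+1$). The submultiplicative inequality $\Vert\mathbf{T}\times_d\mathbf{U}\Vert_{HS}\le\Vert\mathbf{U}\Vert_{op}\Vert\mathbf{T}\Vert_{HS}$, applied through the telescoping identity
\[
\Phi(\mathbf{C},\mathbf{U})-\Phi(\mathbf{C}',\mathbf{U}') = (\mathbf{C}-\mathbf{C}')\times_1\mathbf{U}_1\times\cdots\times_{D+1}\mathbf{U}_{D+1} + \sum_{d=1}^{D+1}\mathbf{C}'\times_1\mathbf{U}'_1\times\cdots\times_{d-1}\mathbf{U}'_{d-1}\times_d(\mathbf{U}_d-\mathbf{U}'_d)\times_{d+1}\mathbf{U}_{d+1}\times\cdots\times_{D+1}\mathbf{U}_{D+1},
\]
shows that $\Phi$ is $(D+2)$-Lipschitz on its domain (HS norm on $\mathcal{B}$, operator norm on each $\mathcal{O}_d$). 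Consequently $\tilde{\mathcal{P}}$ admits an $\epsilon$-net in $\Vert\cdot\Vert_{HS}$ of cardinality at most $(C/\epsilon)^{m}$ with $m := R^{D+1}+R\sum_{d=1}^D p_d+RK$.

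Finally, the canonical metric of the Gaussian process $\mathbf{A}\mapsto\langle\mathbf{A},\mathbf{G}\rangle$ on $\tilde{\mathcal{P}}$ coincides with $\Vert\cdot\Vert_{HS}$ and $\mathrm{diam}(\tilde{\mathcal{P}})\le 2$, so Dudley's entropy integral yields
\[
w(\mathcal{P})\le w(\tilde{\mathcal{P}}) \le C_1\int_0^2\sqrt{\log N(\tilde{\mathcal{P}},\Vert\cdot\Vert_{HS},\epsilon)}\,d\epsilon \le C_2\sqrt{m}\int_0^2\sqrt{1+\log(1/\epsilon)}\,d\epsilon \le C\sqrt{m},
\]
which matches \eqref{eqn:lemma:guassian:width}. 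The main obstacle is the telescoping Lipschitz bookkeeping in the second step: one must maintain orthonormality of the \emph{remaining} factors after each substitution so as not to pick up a crude $\Vert\mathbf{G}\Vert_{HS}$-sized factor; with this care the HOSVD existence and Dudley's integral are otherwise standard, and the bound scales with the Tucker effective dimension $m$ as expected.
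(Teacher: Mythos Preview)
Your proposal is correct and follows essentially the same route as the paper: relax the linear constraints, pass to a Tucker/HOSVD parametrization of the low-rank set, obtain a covering number of order $(C/\epsilon)^{R^{D+1}+R\sum_d p_d + RK}$, and finish with Dudley's entropy integral. The only cosmetic difference is that the paper packages the covering-number step by citing Lemma~2 of \citet{rauhut2017low} (after an explicit QR reduction of the CP factors to an HOSVD), whereas you carry out the volumetric/Lipschitz telescoping argument by hand; the resulting bounds and the overall structure are the same.
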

\begin{proof}
	By the covering number argument in Lemma \ref{lem:lemma6}, we have
	\[
	N(\epsilon, \mathcal{P},l_2) \le	\big(C_1/\epsilon \big)^{R^{D+1}+ R\sum_{i=1}^{D} p_i +RK},
	\]
	where $C_1=3D+4$ is a constant. Suppose $\mathbf{a} \in \mathcal{P}$ and $\mathbf{x} \in \mathcal{N} (\mathbf{0}, \mathbf{I}_{s \times s })$. 
	By Dudley's integral entropy bound \citep[see, e.g., Theorem 3.1 of][]{koltchinskii2011oracle}, we obtain
	\[
	\begin{aligned}
	\mathbb{E}_{\mathbf{x}} \sup_{\mathbf{a} \in \mathcal{P}}(\mathbf{a}^\tp \mathbf{x}) &\le C_3 \int_0^2 \bigg\{\bigg(R^{D+1}+R\sum_{i=1}^Dp_i+RK \bigg)\log ( {C_1}/{x} ) \bigg\}^{1/2}\mathrm{d}x \\
	& \le C  \bigg(R^{D+1}+R\sum_{i=1}^Dp_i +RK \bigg)^{1/2}.
	\end{aligned}
	\]
	Thus we complete the proof. \hfill$\blacksquare$
\end{proof}

\begin{lemma}\label{lem:lemma3} 
	Let $\mathbf{A} \in \mathbb{R}^{p_1 \times \ldots \times p_D \times K} $ and
	suppose $\mathcal{P}$ is defined in \eqref{lem:eqn:gaussianwidth}. Under Assumptions \ref{assm:density} and  \ref{assm:splineMashRatio}, we have
	\begin{equation}\label{eqn:lemma3:1}
	    \sup_{\vec(\mathbf{A}) \in \mathcal{P}} \Bigg| \frac{1}{n} \frac{1}{ \mathbb{E}\{|\langle \mathbf{A}, \Phi(\mathbf{X}) \rangle |^2\}} \sum_{i=1}^n \langle \mathbf{A}, \Phi(\mathbf{X}_i) \rangle^2 -1 \Bigg | \le C_1\tilde{h}_nh_n^{-1} \frac{w(\mathcal{P})}{\sqrt{n}}
	\end{equation}
	with probability at least $1-\exp\{-C_2 w^2(\mathcal{P})\}$, where $w(\mathcal{P})$ is the Gaussian width, $(\Phi(\brm{X}))_{\bm{j},k} = b_k(X_{\bm{j}})$ for $\bm{j} \in \mathcal{J}$ with $\mathcal{J}$ defined in \eqref{eqn:def:mathcal_J} and $k=1, \ldots, K$, and $\tilde{h}_n$ is defined in \eqref{tildeh_n}. 
	Furthermore, if $n > C\tilde{h}_n^2h_n^{-2} w^2(\mathcal{P})$ for some $C>0$, then with the same probability, we have  
	\begin{equation}\label{eqn:lemma3:2}
	    C_3h_n \le \inf_{\vec(\mathbf{A}) \in \mathcal{P}} \frac{1}{n}  \bigg |   \sum_{i=1}^n \langle \mathbf{A}, \Phi(\mathbf{X}_i)  \rangle  \bigg  |^2 \le \sup_{\vec(\mathbf{A}) \in \mathcal{P}} \frac{1}{n}  \bigg  |   \sum_{i=1}^n \langle \mathbf{A}, \Phi(\mathbf{X}_i)  \rangle  \bigg  |^2 \le C_4h_n.
	\end{equation}
\end{lemma}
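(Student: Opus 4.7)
The plan is to invoke a generic-chaining concentration inequality for the squared sub-Gaussian empirical process indexed by $\mathcal{P}$, and then translate the resulting absolute error into the relative-error form in \eqref{eqn:lemma3:1} by dividing through by the population second moment. Write $Z_i(\mathbf{A}) := \langle \mathbf{A}, \Phi(\mathbf{X}_i)\rangle$. Two immediate consequences of \prettyref{lem:lemma2} are that on $\mathcal{P}$ (where $\|\mathbf{A}\|_{HS}=1$) we have $\mathbb{E}\{Z_i(\mathbf{A})^2\}\asymp h_n$, and $\|Z_i(\mathbf{A})\|_{\psi_2}^2\le C_3\tilde h_n$. By linearity, the increment $Z_i(\mathbf{A}_1)-Z_i(\mathbf{A}_2)$ is also sub-Gaussian with norm bounded by $\sqrt{C_3\tilde h_n}\,\|\mathbf{A}_1-\mathbf{A}_2\|_{HS}$; hence the stochastic process $\{Z_i(\cdot)\}$ is sub-Gaussian with respect to the Euclidean metric, with Orlicz diameter of order $\sqrt{\tilde h_n}$ on $\mathcal{P}$.

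The next step is to apply a sub-Gaussian quadratic-process tail bound (Theorem~5.5 of Dirksen or equivalently the Bernstein-type chaining bound used in \citealp{banerjee2015estimation} via $\gamma_1,\gamma_2$-functionals) to obtain, for i.i.d.\ copies $\mathbf{X}_1,\dots,\mathbf{X}_n$,
\[
\sup_{\vec(\mathbf{A})\in\mathcal{P}}\left|\frac{1}{n}\sum_{i=1}^n Z_i(\mathbf{A})^2 - \mathbb{E}\{Z_i(\mathbf{A})^2\}\right| \le C\,\tilde h_n\left(\frac{\gamma_2(\mathcal{P},\ell_2)}{\sqrt n}+\frac{\gamma_2(\mathcal{P},\ell_2)^2}{n}\right)
\]
with probability at least $1-\exp\{-c\,\gamma_2(\mathcal{P},\ell_2)^2\}$. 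Then I invoke Talagrand's majorizing measure theorem, which gives $\gamma_2(\mathcal{P},\ell_2)\asymp w(\mathcal{P})$, and \prettyref{lem:gaussianwidth} to replace these quantities by $w(\mathcal{P})$ (or any upper bound $t\ge w(\mathcal{P})$, which accommodates the last remark in the statement). Under the hypothesis $n>C\tilde h_n^2 h_n^{-2}w^2(\mathcal{P})$, the linear-in-$w/\sqrt n$ term dominates, so the right-hand side collapses to $C\tilde h_n\,w(\mathcal{P})/\sqrt n$.

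Dividing both sides by the population variance $\mathbb{E}\{Z_i(\mathbf{A})^2\}\ge C_1C_\zeta h_n$ (from \prettyref{lem:lemma2}(i)) immediately yields the relative-error estimate \eqref{eqn:lemma3:1} with constant prefactor $\tilde h_n h_n^{-1}$. For part (ii), choose the constant $C$ in the hypothesis large enough so that the RHS of \eqref{eqn:lemma3:1} is bounded by $1/2$ on the high-probability event; then the empirical second moment is sandwiched between $\tfrac12\mathbb{E}\{Z_i^2\}$ and $\tfrac32\mathbb{E}\{Z_i^2\}$, and applying \prettyref{lem:lemma2}(i) once more converts these into multiples of $h_n$, producing the two-sided bound \eqref{eqn:lemma3:2}.

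The main obstacle will be extracting exactly the Gaussian-width-based rate rather than a metric-entropy rate that would cost an extra logarithmic factor; this is why invoking a chaining inequality phrased in terms of $\gamma_2$ and then using Talagrand's equivalence with $w(\mathcal{P})$ is preferable to a direct Dudley/covering argument. A secondary subtlety is keeping careful track of the two different scales, the sub-Gaussian norm scale $\sqrt{\tilde h_n}$ and the $L_2$ scale $\sqrt{h_n}$, since it is precisely their ratio $\tilde h_n/h_n$ that appears as the leading constant in \eqref{eqn:lemma3:1} and that drives the sample size condition $n\gtrsim \tilde h_n^2 h_n^{-2}w^2(\mathcal{P})$ required for the ``lower isometry'' part (ii).
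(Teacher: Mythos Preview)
Your proposal is correct and follows essentially the same route as the paper: both proofs use \prettyref{lem:lemma2} to control the $L_2$ and $\psi_2$ scales, invoke the generic-chaining concentration inequality for quadratic sub-Gaussian processes from \citet{banerjee2015estimation} (their Theorem~10), and pass from $\gamma_2$ to the Gaussian width via Talagrand's majorizing-measure theorem. The only cosmetic difference is that the paper first normalizes to the unit $L_2$ sphere and works with $\kappa_n=C\tilde h_n^{1/2}h_n^{-1/2}$ throughout, whereas you keep the process unnormalized and divide by $\mathbb{E}\{Z_i(\mathbf{A})^2\}\asymp h_n$ at the end; the resulting constants and probability bound are identical.
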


\begin{proof}
	Based on Lemma \ref{lem:lemma2}, the proof of this Lemma is similar to Theorem 12 of \citet{banerjee2015estimation}. We consider the following class of functions
	\[
	F=\bigg \{f_A: f_A\{ \Phi(\mathbf{X})\}= \frac{1}{\sqrt{ \mathbb{E}\{|\langle \mathbf{A}, \Phi(\mathbf{X}) \rangle |^2\}}}\langle \mathbf{A}, \Phi(\mathbf{X}) \rangle, ~ \vec(\mathbf{A}) \in \mathcal{P}   \bigg \}.
	\]
	It is trivial to see that $ F \subset S_{L_2}:=\{f: \mathbb{E}[f^2\{\Phi(\mathbf{X})\}]=1 \}$. 
	Let
	\[
    \sup_{f_A \in F}\Vert f_A \Vert_{\psi_2} = \sup_{\vec(\mathbf{A}) \in \mathcal{P}}  \bigg  \Vert  \frac{1}{\sqrt{ \mathbb{E}\{|\langle \mathbf{A}, \Phi(\mathbf{X}) \rangle |^2\}}}\langle \mathbf{A}, \Phi(\mathbf{X}) \rangle  \bigg  \Vert_{\psi_2}.
	\]
	Due to Lemma \ref{lem:lemma2}, for every $\vec(\mathbf{A}) \in \mathcal{P}$, it holds that
	\[
	\bigg \Vert  \frac{1}{\sqrt{ \mathbb{E}\{|\langle \mathbf{A}, \Phi(\mathbf{X}) \rangle |^2\}}}\langle \mathbf{A}, \Phi(\mathbf{X}) \rangle   \bigg  \Vert_{\psi_2} \le \kappa_n,
	\]
	where $\kappa_n=C_5\tilde{h}_n^{1/2} h_n^{-1/2}$, 
	which implies
	\[
	\sup_{f_A \in F}\Vert f_A \Vert_{\psi_2} \le \kappa_n.
	\]
	Thus for the $\gamma_2$ functionals, we have
	\[
	\gamma_2(F \cap S_{L_2}, \Vert . \Vert_{\psi_2}) \le \kappa_n \gamma_2(F \cap S_{L_2}, \Vert . \Vert_{L_2}) \le C_6\kappa_n  w(\mathcal{P}),
	\]
	where the last inequality follows from Theorem 2.1.1 of \citet{talagrand2005generic}. By Theorem 10 of \citet{banerjee2015estimation}, we can choose
	\[
	\theta=C_7 C_6\kappa_n^2 \frac{w(\mathcal{P})}{\sqrt{n}} \ge C_7 \kappa_n \frac{\gamma_2(F \cap S_{L_2}, \Vert . \Vert_{\psi_2}) }{\sqrt{n}}.
	\]
	As a result, with probability at least $1-\exp(-C_8\theta^2n/\kappa_n^4)$, 
	we have \eqref{eqn:lemma3:1} holds with $C_1=C_7C_6C_5^2$ and $C_2=C_8C_7^2C_6^2$. Suppose $\sqrt{n} > C\tilde{h}_nh_n^{-1}w(\mathcal{P})$ for some $C>0$, then by Lemma \ref{lem:lemma2}, with probability at least $1-\exp\{-C_2 w^2(\mathcal{P})\}$, we have 
	\[
    C_3h_n \le \inf_{\vec(\mathbf{A}) \in \mathcal{P}} \frac{1}{n}  \bigg |   \sum_{i=1}^n \langle \mathbf{A}, \Phi(\mathbf{X}_i)  \rangle  \bigg  |^2 \le \sup_{\vec(\mathbf{A}) \in \mathcal{P}} \frac{1}{n}  \bigg  |   \sum_{i=1}^n \langle \mathbf{A}, \Phi(\mathbf{X}_i)  \rangle  \bigg |^2 \le C_4h_n,
	\]
	which completes the proof of \eqref{eqn:lemma3:2}. \hfill$\blacksquare$
\end{proof}

\begin{lemma}\label{lem:lemma4}
	Suppose $\mathbf{A} \in \mathbb{R}^{p_1 \times \ldots \times p_D \times K} $, $\mathrm{rank}(\mathbf{A})\le R$ and $\sum_{k=1}^K A_{\bm{j},k}u_k=0$  for  $\bm{j} \in \mathcal{J}/\{(1, \ldots, 1)\}$, where $u_k=\int_0^1 b_k(x) \mathrm{d}x$ and $\mathcal{J}$ is defined in \eqref{eqn:def:mathcal_J}. Under Assumptions \ref{assm:density}, \ref{assm:errSubGaussian}, and \ref{assm:splineMashRatio} of the main paper, if $n > C\tilde{h}_n^2 h_n^{-2}  \big(R^{D+1}+R\sum_{i=1}^Dp_i+RK \big)$ for some constant $C>0$, where $\tilde{h}_n$ is defined in \eqref{tildeh_n}, then
	\begin{equation}\label{eqn:leama:error:entropy}
		\sum_{i=1}^n \langle \mathbf{A}, \Phi(\mathbf{X}_i) \rangle \epsilon_i \le C_1\Vert \mathbf{A} \Vert_{HS} \bigg\{nh_n \bigg(R^{D+1}+\sum_{i=1}^D Rp_i+RK\bigg) \bigg\}^{1/2},
	\end{equation}
	with probability at least 
	\[
	1-C_{2}\exp\Bigg\{- C_{3}\bigg(R^{D+1}+R\sum_{i=1}^Dp_i+RK \bigg) \Bigg\}.
	\]
\end{lemma}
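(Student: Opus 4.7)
The plan is to follow the sub-Gaussian process strategy that underlies Lemma \ref{lem:lemma3}. By homogeneity of the claim in $\mathbf{A}$, it suffices to prove the uniform bound
$$\sup_{\vec(\mathbf{A})\in\mathcal{P}_1}\ \sum_{i=1}^n\langle\mathbf{A},\Phi(\mathbf{X}_i)\rangle\epsilon_i\ \le\ C_1\Big\{nh_n\Big(R^{D+1}+R\sum_{i=1}^Dp_i+RK\Big)\Big\}^{1/2},$$
where $\mathcal{P}_1$ is the normalized low-rank set defined in \eqref{eqn:P1}; the lemma then follows by scaling, since any admissible $\mathbf{A}$ (including the data-dependent $\mathbf{a}^{\sharp}$ appearing in the proof of Theorem \ref{thm:convergencerates}) has $\vec(\mathbf{A})/\|\mathbf{A}\|_{HS}\in\mathcal{P}_1$ after enlarging the rank parameter by a constant factor.

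Conditionally on $\{\mathbf{X}_i\}_{i=1}^n$, Assumption \ref{assm:errSubGaussian} implies that, for each fixed $\mathbf{A}$, the random variable $\sum_i\langle\mathbf{A},\Phi(\mathbf{X}_i)\rangle\epsilon_i$ is centered sub-Gaussian with proxy $\sigma\|\mathbf{Z}\vec(\mathbf{A})\|_2$, where $\mathbf{Z}$ is the design defined in \eqref{eqn:def:design}. For any two elements $\mathbf{A}_1,\mathbf{A}_2\in\mathcal{P}_1$, the difference $\mathbf{A}_1-\mathbf{A}_2$ preserves the mean-zero constraint and has rank at most $2R$, so invoking Lemma \ref{lem:lemma3} on the slightly enlarged rank class — whose Gaussian width remains of order $\sqrt{R^{D+1}+R\sum_i p_i+RK}$ by Lemma \ref{lem:gaussianwidth} and \eqref{eqn:relationBetwRs} — yields, on an event $\mathcal{E}_0$ of probability at least $1-2\exp\{-cw^2(\mathcal{P}_1)\}$, the uniform operator-norm comparison
$$\|\mathbf{Z}(\vec(\mathbf{A}_1)-\vec(\mathbf{A}_2))\|_2\le C\sqrt{nh_n}\,\|\mathbf{A}_1-\mathbf{A}_2\|_{HS}.$$
Thus on $\mathcal{E}_0$ the conditional noise process is sub-Gaussian with respect to the rescaled Euclidean metric $C\sigma\sqrt{nh_n}\,\|\cdot\|_{HS}$ on $\mathcal{P}_1$.

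The final step is a generic-chaining tail inequality for sub-Gaussian processes, for which I would apply Theorem 10 of \citet{banerjee2015estimation} together with the bound $\gamma_2(\mathcal{P}_1,\|\cdot\|_{HS})\lesssim w(\mathcal{P}_1)$ from Theorem 2.1.1 of \citet{talagrand2005generic}. These jointly give, with $\mathbf{X}$-conditional probability at least $1-\exp\{-c(R^{D+1}+R\sum_i p_i+RK)\}$ on $\mathcal{E}_0$, the supremum bound $C\sigma\sqrt{nh_n}\cdot w(\mathcal{P}_1)$, which by Lemma \ref{lem:gaussianwidth} is precisely the advertised rate; marginalising over $\{\mathbf{X}_i\}$ and combining the two exponential failure probabilities yields the claim. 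The main technical obstacle is verifying that the Gaussian width of the difference class (rank up to $2R$) stays of the same order as that of $\mathcal{P}_1$, which is what makes the multiplicative covering of low-CP-rank tensors — covering each factor $\bm{\beta}_{r,d}$ and $\bm{\alpha}_r$ separately rather than the $sK$-dimensional vectorization — indispensable; a generic $\epsilon$-net on the ambient space would be too coarse by a factor of roughly $\sqrt{s}$ and would destroy the entire gain obtained from the low-rank broadcasting structure.
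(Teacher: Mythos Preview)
Your approach is essentially the paper's: condition on the design, view $\mathbf{a}\mapsto(\mathbf{Z}\mathbf{a})^\tp\bm{\epsilon}$ as a sub-Gaussian process indexed by the normalized low-rank set, and chain via the low-rank covering number of Lemma \ref{lem:lemma6}. The paper carries this out by mapping $\mathcal{P}$ to $\Gamma_1=\{\mathbf{Z}\mathbf{a}/\sqrt{\lambda_{\mathrm{Rmax}}}:\mathbf{a}\in\mathcal{P}\}$, transferring the covering number through this Lipschitz map, and applying the tail form of Dudley's bound (Theorem 8.1.6 of \citet{vershynin2018high}) directly on $\Gamma_1$.

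One correction: Theorem 10 of \citet{banerjee2015estimation} is the restricted-eigenvalue concentration result that drives Lemma \ref{lem:lemma3}; it controls empirical quadratic forms, not suprema of sub-Gaussian processes, so it cannot be invoked here. Replace it with Theorem 8.1.6 of \citet{vershynin2018high} (or the standard generic-chaining tail inequality) and your argument goes through unchanged. Your explicit passage to the rank-$2R$ difference class is in fact the clean way to justify the Lipschitz comparison the paper uses: the assertion $N(\epsilon,\Gamma_1,\ell_2)\le N(\epsilon,\mathcal{P},\ell_2)$ tacitly requires $\|\mathbf{Z}(\mathbf{a}_1-\mathbf{a}_2)\|_2\le\sqrt{\lambda_{\mathrm{Rmax}}}\,\|\mathbf{a}_1-\mathbf{a}_2\|_2$, and since $\mathbf{a}_1-\mathbf{a}_2$ has rank up to $2R$ this is precisely Lemma \ref{lem:lemma3} applied to the doubled class, whose Gaussian width is of the same order by Lemma \ref{lem:gaussianwidth}.
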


\begin{proof}
	We use the notation $\mathbf{Z}=(\mathbf{z}_1,\cdots,\mathbf{z}_n)^{\tp}$ as introduced in \eqref{eqn:defZ}, then the left hand side of \eqref{eqn:leama:error:entropy} can be rewritten as
	\[
	\sum_{i=1}^n \langle \mathbf{A}, \Phi(\mathbf{X}_i) \rangle \epsilon_i =(\mathbf{Z}\mathbf{a})^\tp \bm{\epsilon}.
	\] 
	Consider
	\[
	\Gamma_1=\bigg \{\frac{\mathbf{Z}\mathbf{a}}{\sqrt{\lambda_{\text{Rmax}}(\mathbf{Z}^\tp \mathbf{Z}) }}:  \mathbf{a} \in \mathcal{P} \bigg \},
	\]
	where $\lambda_{\text{Rmax}}(\mathbf{Z}^\tp \mathbf{Z})=\sup_{\mathbf{a} \in \mathcal{P}} \Vert \mathbf{Z} \mathbf{a} \Vert_2$ and $\mathcal{P}$ is defined as \eqref{lem:eqn:gaussianwidth}.
	By the covering number argument in Lemma \ref{lem:lemma6},
	\[
	N(\epsilon, \mathcal{P},l_2) \le	\big({C_4}/{\epsilon} \big)^{R^{D+1}+ R\sum_{d=1}^{D} p_d +RK},
	\]
	where $C_4=3D+4$ is a constant. Following from the definition of $\Gamma_1$, we have 
	\[
	N(\epsilon,\Gamma_1,l_2) \le N(\epsilon, \mathcal{P},l_2) \le	\big({C_4}/{\epsilon} \big)^{R^{D+1}+ R\sum_{i=1}^{D} p_i +RK}.
	\]
	By Assumption \ref{assm:errSubGaussian}, for $\bm{\eta} \in \Gamma_1$, $\mathbb{E} \{\exp( t \bm{\eta}^\tp \bm{\epsilon})\} \le \exp(Ct^2 \Vert \bm{\eta} \Vert^2) \le \exp(Ct^2 )$. 
	Using Dudley's integral entropy bound, we have 
	\[\begin{aligned}
		\mathbb{E} \sup_{\bm{\eta} \in \Gamma_1}(\bm{\eta}^\tp \bm{\epsilon}) &\le C \int_0^2 \bigg\{ \bigg(R^{D+1}+R\sum_{i=1}^Dp_i+RK \bigg)\log \big(C_4/\epsilon \big) \bigg\}^{1/2}\mathrm{d} \epsilon \\
		&\le C_5  \bigg(R^{D+1}+R\sum_{i=1}^Dp_i +RK \bigg)^{1/2}.
	\end{aligned}
	\]
	Due to Theorem 8.1.6 of \cite{vershynin2018high}, we have 
	\[\begin{aligned}
		\sup_{\bm{\eta} \in \Gamma_1}(\bm{\eta}^\tp \bm{\epsilon}) &\le C \bigg[ \int_0^2 \bigg\{ \bigg(R^{D+1}+R\sum_{i=1}^Dp_i+RK \bigg)\log \big(C_4/\epsilon \big) \bigg\}^{1/2}\mathrm{d} \epsilon  +2t\bigg] \\
		&\le C_6 \bigg\{ \bigg(R^{D+1}+R\sum_{i=1}^Dp_i +RK \bigg)^{1/2} +t\bigg \},
	\end{aligned}\]
	with probability at least $1-2\exp(-t^2)$. 
	Thus,
	\begin{equation}\label{eqn:lemma:errorbound}
		(\mathbf{Z}\mathbf{a})^\tp \bm{\epsilon}\le C_7 \sqrt{\lambda_{\mathrm{Rmax}}(\mathbf{Z}^\tp \mathbf{Z})}  \bigg(R^{D+1}+R\sum_{i=1}^Dp_i+RK \bigg)^{1/2}, 
	\end{equation}
	with probability at least
	\[
	1-2\exp\Bigg\{- \bigg(R^{D+1}+R\sum_{i=1}^Dp_i+RK \bigg) \Bigg\}.
	\]
	Plugging \eqref{eqn:lemma:guassian:width} and \eqref{eqn:lemma3:2} into \eqref{eqn:lemma:errorbound}, we complete the proof of \eqref{eqn:leama:error:entropy}. \hfill$\blacksquare$
\end{proof}

\begin{lemma}\label{lem:convernumbernew}
	Represent the CP low-rank tensor $\mathbf A \in \mathbb{R}^{p_1 \times \ldots \times p_{D+1}}$ as in \eqref{eqn:lemma6:1}, where $R$ is its CP-rank and $\mathbf{B}_d \in \mathbb{R}^{p_d \times R}$, $d=1,\ldots, D+1$. 
	Let $\delta \ge 0$ and
	\[
		\Gamma_\delta = \{\mathbf A: \mathbf {A} =\mathbf{I} \times_1 \mathbf{B}_1 \times_2 \cdots \times_{D} \mathbf{B}_{D} \times_{D+1} \mathbf{B}_{D+1}, \Vert \mathbf B_d  \Vert^2  \le \delta,\, d=1,\ldots, D+1,\, \Vert \mathbf A \Vert \le 1 \}.
	\]
	Then, the covering number of $\Gamma_\delta$ satisfies
	\begin{equation}\label{eqn:newcovering}
		N(\epsilon, \Gamma_\delta ,l_2) \le  \bigg(\frac{C_1}{\epsilon} \bigg)^{C_2R^2\log R\delta + R\sum_{d=1}^{D+1} p_d }
	\end{equation}	
\end{lemma}

\begin{proof}
Using \eqref{eqn:proof:QRCP} in the proof of Lemma \ref{lem:lemma6}, $\mathbf{A}$ can be written as
\[
	\mathbf{A} =(\mathbf{I}\times_1 \mathbf{R}_1 \times_2 \cdots \times_{D+1} \mathbf{R}_{D+1}) \times_1 \mathbf{Q}_1 \times_2 \cdots \times_{D+1} \mathbf{Q}_{D+1},
\]
where $\mathbf{R}_d \in \mathbb{R}^{R \times R}$, $\mathbf{Q}_d \in \mathbb{R}^{p_d \times R}$ with $\mathbf{Q}_d^\tp \mathbf{Q}_d=\mathbf{I}_{R}$, and $\mathbf{Q}_d\mathbf{R}_d = \mathbf{B}_d$. 
The above HOSVD representation is not unique since respectively replacing $\mathbf{R}_d$ and $\mathbf{Q}_d$ by $\mathbf  U_d^\tp\mathbf{R}_d$ and $\mathbf{Q}_d \mathbf  U_d$ shares the equivalent meaning of HOSVD, for any $\mathbb{R}^{R \times R}$ orthonormal matrix $\mathbf U_d$. 
In order to use the argument of \citet{rauhut2017low}, denote
\begin{equation}\label{eqn:coreSofHOSVD}
	\mathbf S = \mathbf{I}\times_1 \mathbf{R}_1 \times_2 \cdots \times_d \mathbf R_d  \times_{d+1} \cdots \times_{D+1} \mathbf{R}_{D+1}.
\end{equation}
We first show that there exists a HOSVD representation such that the matrix SVD of $\mathbf S_{(d)}$ satisfies the special form of 
\begin{equation}\label{eqn:coreSofHOSVDnoU}
	\mathbf S_{(d)} = \Sigma_d \mathbf  V_d^\tp, \quad d = 1, \dots, D+1,
\end{equation}
where $\mathbf S_{(d)}$ is the mode-$d$ matricization of $\mathbf S$ defined in \eqref{eqn:coreSofHOSVD}. 
In the following, we construct a HOSVD representation from its equivalent class. 
Indeed, the general matrix SVD of $\mathbf S_{(d)}$, $d=1,\ldots,D+1$, can be written as
\[
	\mathbf S_{(d)} 
		= \mathbf U_d \Sigma_d \mathbf  V_d^\tp,
\]
where $\mathbf  U_d^\tp \mathbf  U_d = \mathbf  I_{R}$, $\mathbf  V_d^\tp\mathbf  V_d = \mathbf  I_{R}$, and $\Sigma_d $ is a diagonal matrix. 
Let 
\begin{equation*} 
	\tilde{ \mathbf  S}  = \mathbf  S \times_1 \mathbf  U_1^\tp \times_2 \cdots \times_{D+1} \mathbf  U_{D+1}^\tp=\mathbf  I \times_1 \mathbf  U_1^\tp\mathbf  R_1 \times_2 \cdots \times_{D+1} \mathbf  U_{D+1}^\tp \mathbf  R_{D+1}.
\end{equation*}
The HOSVD of $\mathbf  A$ can then be written as
\begin{equation*} 
	\mathbf  A = \tilde{ \mathbf  S}  \times_1 \mathbf  Q_1  \mathbf  U_1 \times \cdots \times \mathbf  Q_{D+1}\mathbf  U_{D+1},
\end{equation*}
where $(\mathbf  Q_d\mathbf  U_d )^\tp \mathbf  Q_d\mathbf  U_d = \mathbf  I_{R} $ and $\Vert \mathbf  U_d^\tp \mathbf  R_d \Vert = \Vert  \mathbf  R_d \Vert = \Vert \mathbf  Q_d \mathbf  R_d \Vert =\Vert \mathbf  B_d \Vert \le \delta $. 
It further shows that matrix SVD of $\tilde{ \mathbf  S}_{(d)}$ has the form of 
\[
	\tilde{ \mathbf  S}_{(d)} = \Sigma_d \mathbf V_d^\tp (\mathbf  U_{D+1}^\tp \otimes \cdots \mathbf  U_{d+1}^\tp \otimes \mathbf  U_{d-1}^\tp \otimes \cdots \mathbf  U_1^\tp)^\tp,
\]
where $(\mathbf  U_{D+1}^\tp \otimes \cdots \mathbf  U_{d+1}^\tp \otimes \mathbf  U_{d-1}^\tp \otimes \cdots \mathbf  U_1^\tp) \mathbf V_d$ is orthonormal. 
The above discussion concludes that any $\mathbf{A} \in \Gamma_\delta$ can be represented as
\[
	\mathbf  A = { \mathbf  S}  \times_1 \mathbf  Q_1   \times \cdots \times \mathbf  Q_{D+1},
\]
where $\mathbf S$ is defined as in \eqref{eqn:coreSofHOSVD} and satisfies \eqref{eqn:coreSofHOSVDnoU}.

The definition of $\mathbf S$ in \eqref{eqn:coreSofHOSVD} implies that $\rm{rank}(\mathbf S) \le R $. We next consider the $\epsilon$-net of 
\[
	\mathcal{S} =\{ \mathbf S, \Vert \mathbf S \Vert \le 1 , \mathrm{rank}(\mathbf S) \le R, \Vert \mathbf R_d \Vert^2 \le \delta\}.
\]
Consider $\mathbf  S$ and $\bar{\mathbf S} \in \mathcal{S}$. By definition $\mathbf  S$ and $\bar{\mathbf S}$ can be written as 
\[
	\mathbf  S = \mathbf{I}\times_1 \mathbf{R}_1 \times_2 \cdots \times_d \mathbf R_d  \times_{d+1} \cdots \times_{D+1} \mathbf{R}_{D+1}
\]
and
\begin{equation*}
	\bar{\mathbf  S} = \mathbf{I}\times_1 \bar{\mathbf{R}}_1 \times_2 \cdots \times_d \bar{\mathbf  R}_d  \times_{d+1} \cdots \times_{D+1} \bar{\mathbf{R}}_{D+1}
\end{equation*}
for some $\mathbf  R_d$ and $\bar{\mathbf  R}_d \in \mathbb{R}^{R\times R}$, $d=1,\dots, D+1$, respectively. 
Note that
\[
\begin{aligned}
	\Vert \mathbf  S - \bar{\mathbf  S} \Vert  & = \Vert\mathbf{I} \times_1 \mathbf{R}_1 \times_2 \cdots \times_{D} \mathbf{R}_{D} \times_{D+1} \mathbf{R}_{D+1}   - \mathbf{I} \times_1 \bar{\mathbf{R}}_1 \times_2 \cdots \times_{D} \bar{\mathbf{R}}_{D} \times_{D+1} \mathbf{R}_{D+1} \Vert \\
	& = \Vert\mathbf{I} \times_1 \mathbf{R}_1 \times_2 \cdots \times_{D} \mathbf{R}_{D} \times_{D+1} \mathbf{R}_{D+1} -  \mathbf{I} \times_1 \mathbf{R}_1 \times_2 \cdots \times_{D} \mathbf{R}_{D} \times_{D+1} \bar{\mathbf{R}}_{D+1} \\  
	& \qquad + \mathbf{I} \times_1 \mathbf{B}_1 \times_2 \cdots \times_{D} \mathbf{R}_{D} \times_{D+1} \bar{\mathbf{R}}_{D+1}  - \mathbf{I} \times_1 \mathbf{R}_1 \times_2 \cdots \times_{D} \bar{\mathbf{R}}_{D} \times_{D+1} \bar{\mathbf{R}}_{D+1}  \\
	& \qquad +  \mathbf{I} \times_1 \mathbf{R}_1 \times_2 \cdots \times_{D} \bar{\mathbf{R}}_{D} \times_{D+1} \bar{\mathbf{R}}_{D+1} \cdots -  \mathbf{I} \times_1 \bar{\mathbf{R}}_1 \times_2 \cdots \times_{D} \bar{\mathbf{R}}_{D} \times_{D+1} \bar{\mathbf{R}}_{D+1}  \\
	& \qquad  + \mathbf{I} \times_1 \bar{\mathbf{R}}_1 \times_2 \cdots \times_{D} \bar{\mathbf{R}}_{D} \times_{D+1} \bar{\mathbf{R}}_{D+1} \Vert\\
	& \le \sum_{d=1}^{D+1}\Vert  \mathbf{I} \times_1 \mathbf{R}_1 \times_2 \cdots (\mathbf{R}_d - \bar{\mathbf{R}}_d) \cdots \times_{D} \bar{\mathbf{R}}_{D} \times_{D+1} \bar{\mathbf{R}}_{D+1}  \Vert. 
	\end{aligned}
\]
Using the Cauchy-Swarchz inequality, we obtain 
\allowdisplaybreaks
\begin{align*}
	&\Vert  \mathbf{I} \times_1 \mathbf{R}_1 \times_2 \cdots (\mathbf{R}_d - \bar{\mathbf{R}}_d) \cdots \times_{D} \bar{\mathbf{R}}_{D} \times_{D+1} \bar{\mathbf{R}}_{D+1} \Vert^2  \\ 
	= & \sum_{i_1,\ldots,i_{D+1}}^{R}\sum_{t,v=1}^R   R_{1,i_1,t} R_{1,i_1,v} \ldots (R_{d,i_d,t} - \bar{R}_{d,i_d,t})(R_{d,i_d,v} - \bar{R}_{d,i_d,v}) \ldots \bar{R}_{D+1,i_{D+1},t}\bar{R}_{D+1,i_{D+1},v} \\
	= & \sum_{t,v=1}^R  \Big( \sum_{i_1}R_{1,i_1,t} R_{1,i_1,v} \Big) \ldots \sum_{i_d} (R_{d,i_d,t} - \bar{R}_{d,i_d,t})(R_{d,i_d,v} - \bar{R}_{d,i_d,v}) \ldots \sum_{i_{D+1}} \bar{R}_{D+1,i_{D+1},t}\bar{R}_{D+1,i_{D+1},v} \\
	\le &  \sum_{t,v=1}^R  \Big(\sum_{i_1} R_{1,i_1,t}^2 \sum_{i_1} R_{1,i_1,v}^2 \Big)^{0.5} \cdots \Big\{
	\sum_{i_d}(R_{d,i_d,t} - \bar{R}_{d,i_d,t})^2 \sum_{i_d}(R_{d,i_d,v} - \bar{R}_{d,i_d,v})^2 \Big\}^{0.5}   \\ 
	& \qquad \ldots \Big(\sum_{i_{D+1}} R_{1,i_{D+1},t}^2 \sum_{i_{D+1}} R_{1,i_{D+1},v}^2 \Big)^{0.5} \\
	\le &  \delta^D  \sum_{t,v} \Big\{\sum_{i_d}(R_{d,i_d,t} - \bar{R}_{d,i_d,t})^2 + \sum_{i_d}(R_{d,i_d,v} - \bar{R}_{d,i_d,v} )^2 \Big\} \\
	\le & C R \delta^D \Vert \mathbf  R_d - \bar{\mathbf  R}_d\Vert^2,   
\end{align*}
where $R_{d,l_1,l_2}$ and $\bar{R}_{d,l_1,l_2}$ are the $(l_1, l_2)$-th element of $\mathbf  R_d$ and $\bar{\mathbf  R}_d$, respectively. 
Since the $\epsilon/(CR\delta^D)$-net of a matrix space $\mathcal{R} = \{\mathbf  R: \mathbf  R \in \mathbb{R}^{R \times R}, \Vert \mathbf R \Vert \le 1 \}$ is $({CR\delta^D}/{\epsilon} )^{R^2}$, using the similar arguments as in the proof of Lemma 2 in \citet{rauhut2017low}, the $\epsilon$ net of $\mathcal{S}$ is bounded by
\begin{equation}\label{lem:CovergingOfS}
	({C_3}/{\epsilon}  )^{C_4R^2\log R\delta}.
\end{equation}
Following the arguments in \citet{rauhut2017low} with \eqref{eqn:coreSofHOSVDnoU} 
and using \eqref{lem:CovergingOfS}, we get the result of \eqref{eqn:newcovering}. \hfill$\blacksquare$
\end{proof}

\begin{lemma}\label{lemma:boundedpara_from_pen}
	Let $(\hat{\mathbf{A}}_{\PLS{}}, \hat{\nu}_{\PLS{}})$ denote a solution to \eqref{eqn:constraintpenalizedintheory} of the main paper and 
	\[
	\hat{\mathbf{A}}_{\PLS{}}= \sum^R_{r=1} 
	\hat{\bbeta}_{r,1} \circ \hat{\bbeta}_{r,2} \circ\cdots \circ \hat{\bbeta}_{r,D} \circ \hat{\balpha}_r,
	\]
	where $\hat{\bbeta}_{r,d}$ and $\hat{\balpha}_r $ are the corresponding parameters to minimize the penalty term in \eqref{eqn:constraintpenalizedintheory} of the main paper. 
	If Assumptions \ref{assm:G_penalty} and \ref{assm:errSubGaussian} of the main paper hold, then 
	\begin{equation}\label{lem:upperbound_beta2}
		\sum_{d=1}^D \sum_{r=1}^R \Vert \hat{\bbeta}_{r,d} \Vert_2^2  \le  \delta_B ,
	\end{equation}
	and 
	\begin{equation}\label{lem:upperbound_nu2}
		\hat{\nu}_{\PLS{}} \le 
		 \frac{C_1}{s} \sum_{r=1}^{R_0} \Vert \vec(\mathbf{B}_{0r}) \Vert_1 + \vert \nu_0 \vert  + C_2 + 
		C_3 \sqrt{K}R
		\delta_B^{D/2}/\sqrt{s}, 
	\end{equation}
	with probability at least $1 - C_4 \exp(- C_5n)$, where 
	\[
		\delta_B
		=   \frac{1}{\varpi(n)} \bigg [ \bigg \{ \frac{\sum_{r=1}^{R_0}\Vert \vec (\mathbf{B}_{0r})\Vert_1}{s} \bigg \}^2\frac{C_6}{K^{2\tau}} + C_7n +
		G_0 \bigg],
	\]
	and $G_0$ is defined in \eqref{app:G_0}. 
\end{lemma}
\begin{proof} 
By Lemma 2.7.6 of \cite{vershynin2018high} and Assumption \ref{assm:errSubGaussian} of the main paper that $\epsilon_i$'s are sub-Gaussian, it shows that $\epsilon_i^2 - \mathbb{E}\epsilon_i^2$ is an sub-exponential random variable for each $i$. 
Corollary 2.8.3 of \citet{vershynin2018high} shows that the event 
\[
	\bigg	\vert \frac{1}{n}\sum_{i=1}^n \big \{ \epsilon_i^2 - \mathbb{E}(\epsilon_i^2) \big \} \bigg \vert \le t
\]
has probability at least
\[
	1 - 2 \exp \bigg \{- C_1\min \bigg( \frac{t^2}{C_2^2}, \frac{t}{C_2}\bigg) n \bigg\}.
\]
Taking $t=C_1$ yields
\[
\sum_{i=1}^n \big \{ \epsilon_i^2 - \mathbb{E}(\epsilon_i^2) \big \} 	 \le C_1n,
\]
with probability at least $1 - 2 \exp(- Cn)$.
By Proposition 2.7.1 of \cite{vershynin2018high}, we obtain $\mathbb E (\epsilon_i^2 ) \le C$, which yields
\begin{equation}\label{eqn:lem:error_upperbound}
	\sum_{i=1}^n \epsilon_i^2 	 \le C_2 n,
\end{equation}
with probability at least $1 - 2 \exp(- C n)$.
Using \eqref{eqn:LS_inequality}, \eqref{eqn:trueModelL2}, and \eqref{eqn:lem:error_upperbound}, we have
\begin{equation}\label{eqn:upper_G_hat}
\begin{aligned}
	\hat{G}& \le \sum_{i=1}^n \bigg ( y_i-\nu_0-\frac{1}{s}\langle \mathbf{A}_0, \Phi(\mathbf{X}_i)\rangle \bigg )^2 + G_0 \\
	& \le   \bigg \{ \frac{\sum_{r=1}^{R_0}\Vert \text{vec} (\mathbf{B}_{0r})\Vert_1}{s} \bigg \}^2\frac{C_1}{K^{2\tau}} + 2 \sum_{i=1}^n \epsilon_i^2 + G_0\\
	& \le   \bigg \{ \frac{\sum_{r=1}^{R_0}\Vert \text{vec} (\mathbf{B}_{0r})\Vert_1}{s} \bigg \}^2\frac{C_1}{K^{2\tau}} + C_2n + G_0,
\end{aligned}
\end{equation}
with probability at least $1 - 2 \exp(- Cn)$, where $\hat{G}$ is defined in \eqref{eqn:def:G_hat}. Let
\[
	\delta_p =   \bigg[ \bigg \{ \frac{\sum_{r=1}^{R_0}\Vert \text{vec} (\mathbf{B}_{0r})\Vert_1}{s} \bigg \}^2\frac{C_1}{K^{2\tau}} + C_2n + G_0 \bigg]^{S_G}.
\]
By Assumption \ref{assm:G_penalty}, \eqref{eqn:upper_G_hat} and the definition of $\delta_p$, 
we  have $\hat{G}^{S_G} \le \delta_p$,
and 
\begin{equation}\label{eqn:deltap/gl}
	\sum_{d=1}^D \sum_{r=1}^R \Vert \hat{\bbeta}_{r,d} \Vert_2^2  \le \frac{\delta_p}{\varpi( n)} ,
\end{equation}
with probability at least $1 - 2 \exp(- Cn)$,
which completes the proof of \eqref{lem:upperbound_beta2}.
By \eqref{eqn:deltap/gl}, we obtain
\[
	\Vert  \hat{\bbeta}_{r,d} \Vert_2^2  \le \frac{\delta_p}{\varpi( n)} .
\]
Thus, 
\[
\begin{aligned}
\Vert \hat{\mathbf  A} \Vert & = \bigg \Vert \sum_{r=1}^R \hat{\bbeta}_{r,1} \circ \ldots \circ \hat{\bbeta}_{r,D} \circ \hat{\balpha}_r  \bigg \Vert  \\ & \le   \sum_{r=1}^R \Vert \hat{\bbeta}_{r,1} \circ \ldots \circ \hat{\bbeta}_{r,D} \circ \hat{\balpha}_r \Vert  \\
& = \sum_{r=1}^R \Vert \hat{\bbeta}_{r,1} \Vert_2 \ldots \Vert  \hat{\bbeta}_{r,D} \Vert_2  \Vert  \hat{\balpha}_r \Vert_2  \\
& \le R \bigg \{ \frac{\delta_p}{\varpi(n)}  \bigg \} ^{D/2} .
\end{aligned}
\]
Note that $\Vert \tilde{\Phi}(\mathbf{X}_i)\Vert^2  \le K s C$ and

\[\begin{aligned}
	\vert y_i - \epsilon_i \vert  &  =  \bigg| \nu_0 + \frac{1}{s}\sum_{r=1}^{R_0} \langle \mathbf{B}_{0r}, F_r(\brm{X}_i) \rangle \bigg| \\
	& \le \frac{1}{s} \sum_{r=1}^{R_0} \vert  \langle \mathbf{B}_{0r}, F_r(\brm{X}_i) \rangle  \vert + \vert \nu_0 \vert  \\
	& \le \frac{C}{s} \sum_{r=1}^{R_0} \Vert \mathbf{B}_{0r} \Vert_1 + \vert \nu_0 \vert .
\end{aligned}\]
We then have
\[\begin{aligned}
	\hat{\nu}_{\PLS{}} & = \frac{1}{n} \sum_{i=1}^n \bigg \{ y_i - \frac{1}{s} \big \langle \hat{
	\mathbf  A}, \tilde{\Phi}(\mathbf{X}_i)  \big \rangle \bigg \} \\
	& \le \bigg \vert  \frac{1}{n} \sum_{i=1}^n  (y_i - \epsilon_i ) + \frac{1}{n}\sum_{i=1}^n \epsilon_i \bigg \vert  + \frac{1}{sn} \sum_{i=1}^n \Vert \hat{\mathbf  A} \Vert \Vert \tilde{\Phi}(\mathbf{X}_i)\Vert \\
	& \le  \bigg \vert  \frac{1}{n} \sum_{i=1}^n  (y_i - \epsilon_i ) \bigg \vert +  \bigg \vert \frac{1}{n}\sum_{i=1}^n \epsilon_i  \bigg\vert  + \frac{1}{sn} \sum_{i=1}^n \Vert \hat{\mathbf  A} \Vert \Vert \tilde{\Phi}(\mathbf{X}_i)\Vert  \\
	& \le  \frac{C_1}{s} \sum_{r=1}^{R_0} \Vert \mathbf{B}_{0r} \Vert_1 + \vert \nu_0 \vert  + C_2 + 
	C_3 \sqrt{K}R \bigg \{ \frac{\delta_p}{\varpi(n)}  \bigg \} ^{D/2} \Big/\sqrt{s} ,
\end{aligned}\]
with probability at least $1 - C_4 \exp(- C_5n)$, which completes the proof of \eqref{lem:upperbound_nu2}. \hfill$\blacksquare$
\end{proof}

\begin{lemma}\label{lem:elastic_assump}
	If $G(\btheta)$ is defined as in \eqref{def:G_penalty} with \eqref{eqn:def:elastic_net} of the main paper, $\lambda_1 > 0$, and $0 \le \lambda_2 \le 1$, then $G(\btheta)$ satisfies Assumption \ref{assm:G_penalty} of the main paper with $S_G=1$ or $S_G=2$ and 
	\[
	\varpi(n ) = \frac{1}{2} (1 - \lambda_2) \lambda_1 \quad \text{or} \quad \lambda_1^2 \lambda_2^2,
	\]
	where $\btheta = (\mathbf B_1, \ldots, \mathbf B_D)$ with $\mathbf B_d = (\bbeta_{1,d},\ldots, \bbeta_{R,d})$, $d=1,\ldots,D$, as defined in Section \ref{ssec:estimation} of the main paper.
\end{lemma}

\begin{proof}
By definition 
\[	
	G(\btheta) = 
	\lambda_1\sum_{d=1}^D \sum_{r=1}^R \bigg \{ \frac{1}{2} (1 - \lambda_2)\Vert {\bbeta}_{r,d} \Vert_2^2 + \lambda_2 \Vert {\bbeta}_{r,d} \Vert_1 \bigg \}.
\]
If $0 \le \lambda_2 <1$, then 
\[
	G(\btheta) \ge 	\frac{1}{2} (1 - \lambda_2) \lambda_1\sum_{d=1}^D \sum_{r=1}^R  \Vert {\bbeta}_{r,d} \Vert_2^2.
\]
If $\lambda_2 =1 $, then 
\[
	G(\btheta) \ge \lambda_1 \lambda_2 \sum_{d=1}^D \sum_{r=1}^R \Vert \bbeta_{r,d} \Vert_1 \ge \lambda_1 \lambda_2 \bigg(  \sum_{d=1}^D \sum_{r=1}^R  \Vert {\bbeta}_{r,d} \Vert_2^2 \bigg)^{1/2}.
\]
Therefore, we finish the proof. \hfill$\blacksquare$
\end{proof}

Before presenting Lemma \ref{thm:ratesnew}, we introduce some extra notations.
Recall that $\nu_0 $ and 
\[
	\mathbf  A_0 = \sum_{r=1}^{R_0} \bbeta_{0r,1} \circ \cdots \circ \bbeta_{0r,D} \circ \balpha_{0r}
\]
correspond to the best approximation model \eqref{eqn:broadcastSplineExp} of the main paper, where $\balpha_{0r}^\tp(u_1,\ldots,u_K) ^\tp= 0$ with $u_k=\int_0^1b_k(x)\mathrm{d}x$. 
As presented in Lemma \ref{lem:optimization} in Section \ref{Appendixequivalentbasis}, the equivalence of bases shows that there exists an invertible matrix $\mathbf Q$ \citep{ruppert2003semiparametric}  such that 
\begin{equation}\label{proof:thm:def:Q}
	\mathbf Q \mathbf  b(t) = \tilde{\mathbf  b}(t) .
\end{equation}
Let 
\[
	\check{\balpha}_{0r}^\tp = (\check{\alpha}_{0r,1}, \ldots, \check{\alpha}_{0r,K-1}, \check{\alpha}_{0r,K}) = \balpha_{0r}^\tp \mathbf  Q^{-1}
%\]
~~ \mbox{and} ~~
%\[
\delta_{0r} = \sum_{k=1}^K \check{\alpha}_{0r,k}\tilde{u}_k, 
\]
where $\tilde{u}_{k} = \int_0^1 \tilde{b}_k(x) \rm d x$, $r=1,\dots, R$ and $k=1,\dots, K$.
For $r=1,\dots, R$, denote
\[
	\tilde{\balpha}_{0r} = (\tilde{\alpha}_{0r,1}, \ldots, \tilde{\alpha}_{0r,K-1})^\tp \in \mathbb{R}^{K-1},
\]
where 
\[
	\tilde{\alpha}_{0r,k} = \check{\alpha}_{0r,k+1} \quad \text{for} \quad k=1,\ldots, K-1.
\]
We further let
\[
	\tilde{\nu}_{0r} = \nu_0 + \frac{1}{s} \sum_{r=1}^{R_0} \bigg \langle \bbeta_{0r,1} \circ \cdots \circ \bbeta_{0r,D}, \frac{\delta_{0r}}{\tilde{u}_1} \mathbf  J \bigg  \rangle, \quad r=1, \dots, R, 
\]
and 
\[
	\tilde{\mathbf  A}_0 = \sum_{r=1}^{R_0} \Vert \tilde{\balpha}_{0r} \Vert^{1/D} \bbeta_{0r,1} \circ \cdots \circ \Vert \tilde{\balpha}_{0r} \Vert^{1/D} \bbeta_{0r,D} \circ \frac{\tilde{\balpha}_{0r}}{\Vert \tilde{\balpha}_{0r} \Vert },
\]
where $\mathbf{J} \in \mathbb{R}^{p_1 \times \cdots \times p_D}$ is the tensor with all the entries to be 1.
It can be shown that the regression function $m(\cdot)$ remains invariant using the above basis transformation, i.e., 
\[
	\nu_0 + \frac{1}{s}\langle \mathbf{A}_0, \Phi(\mathbf{X}_i)\rangle = \tilde{\nu}_0 + \frac{1}{s}\langle \tilde{\mathbf{A}}_0, \tilde{\Phi}(\mathbf{X}_i)\rangle.
\]
The next lemma considers the optimization problem
\begin{equation} \label{eqn:UnconstraintOptimization12}
	\begin{aligned}
	&\argmin_{\tilde{\nu}, \tilde{\mathbf{A}}}  \sum_{i=1}^n \bigg(y_i - \tilde{\nu} - \frac{1}{s} \big\langle \tilde{\mathbf{A}}, \tilde{\Phi}(\mathbf{X}_i)  \big \rangle \bigg)^2  
	\\
	&\text{\rm s.t.} 
	\quad {\tilde{\mathbf{A}}}= \sum^R_{r=1}  \bbeta_{r,1} \circ \bbeta_{r,2} \circ\cdots \circ \bbeta_{r,D} \circ \tilde{\balpha}_r,\\
	& \quad \quad  \sum_{r=1}^R  \Vert \bbeta_{r,d} \Vert_2^2  \le \delta_B \quad \text{for} \quad d=1,\ldots, D,\\
	&\quad \quad  \vert \tilde{v}  \vert \le \delta_v, \\
	& \quad \quad  \Vert \tilde{\balpha}_r \Vert_2^2 = 1 \quad \text{for} \quad r=1,\ldots, R,
	\end{aligned}
\end{equation}
where $\delta_{\nu} $ and $\delta_B$ are large enough to respectively bound the magnitudes of $\tilde{\nu}_0$ and CP parameters of $\tilde{\mathbf  A}_0$.

\begin{lemma}\label{thm:ratesnew}
	Suppose $\delta_{\nu} $ and $\delta_B$ in \eqref{eqn:UnconstraintOptimization12} are large enough such that $\vert \tilde{\nu}_0 \vert  \le \delta_v$ and $\sum_{r=1}^R \Vert \tilde{\balpha}_{0r} \Vert^{2/D} \Vert \bbeta_{0r,d} \Vert^2 \le \delta_B$. 
	Let $\hat{m}_{\rm con}(\mathbf  X)$ be the estimated function constructed from the aforementioned optimization \eqref{eqn:UnconstraintOptimization12}. 
	If $n > C_1\tilde{h}_n^2 h_n^{-2}  \big( R^2\log \delta_{\rm con}+ R\sum_{d=1}^{D+1} p_d \big)$, we have
	\[
		\Vert \hat{m}_{\rm con}- m_0 \Vert^2 \le 
		C_2 \frac{R^2\log \delta_{\rm con}+ R\sum_{d=1}^{D} p_d + RK}{n} +
		C_3 \bigg \{ \frac{\sum_{r=1}^{R_0}\Vert \text{vec} (\mathbf{B}_{0r})\Vert_1}{s} \bigg \}^2\frac{1}{K^{2\tau}} ,
	\]
	with probability at least
	\[
		1 - C_4 \exp \bigg \{-C_5\bigg(R^2\log \delta_{ \rm con} + R\sum_{d=1}^{D} p_d + RK\bigg) \bigg \},
	\]
	where
	\[
		\delta_{\rm con}=n\max \bigg \{ C_6(\delta_v s +\sqrt{s}R \delta_B^{D/2})^{2/D}, ~ C_7R K^2, ~ \sum_{r=1}^{R_0} \Vert\bbeta_{0r,d}\Vert_2^2 + (s \nu_0)^{2/D}, ~ \sum_{r=1}^{R_0} \Vert \balpha_{0r} \Vert_2^2 + 1 \bigg \}.
	\]
\end{lemma}
\begin{proof}
	The proof includes three parts. First, we will show an inequality to upper bound the in-sample error (see \eqref{proof:thm:main3}).
	To achieve this, let $(\hat{ \nu }_{\rm con}, \hat{\mathbf  A}_{\rm con})$ denote a solution of \eqref{eqn:UnconstraintOptimization12}, where 
	\[
	\hat{\mathbf  A}_{\rm con} = \sum^R_{r=1} 
	\hat{\bbeta}_{r,1} \circ \hat{\bbeta}_{r,2} \circ\cdots \circ \hat{\bbeta}_{r,D} \circ \hat{\balpha}_r \in \mathbb{R}^{p_1 \times \ldots \times p_D \times K-1},
	\]
	which implies
	\begin{equation}\label{proof:thm:main1}
	\sum_{i=1}^n \bigg ( y_i-\hat{\nu}_{\rm con}-\frac{1}{s}\langle \hat{\mathbf{A}}_{\rm con}, \tilde{\Phi}(\mathbf{X}_i)\rangle \bigg )^2 \le \sum_{i=1}^n \bigg ( y_i-\nu_0-\frac{1}{s}\langle \mathbf{A}_0, \Phi(\mathbf{X}_i)\rangle \bigg )^2.
	\end{equation}

	To present the proof clearly, we denote $\delta_r = \sum_{k=1}^{K-1}\hat {\alpha}_{r,k}\tilde{u}_{k+1}$ with $\tilde{u}_{k} = \int_0^1 \tilde{b}(x) \rm d x$. 
	Let 
	\[
		\hat{\mathbf  B}_{\text{con},r}  =  \hat{\bbeta}_{r,1} \circ \hat{\bbeta}_{r,2} \circ\cdots \circ \hat{\bbeta}_{r,D}
	\]
	and
	\begin{equation}\label{proof:thm:checkalphar}
		\check{\balpha}_{r}^\tp  =  (-\delta_r/\tilde{u}_1, \hat{\alpha}_{r,1},\ldots, \hat{\alpha}_{r,K-1} ) \mathbf  Q
	\end{equation}
	with $\mathbf Q$ defined as in \eqref{proof:thm:def:Q}. It is straightforward to see that for $r=1,\dots, R$, $\check{\balpha}_{r}^\tp (u_1, \ldots,u_k)^\tp = 0$, where $u_k=\int_0^1b_k(x)\mathrm{d}x$, $k=1,\dots, K$. 
	Denote
	\[
		\check{\nu}_{\rm con} = \hat{\nu}_{\rm con} + \frac{1}{s} \sum_{r=1}^R \big \langle  \hat{\mathbf  B}_{\text{con},r}, \frac{\delta_r}{\tilde{u}_1} \mathbf  J \big \rangle
	~~ \mbox{and} ~~
		\check{\mathbf  A}_{\rm con} = \sum_{r=1}^R \hat{\mathbf  B}_{\text{con},r} \circ \check{\balpha}_r.
	\] 
	We then have 
	\begin{equation}\label{proof:thm:main2}
		\sum_{i=1}^n \bigg ( y_i-\check{\nu}_{\rm con}-\frac{1}{s}\langle \check{\mathbf{A}}_{\rm con}, {\Phi}(\mathbf{X}_i)\rangle \bigg )^2 \le \sum_{i=1}^n \bigg ( y_i-\nu_0-\frac{1}{s}\langle \mathbf{A}_0, \Phi(\mathbf{X}_i)\rangle \bigg )^2.
	\end{equation}

Similar to the proof of Theorem \ref{thm:convergencerateswithpenalty}, we also write the ``design'' matrix as $\mathbf  Z = (\mathbf  z_1, \ldots, \mathbf  z_n)^\tp \in \mathbb{R}^{n \times sK}$, $\mathbf  z_i = \vec\{\Phi(\mathbf  X)_i \}$, $i =1,\ldots,n$, and use the absorbing mapping to define $\check{\mathbf  A}_{\rm con}^\flat: = \Omega(\check{\mathbf  A}_{\rm con}, \check{\nu}_{\rm \rm con})$ and $\mathbf  A_0^\flat: = \Omega(\mathbf  A_0, \nu_0)$, where $\Omega$ is defined in \eqref{eqn:operaterI}. For notational simplicity, let 
\begin{equation*}\label{def:Acon:sharp}
	\mathbf  A^{\sharp}_{\rm con} = \check{\mathbf  A}_{\rm con}^\flat - \mathbf  A_0^\flat
\end{equation*}
and $\mathbf  a^\sharp_{\rm con} = \vec(\mathbf  A^{\sharp}_{\rm con})$.
Multiplying the non-negative function $\mathbf{1}_{\{\Vert \mathbf{a}_{\rm con}^{\sharp} \Vert \ge \gamma \}}$ on both sides of \eqref{proof:thm:main2}, it can be shown for all $\gamma \in \mathbb{R}$,
\begin{equation}\label{proof:thm:main3}
\begin{aligned}
	\frac{1}{s^2}\Vert \mathbf{Z}\mathbf{a}^{\sharp}_{\rm con} \mathbf{1}_{\Vert \mathbf{a}_{\rm con}^{\sharp} \Vert \ge \gamma}   \Vert_2^2 & \le 2 \left \langle \frac{1}{s}\mathbf{Z}\mathbf{a}^{\sharp}_{\rm con}\mathbf{1}_{\{\Vert \mathbf{a}_{\rm con}^{\sharp} \Vert \ge \gamma \}}  , \bm{\epsilon} \right \rangle \\
	& \qquad \qquad + 2 \left \langle \frac{1}{s}\mathbf{Z}\mathbf{a}^{\sharp}_{\rm con}\mathbf{1}_{\{\Vert \mathbf{a}_{\rm con}^{\sharp} \Vert \ge \gamma \}}  , \mathbf{y}-\bm{\epsilon}-\frac{1}{s}\mathbf{Z}\mathbf{a}_0^{\flat} \right \rangle.
\end{aligned}
\end{equation} 

Second, we will show there exists  a CP decompositions of $\mathbf  A_{\rm con}^\sharp$ with bounded CP components. Using de Boor-Fix functional \citep{de1973spline}, we have 
\begin{equation}\label{proof:thm:Qk1k2}
	\vert Q_{k_1,k_2}  \vert \le C,
\end{equation} where $Q_{k_1,k_2}$ is the $(k_1,k_2)$-th entry of $\mathbf  Q$. 
\eqref{proof:thm:checkalphar} and \eqref{proof:thm:Qk1k2} yield
\begin{equation}\label{proof:thm:checkupper1}
	\Vert \check{\balpha}_{r} \Vert_2^2 \le C_1K^2.
\end{equation}
After noting that
\[
	\Vert \hat{\mathbf  B}_{\rm con, r} \Vert^2 = \Vert \hat{\bbeta}_{r,1} \Vert^2 \cdots  \Vert \hat{\bbeta}_{r,D}\Vert^2 \le \delta_B^D,
\]
we then have 
\begin{equation}\label{proof:thm:checkupper2}
	\check{\nu}_{\rm con} \le \delta_v + C_2 \frac{R\delta_B^{D/2}}{\sqrt{s}}.
\end{equation}
By \eqref{proof:thm:checkupper1}, \eqref{proof:thm:checkupper2}, and the definition of $\check{\mathbf  A}_{\rm con}^\flat $, there exists a CP decomposition of $\check{\mathbf  A}_{\rm con}^\flat $,
\[
	\check{\mathbf  A}_{\rm con}^\flat =\sum^{R+1}_{r=1} 
	\check{\bbeta}_{r,1}^\flat \circ \check{\bbeta}_{r,2}^\flat \circ\cdots \circ \check{\bbeta}_{r,D}^\flat \circ \check{\balpha}_r^\flat,
\]
such that
\[
	\sum_{r=1}^{R+1} \Vert \check{\bbeta}_{r,d}^\flat\Vert^2  \le \delta_B + (s\check{\nu}_{\rm con})^{2/D} \le C_3(\delta_v s +\sqrt{s}R \delta_B^{D/2})^{2/D}
	\quad \mbox{and} \quad
	\sum_{r=1}^{R+1} \Vert \check{\balpha}_r^\flat \Vert_2^2 \le C_4R K^2.
\]
Analogously, there exists a CP decomposition of $\mathbf  A_0^\flat$, 
\[
	\mathbf  A_0^\flat = \sum_{r=1}^{R_0+1} \bbeta_{0r,1}^\flat \circ \cdots \circ \bbeta_{0r,D}^\flat  \circ \balpha_{0r}^\flat,
\]
with
\[
	\sum_{r=1}^{R_0+1} \Vert \bbeta_{0r,d}^\flat \Vert_2^2 \le 
	\sum_{r=1}^{R_0} \Vert\bbeta_{0r,d}\Vert_2^2 + (s \nu_0)^{2/D}
	\quad \mbox{and} \quad
	\sum_{r=1}^{R_0+1} \Vert \balpha_{0r}^\flat \Vert_2^2 \le \sum_{r=1}^{R_0} \Vert \balpha_{0r} \Vert_2^2 + 1.
\]
Denote 
\begin{equation*}
	\delta_{sK0}=\max \bigg\{ C_3(\delta_v s +\sqrt{s}R \delta_B^{D/2})^{2/D}, ~ C_4R K^2, ~ \sum_{r=1}^{R_0} \Vert\bbeta_{0r,d}\Vert_2^2 + (s \nu_0)^{2/D}, ~ \sum_{r=1}^{R_0} \Vert \balpha_{0r} \Vert_2^2 + 1 \bigg\}
\end{equation*}
and 
\[
	\mathcal{P}_{\delta_{sK0}} = \{ \mathbf  A: \mathbf  A = \mathbf  I \times_1 \mathbf  B_1 \cdots \times_{D+1} \mathbf  B_{D+1},~ \Vert \mathbf  B_d \Vert^2 \le \delta_{sK0},~ \mathbf  B_d \in \mathbb{R}^{p_d \times R_1},~ d=1,\ldots,D+1 \},
\]
where  $R_1=R_0+R+2$.
We then have 
\[
	\mathbf  A^{\sharp}_{\rm con} \in \mathcal{P}_\delta :=\{\mathbf  A: 
	\mathbf  A \in \mathcal{P}_1 \cap  \mathcal{P}_{\delta_{sK0}}  \},
\]
with $\mathcal{P}_1$ defined as in \eqref{eqn:P1}.
In other words, there exists a CP decomposition of $\mathbf  A^{\sharp}_{\rm con}$ with bounded components.

Third, to obtain the final result, let 
\[
	\mathcal{P}_{\delta \gamma} =\bigg \{\frac{\mathbf  A}{\Vert \mathbf  A \Vert}: \mathbf  A \in  \mathcal{P}_\delta \cap \{ \Vert \mathbf  A \Vert \ge \gamma \}  \bigg\},
\]
for a small number $\gamma$.
For all $\mathbf  A \in \mathcal{P}_{\delta \gamma} $, there exists one CP decomposition of $\mathbf  A$,
\[
	\mathbf  A = \mathbf  I \times_1 \mathbf  B_1 \times \cdots \times_{D+1} \mathbf  B_{D+1},
\]
such that
\[
	\Vert \mathbf  B_d \Vert \le \frac{\delta_{sK0}}{\gamma}, ~d =1, \ldots,D+1.
\]
Using Dudley's integral entropy bound \citep{vershynin2018high} and Lemma \ref{lem:convernumbernew}, we obtain
\begin{equation}\label{eqn:gauassian_width_gamma}
	w^2(\mathcal{P}_{\gamma}) \le C_8R^2\log \delta_{sK0}/\gamma + C_9R\sum_{d=1}^{D+1} p_d.
\end{equation}
Our assumption on $n$ and \eqref{eqn:gauassian_width_gamma} imply that
\begin{equation}\label{eqn:newthm:eigenbound}
	C_5nh_n\Vert \mathbf{a}_{\rm con}^{\sharp} \mathbf{1}_{\{\Vert \mathbf{a}_{\rm con}^{\sharp} \Vert \ge \gamma \}} \Vert_2^2 \le \Vert \mathbf{Z} \mathbf{a}_{\rm con}^{\sharp} \mathbf{1}_{\{ \Vert \mathbf{a}_{\rm con}^{\sharp} \Vert \ge \gamma \} } \Vert_2^2 \le C_6nh_n\Vert \mathbf{a}_{\rm con}^{\sharp} \mathbf{1}_{\{\Vert \mathbf{a}_{\rm con}^{\sharp} \Vert \ge \gamma \}} \Vert_2^2,
\end{equation}
with probability at least 
\[
	1 - 2\exp\bigg\{-C \bigg(R^2\log \delta_{sK0}/\gamma + R\sum_{d=1}^{D+1} p_d \bigg) \bigg\},
\]
due to the similar arguments used in the proof of Lemma \ref{lem:lemma3}.
Using Dudley's integral entropy bound again and Lemma \ref{lem:convernumbernew}, we further have 
\begin{equation*}
	\langle \mathbf  Z \mathbf  a_{\rm con}^\sharp \mathbf{1}_{\{\Vert \mathbf{a}_{\rm con}^{\sharp} \Vert \ge \gamma\}}  , \mathbf  \epsilon \rangle \le C_2 \Vert \mathbf  a_{\rm con}^\sharp \Vert_2(nh_n)^{1/2} \left(R^2\log R\delta_{sK0}/\gamma + R\sum_{d=1}^{D+1} p_d \right)^{1/2}% \int_0^2  \sqrt{\log N (\epsilon,\mathcal{P}_{C \gamma} , l_2) }d \epsilon,
\end{equation*}
with probability at least
\[
	1 - C_3 \exp\bigg\{-C_4 \bigg(R^2\log \delta_{sK0}/\gamma + R\sum_{d=1}^{D+1} p_d \bigg) \bigg\},
\]
as shown in the proof of Lemma \ref{lem:lemma4}.
Similar to \eqref{eqn:nonasymptoticbound2}, it can be shown
\begin{equation*}
	\begin{aligned}
	\left \langle \frac{1}{s}\mathbf{Z}\mathbf{a}^{\sharp}_{\rm con}  \mathbf{1}_{\{\Vert \mathbf{a}_{\rm con}^{\sharp} \Vert \ge \gamma \}}, \mathbf{y}-\mathbf {\epsilon}-\frac{1}{s}\mathbf{Z}\mathbf{a}_0^{\flat} \right \rangle &\le \bigg \Vert \mathbf{y}-\mathbf {\epsilon}-\frac{1}{s}\mathbf{Z}\mathbf{a}_0^{\flat} \bigg\Vert_2 \bigg \Vert \frac{1}{s}\mathbf{Z}\mathbf{a}^{\sharp}_{\rm con} \mathbf{1}_{\{\Vert \mathbf{a}_{\rm con}^{\sharp} \Vert \ge \gamma \}} \bigg\Vert_2  \\
	&\le \frac{C_5}{s}\Vert \mathbf{a}^{\sharp}_{\rm con} \mathbf{1}_{\{\Vert \mathbf{a}_{\rm con}^{\sharp} \Vert \ge \gamma \}} \Vert_2   \bigg \{ \frac{\sum_{r=1}^{R_0}\Vert \text{vec} (\mathbf{B}_{0r})\Vert_1}{s} \bigg \}\frac{n\sqrt{h_n}}{K^{\tau}}\\
	&\le \frac{C_5}{s}\Vert \mathbf{a}^{\sharp}_{\rm con} \Vert_2   \bigg \{ \frac{\sum_{r=1}^{R_0}\Vert \text{vec} (\mathbf{B}_{0r})\Vert_1}{s} \bigg \}\frac{n\sqrt{h_n}}{K^{\tau}}\\
\end{aligned}
\end{equation*}
with probability at least 
\[
	1 - C_6 \exp\left\{-C_7 \left(R^2\log \delta_{sK0}/\gamma + R\sum_{d=1}^{D+1} p_d \right) \right\}.
\]
Thus, on the interaction of the above two events, we obtain
\begin{equation}\label{eqn:newthm:upper1}
	\begin{aligned}
	\frac{1}{s^2}\Vert \mathbf{Z}\mathbf{a}^{\sharp}_{\rm con} \mathbf{1}_{\{\Vert \mathbf{a}^{\sharp}_{\rm con}\Vert \ge \gamma \}} \Vert_2^2 & \le 2 C_2\frac{1}{s} \Vert \mathbf  a_{\rm con}^\sharp \Vert_2(nh_n)^{1/2} \left(R^2\log \delta_{sK0}/\gamma + R\sum_{d=1}^{D+1} p_d \right)^{1/2}  \\
	&\qquad \qquad + \frac{2C_5}{s}\Vert \mathbf{a}^{\sharp}_{\rm con} \Vert_2   \bigg \{ \frac{\sum_{r=1}^{R_0}\Vert \text{vec} (\mathbf{B}_{0r})\Vert_1}{s} \bigg \}\frac{n\sqrt{h_n}}{K^{\tau}} ,
	\end{aligned}
\end{equation}
with probability at least
\[
	1 - C_8 \exp \bigg \{-C_9 \bigg(R^2\log \delta_{sK0}/\gamma + R\sum_{d=1}^{D+1} p_d  \bigg ) \bigg\}.
\]
Let
\[
	\delta_5 = C_1  \bigg \{ \frac{\sum_{r=1}^{R_0}\Vert \text{vec} (\mathbf{B}_{0r})\Vert_1}{s} \bigg \}\frac{1}{K^{\tau-1/2}}  +
	C_2K^{1/2} \frac{1}{\sqrt{n}} \bigg (R^2\log \delta_{sK0}/\gamma + R\sum_{d=1}^{D+1} p_d \bigg ).
\]
Combing \eqref{eqn:newthm:eigenbound} and \eqref{eqn:newthm:upper1} leads us to
\[
	\frac{2}{s^2}\Vert \mathbf{a}_{\rm con}^{\sharp}\mathbf{1}_{\{\Vert \mathbf{a}_{\rm con}^{\sharp} \Vert \ge  \gamma \}}  \Vert_2^2  \le \frac{1}{s}\delta_5 \Vert \mathbf{a}_{\rm con}^{\sharp}  \Vert_2, 
\]
with probability at least 
\begin{equation}\label{eqn:con:prob_use_again}
	1 - C_3 \exp \bigg\{-C_4 \bigg(R^2\log \delta_{sK0}/\gamma + R\sum_{d=1}^{D+1} p_d \bigg ) \bigg \}.
\end{equation}
Note that 
\[
	\Vert\mathbf{a}_{\rm con}^{\sharp} \Vert_2^2 = \Vert  \mathbf{a}_{\rm con}^{\sharp} \mathbf{1}_{\{\Vert \mathbf{a}_{\rm con}^{\sharp} \Vert < \gamma \}} +   \mathbf{a}_{\rm con}^{\sharp} \mathbf{1}_{\{\Vert \mathbf{a}_{\rm con}^{\sharp} \Vert \ge \gamma \}}  \Vert^2 \le  2\gamma^2 + 2 \Vert \mathbf{a}_{\rm con}^{\sharp} \mathbf{1}_{\{\Vert \mathbf{a}_{\rm con}^{\sharp} \Vert \ge \gamma \}}  \Vert^2.
\]
It then implies
\[
	\frac{1}{s^2}\Vert \mathbf{a}_{\rm con}^{\sharp} \Vert_2^2  \le \frac{1}{s}\delta_5 \Vert \mathbf{a}_{\rm con}^{\sharp} \Vert_2 +  \frac{2\gamma^2}{s^2}  
\]
and
\[
	\frac{1}{s}\Vert \mathbf{a}_{\rm con}^{\sharp} \Vert_2 \le  \frac{(\delta_5^2  + C_5\gamma^2/s^2)^{1/2}+\delta_5}{2},
\]
with probability at least as large as \eqref{eqn:con:prob_use_again}.
Therefore,
\[
	\Vert \hat{m}_{\rm con}- m_0 \Vert^2 \le \frac{C_6\max\{\delta_5^2, {\gamma^2}/{s^2} \} }{K}, 
\]
with probability at least as large as \eqref{eqn:con:prob_use_again}. The proof is thus completed by letting $\gamma = 1/n$ in the above.  \hfill$\blacksquare$
\end{proof}

We borrow the main framework in \cite{suzuki2015convergence}. However, there are many missing steps in the proof of Theorem 4 in \cite{suzuki2015convergence}. We use our own arguments to complete those details and obtain slightly different results. We then combine the nonparametric part to obtain the desired result. For a matrix $\mathbf B$, we use $\mathbf B_{:,r}$ to denote its $r$-th column.

\begin{lemma}\label{lemma:packing_set} 
	For $R \ge 4$, there exists a set $\mathcal{B}_d \subset \{\mathbf  B \in \R^{p_d \times R}: \,\Vert \mathbf  B_{:,r} \Vert_2^2 = \gamma^2,\, r=1,\dots, R \}$ satisfying that
	\begin{equation}\label{eqn:iv_finalbound1}
	\begin{aligned}
		&	\vert {\mathcal{B}}_d \vert \ge 
		C_1^{R}\bigg( \frac{1}{\varrho}  \bigg)^{Rp_d/4- 2R}, \\
		&	\Vert \mathbf  B -  \mathbf  B ^\prime \Vert_F^2 \ge C_2\gamma^2\varrho^2 R, \\
		&	\Vert \mathbf  B_{:,r} - \mathbf  B^\prime_{:,r} \Vert_2 \le \gamma,
	\end{aligned}
	\end{equation}
	where $0<C_1, C_2 \le 1$, $\varrho \le C_3 \le 1$, and $\mathbf  B, \mathbf  B^\prime \in {\mathcal{B}}_d$ with $ \mathbf  B \ne  \mathbf  B^\prime$. 
\end{lemma}

\begin{proof}
	Let $\tilde{\mathcal{B}}$ be a $\varrho \gamma$-cover of the $p_d$-dimensional ball with radius $\gamma$. %We then have 
	It follows from Corollary 4.2.13 of \cite{vershynin2018high} that $\vert \tilde{\mathcal{B}}  \vert \ge ( {1}/{\varrho}  )^{p_d}$.
	Denote 
	\[
		\tilde{\mathcal{U}}_d =\{ (\tilde{\bbeta}_{1,d}, \ldots, \tilde{\bbeta}_{R,d}): \tilde{\bbeta}_{r,d} \in \tilde{\mathcal{B}}, ~ r=1,\ldots, R \}.
	\]
	We then construct a suitable packing set based on $\tilde{\mathcal{U}}_d$. In particular, by Lemma 4.2.8 of \cite{vershynin2018high} and Lemma 5 of  \cite{suzuki2015convergence}, there exists a $\varrho \gamma\sqrt{R/4}$-packing set $\tilde{\mathcal{B}}_d$ of $\tilde{\mathcal{U}}_d$ such that 
	\begin{equation}\label{eqn:tildeBd_lowerconvering}
		\vert \tilde{\mathcal{B}}_d \vert \ge \bigg( \frac{1}{\varrho}  \bigg)^{Rp_d/4}.
	\end{equation}
	Analogously, let $\tilde{\mathbf {B}}_{:,r} $ be the $r$-th column of $\tilde{\mathbf  B} \in \tilde{\mathcal{B}}_d$.
	We can orthogonally decompose $\tilde{\mathbf {B}}_{:,r} $ as 
	\begin{equation}\label{eqn:or_decom}
		\tilde{\mathbf {B}}_{:,r}  = u_r \mathbf  1 + \mathbf  b_r,
	\end{equation}
	where $u_r$ is a constant, $\mathbf  1 \in \R^{p_d}$ is a vector with all elements equal to 1, $\mathbf  b_r \in \mathbb{R}^{p_d}$ and $ \langle \mathbf  1, \mathbf  b_r \rangle= 0$. 
	Since the norm of $\tilde{\mathbf {B}}_{:,r}$ is $\gamma$, the orthogonality implies
	\[
		 p_d u_r^2  + \Vert \mathbf  b_r\Vert^2 =  \Vert \tilde{\mathbf {B}}_{:,r} \Vert^2  \leq \gamma^2.
	\]
	Thus, 
	\begin{equation}\label{def:bar_u_r}
		\bar{u}_r: = \bigg( \frac{\gamma^2 - \Vert \mathbf  b_r\Vert^2/16 }{p_d} \bigg)^{1/2}
	\end{equation}
	is a well-defined positive value.
	After denoting
	\begin{equation}\label{def:c_r}
		c_r = -u_r/4 + \bar{u}_r,
	\end{equation} 	
	we then have 
	\[
		\bigg \Vert \frac{1}{4}\tilde{\mathbf  B}_{:,r} + c_r \mathbf  1 \bigg \Vert^2 = \bigg \Vert \frac{1}{4} \mathbf  b_r + \bar{u}_r \mathbf  1 \bigg \Vert^2 = \gamma^2.
	\]
	Let 
	\begin{equation}\label{def:hat_B}
		\hat{\mathcal{B}}_d = \bigg\{\mathbf  B \in \mathbb{R}^{p_d \times R} : \mathbf  B_{:,r} = \frac{1}{4}\tilde{\mathbf  B}_{:,r} + c_r \mathbf  1, ~ \tilde{\mathbf {B}} \in \tilde{\mathcal{B}}_d, ~ \Vert \mathbf  B_{:,r} \Vert_2^2 = \gamma^2
	\bigg\}
	\end{equation}
	where $c_r$ is defined as \eqref{def:c_r}. 
	
	We next show two facts:
	\begin{equation}\label{eqn:radius_1}
		\Vert \mathbf  B_{:,r} - \mathbf  B^\prime_{:,r} \Vert_2 \le \gamma,
	\end{equation}
	and 
	\begin{equation}\label{eqn:numbers_1}
		\vert \hat{\mathcal{B}}_d \vert  \ge 9^{-R}\bigg( \frac{1}{\varrho}  \bigg)^{Rp_d/4- R}.
	\end{equation}

	To show \eqref{eqn:radius_1}, we write $c_r$, $r=1,\ldots,R$ and the matrix $\mathbf B \in \hat{\mathcal{B}}_d $ associated with $\tilde{\mathbf B} \in \tilde{\mathcal{B}_d} $ according to \eqref{def:c_r} and \eqref{def:hat_B}. 
	The same goes for $c_r^\prime$, $r=1,\ldots, R$ and $\mathbf  B^\prime$ (associated with $\tilde{\mathbf  B}^\prime$). 
	Similarly, \eqref{eqn:or_decom} and \eqref{def:bar_u_r} lead to $({u}_r, \bar{u}_r,\mathbf b_r, r=1,\ldots,R) $ and $({u}_r^\prime, \bar{u}_r^\prime, \mathbf  b_r^\prime, r=1,\ldots,R)$ associated with $\tilde{\mathbf B}$ and $\tilde{\mathbf  B}^\prime$, respectively. 
	Thus, using the orthogonality of \eqref{eqn:or_decom}, we have
	\begin{equation}\label{eqn:bound1}
		\max \{ \Vert u_r \mathbf  1/4 \Vert^2 , ~ \Vert \mathbf  b_r /4 \Vert^2  \} \le  \Vert u_r \mathbf  1/4 \Vert^2 + \Vert \mathbf  b_r /4 \Vert^2 = \Vert  \tilde{\mathbf   B}_{:,r}/4 \Vert^2  \le  \gamma^2/16,
	\end{equation}
	and 
	\begin{equation}\label{eqn:bound2}
		\Vert \mathbf  B_{:, r} \Vert^2 = \Vert  \mathbf  b_r/4 + \bar{u}_r \mathbf  1  \Vert^2 = \Vert  \mathbf  b_r/4  \Vert^2 + \Vert \bar{u}_r \mathbf  1  \Vert^2 = \gamma^2.
	\end{equation}
	Since \eqref{eqn:bound1} remains to be true when we replace $(u_r, \mathbf  b_r, \tilde{\mathbf   B}_{:,r})$ with $(u_r^\prime, \mathbf  b_r^\prime, \tilde{\mathbf   B}_{:,r}^\prime)$, it implies that 
	\begin{equation}\label{eqn:bound3}
		\Vert \mathbf  b_r/4 - \mathbf  b_r^\prime/4  \Vert^2 \le \gamma^2 /4.
	\end{equation}
	Using \eqref{eqn:bound1} and \eqref{eqn:bound2}, we also obtain
	\begin{equation*} 
		\frac{15}{16}  \gamma^2 \le \Vert \bar{u}_r \mathbf  1  \Vert^2  \le  \gamma^2.
	\end{equation*}
	Due to $\bar{u}_r \ge 0$ and that the aforementioned displayed inequalities holds for $\bar{u}_r^\prime$ as well, we have
	\begin{equation}\label{eqn:bound4}
		\Vert  \bar{u}_r \mathbf  1 - \bar{u}_r^\prime \mathbf  1 \Vert^2 \le {(4 -\sqrt{15})^2} \gamma^2 /{16}< \gamma^2 /4.
	\end{equation}
	It thus follows from \eqref{eqn:bound3} and \eqref{eqn:bound4} that 
	\begin{equation*} 
	\begin{aligned}
		\Vert \mathbf  B_{:,r} -\mathbf  B_{:,r}^\prime \Vert^2 & = \Vert \mathbf  b_r/4 - \mathbf  b_r^\prime/4 +  \bar{u}_r\mathbf  1 - \bar{u}_r^\prime \mathbf  1 \Vert^2  \\
		& \le 2 \Vert \mathbf  b_r/4 - \mathbf  b_r^\prime/4  \Vert^2 + 2 \Vert  \bar{u}_r \mathbf  1 - \bar{u}_r^\prime \mathbf  1 \Vert^2 \\
		& \le \gamma^2,
	\end{aligned}
	\end{equation*}
	which completes the proof of \eqref{eqn:radius_1}.

	We next turn to the proof of \eqref{eqn:numbers_1}. By definition, we obtain 
	$$
		\mathbf  B = \frac{1}{4} \tilde{\mathbf  B} +  \mathbf  c^\tp  \otimes \mathbf  1 
	$$
	and 
	$$
		\mathbf  B^\prime = \frac{1}{4} \tilde{\mathbf  B}^\prime + (\mathbf  c^\prime )^\tp \otimes \mathbf  1,
	$$
	where $\mathbf  c = (c_1,\ldots, c_R)^\tp $ and $\mathbf  c^\prime = (c_1^\prime,\ldots, c_R^\prime)^\tp $. 
	After denoting $\mathbf  u = (u_1, \ldots, u_R)^\tp $ and $\bar{\mathbf  B} = (\mathbf  b_1, \ldots, \mathbf  b_R) $, \eqref{eqn:or_decom} implies 
	$\tilde{\mathbf  B} = \bar{\mathbf  B} + \mathbf  u^\tp  \otimes \mathbf  1 $. 
	Similar, we can write $\tilde{\mathbf  B}^\prime = \bar{\mathbf  B}^\prime + (\mathbf  u^\prime)^\tp  \otimes \mathbf  1 $. 
	If ${\mathbf  B} =	{\mathbf  B}^\prime$, we have $\bar{\mathbf  B} =\bar{\mathbf  B}^\prime $, since $\bar{\mathbf  B}$ and $\bar{\mathbf  B}^\prime $ are the projections of ${\mathbf  B}$ and ${\mathbf  B}^\prime$ onto the orthogonal complement of $\mathbf 1$, respectively. 
	Due to $\Vert \tilde{\mathbf  B} - \tilde{\mathbf  B}^\prime \Vert^2 \ge \varrho^2 \gamma^2 R/4$, it further shows that
	\[
		\Vert \mathbf  u^\tp  \otimes \mathbf  1  -  (\mathbf  u^\prime)^\tp  \otimes \mathbf  1  \Vert^2 \ge \varrho^2 \gamma^2 R/4,
	\]   
	i.e.,
	\begin{equation}\label{eqn:bound6}
		\Vert \mathbf  u^\tp  -  (\mathbf  u^\prime)^\tp   \Vert^2 \ge \varrho^2 \gamma^2 R/(4p_d).
	\end{equation}
	The fact that $\max \{ \Vert \tilde{\mathbf  B} \Vert^2, \Vert  	\tilde{\mathbf  B}^\prime \Vert^2  \} \le R \gamma^2$ yields
	\[
		\max \{ \Vert \mathbf  u^\tp  \otimes \mathbf  1 \Vert^2 , ~ \Vert (\mathbf  u^\prime)^\tp  \otimes \mathbf  1  \Vert^2 \} \le R \gamma^2,
	\]
	which implies 
	\begin{equation}\label{eqn:bound7}
		\max \{ \Vert \mathbf  u^\tp \Vert^2 ,~  \Vert (\mathbf  u^\prime)^\tp   \Vert^2 \} \le R \gamma^2/p_d.
	\end{equation}
	In other words, both \eqref{eqn:bound6} and \eqref{eqn:bound7} are the consequences of assuming ${\mathbf  B} ={\mathbf  B}^\prime$. 
	By Lemma 4.2.8 and Corollary 4.2.13 of \cite{vershynin2018high}, the $\sqrt{\varrho^2 \gamma^2 R/(4p_d)} $-packing number of the $R$-dimensional Euclidean ball with radius $\sqrt{R \gamma^2/p_d}$  is upper bounded by 
	\[
		\bigg ( 1 + \frac{2}{\varrho/4} \bigg)^R \le \bigg (  \frac{9}{\varrho}\bigg )^R,
	\]
	Therefore, one $\mathbf  B$ at most corresponds to $(9/\varrho)^R$ different elements in $\tilde{\mathcal{B}}_d$, which implies \eqref{eqn:numbers_1}. 
	
	To finish the final proof, we use $\mathbf  B \in \hat{\mathcal{B}}_d$ as a core. Denote 
	\[
		{\mathcal{A}}_\gamma ({\mathbf  B})  = \{{\mathbf  B}^{ \prime} : \Vert {\mathbf  B}^{\prime}  - {\mathbf  B} \Vert^2  <\gamma^2\varrho^2 R / (8 \times 16 \times 4 ), ~ {\mathbf  B}^{\prime }  \in \hat{\mathcal{B}}_d \}.
	\]
	 Note that each element in $\tilde{\mathcal B}_d$ will generate an element in $\hat{\mathcal{B}}_d$ according to \eqref{def:c_r} and \eqref{def:hat_B}. Thus, there exists a subset of $\tilde{\mathcal{B}}_d$ to generate ${\mathcal{A}}_\gamma ({\mathbf  B})$. We denote this subset as  $\tilde{\mathcal{A}}_\gamma (\tilde{\mathbf  B})$. 
	These facts imply that
	  \[
		 \vert  \mathcal{A}_{\gamma}(\mathbf  B) \vert \le   \vert \tilde{\mathcal{A}}_\gamma (\tilde{\mathbf  B}) \vert. 
	  \]
	 Let 
	 \[
		  \bar{\mathcal{A}}_{\gamma}= \{ \bar{\mathbf  B}^\prime : \tilde{\mathbf  B}^\prime  =      \bar{\mathbf  {B}}^{ \prime}  + (\mathbf  u^{\prime } )^\tp \otimes \mathbf  1, \tilde{\mathbf  B}^\prime \in \tilde{\mathcal{A}}(\tilde{\mathbf  B}) \}
	 \]
	 and 
	 \[
	 	\bar{\mathcal{U}}_{\gamma } =  \{\mathbf  u^{\prime } : \tilde{\mathbf  B}^\prime  =      \bar{\mathbf  {B}}^{ \prime}  + (\mathbf  u^{\prime } )^\tp \otimes \mathbf  1, \tilde{\mathbf  B}^\prime \in \tilde{\mathcal{A}}(\tilde{\mathbf  B}) \}.
	 \]
	The orthogonality between $\bar{\mathbf  B}$ and $\mathbf  1$ implies that
	 \[
		  \Vert \bar{\mathbf  B}_1 - \bar{ \mathbf  B}\Vert^2 < \gamma^2\varrho^2 R  /(8 \times 4)
	 \]
	 and 
	  \[
		 \Vert \bar{\mathbf  B}_2 - \bar{ \mathbf  B}\Vert^2 < \gamma^2\varrho^2 R  /(8 \times 4),
	 \]
	 for $\bar{\mathbf  B}_1, \bar{\mathbf  B}_2 \in \bar{\mathcal{A}}_{\gamma}$. 
	 Thus, we have
	 \[
		 \Vert \bar{\mathbf  B}_1 - \bar{\mathbf  B}_2 \Vert^2 < \gamma^2\varrho^2 R  /8 .
	 \]
	Suppose $\tilde{\mathbf  B}_1 $ and $\tilde{\mathbf  B}_2 $ correspond to $(\mathbf  u_1, \bar{\mathbf  B}_1) $ and $(\mathbf  u_2, \bar{\mathbf  B}_2)$, respectively. 
	$\tilde{\mathbf  B}_1 \ne \tilde{\mathbf  B}_2$ implies
	\[
		\Vert \tilde{\mathbf  B}_1 - \tilde{\mathbf  B}_2\Vert^2  = 	\Vert \bar{\mathbf  B}_1 - \bar{\mathbf  B}_2\Vert^2+ \Vert \mathbf  u_1^\tp  \otimes \mathbf  1  -  \mathbf  u_2^\tp  \otimes \mathbf  1  \Vert^2	\ge  \varrho^2 \gamma^2 R/4.
	\]
	The above two displayed inequalities imply that $\Vert \mathbf  u_1^\tp  \otimes \mathbf  1  -  \mathbf  u_2^\tp  \otimes \mathbf  1  \Vert^2	\ge  \varrho^2 \gamma^2 R/ 8$, i.e.,
	$$
		\Vert \mathbf  u_1  -  \mathbf  u_2  \Vert^2	\ge  \varrho^2 \gamma^2 R/( 8 p_d).
	$$
	The aforementioned inequality together with \eqref{eqn:bound7} shows a packing net property of $\bar{\mathcal{U}}_{\gamma}$. 
	Using Lemma 4.2.8 and Corollary 4.2.13 of \cite{vershynin2018high} one more time, we have
	$$
		\vert \bar{\mathcal{U}}_{\gamma } \vert  \le \bigg( 1 + \frac{8 \sqrt{2}}{\varrho}  \bigg) ^R < \bigg(\frac{13}{\varrho}  \bigg) ^R.
	$$
	We finally conclude that the one element in $\bar{\mathcal{U}}_\gamma$ cannot be associated with two different elements in $\bar{\mathcal{A}}_\gamma$ by showing a contradiction to the squared distance lower bound ($ \varrho^2 \gamma^2 R/4$) if so. 
	Therefore, we have
	\begin{equation}\label{eqn:numbers_2}
		\vert 	\mathcal{A}_{\gamma}(\mathbf  B)   \vert \le  \vert \tilde{\mathcal{A}}_\gamma(\tilde{\mathbf  B}) \vert \le  \bigg(\frac{13}{\varrho}  \bigg) ^R. 
	\end{equation}
	Using \eqref{eqn:radius_1}, \eqref{eqn:numbers_1} and \eqref{eqn:numbers_2}, we can obtain a set $\mathcal B_d \subset \hat{\mathcal{B}}_d$ by removing ${\mathcal{A}}_\gamma ({\mathbf  B})$ from $\hat{\mathcal{B}}_d$ for each ${\mathbf  B} \in \hat{\mathcal{B}}_d$, which satisfies \eqref{eqn:iv_finalbound1} and thus completes the proof. \hfill$\blacksquare$
\end{proof}

One condition in Lemma \ref{lemma:packing_set} is $R \ge 4$, which is also used in Lemma 5 of \cite{suzuki2015convergence}. Indeed, if this condition is relaxed, we can obtain a slightly weaker result as the price paid for a milder assumption. 
Before presenting Lemma \ref{lemma:small_R}, we let $\tilde{\mathcal{B}} \subset \mathbb{R}^{p_d}$ denote a $\varrho \gamma$-packing set of the ball with radius $\gamma$ ($0 < \varrho < 1$ WLOG).
We also define a $\tilde{\mathcal{U}}_d$ as the Cartesian product of $\tilde{\mathcal{B}}$, i.e., 
\begin{equation}\label{eqn:defTildeUd}
	\tilde{\mathcal{U}}_d =\{  (\tilde{\bbeta}_{1,d}, \ldots, \tilde{\bbeta}_{R,d}): \tilde{\bbeta}_{r,d} \in \tilde{\mathcal{B}}, ~ r=1,\ldots, R \} \subset \{\mathbf  B \in \R^{p_d \times R}: \Vert \mathbf  B_{:,r} \Vert_2^2 \leq \gamma^2 \}.
\end{equation}
\begin{lemma}\label{lemma:small_R}
	$\tilde{\mathcal{U}}_d$ defined in \eqref{eqn:defTildeUd} is a $\varrho \gamma$-packing set for the $R$ Cartesian product of $p_d$-dimensional balls of radius $\gamma$. Further, there exists $\tilde{\mathcal{B}}$ such that $\vert \tilde{\mathcal{U}}_d \vert \ge ( {1}/{\varrho}  )^{Rp_d}$.
\end{lemma}
\begin{proof}
To show $\tilde{\mathcal{U}}_d$ is a $\varrho \gamma$-packing set induced by $\tilde{\mathcal{B}}$, we observe that when $\mathbf  B,  \mathbf  B^\prime \in \tilde{\mathcal{U}}_d$ and $\mathbf  B \ne  \mathbf  B^\prime$, there exists an $r \in \{1,\ldots,R\}$ such that $\mathbf  B_{:,r} \ne  \mathbf  B_{:,r}^\prime$. It thus implies that $\Vert \mathbf B - \mathbf B^\prime \Vert^2 \ge \Vert \mathbf B_{:,r} - \mathbf B_{:,r}^\prime  \Vert^2 \ge \varrho^2 \gamma^2$. 
For the second statement, due to Lemma 4.2.8 and Corollary 4.2.13 of \citet{vershynin2018high}, there is a $\varrho \gamma$-packing set  $\tilde{\mathcal{B}} $ of the ball in $\mathbb{R}^{p_d}$ with radius $\gamma$ such that $\vert \tilde{\mathcal{B}} \vert \ge ( {1}/{\varrho} )^{p_d}$. Using the definition of $\tilde{\mathcal{U}}_d$ in \eqref{eqn:defTildeUd}, $\vert \tilde{\mathcal{U}}_d \vert \ge ( {1}/{\varrho}  )^{Rp_d}$ is directly obtained through the Cartesian product of $R$ copies. \hfill$\blacksquare$
\end{proof}

\subsection{Discussion on an improved result}
\label{supsec:improve_rate}

In this subsection, we present some discussions on the upper bound in Theorem \ref{thm:convergencerateswithpenalty} of the main paper.  Note that by introducing a penalty that satisfies Assumption \ref{assm:G_penalty} of the main paper, $R^{D+1}$ in the right hand side of \eqref{eqn:thm:penalty:result2} in Corollary \ref{cor:pen} of the main paper may be reduced to $R^2\log \delta_{\rm pen}$, where $\log \delta_{\rm pen}$ is upper bounded by  $C\log [\max\{n,1/\varpi(n),\beta_{0r,d,l}, \nu_0,\alpha_{0r,k}\}]$. Roughly speaking, $\varpi(n)$ is positively related to the penalty function. 
If the value of the penalty term is relatively small, then the introduced bias $G_0/n$ is dominated by the previous two terms in the right hand of \eqref{eqn:thm:penalty:result2} in Corollary \ref{cor:pen} of the main paper.
On the other hand, both $G_0$ and $1/\varpi(n)$ are influenced by the magnitude of the true CP parameters $\beta_{0r,d,l}$'s and approximated spline coefficients $\alpha_{0r,k}$'s.
Lemma \ref{lemma:G_0_bound} below demonstrates the upper bound of $G_0$ for two cases: one is for a general case and the other is for the case fulfilling Assumption \ref{assm:smaller_function_set}. 

\begin{assumption}\label{assm:smaller_function_set}
	Suppose $\{b^\circ_k \}_{k=1}^K$ is an equivalent basis of the truncated power basis with $b^\circ_1(x)=1$. For each $r=1,\ldots,R_0$, there exists $\balpha^\circ_{0r} =(\alpha_{0r,1}^\circ,\ldots, \alpha_{0r,K}^\circ)^\tp$, such that 
	\[
		\bigg \Vert f_{0r}- \sum_{k=1}^K \alpha_{0r,k}^\circ {b}_k^\circ  \bigg \Vert_{\infty}=\bigO(K^{-\tau})
	\]
	and 
	\[
		\Vert \balpha^\circ_{0r} \Vert^2  \le K^C.
	\]
\end{assumption}

For any basis that satisfies Assumption \ref{assm:smaller_function_set}, the upper bound of \eqref{eqn:thm:penalty:result2} in Corollary \ref{cor:pen} can be sharpened as 
\begin{equation}\label{eqn:supp:improved_upper}
	\Vert \hat{m}_{\PLS{}} -m_0\Vert_{}^2 \le  C_4 \bigg\{\frac{\min(R^{D+1}, R^2\log n) +\sum_{i=1}^D Rp_i+RK}{n} \bigg\} +C_5 \frac{R^2}{K^{2\tau}}.
\end{equation}
To see this, by Lemma \ref{lemma:G_0_bound}, we have $G_0 \le C_1\lambda_1n^{C_2}$, which yields 
\[
	\delta_B \le \frac{C_1R_0^2+ C_2n }{\lambda_1} + C_3n^{C_4}.
\]
Taking 
\begin{equation*}
	\frac{C_5}{C_1n^{C_2}} \le \lambda_1 \le C_6 \frac{R^2 + \sum_{d}p_d R + KR}{C_1n^{C_2}}, 
\end{equation*}
we obtain 
\[
	G_0 \le C\bigg(R^2 + \sum_d p_d R + KR\bigg)
\]
and
\[
\delta_B \le C_1 n^{C_2}.
\]
We then have $\delta_{\rm pen} \le C_1 n^{C_2}$ and $\log \delta_{\rm pen}  \le C_1 \log n$, which yields \eqref{eqn:supp:improved_upper}.

In our implement, we used the truncated power basis $\{\tilde{b}_k(x)\}_{k=2}^K$ and whether this basis directly satisfying the second condition in Assumption \ref{assm:smaller_function_set} (i.e., $\Vert \balpha^\circ_{0r} \Vert^2  \le K^C$) is unknown. 
However, we can make an orthonormal transformation on $\{\tilde{b}_k(x)\}_{k=2}^K$ to obtain $\{b^\circ_k(x) \}_{k=1}^K$ to fulfill Assumption \ref{assm:smaller_function_set} under mild conditions. Indeed, let ${\mathbf Q}^\circ \in \mathbb{R}^{(K-1) \times (K-1)}$ denote integrated gram matrix of $\{\tilde{b}_k(x)\}_{k=2}^K$, i.e., ${\mathbf Q}^\circ$ is the lower $(K-1) \times (K-1)$ major submatrix of $\int_{[0,1]} \tilde{\mathbf b}(x) \{\tilde{\mathbf b}(x) \}^\tp \mathrm{d} x$. Since $\{\tilde{b}_k(x)\}_{k=1}^K$ is equivalent to the B-splines with the same order and knots \citep{ruppert2003semiparametric}, the condition in Assumption \ref{assm:smaller_function_set} is satisfied. Moreover, it also reveals that ${\mathbf Q}^\circ$ is positive definite (for a fixed $K$). Therefore, we can construct $\{b^\circ_k(x) \}_{k=1}^K$ through
$$
	b^\circ_1(x) = 1 \quad \textrm{and} \quad \vec\{b^\circ_2(x), \dots,  b^\circ_K(x)\} =(\mathbf Q^\circ)^{-1/2} \cdot \vec\{\tilde{b}_2(x), \dots, \tilde{b}_K(x) \}.
$$
It is not hard to see that when the magnitude of $f_{0r}$ is upper bounded, the corresponding coefficients of $\mathbf b^\circ(x)$ basis are upper bounded due to their orthonormality and Lemma \ref{lem:lemma5}.

In the optimization problem \eqref{eqn:constraintpenalizedintheory} of the main paper, if we use a basis that satisfies Assumption \ref{assm:smaller_function_set}, then the lower-bound requirement of $\sum_d p_d$ to achieve the optimality when the elastic-net penalty is employed (i.e., \eqref{minimax:discuss:con2} of the main paper) can be replaced by a milder requirement
$$
	\max \big [ (R_0n)^{\frac{1}{2\tau+1}}, \min \{R_0^D, R_0\log n \} \big].
$$

\begin{lemma}\label{lemma:G_0_bound}
	Suppose $G(\btheta)$ is defined as in \eqref{def:G_penalty} with \eqref{eqn:def:elastic_net} of the main paper and $\lambda_1 >0$. 
	Under Assumption \ref{assm:splineMashRatio}, if the true function $m_0 \in \mathcal{M}_{00}$,  then there exists an $\tilde{\mathbf{A}}_{0}$ in \eqref{eqn:def_Atilde_0} such that
	\begin{equation}
	\label{eqn:G_0:upper_bound1}
	G_0 \le C_1 \lambda_1 R_0 (K/C_2)^{2K/D}\sum_{d=1}^D p_d^2,
	\end{equation}
	where $\mathcal{M}_{00}$ and $G_0$ are defined in \eqref{def:M00} of the main paper and \eqref{app:G_0}, respectively. Further,  if Assumption \ref{assm:smaller_function_set} holds, then
	\begin{equation}\label{eqn:G_0:upper_bound2}
		G_0 \le C_3 \lambda_1 R_0 K^{C_4}\sum_{d=1}^D p_d^2
	\end{equation}
\end{lemma}
\begin{proof}
	By definition, for each $r=1,\ldots,R$, 
	\[
		\int f_{0,r}^2(x) \mathrm{d}x \le C. 
	\]
	By Lemma \ref{lem:lemma5}, there exists $\alpha_{0r,k}$, $k=1, \ldots, K$, such that 
	\[
		\bigg \Vert f_r- \sum_{k=1}^K \alpha_{0r,k} b_k  \bigg \Vert_{\infty}=\bigO(K^{-\tau}), 
	\]
	which yields
	\[
	\begin{aligned}
		\int\bigg \{ \sum_{k=1}^K \alpha_{0r,k} b_k(x) \bigg\}^2\mathrm{d}x &  = \int\bigg \{ \sum_{k=1}^K \alpha_{0r,k} b_k(x) - f_r(x) + f_r(x) \bigg\}^2\mathrm{d}x \\
		& \le 2 \int\bigg \{ \sum_{k=1}^K \alpha_{0r,k} b_k(x) - f_r(x) \bigg\}^2\mathrm{d}x + 	2 \int f_{0,r}^2(x) \mathrm{d}x \\
		& \le C_1 + \frac{C_2}{K^{2\tau }}.
	\end{aligned}
	\]
	Thus, there exists $\balpha^\prime_{0r} =(\alpha_{0r,1}^\prime,\ldots, \alpha_{0r,K}^\prime)^\tp$, such that 
	\[
		\int \bigg \{ \sum_{k=1}^K \alpha_{0r,k}^\prime \tilde{b}_k(x) \bigg \}^2 \mathrm{d}x = 	\int \bigg\{\sum_{k=1}^K \alpha_{0r,k} b_k(x) \bigg \}^2 \mathrm{d}x \le  C_1 + \frac{C_2}{K^{2\tau }} \le C_3. 
	\]
	Define a matrix $\tilde{\bm Q} \in \mathbb R^{K \times K}$ such that
	\[
		\tilde{Q}_{k_1,k_2} = \int \tilde{b}_{k_1}(x) \tilde{b}_{k_2}(x) \mathrm{d}x. 
	\]
	Recall the truncated power basis is defined as
	\[\begin{gathered}
		\tilde{b}_1(x)=1, ~ \tilde{b}_2(x)=x, \dots, 	\tilde{b}_{\zeta}(x)= x^{\zeta-1}, \\ \tilde{b}_{\zeta+1}(x)=(x-\xi_2)^{\zeta-1}_+,  \dots,  \tilde{b}_K(x)=(x-\xi_{K-\zeta+1})^{\zeta-1}_+.
	\end{gathered}\]
	It can be seen that $\tilde{\mathbf  Q}$ is nonsingular and symmetric. In particular, 
	\[
		\lambda_{\mathrm{max}}(\tilde{\mathbf  Q}) \ge \frac{1}{K}
		\mathrm{tr}(\tilde{\mathbf  Q}) \ge \frac{1}{K}\tilde{Q}_{1,1} =\frac{1}{K},
	\]
	where $\lambda_{\mathrm{max}}(\tilde{\mathbf  Q})$ denote the maximum eigenvalue of $\tilde{\mathbf Q}$.  
	
	To obtain the lower bound of the minimum eigenvalue of $\tilde{\mathbf Q}$, denoted by  $\lambda_{\mathrm{min}}(\tilde{\mathbf  Q})$, we consider two cases, i.e., 
	\begin{equation}\label{eqn:tildeQ_lower_case2}
		\lambda_{\mathrm{max}}(\tilde{\mathbf  Q}) \ge \lambda_{\mathrm{min}}(\tilde{\mathbf  Q}) \ge \frac{1}{2 }\lambda_{\mathrm{max}}(\tilde{\mathbf  Q}),
	\end{equation}
	and 
	\begin{equation}\label{eqn:tildeQ_lower_case3}
		\lambda_{\mathrm{min}}(\tilde{\mathbf  Q}) < \frac{1}{2 }\lambda_{\mathrm{max}}(\tilde{\mathbf  Q}). 
	\end{equation}
	When \eqref{eqn:tildeQ_lower_case2} holds, it implies
	\begin{equation}\label{eqn:tildeQ_lower_case_res}
		\lambda_{\mathrm{min}}(\tilde{\mathbf  Q}) \ge \frac{1}{2K}.
	\end{equation}
	On the other hand, \eqref{eqn:tildeQ_lower_case3} together with Theorem 3.1 of \cite{hlavackova2010new} leads to
	\begin{equation}\label{eqn:tildeQ_lower_case3_res}
	\begin{aligned}
		\lambda_{\mathrm{min}}(\tilde{\mathbf  Q}) & \ge \Bigg ( \frac{\sum_k \lambda_k^2(\tilde{\mathbf  Q} )- K\lambda^2_{\mathrm{max}}(\tilde{\mathbf  Q})    }{K[1-{\lambda_{max}^2}(\tilde{\mathbf  Q})/{ \big \{ \prod_k \lambda_k^2(\tilde{\mathbf  Q}) \big \}^{1/K} }]}  \Bigg)^{1/2} \\
		& \ge \bigg\{ \frac{C \lambda^2_{\mathrm{max}}(\tilde{\mathbf  Q})}{K {\lambda_{max}^2}(\tilde{\mathbf  Q})/{ \big \{ \prod_k \lambda_k^2(\tilde{\mathbf  Q}) \big \}^{1/K} } } \bigg \}^{1/2} \\
		& \ge C \frac{{ \big \{ \prod_k \lambda_k(\tilde{\mathbf  Q}) \big \}^{1/K} }}{K^{1/2}}  \\
		& \ge C \frac{{ \big \{\lambda_{\mathrm{min}}^{K-1}(\tilde{\mathbf  Q}) \lambda_{\mathrm{max}}(\tilde{\mathbf  Q}) \big \}^{1/K} }}{K^{1/2}} ,
	\end{aligned}
	\end{equation}
	where $\{\lambda_k(\tilde{\mathbf  Q}), k=1,\ldots,K \}$ denote all eigenvalues of $\tilde{\mathbf  Q}$. 
	Combining two cases
	\eqref{eqn:tildeQ_lower_case_res} and \eqref{eqn:tildeQ_lower_case3_res} yields 
	\[
	\lambda_{\mathrm{min}}(\tilde{\mathbf  Q})  \ge C_1\min \bigg\{  \frac{C_2^K}{K^{K/2+1}}, ~ \frac{1}{K} \bigg\}.
	\]
	It follows that $\Vert \tilde{\balpha}_{0r} \Vert^2 \le C_1 (K/C_2)^K$. 
 	By definition, if $\lambda_2 <1$, then we have 
	\[\begin{aligned}
		G_0 & \le C \lambda_1 \sum_{r=1}^{R_0}  ( \Vert \tilde{\balpha}_{0r} \Vert^{2/D}  \Vert\bbeta_{0r,d} \Vert^2 + \ldots +  \Vert \tilde{\balpha}_{0r} \Vert^{2/D}\Vert\bbeta_{0r,D} \Vert^2) \\
		& \le C \lambda_1 R_0 (K/C_2)^{2K/D} \sum_{d=1}^D p_d^2.
	\end{aligned}\]
	Similarly, if $\lambda_2 =1$, it shows that 
	\[\begin{aligned}
		G_0 & \le C \lambda_1 \sum_{r=1}^{R_0}  ( \Vert \tilde{\balpha}_{0r} \Vert^{2/D}  \Vert\bbeta_{0r,d} \Vert_1 + \ldots +  \Vert \tilde{\balpha}_{0r} \Vert^{2/D}\Vert\bbeta_{0r,D} \Vert_1) \\
		& \le C \lambda_1 R_0 (K/C_2)^{2K/D} \sum_{d=1}^D p_d\\
		& \le C \lambda_1 R_0 (K/C_2)^{2K/D} \sum_{d=1}^D p_d^2,
	\end{aligned}\]
	which finishes the proof of \eqref{eqn:G_0:upper_bound1}. \eqref{eqn:G_0:upper_bound2} can be proved by similar arguments. \hfill$\blacksquare$
\end{proof}

We remark that the proof of Corollary \ref{cor:pen} is based on Lemmas \ref{lem:elastic_assump} and \ref{lemma:G_0_bound}. These lemmas also work for the general case that the elastic-net penalty with $\lambda_2 \in [0,1]$ is employed in the proposed method.   
Thus, using similar arguments, the result of Corollary \ref{cor:pen} can be generalized to the elastic-net penalized estimator with a general $\lambda_2$.

\setcounter{equation}{0}
\renewcommand{\theequation}{C.\arabic{equation}}
\renewcommand{\thelemma}{C.\arabic{lemma}}
\renewcommand{\thethm}{C.\arabic{thm}}

\section{Identifiability}\label{Apdix:identy}

It is noted that our theory does not require the identifiability for each component in model \eqref{eqn:broadcast} of the main paper. 
For completeness, we now discuss the identifiable problem for each component in our model. 
To begin with, we state 
three situations of unidentifiability, where two of them are similar to that of the CP decomposition. The first is permutation and scaling. 
\begin{itemize}
	\item [1.] Permutation and scaling. Permutation means that the $R$ components in model \eqref{eqn:broadcast} of the main paper can be permuted; while scaling means that for any constant $C \ne 0$,
	\[
		\left\langle
		C\bbeta_{r,1} \circ \bbeta_{r,2} \circ\cdots \circ \bbeta_{r,D},
		\frac{1}{C}F_r(\brm{X})
		\right\rangle=\left\langle
		\bbeta_{r,1} \circ \bbeta_{r,2} \circ\cdots \circ \bbeta_{r,D},
		F_r(\brm{X})
		\right\rangle,
	\]
	where the scalar $C$ can also shift among $\{\bbeta_{r,d}\}_{d=1}^D$ for some $r \in \{1, \dots, R\}$.
\end{itemize}
The second is that there exist some other combination of functions and the corresponding scaling tensors that can represent $m(\mathbf{X})$ in \eqref{eqn:broadcast} of the main paper, with the exception of permutation and scaling.
\begin{itemize}
	\item [2.] Another representation of $m(\mathbf{X})$. 
	For example, when 
	${F}_1(\brm{X})=\dots={F}_R(\brm{X})$,
	there may exist another CP rank decomposition for $\mathbf{B}= \sum^R_{r=1}\bbeta_{r,1} \circ \bbeta_{r,2} \circ\cdots \circ \bbeta_{r,D}$ \citep[see, e.g.,][]{stegeman2007kruskal}, due to the non-uniqueness of the CP decomposition of a tensor with rank $R$ in general \citep{kolda2009tensor}. 
	As a result, it will lead another combination of scaling tensors to represent $m(\mathbf{X})$. 
\end{itemize}
Besides, the shift of intercept also brings the unidentifiability.
\begin{itemize}
	\item [3.] Intercept shift. For a constant $C$ and a tensor $\mathbf{J} \in \mathbb{R}^{p_1 \times \cdots \times p_D}$ of which all the entries are $1$, it can be seen that
	\[
	\left\langle
	\bbeta_{r,1} \circ \bbeta_{r,2} \circ\cdots \circ \bbeta_{r,D},
	F_r(\brm{X})-C \mathbf{J}
	\right\rangle=\left\langle
	\bbeta_{r,1} \circ \bbeta_{r,2} \circ\cdots \circ \bbeta_{r,D},
	F_r(\brm{X})
	\right\rangle+C^\prime,
	\]
	where $C^\prime$ is a constant that can shift to the intercept $\nu$ in model \eqref{eqn:broadcast} of the main paper.
\end{itemize}

Next, we introduce the minimal representation condition for model \eqref{eqn:broadcast} of the main paper. 
For simplicity, we let $\mathcal{F}=  \{f:  f \text{ is integrable on } [0,1] \text{ such that }  \int_0^1f(x)\mathrm{d}x=0 \}$.
\begin{definition}[Minimal representation condition]
	Given $f_r \in \mathcal{F}$ for $r=1,\ldots,R$, we say the set of broadcasting functions $\{f_r\}_{r=1}^R$ in model \eqref{eqn:broadcast} of the main paper satisfies the minimal representation condition, if there does not exist another representation such that the following i) or ii) will hold:
\begin{enumerate}
	\item[\romannumeral1)] $$m(\mathbf{X}) = \bar{\nu} + \frac{1}{s}\sum^{\bar{R}}_{r=1} \left\langle \bar{\beta}_{r,1} \circ \bar{\beta}_{r,2} \circ\cdots \circ \bar{\beta}_{r,D}, \bar{F}_r(\mathbf{X}) \right\rangle,$$
	where $\bar{\nu} \in \mathbb{R}$, $\bar{\beta}_{r,d} \in \mathbb{R}^{p_d \times \bar{R}}$, $(\bar{F}_r(\mathbf{X}))_{i_1,\ldots,i_D}=\bar{f}_r(X_{i_1, \ldots, i_D}) \in \mathcal{F}$, and $\bar{R} < R$;
	\item[\romannumeral2)] $$m(\mathbf{X}) = \tilde{\nu} + \frac{1}{s}\sum^{R}_{r=1} \left\langle \tilde{\beta}_{r,1} \circ \tilde{\beta}_{r,2} \circ\cdots \circ \tilde{\beta}_{r,D}, \tilde{F}_r(\mathbf{X}) \right\rangle,$$
	where $\tilde{\nu} \in \mathbb{R}$, $\tilde{\beta}_{r,d} \in \mathbb{R}^{p_d \times \bar{R}}$, $(\tilde{F}_r(\mathbf{X}))_{i_1,\ldots,i_D}=\tilde{f}_r(X_{i_1,\ldots,i_D}) \in \mathcal{F}$, and $\mathrm{Span}\{\tilde{f}_r, r=1,\dots, R\} \subsetneq \mathrm{Span}\{f_r, r=1,\dots, R\}$. 
\end{enumerate}
\end{definition}	

Upon the minimal representation condition, the following theorem provides the identifiability for model \eqref{eqn:broadcast} of the main paper. For convenience, we denote
\[
	\mathbf{B}_d=(\bbeta_{1,d},\ldots,\bbeta_{R,d}), \quad  d=1,\ldots,D,
\] 
and let $k_{\mathbf{B}_d}$ represent the $k$-rank of $\mathbf{B}_d$, which is defined as the maximum value $k$ such that any $k$ columns are linearly independent \citep{kruskal1977three, harshman1984data}. 

\begin{thm}[Identifiability] \label{thm:identifiability}
	If the set of broadcasting functions $\{f_r\}_{r=1}^R$ in model \eqref{eqn:broadcast} of the main paper satisfies the minimal representation condition and 
	\begin{equation}\label{eqn:sufficientconditionforindentifiability}
		\sum_{d=1}^Dk_{\mathbf{B}_d} \ge 2R + D - 1,
	\end{equation}
	then the proposed model \eqref{eqn:broadcast} of the main paper is identifiable up to permutation and scaling. 
\end{thm}
\begin{proof}
	Suppose there are two representations of $m(\brm{X})$ in model \eqref{eqn:broadcast} of the main paper, i.e., 
	\begin{equation}\label{eqn:another_representation}
	\begin{aligned}
		m(\mathbf{X})& = \nu + \frac{1}{s}\sum^R_{r=1} \left\langle
		\bbeta_{r,1} \circ \bbeta_{r,2} \circ\cdots \circ \bbeta_{r,D},
		F_r(\brm{X}) \right \rangle \\
		& = \bar{\nu} + \frac{1}{s}\sum^R_{r=1} \left\langle
		\bar{\bbeta}_{r,1} \circ \bar{\bbeta}_{r,2} \circ\cdots \circ \bar{\bbeta}_{r,D},
		\bar{F}_r(\brm{X}) \right \rangle,
	\end{aligned}
	\end{equation}
	where 
	\[
		(F_r(\brm{X}))_{i_1, \ldots i_D} = f_r (X_{i_1,\ldots,i_D}) \quad \text{and} \quad  (\bar{F}_r(\brm{X}))_{i_1, \ldots, i_D} = \bar{f}_r (X_{i_1,\ldots,i_D}),
	\]
	with $f_r, \bar{f}_r \in \mathcal{F}$, $r=1,\ldots, R$. We only need to show 
	$\nu = \bar{\nu}$, and $\{  \bbeta_{r,1},\bbeta_{r,2}, \ldots, \bbeta_{r,D},F_r(\brm{X}) \}$, $r=1, \ldots, R$,  and $\{ \bar{\bbeta}_{r,1}, \bar{\bbeta}_{r,2} ,\ldots, \bar{\bbeta}_{r,D},\bar{F}_r(\brm{X}) \}$, $r=1, \ldots, R$, are the same up to permutation and scaling. 
	
	Using the definition of $\mathcal{F}$, we can obtain $\nu=\bar{\nu}$ by integration over the domain of $\mathbf{X}$ in \eqref{eqn:another_representation}. 
	In the remaining of this proof, denote the minimal bases of the vector space $\mathrm{Span}\{f_r(x), r=1,\dots, R\}$ and $ \mathrm{Span}\{ \bar{f}_r(x), r=1,\dots, R\}$
	are 
	\begin{equation}\label{def:basis_span_fr}
		\{\psi_{k^{\star}} (x)\}_{k^{\star}=1}^{K^{\star}} \quad \text{and} \quad \{\bar{\psi}_{\bar{k}^{\star}} (x)\}_{\bar{k}^{\star}=1}^{\bar{K}^{\star}}, 
	\end{equation}
	respectively. Then each $f_r$ and $\bar{f}_r$ can be written uniquely in terms of the bases, i.e.,  
	\[
		f_r(x)=\sum_{k^{\star}=1}^{K^{\star}} \eta_{r,{k}^{\star}} \psi_{k^{\star}}(x) \quad \text{and} \quad \bar{f}_r(x)=\sum_{\bar{k}^{\star}=1}^{\bar{K}^{\star}} \bar{\eta}_{r,\bar{k}^{\star}} \bar{\psi}_{\bar{k}^{\star}}(x),
	\]  
	where $\eta_{r,{k}^{\star}}, \bar{\eta}_{r,\bar{k}^{\star}} \in \mathbb R$ for $ {k}^{\star} = 1,\ldots K^\star$ and $ \bar{k}^{\star} = 1,\ldots \bar{K}^\star$. 

	In the following, we let $\Psi(\brm{X})_{\bm {j},{k}^{\star}}=\psi_{{k}^{\star}}(\mathbf{X}_{\bm {j}})$, ${k}^{\star}=1,\ldots, K^\star$, and $\bar{\Psi}(\brm{X})_{\bm{j},\bar{k}^{\star}}=\bar{\psi}_{\bar{k}^{\star}}(\mathbf{X}_{\bm{j}})$, $\bar{k}^{\star}=1,\ldots,\bar{K}^\star$, where $\bm{j} \in \mathcal{J}$. We also denote
	\begin{equation}\label{eqn:thmindentifiabilityCPdecomposition}
		\mathbf{A}^f=\frac{1}{s}\sum^R_{r=1} \bbeta_{r,1} \circ \bbeta_{r,2} \circ\cdots \circ \bbeta_{r,D} \circ \bm{\eta}_r
	\end{equation}
	and
	\begin{equation*}
		\bar{\mathbf{A}}^f=\frac{1}{s}\sum^R_{r=1} \bar{\bbeta}_{r,1} \circ \bar{\bbeta}_{r,2} \circ\cdots \circ \bar{\bbeta}_{r,D} \circ \bar{\bm{\eta}}_r,
	\end{equation*}
	where $\bm{\eta}_r=(\eta_{r,1},\cdots,\eta_{r,K})^{\tp}$ and  $\bar{\bm{\eta}}_r=(\bar{\eta}_{r,1},\cdots,\bar{\eta}_{r,K})^{\tp}$, for $r=1,\ldots, R$.
	Since we have shown $\nu=\bar{\nu}$, \eqref{eqn:another_representation} implies
	\begin{equation}\label{eq:thm1:differentbasis}
		\big \langle \mathbf{A}^f, 	\Psi(\brm{X}) \big \rangle= \big \langle \bar{\mathbf{A}}^f, 	\bar{\Psi}(\brm{X}) \big \rangle.
	\end{equation}
	The rest of this proof includes three steps. First, we show 
	\begin{equation}\label{eq:thm1:basisspace}
		\mathrm{Span}\{\psi_{{k}^{\star}} (x),{k}^{\star}=1, \dots, K^\star\}= \mathrm{Span} \{\bar{\psi}_{\bar{k}^{\star}} (x),\bar{k}^{\star}=1,\dots,\bar{K}^{\star}\}.
	\end{equation}
	After showing \eqref{eq:thm1:basisspace} is true, we can chose $\{\bar{\psi}_{\bar{k}^{\star}} (x)\}_{\bar{k}^{\star}=1}^{\bar{K}^{\star}}=\{\psi_{{k}^{\star}} (x)\}_{{k}^{\star}=1}^{K^\star}$ in \eqref{def:basis_span_fr}
	and rewrite \eqref{eq:thm1:differentbasis} as
	\begin{equation}\label{eq:thm1:samebasis}
		\big\langle \mathbf{A}^f, 	\Psi(\brm{X}) \big \rangle=\big \langle \bar{\mathbf{A}}^f, 	\Psi(\brm{X}) \big \rangle.
	\end{equation}
	Second, we will show $\mathbf{A}^f=\bar{\mathbf{A}}^f$ in \eqref{eq:thm1:samebasis}. Last, we will use the identifiable theory of the CP decomposition and complete the proof. 
	
	If \eqref{eq:thm1:basisspace} is not satisfied,  WLOG, we assume there exists $\bar{\psi}_{k_0}(x) \in \{\bar{\psi}_{\bar{k}^{\star}} (x)\}_{\bar{k}^{\star}=1}^{\bar{K}^{\star}}$ that is linearly independent of $\{\psi_{k^{\star}} (x)\}_{k^\star=1}^{K^\star}$. For each $\bm{j} \in \mathcal{J}$, we take integration for other covariates over their domains, then by Lemma \ref{lem:lemma1}, we get
	\[
		\sum_{k^{\star}=1}^{K^\star} A_{\bm{j},k^\star}^f \psi_{{k^{\star}}}(X_{\bm{j}})-\sum_{\bar{k}^\star \ne k_0}^{\bar{K}^{\star}} \bar{A}_{\bm{j},\bar{k}^\star}^f \bar{\psi}_{\bar{k}^{\star}}(X_{\bm{j}})-\bar{A}_{\bm{j},k_0}^f \bar{\psi}_{k_0}(X_{\bm{j}})=0,
	\]
	for $X_{\bm{j}} \in [0,1]$. Note that $\bar{\psi}_{k_0}(x)$ is independent of $\{\psi_{k^{\star}} (x)\}_{k^{\star}=1}^{K^\star}$ and $\{\bar{\psi}_{\bar{k}^{\star}} (x)\}_{i\ne k_0}$, then $\bar{A}_{\bm{j},k_0}^f=0$, for $\bm{j} \in \mathcal{J}$. If there exists $r_0$ such that $\bar{\eta}_{r_0,k_0} \ne 0$, then there exists $\{\tilde{f}_r\}_{r=1}^R$, where $\tilde{f}_r(x)=\sum_{k^{\star} \ne k_0} \bar{\eta}_{r,k^{\star}} \bar{\psi}_i(x)$ and $\text{Span}\{\tilde{f}_r, r=1,\dots, R \} \subsetneq \text{Span}\{f_r, r=1,\dots, R\}$, which will lead to another representation for model \eqref{eqn:broadcast} of the main paper. This contradicts the minimal representation condition. Thus, we have $\bar{\eta}_{r,k_0}=0$ for $r=1,\ldots,R$, then $\{\bar{f}_r(x) \}_{r=1}^{R}$ can be represented by $\{\bar{\psi}_{k^{\star}} (x)\}_{k^{\star}\ne k_0}$, which further leads a contradiction that $\{\bar{\psi}_{\bar{k}^\star} (x)\}_{\bar{k}^{\star}=1}^{\bar{K}^\star}$ is a minimal basis. 
	Therefore \eqref{eq:thm1:basisspace} holds with $\bar{K}^\star=K^\star$. 
	
	To show $\mathbf{A}^f=\bar{\mathbf{A}}^f$ in \eqref{eq:thm1:samebasis}, we let $
	\mathbf{A}^{f,\star}=\mathbf{A}^f-\bar{\mathbf{A}}^f
	$. It implies that
	\[
		\left\langle  \mathbf{A}^{f,\star}, \Psi(\brm{X}) \right\rangle=0,
	\]
	for all $\mathbf{X}$. 
	If $\mathbf{A}^{f,\star} \ne \mathbf{0}$,  there exists $\bm{j}_0 \in \mathcal{J}$ such that $(A_{\bm{j}_0,1}^{f,\star}, \ldots,A_{\bm{j}_0,K^\star}^{f,\star} ) \ne \mathbf{0}$. 
	Denote 
	$$C_{-\bm{j}_0}=\sum_{\bm{j}\ne \bm{j}_0}\sum_{k^\star=1}^{K^\star} A_{\bm{j},k^\star}^{f,\star}f_{k^{\star}}(X_{\bm{j}}),$$
	which is free of $X_{\bm{j}_0}$.
	It then implies that
	\begin{equation}\label{eqn:linearlyindependent}
		\sum_{k^{\star}=1}^{K^{\star}} A_{\bm{j}_0,k^{\star}}^{f,\star}\psi_{k^{\star}}(X_{\bm{j}_0}) +C_{-\bm{j}_0}=0,
	\end{equation}
	for  $X_{\bm{j}_0} \in [0,1]$. 
	By integration over $X_{\bm{j}_0}$ on both sides, we obtain
	\[
		\sum_{k^\star=1}^{K^\star} A_{\bm{j}_0,i^\star}^{f,\star} w_{k^\star} + C_{-\bm{j}_0}=0,
	\]
	where $w_{k^\star}=\int_0^1 \psi_{k^{\star}}(x) \mathrm{d}x$, $k^\star=1, \ldots, K^\star$. By Lemma \ref{lem:lemma1}, $	\sum_{k^{\star}=1}^{K^\star} A_{\bm{j}_0,k^\star}^{f,\star} w_{k^\star}=0$, which implies $C_{-\bm{j}_0}=0$. Combining the linearly independence and \eqref{eqn:linearlyindependent}, it yields $A_{\bm{j}_0,k^\star}^{f,\star}=0$ for $k^\star=1,\ldots,K^\star$.  
	Thus $\mathbf{A}^{f,\star} = \mathbf{0}$ and we have $\mathbf{A}^f=\bar{\mathbf{A}}^f$.
	
	Note that the set of broadcasting functions $\{f_r\}_{r=1}^R$ in model \eqref{eqn:broadcast} of the main paper satisfies the minimal representation condition and  \eqref{eqn:thmindentifiabilityCPdecomposition} is a CP rank decomposition of $\mathbf{A}^f$. For convenience, we write $\mathbf{B}_{D+1}:=\bm{\eta}=(\bm{\eta}_1,\ldots,\bm{\eta}_R)$ and let $k_{\mathbf{B}_{D+1}}$ be its $k$-rank. It is trivial to see $k_{B_{D+1}} \ge 1$. 
	By Theorem 3 in \citet{sidiropoulos2000uniqueness} and \eqref{eqn:sufficientconditionforindentifiability}, the CP rank decomposition of $ \mathbf{A}^f$ is unique up to permutation and scaling. Therefore, $\{ \bbeta_{r,1},\bbeta_{r,2}, \ldots, \bbeta_{r,D},F_r(\brm{X}) \}$, $r=1, \ldots, R$,  and $\{ \bar{\bbeta}_{r,1}, \bar{\bbeta}_{r,2} ,\ldots, \bar{\bbeta}_{r,D},\bar{F}_r(\brm{X})  \}$, $r=1, \ldots, R$, are the same up to permutation and scaling, which finishes the proof.  \hfill$\blacksquare$
\end{proof}

We remark that the proof of Theorem \ref{thm:identifiability} shows that if the CP decomposition of $\mathbf{A}^f$ is unique up to permutation and scaling, so is model \eqref{eqn:broadcast} of the main paper. 
Previous works have provided various sufficient conditions to guarantee the uniqueness of the CP decomposition of $\mathbf{A}^f$ \citep{de2006link, sidiropoulos2000uniqueness, kolda2009tensor}.

\section{Initialization}\label{app:startingPoints}	
Motivated by the strategy of initial points used in the MATLAB toolbox TensorReg \citep[written by the authors of][]{Zhou-Li-Zhu13}, we adopt a sequential downsizing strategy.
For the penalized tensor linear regression, TensorReg first applies the unpenalized tensor linear regression on a downsized sample. 
The downsized sample depends on a shrinkage parameter $\vartheta=n/ (CR\sum_{d=1}^D p_d)$, where $C$ is a constant supplied by users. If $\vartheta \le 1$, the downsized sample is just the original sample $(\mathbf{X}_i, y_i)$; if not, $\mathbf{X}_i \in \mathbb{R}^{p_1 \times \cdots \times p_D}$ is downsized to a smaller tensor of size ${\tilde{p}_1 \times \ldots \times \tilde{p}_D}$, where $\tilde{p}_d = \lfloor p_d / \vartheta \rfloor$.
Besides, TensorReg transforms the obtained solution of coefficient tensor of the unpenalized method back to the original size, and then runs the penalized tenor liner regression.

In our sequential downsizing procedure for \BNTR, we consider a sequential downsized 
$\tilde{p}_1^{(1)} \times \cdots$ $\times \tilde{p}_D^{(1)} \times  K$, $\tilde{p}_1^{(2)} \times \cdots \times \tilde{p}_D^{(2)} \times K$, $\cdots$, $\tilde{p}_1^{(\eta)} \times \cdots \times \tilde{p}_D^{(\eta)} \times K$ of the samples in the initial stage, where $\{\tilde{p}_d^{(t)}\}_{t=1}^{\eta} \in \{1, 2, \dots, p_d\}$, $d=1, \dots, D$, as an increasing integer sequence.
We firstly use a random initial point for the unpenalized tensor linear regression on the downsized sample with dimension of $\tilde{p}_1^{(1)} \times \cdots \times \tilde{p}_D^{(1)} \times K$.
We then sequentially use the result of the unpenalized tensor linear regression under the dimension of
$\tilde{p}_1^{(i)} \times \cdots \times \tilde{p}_D^{(i)} \times K$ 
as the initial point (after up-sizing) for that of $\tilde{p}_1^{(i+1)} \times \ldots \times \tilde{p}_D^{(i+1)} \times K$, for $1 \le i < \eta$. The result under the dimension of
$\tilde{p}_1^{(\eta)} \times \cdots \times  \tilde{p}_D^{(\eta)} \times K$ 
is used as the final initial point for the proposed Algorithm \ref{algo:algorithmI} of the main paper.

Specifically for the procedure of grids search when the elastic-net penalty is employed, we apply the sequential downsizing initialization to the smallest $\lambda_1$ in the grids for each combination of $R$ and $\lambda_2$. 
We then use the results as the initial points of the second smallest $\lambda_1$ in grids and repeat this procedure for the other values of $\lambda_1$ in an increasing order.

\setcounter{equation}{0}
\renewcommand{\theequation}{E.\arabic{equation}}
\renewcommand{\thelemma}{E.\arabic{lemma}}
\renewcommand{\thethm}{E.\arabic{thm}}

\section{More simulation experiments}\label{sec:moreSimu}

\subsection{Extra results for the experiments of the main paper}\label{sec:extraSimuRslts}
In addition to Section \ref{sec:syndata} of the main paper, we also present the region identification performance of \BNTR{} for smaller sample sizes ($n= 500$ and $750$) in Figure \ref{plot:Region_com_sample}. 
It is not surprising that when the sample size increases, the accuracy of identified regions of our proposed method improves. 
From the above results, we observe that Case 5 is a more difficult setting for all methods due to its high-CP-rank nature. 
Although \BNTR{} remains the most competitive among the competing methods, its performance deteriorates compared with other settings, especially when the sample size is not large.

\begin{figure}[!h]
	\centering
	\includegraphics[width=0.6\textwidth]{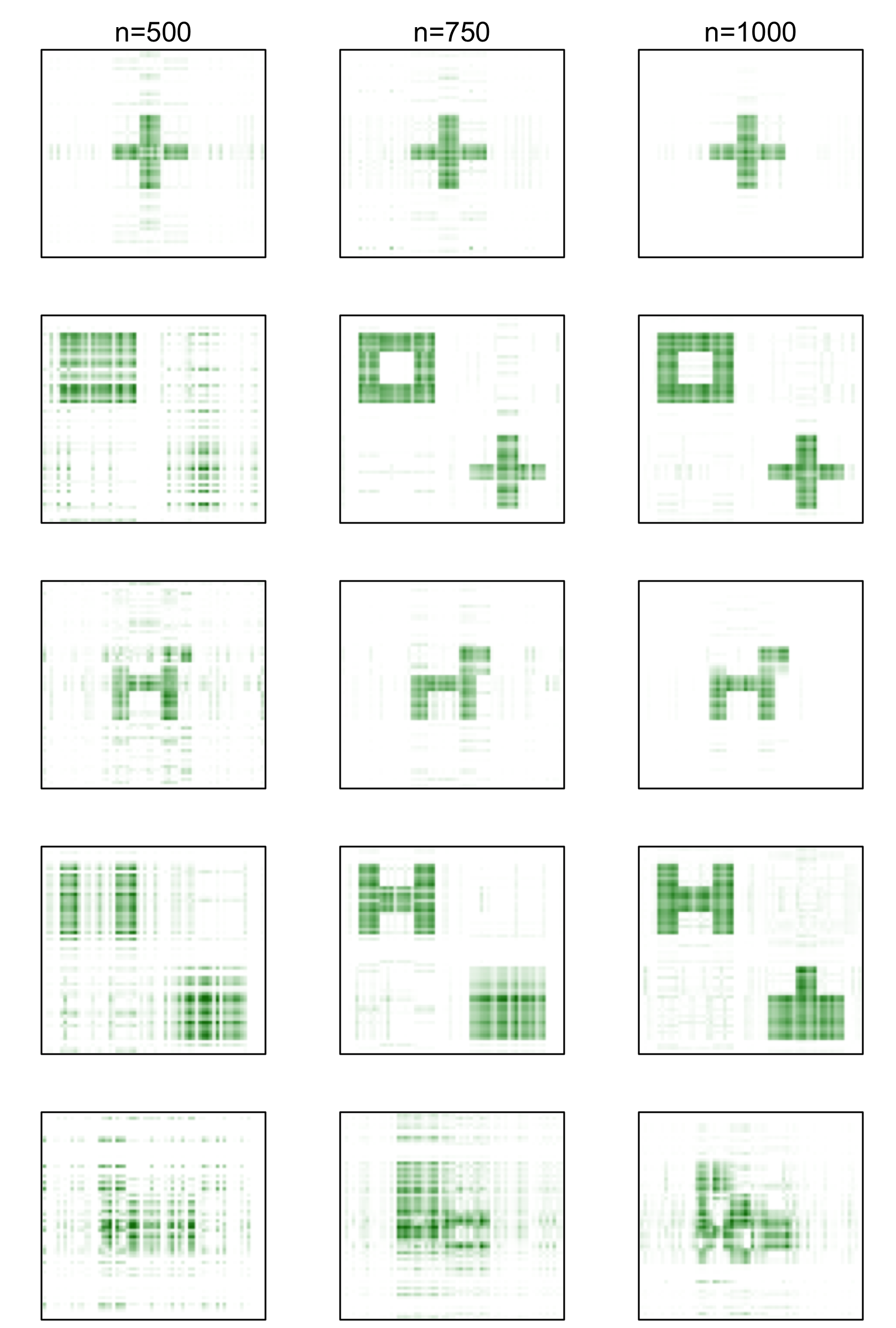}
	
	\includegraphics[width=0.55\textwidth]{Color_bar_4}
	
	\caption{Region selection of \BNTR{} for the synthetic data in Section \ref{sec:syndata} of the main paper, with sample sizes $n=500$, $750$, and $1000$. 
	All plots share the same color scheme as shown in the color bar at the bottom. \label{plot:Region_com_sample} } 
\end{figure}

To further demonstrate the performance of the proposed \BNTR{} method on the entry-wise level, 
we use the estimation error of the entry-wise function, defined as 
\begin{equation}\label{eqn:IseEntry}
	\left[\int_0^1 \left\{\hat{m}_{i_1,\dots,i_D}(x) - m_{0,i_1,\dots,i_D} \right\}^2 \mathrm{d}x \right]^{1/2},
\end{equation}
where $m_{0,i_1,\dots,i_D}$ and $\hat{m}_{i_1,\dots,i_D}$ are the true and estimated $(i_1,\dots,i_D)$-th entry-wise functions, respectively. 
Figure \ref{plot:entrywise_plots} depicts the estimated entry-wise functions using \BNTR{} on a particular entry for Cases 1--5 and different sample sizes from 50 data replicates.
The entry is chosen according to the performance of \BNTR{} in the simulated dataset of the median ISE on the regression function among 50 replicates, such that the estimation error of the corresponding entry-wise function is the median.
Figure \ref{plot:entrywise_plots} shows that the proposed \BNTR{} method is able to characterize the nonlinear patterns of the entry-wise functions for Cases 2--5, and the linear pattern for Case 1. 
Overall, the estimated entry-wise functions using \BNTR{} become reasonably accurate when the sample size is relatively large for Cases 1--4. 
Again, Case 5 is a difficult setting and a larger sample size is suggested.

\begin{figure}[!h]
	\centering
	\includegraphics[width=0.8\textwidth]{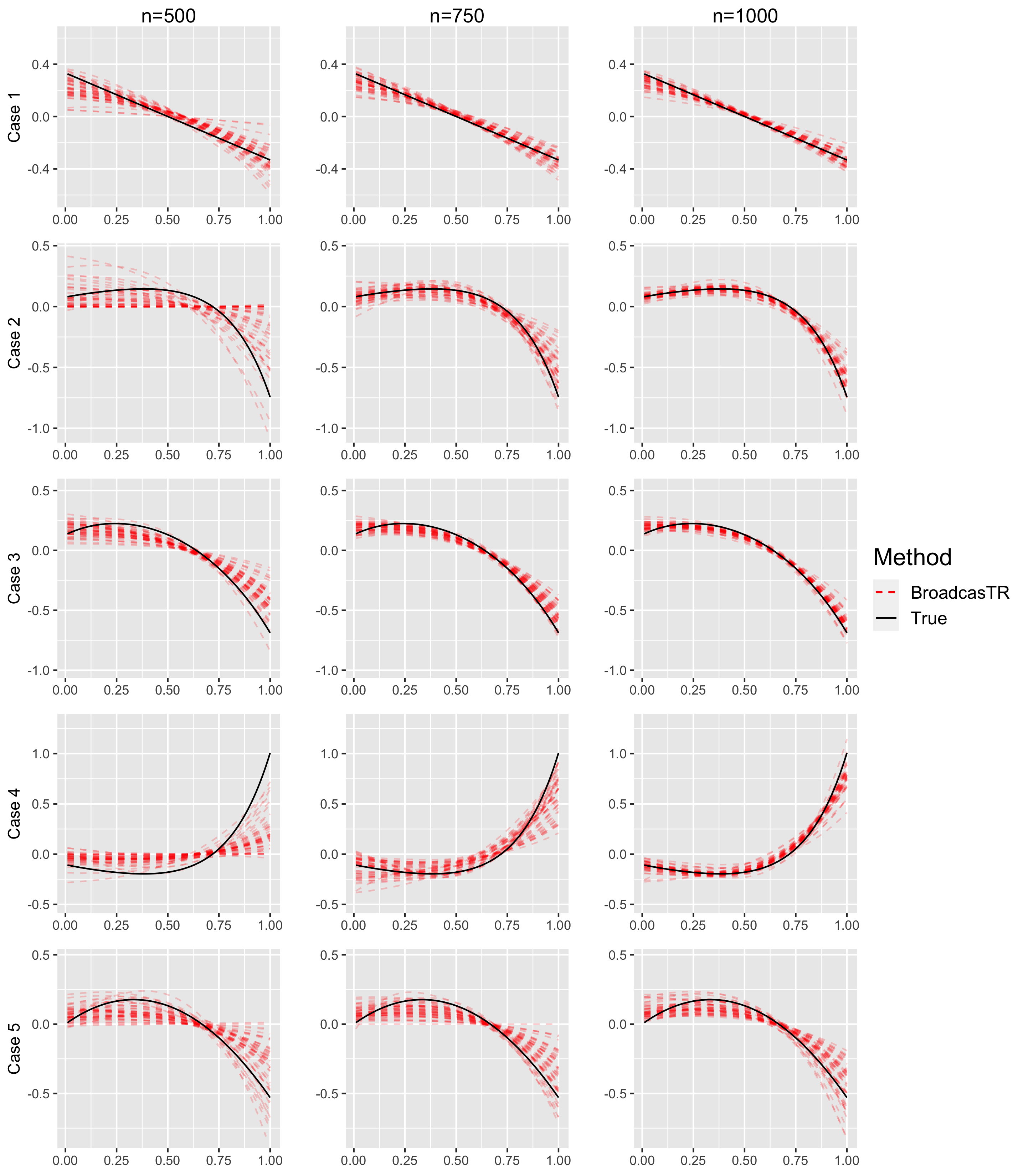}
	\caption{True and estimated entry-wise functions using \BNTR{} for the synthetic data in Section \ref{sec:syndata} of the main paper. 
	The five rows correspond to Cases 1--5, respectively. 
	The columns display sample sizes $n=500$, $750$, and $1000$. \label{plot:entrywise_plots} }
\end{figure}

In addition to Section \ref{sec:simuMonkey} of the main paper, we further calculate the estimation errors of the entry-wise functions defined in \eqref{eqn:IseEntry}. 
In Figure \ref{plot:simuMonk}, the depicted curves are corresponding to the entry of the median estimation error (by using BroadcasTR) among all entries. 
It shows that the proposed method provides more accurate estimation of the entry-wise functions than the alternatives.

\begin{figure}[!hbt]
	\centering
	\includegraphics[width=0.65\textwidth]{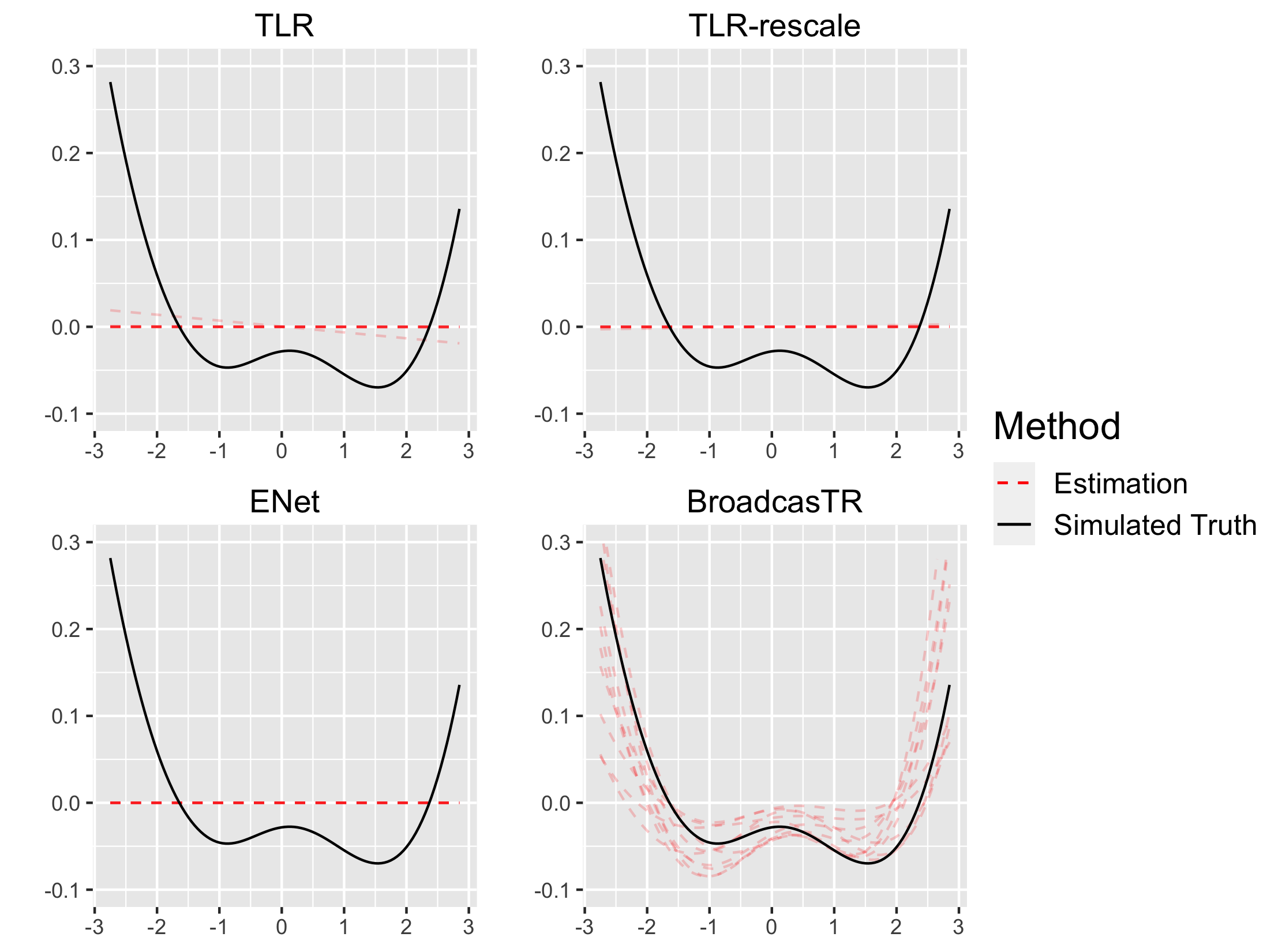}
		\caption{The estimation performance of the entry-wise functions for the simulated monkey electrocorticography data in Section \ref{sec:simuMonkey} of the main paper. Each panel displays one competing method.
		The entry is chosen to be the one with the median estimation error (of BroadcasTR) among all entries. 
		\label{plot:simuMonk}}  
\end{figure}

\subsection{Synthetic data under censored multivariate normal distributions} \label{sec:trunc_exp} 
We also considered another setup with a different design of tensor covariates. In this setup, we use the same regression functions used in Section \ref{sec:syndata} of the main paper, i.e., 
\begin{itemize}
	\item[Case 1:]
	\begin{center}
		$y=m_1(\mathbf{X})+\epsilon_1=1+ \langle \mathbf{B}_1 , \mathbf{X} \rangle + \epsilon_1,$
	\end{center}
	\item[Case 2:]
\begin{center}
	$y=m_2(\mathbf{X})+\epsilon_3=1+ \langle \mathbf{B}_{2}, F_1(\mathbf{X}) \rangle + \epsilon_2,$
\end{center}
	\item[Case 3:]
	\begin{center}
		$y=m_3(\mathbf{X})+\epsilon_2=1 + \langle \mathbf{B}_{3} , F_{2}(\mathbf{X}) \rangle +  \langle  \mathbf W_1, F_{4}(\mathbf{X}) \rangle  +  \langle  \mathbf W_2, F_{5}(\mathbf{X}) \rangle +\epsilon_3,$
	\end{center}
	\item[Case 4:]
	\begin{center}
		$y=m_4(\mathbf{X})+\epsilon_4=1+  \langle \mathbf{B}_{41}, F_1(\mathbf{X}) \rangle+ \langle \mathbf{B}_{42}, F_3(\mathbf{X}) \rangle + \epsilon_4,$
	\end{center}
	\item[Case 5:]
		\begin{center}
			$y=m_5(\mathbf{X})+\epsilon_5=1 + \langle \mathbf{B}_{5}, F_3(\mathbf{X}) \rangle + \epsilon_5,$
	\end{center}
\end{itemize}
where $F_1,\dots,F_5$: $[0,1]^{64 \times 64}\rightarrow \mathbb{R}^{64 \times 64}$ are specified as
\begin{align*}
	(F_1(\mathbf{X}))_{i_1,i_2}&=f_1(X_{i_1,i_2})=X_{i_1,i_2}^2 \exp(X_{i_1,i_2}^2)  - 0.5X_{i_1,i_2} \exp(X_{i_1,i_2}),\\
	(F_2(\mathbf{X}))_{i_1,i_2}&=f_2(X_{i_1,i_2})=(4X_{i_1,i_2}^2 -2X_{i_1,i_2})/(X_{i_1,i_2}^2-X_{i_1,i_2}-2),\\
	(F_3(\mathbf{X}))_{i_1,i_2}&=f_3(X_{i_1,i_2})= 3X_{i_1,i_2}^2 - 2X_{i_1,i_2}, \\
	(F_4(\mathbf{X}))_{i_1,i_2}&=f_4(X_{i_1,i_2})= 0.5  \sin(2\pi X_{i_1,i_2}), \\
	(F_5(\mathbf{X}))_{i_1,i_2}&=f_5(X_{i_1,i_2})= 2X_{i_1,i_2} \sinh(X_{i_1,i_2}-0.5), 
\end{align*}
for $i_1= 1, \dots, 64$ and $i_2 = 1, \dots, 64$. 
In Cases 1--4, the components $\mathbf{B}_1$, $\mathbf{B}_2$, $\mathbf{B}_{3}$, $\mathbf{B}_{41}$, and $\mathbf{B}_{42}$ are respectively rank-2, rank-4, rank-4, rank-2, and rank-2 scaling (coefficient) matrices. 
The corresponding CP parameters are independently generated from $\mathrm{Unif}\{(-1, -0.5) \cup (0.5, 1)\}$.
There are two additional rank-8 scaling matrices, $\mathbf W_1$ and $\mathbf W_2$, in Case 3 to violate the exact low-CP-rank structure; their nonzero entries follow $\mathcal{N}(0, 0.1^2)$. 
In Case 5, $\mathbf{B}_{5}$ is a binary matrix of a butterfly shape as depicted in the left-bottom corner of Figure \ref{Region_com_models}.

For each case, the tensor covariate $\mathbf{X}$ is generated from a mixture of multivariate truncated normal distributions with a Toeplitz variance matrix on the support, with point mass distributions on the boundaries. 
More specifically,
$$
	X_{i_1, i_2} = \bar{X}_{i_1,i_2} \mathbf{1}_{\{ \bar{X}_{i_1,i_2} \in [0, 1] \}} + \mathbf{1}_{\{ \bar{X}_{i_1,i_2} >1 \}},
$$
where $\mathbf{1}_{\{\cdot \}}$ is an indicator function and $(\bar{X}_{1,1}, \bar{X}_{2,1}, \dots, \bar{X}_{64,64} )^\tp$ is sampled from a $64^2$-dimensional normal distribution with mean $ 0.5 \cdot \mathbf 1_{64^2}$ and variance $\boldsymbol{\Sigma}$ whose $(i_1+64(i_2-1), i_1^\prime+64(i_2^\prime-1))$-th entry is
$$
	\boldsymbol{\Sigma}_{i_1+64(i_2-1), i_1^\prime+64(i_2^\prime-1)} = 0.5^{\vert i_1 - i_1^\prime\vert + \vert i_2 - i_2^\prime\vert}, \qquad i_1, i_2, i_1^\prime, i_2^\prime =1,\dots, 64.
$$
It can be seen that the above procedure is equivalent to generating $\brm{X}$ from a censored multivariate normal distribution.
The error $\varepsilon_j$ is generated by $\varepsilon_j \sim \mathcal{N}(0, \sigma_j^2)$, where $\sigma_j$ was set to be $10\%$ of the standard deviation of the entries of $m_j(\brm{X})$.
We generated 50 simulated datasets independently for each sample size $n=500$, $750$, and $1000$. 
We compared TLR, TLR-rescaled, ENetR, and \BNTR{} as in Section \ref{sec:syndata} of the main paper. After fitting, we calculated the integrated squared error (ISE) of the regression function for each method.

The average ISEs of the proposed and alternative methods are summarized in Table \ref{estimation_synthetic}. 
For the nonlinear settings, i.e., Cases 2--5, it is shown that the proposed \BNTR{} outperforms the other methods significantly. 
In particular, \BNTR{} reduces the average ISEs by $56\%$--$91\%$ in Case 2, $51\%$--$65\%$ in Case 3, $55\%$--$91\%$ in Case 4, and $12\%$--$52\%$ in Case 5, as $n \in \{500, 750, 1000 \}$, compared with that of the best alternative method in each case and sample size.
It is worth mentioning that in Case 3, our model is only approximately correct. The estimation performance on the regression functions in Case 3 shows that BroadcasTR is able to capture the major trend of the true regression function.
As for Case 1, which is the linear setting and favors the alternative methods, \BNTR{} remains competitive. 
It performs better than ENetR by showing $85\%$--$91\%$ reduction in the average ISE, as $n \in \{500, 750, 1000 \}$, and is slightly inferior to TLR-rescaled and TLR.
TLR-rescaled performs better than TLR although they originate from the same penalized regression.
This indicates that the proposed rescaling strategy leads to significant improvements. 
Furthermore, the accuracy of estimation increases with the sample size for the proposed \BNTR{}, which is consistent with our asymptotic analysis.

For each method, the norm tensor with the median ISE among 50 simulated datasets of $n=1000$ is depicted in Figure \ref{Region_com_models}, and the norm tensor for the truth was also depicted at the leftmost of Figure \ref{Region_com_models}. We also present the region identification performance of \BNTR{} for smaller sample sizes ($n= 500$ and $750$) in Figure \ref{Region_com_sample}. 
To further demonstrate the performance of the proposed \BNTR{} method on the entry-wise functions, Figure \ref{entrywise_plots_unif} depicts the estimated entry-wise functions of \BNTR{} for this synthetic data analysis. 
Overall, it shows that \BNTR{} is able to characterize the nonlinear patterns of the entry-wise functions for Cases 2--5, and the linear pattern for Case 1.

\begin{table}
	\caption{Estimation performance for the synthetic data in Section \ref{sec:trunc_exp}. Reported are the averages of ISE and the corresponding standard deviations (in parentheses)  based on 50 data replicates. In the first column, $n$ is the total sample size. The best performance is shown in boldface. \label{estimation_synthetic}\vspace{0.5ex} } 
\centering
	\fbox{
		\begin{tabular}{c|ccccc}  
			\multirow{2}*{$n$}& \multirow{2}*{Case}& \multirow{2}*{TLR}  & \multirow{2}*{TLR-rescaled}   & \multirow{2}*{ENetR} & \multirow{2}*{\BNTR{}}   \\ 
			\\  \hline
			\multirow{4}*{$500$}&1 & 0.559 (0.129)  & $\bm{0.473}$ $(0.114) $   & 5.869 (0.228)&  0.88 (0.379)	\\
			&2& 29.640 (3.423)    &  29.392 (3.746) & 29.862 (1.756) &  
			$\bm{13.036}$ $(5.22)$    \\
			&3& 5.244 (0.451)  &  5.227 (0.481) & 8.636 (0.561) &     
			$\bm{2.576}$ $(1.391)$  \\
			&4&     34.391 (3.77) &   33.072 (4.077) & 31.709 (1.038)&   
			$\bm{14.377}$ $(5.083)$  \\
			&{5}&  24.375 (4.485)  &23.434 (1.633) & 24.594 (1.283)  &   
			{$\bm{20.511}$ $(12.044)$} \\
			\hline
			\multirow{4}*{$750$}&1 &  0.403 (0.071) & $\bm{0.333}$ $(0.051) $   & 5.140 (0.189)&  0.491 (0.143)	\\
			&2&  23.834 (1.82)    &  22.847 (1.612) & 29.598 (1.502) &  
			$\bm{3.662}$ $(1.581)$    \\
			&3&   4.033 (0.411) &  3.960 (0.333) & 7.904 (0.467)&     
			$\bm{1.498}$ $(0.480) $  \\
			&4&     24.186 (1.619) &   24.144 (2.109) & 31.309 (1.045) &   
			$\bm{3.957}$ $(1.951)$  \\
			&{5}&  21.23 (2.286)  & 20.933 (1.035) & 24.137 (1.124)  &   
			{$\bm{13.657}$ $(4.359)$} \\
			\hline
			\multirow{4}*{$1000$}&1 &  0.389 (0.092) & $\bm{0.275}$ $(0.037) $   & 4.385 (0.2) &  0.401 (0.095)	\\
			&2&   20.770 (1.245)     &  20.639 (1.094) & 29.394 (1.165)   &  
			$\bm{1.844}$ $(1.120)$    \\
			&3&   3.581 (0.233) &  3.395 (0.205) & 7.27 (0.323) &     
			$\bm{1.193}$ $(0.496) $  \\
			&4&     21.718 (1.384) &   21.249 (1.546) & 31.058 (0.972)  &   
			$\bm{1.889}$ $(0.988)$  \\
			&{5}&  19.445 (1.472)   & 19.601 (0.924) & 23.592 (0.716)  &   
			{$\bm{9.351}$ $(2.180)$} \\
		\end{tabular}  
	}
\end{table}  

\begin{figure}[!hbt]
	\centering
	\includegraphics[width=0.95\textwidth]{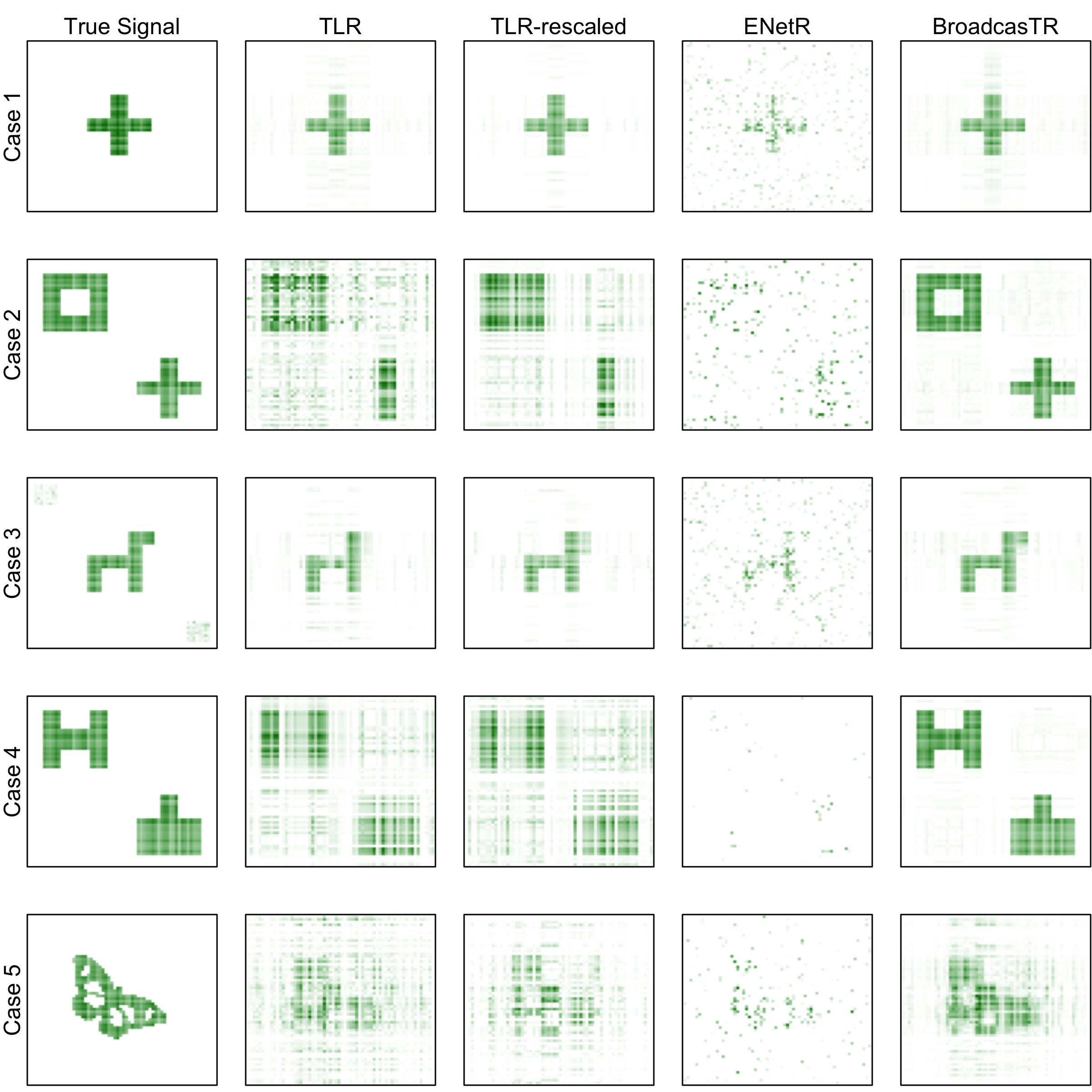}
	
	\includegraphics[width = 0.55\textwidth]{Color_bar_4}
	
	\caption{Region selection of the competing methods for the synthetic data in Section \ref{sec:trunc_exp} ($n=1000$). The first column presents the true norm tensors in Cases 1--5.
	The remaining four columns display the estimated norm tensors corresponding to the replicate of the upper median ISE performance for the competing methods. The columns from left to right correspond to TLR, TLR-rescaled, ENetR, and \BNTR{}, respectively.
	The plots in all columns share the same color scheme as shown in the color bar at the bottom. \label{Region_com_models} }
\end{figure}

\begin{figure}[!hbt]
	\centering
	\includegraphics[width=0.6\textwidth]{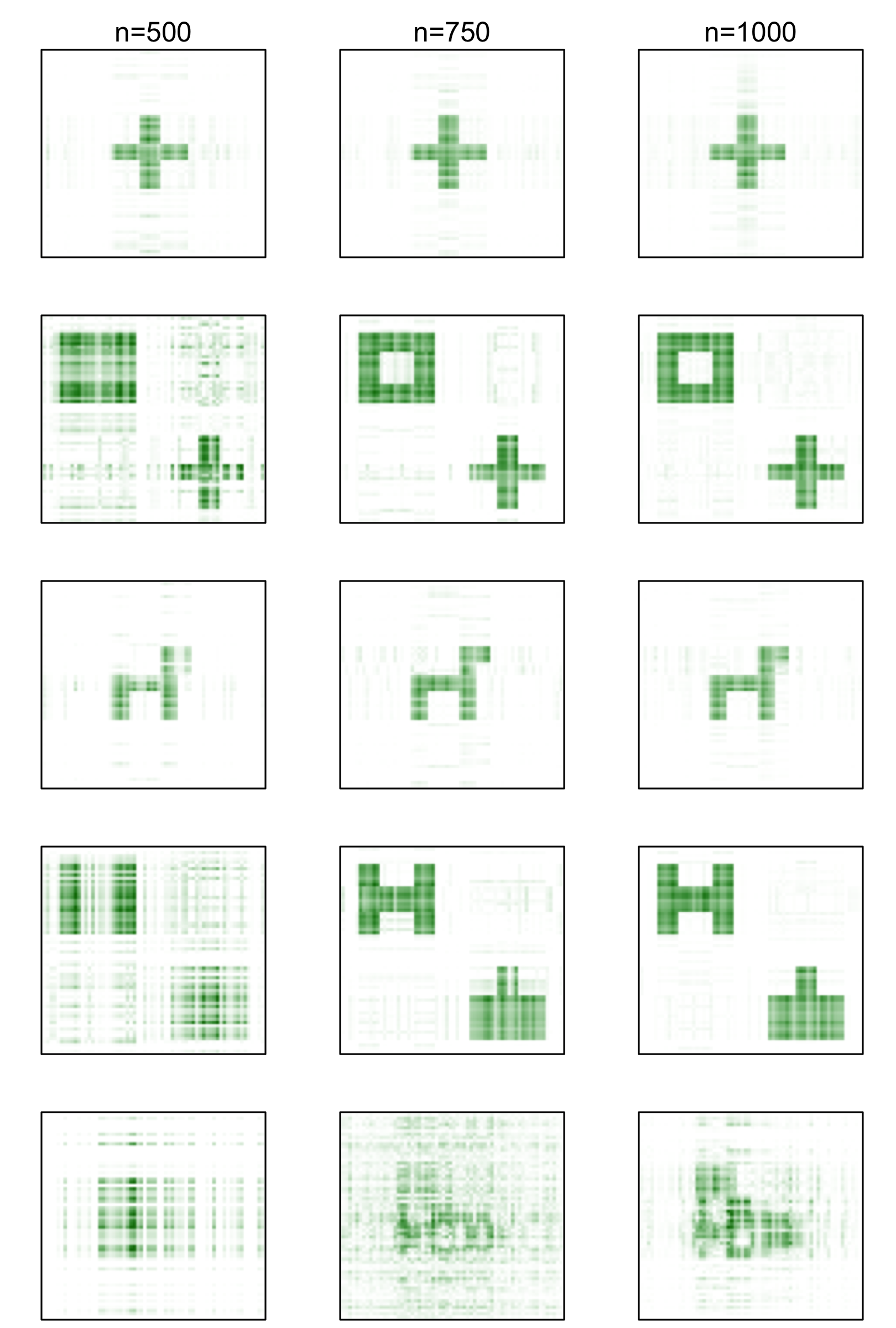}
	
	\includegraphics[width = 0.55\textwidth]{Color_bar_4}
	
	\caption{Region selection of \BNTR{} for Cases 1--5 (from the first to the fifth row) in Section \ref{sec:trunc_exp}, with sample sizes $n=500$, $750$, and $1000$. 
	All plots share the same color scheme as shown in the color bar at the bottom. \label{Region_com_sample} } 
\end{figure}

\begin{figure}[!h]
	\centering
	\includegraphics[width=0.9\textwidth
	]{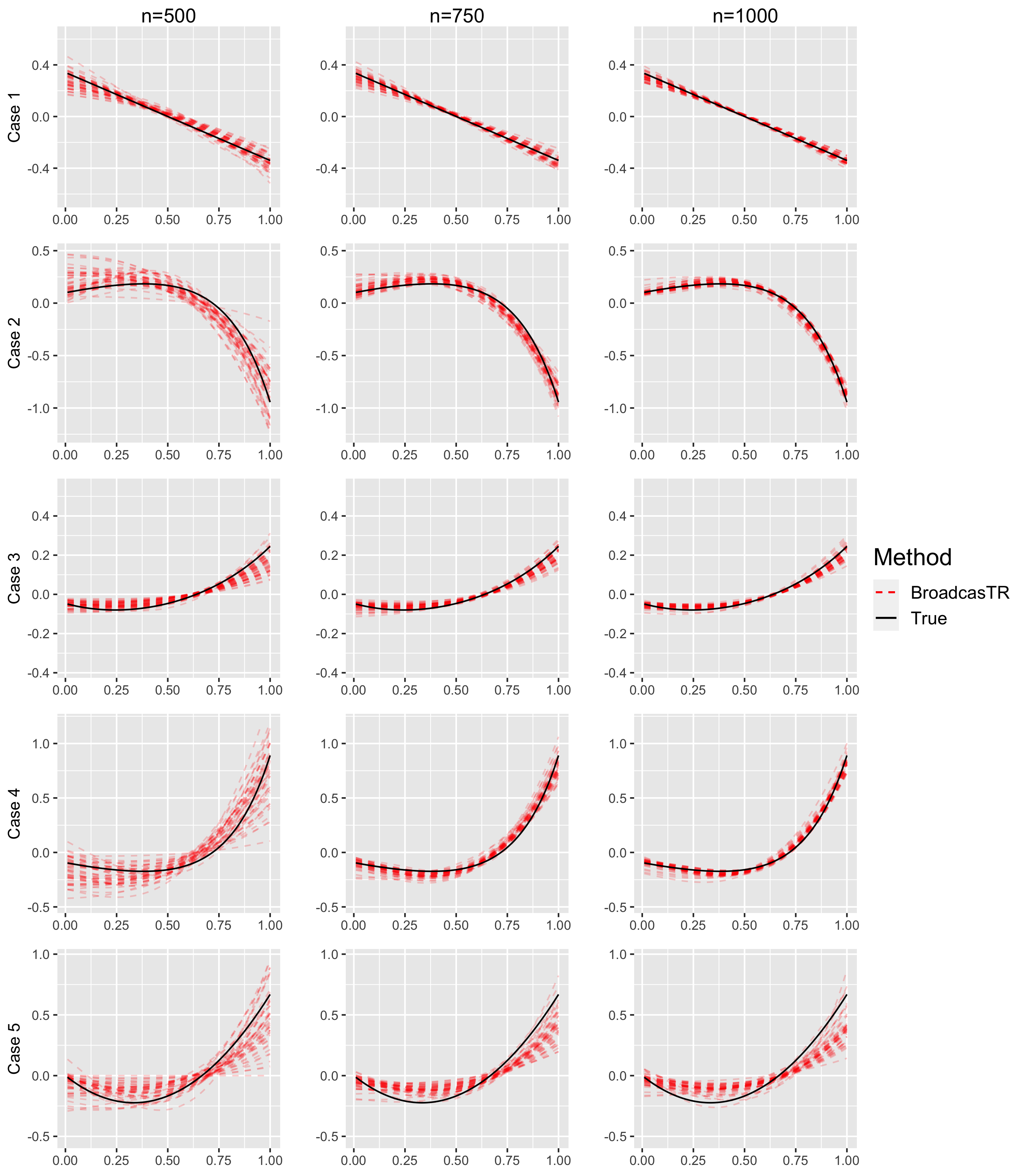}
	\caption{True and estimated entry-wise functions using \BNTR{} for the synthetic data in Section \ref{sec:trunc_exp}. The five rows correspond to Cases 1--5, respectively. The columns display sample sizes $n=500$, $750$, and $1000$. \label{entrywise_plots_unif} }
\end{figure}

\subsection{Comparison with nonlinear tensor regression models}\label{sec:extrSimu}

We also compared our method with existing nonlinear tensor regression models, including tensor-variate Gaussian process regression \citep[TVGP,][]{zhao2014tensor} and Gaussian process nonparametric tensor estimator \citep[GPNTE,][]{kanagawa2016gaussian}.
We conducted the comparison using the nonlinear setting Case 2 in Section \ref{sec:syndata} of the main paper with sample size $n=500$, 20\% of which was used for validation. 
Since GPNTE requires the tensor covariate to be rank one, we generated the covariate as
\[
	\mathbf X =  \mathbf x_1 \circ \mathbf x_2 \in \mathbb R^{64 \times 64},
\]
where each entry of $\mathbf x_d$ was independently sampled from Uniform[0,1], $d=1,2$. 
The method of selecting the tuning parameters and the corresponding grid points are the same as we presented in Section \ref{sec:5} for the synthetic data.
After training the proposed \BNTR{} and these two alternatives, we independently generated another test dataset with 500 sample size and MSPE defined in \eqref{eqn:mspeDef} of the main paper. 
We report the prediction comparison in Table \ref{com_Ka_Zhao} based on 50 replicates.

\begin{table}
	\caption{Prediction performance for the synthetic data in Section \ref{sec:extrSimu}. Reported are the averages of MSPE and the corresponding standard deviations (in parentheses) based on 50 replicates. \label{com_Ka_Zhao}\vspace{0.5ex}}
	\centering
	
	\fbox{\begin{tabular}{c|cccc}  
			Case & $n$  &GPNTE &TVGP & \BNTR{}  \\
			\hline
			2 & 500 &  35.51143 (2.532002)  & 31.86366 (2.547318) &  2.440391 (0.3408013)
			\\
	\end{tabular}}
\end{table}  

Table \ref{com_Ka_Zhao} shows that the performance of TVGP and GPNTE is inferior to the proposed BroadcasTR. 
The bad performance of these alternatives is attributed to that TVGP and GPNTE have to intrinsically estimate a $64^2$-variate and multiple $64$-variate functions, respectively.
Our proposed BroadcasTR, on the other hand, can handle a relatively high-dimensional setting by broadcasting multiple univariate functions.

\section{Discussion on the global optimum}\label{sec:glbOpt}

Due to the non-convexity of the objective function, it is not a trivial task to theoretically fill the gap between Proposition \ref{prop:convergence} and Theorem \ref{thm:convergencerateswithpenalty} presented in the main paper. We note that the similar theoretical gap also exists in a number of statistical work, notably the tensor linear regression \citep{Zhou-Li-Zhu13}, though many still perform reasonably well in practice. Although we assume the global optimal as our estimator, the asymptotic theory (Section \ref{sec:4} of the main paper) indeed only relies on that \eqref{proof:thm4:less} holds, i.e., 
\[
	\sum_{i=1}^n \bigg ( y_i-\frac{1}{s} \langle  \check{\mathbf{A}}_{\PLS{}}^{\flat}, \Phi(\mathbf{X}_i)\rangle \bigg )^2 \le \sum_{i=1}^n \bigg ( y_i-\frac{1}{s} \langle \mathbf{A}_{0}^{\flat},  \Phi(\mathbf{X}_i)\rangle \bigg )^2+G_0.
\]
In other words, as long as the loss function evaluated at the estimator is small enough, then error bound will hold. We call the left hand side as (LHSloss) and the right hand side as (RHSobj) in \eqref{proof:thm4:less}. The result of RHSobj minus LHSloss is denoted as RHSobj - LHSloss. 
In our numerical experiments, we found that LHSloss always smaller than RHSobj, based on our implementation of the proposed algorithm; see Figure \ref{smaller_loss_500_1000} as an example of Case 2 in the synthetic data of Section \ref{sec:syndata} of the main paper.

\begin{figure}[!hbt]
	\centering
	\includegraphics[width = 145 mm, height = 110 mm]{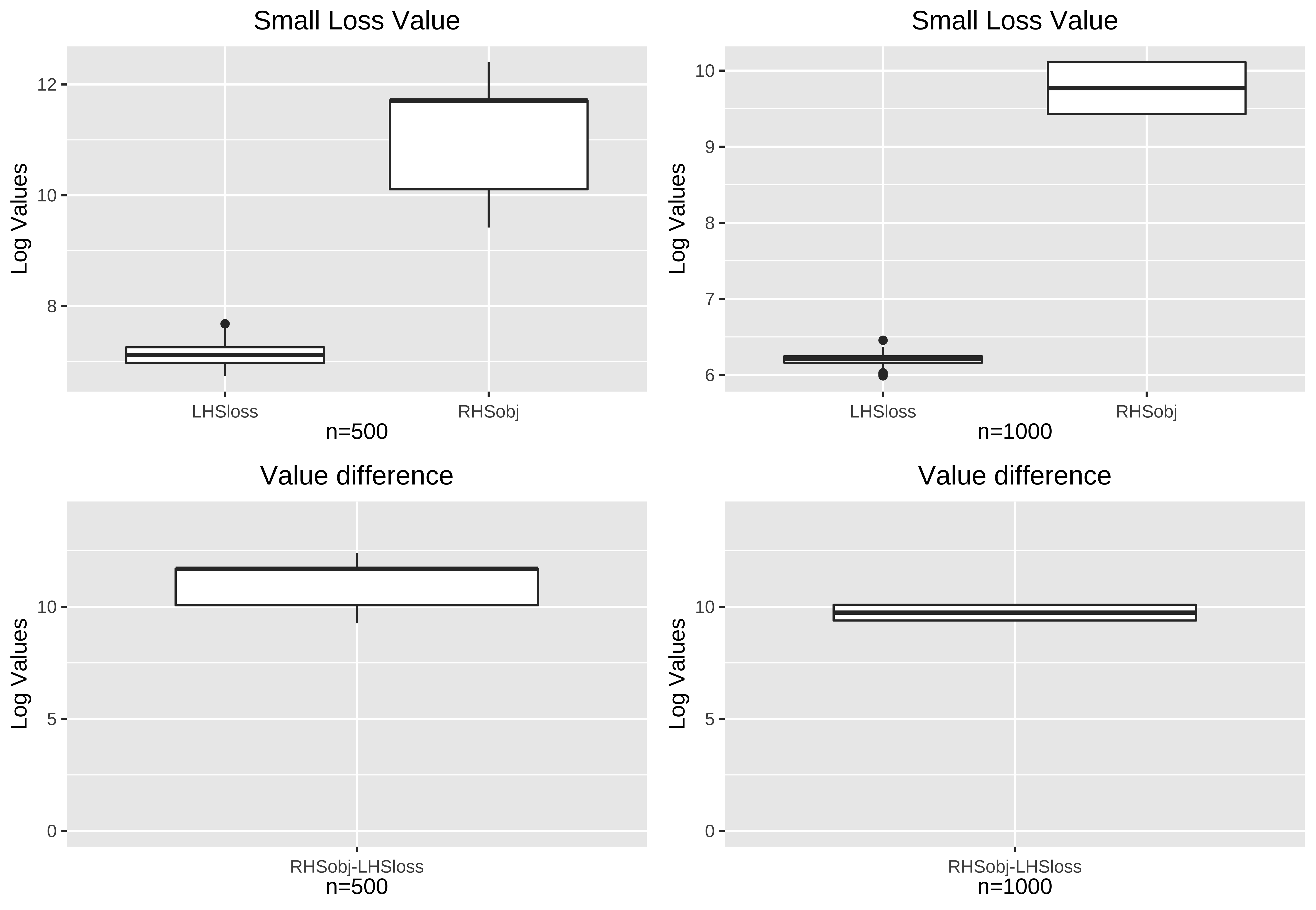}
	\caption{  
		The LHS (LHSloss) and RHS (RHSobj) of \eqref{proof:thm4:less} in Case 2 of the synthetic data in Section \ref{sec:syndata} of the main paper.
		The first row show these two quantities based on 50 replicates after validation, when $n=500$ and $n=1000$. The second row depicts the differences (RHSobj - LHSloss) accordingly.
	\label{smaller_loss_500_1000}}
\end{figure}

\section{ADNI data}\label{sec:adni}
We also evaluated our proposed method, as well as the alternative TLR using elastic-net penalization with rescaling strategy (TLR-rescaled), on a publicly available data set obtained from the Alzheimer's Disease Neuroimaging Initiative \citep[ADNI,][]{mueller2005alzheimer} database. 
ADNI was initialized in 2003 by the National Institute on Aging (NIA), the National Institute of Biomedical Imaging and Bioengineering (NIBIB), and the Food and Drug Administration (FDA). 
Alzheimer's disease is an age-related neurologic disorder that causes the brain to shrink (atrophy) and brain cells to die. It is often characterized by progressive memory impairment and deterioration of cognitive functions \citep{torre2010alzheimer}. 
The primary goal of ADNI is uniting researchers with study data to investigate whether serial magnetic resonance imaging (MRI), positron emission tomography (PET), other biological markers, and clinical and neuropsychological assessment can be used to measure the progression of mild cognitive impairment (MCI) and early Alzheimer's disease \citep[AD,][]{mueller2005alzheimer}. See more detailed descriptions of ADNI at its website ({adni.loni.ucla.edu}).

The data analyzed here is a set of PET from ADNI standardized using the tool \textit{dcm2niix} \citep{li2016the} for 774 subjects, and the preprocessed image of each subject is the last output of \textit{dcm2niix} with size $160 \times 160 \times 96$. 
Of the 774 subjects in our study, 250 have been diagnosed AD with average age 75.5 (8.2) and 524 are cognitively normal (CN) with average age 75.0 (7.3), where the numbers in brackets are the respective standard deviations.  
To facilitate the computation, we implemented a downsizing procedure \citep{zhang2017classification,zhang2019tensor}. Similar to \cite{reiss2010functional}, we chose slices (i.e., the 51st, 52nd, and 53rd slices from the bottom) of the preprocessed image to analyze. 
Following \cite{zhang2020islet}, a downsizing step was applied to each slice using the method in the MATLAB toolbox TensorReg.
Finally, we obtained the tensor covariate $\mathbf X_i \in \mathbb{R}^{40 \times 40 \times 3}$. 

We encoded the two classes as $1$ (AD) and $-1$ (CN) as in \cite{huang2003linear,khosla20183d}, and fitted various models to estimate ${m}(\cdot)$. 
The data set was randomly split into three different subsets, i.e., a training set, a validation set, and a test set, of size $495$, $124$, and $155$ respectively. 
The prediction procedure on the test set was based upon the sign of the predicted value. In other words, if the output was positive, we predicted $1$ (AD), while if the output is negative, we predicted $-1$ (CN).
The method of selecting the tuning parameters and the corresponding grid points are the same as we used in Section \ref{sec:5} of the main paper for the ECoG dataset. 
To measure the performance of prediction, we used out-of-sample classification accuracy (i.e., 1- misclassification error) based on 10 random splits. The average classification accuracies by TLR-rescaled and BroadcasTR are $0.834$ and $0.869$, respectively. 
Figure \ref{fig:boxplotMspeAdni} depicts boxplots of the classification accuracy using TLR-rescaled and BroadcasTR. It shows that the performance of the proposed BroadcasTR is superior to that of the linear model in terms of prediction. Moreover, our finding that the nonlinearity can help improve the prediction in Alzheimer's disease is confirmed by the existing literature \citep[e.g.,][]{2018Multiscale}.
	
\begin{figure}[!hbt]
	\centering
	\includegraphics[width = 70 mm, height =  48.38765 mm]{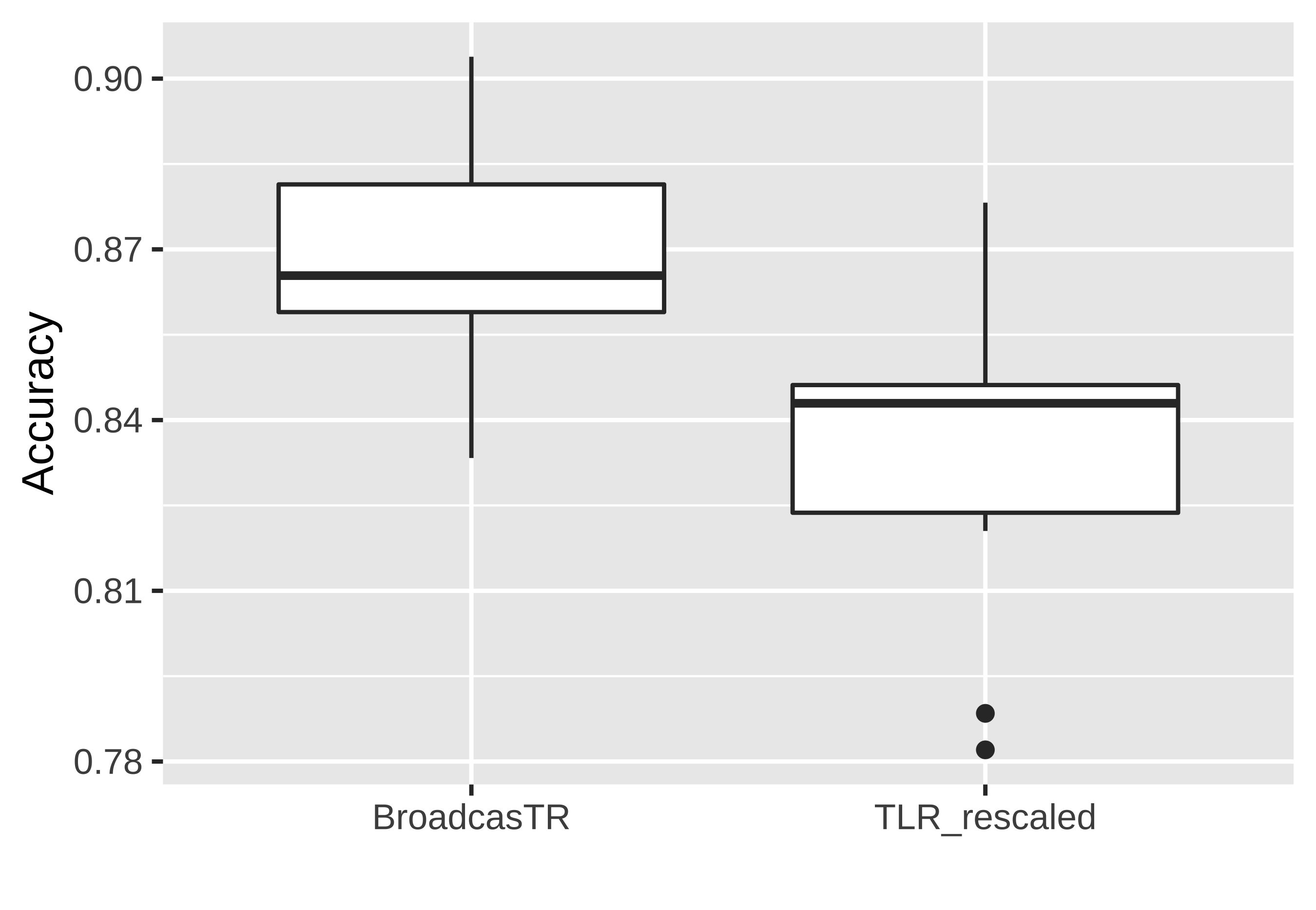}
	\caption{Prediction performance for the ADNI data. The left and right boxplots are respectively the classification accuracy of BroadcasTR and TLR-rescaled based on 10 random splits. 
	\label{fig:boxplotMspeAdni}
}
\end{figure}

\begin{figure}[!h]
		\centering		
	\includegraphics[width=150mm,height=180mm]{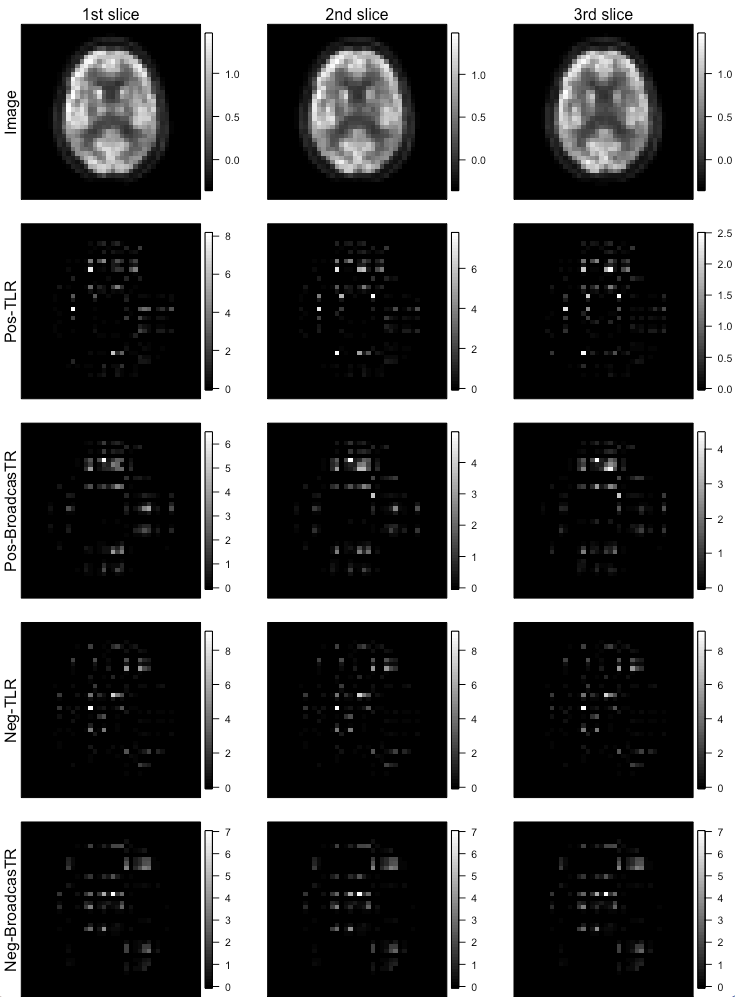}
	\caption{
		Region selection performance for the ADNI data.
		The columns correspond to the slices of the tensor covariate. The rows named ``Pos-'' and ``Neg-'' are the plots of positive and negative contributions of each entry, respectively. The rows named ``-TLR'' and ``-BroadcasTR'' correspond to the tensor linear regression with rescaling strategy and the proposed broadcasted nonparametric model, respectively. 
		\label{region_selection} 
	}
\end{figure}

To further show the benefit of our proposed BroadcasTR in terms of region selection, we refitted two models to all samples with the tuning parameters corresponding to the best predicted results. 
Since the decision rule is the sign of $\hat{m}(\mathbf X)$, we summarized the signed empirical contribution of the $(j_1, j_2, j_3)$-th pixel using $$({1}/{n})\sum_{i} \hat{m}_{j_1, j_2, j_3 }(X_{i,j_1,j_2,j_3})\mathbf{1}_{\{\hat{m}_{j_1, j_2, j_3}(X_{i,j_1,j_2,j_3})>0\}} $$ and $$({1}/{n})\sum_{i}  - \hat{m}_{j_1, j_2, j_3 }(X_{i,j_1,j_2,j_3}) \mathbf{1}_{\{\hat{m}_{j_1, j_2, j_3}(X_{i,j_1,j_2,j_3})< 0\}}$$ to reflect the important regions, where $\mathbf{1}_{\{\cdot \}}$ is an indicator function and $n$ is the sample size. 
Figure \ref{region_selection} depicts the results of the positive and negative contributions for both TLR-rescaled and BroadcasTR methods.
For the positive contribution (towards AD), BroadcasTR exhibits higher concentrations
in the sub-areas of the medial prefrontal cortex and medial temporal lobe in the default mode network \citep{spreng2010default} compared with TLR-rescaled. It has been confirmed that during the attention demanding cognitive tasks, the activities of prefrontal cortex is observed to become weak gradually with the difficulties of the tasks \citep{raichle2001default}. Thus, the prefrontal cortex is extremely vulnerable to neurodegeneration with Alzheimer's disease \citep{salat2001selective}. 
On the other hand, the patterns of medial temporal lobe (MTL) can be used to distinguish subtypes of AD \citep{Ferreira2017DistinctSO}.
Some recent work \citep{deFlores2022MedialTL} strengthened the view that the MTL is an epicenter of AD neurodegeneration, and also suggested that pathology of AD propagates to the neocortex through the networks of MTL. 
For the negative contribution (towards CN), the important regions selected by TLR-rescaled and BroadcasTR are roughly identical.
For example, both methods identify a sub-area of the corpus callosum \citep{davatzikos1996computerized} and Figure \ref{region_selection} reveals that the size of the corpus callosum in CN is larger than that in AD patients. This phenomenon is consistent with the scientific finding that the size of the corpus callosum in AD will be significantly reduced \citep{teipel2002progression}. 
However, the important regions identified by BroadcasTR are more aggregated than those of TLR-rescaled, as observed in the experiments of synthetic data.

We remark that the running time of our proposed BroadcasTR was roughly 7 hours on average to analyze the ADNI dataset by using a computing platform with a 2.2-GHz Intel E5-2650 v4 CPU for one replicate of splitting (i.e., splitting the samples into training/validation/test subsets randomly).
After dividing by the number of tuning parameter values in the grid set, it took BroadcasTR roughly 0.7 minutes on average for a single specification of tuning parameters. 
However, BroadcasTR may not be scalable enough to handle the original images in the ADNI example of size $160 \times 160 \times 96$ within a reasonable running time. 
There are several possible directions to improve the computational efficiency of our method. 
First, we may reduce the computational cost of BroadcasTR by incorporating the idea of stochastic gradient descent \citep{maehara2016expected, kolda2020stochastic}.
Second, we may accelerate the proposed algorithm on a graphics processing unit (GPU) by using the compute unified device architecture (CUDA) programming framework \citep{ahn2020gtensor,Khan2020GENREE}.
Third, some techniques of parallel computing specifically for tensors \citep{Li2019PASTAAP, rolinger2019performance, baskaran2017memory} may also help improve the computational efficiency of BroadcasTR.  
We leave these directions as future research topics.

\section{Data preprocessing of monkey electrocorticography data} \label{app:preprocess}

Our data preprocessing procedure is similar to that in \citet{chao2010long} and
\citet{shimoda2012decoding}. 
Firstly, the original signals were band-pass ­filtered from 0.3 to 499­Hz and re-referenced using a common average reference montage \citep{mcfarland1997spatial}. 
Secondly, we use Morlet wavelet transformation to get the time-frequency representation at time $t$, where there are ten different center frequencies (20Hz, 30Hz, $\dots$, 110 Hz) and ten time lags ($t-900 \,\text{ms}, t-800 \,\text{ms}, \ldots, t-100 \, \text{ms}, t$­).
Finally, after a standardization step ($z$-score) at each frequency over the 10 time lags for each electrode, we get our input tensor of size $64 \times 10 \times 10$, such that the values of each entry lie in $\mathcal{I} = [-2.75, 2.85]$. 
After removing the missing values, the downsized 10,000 samples used in this work were collected as in \citet{hou2015online} by retaining every 6th time points counted from the second minute.

%
%\bibliographystyle{rss}
%\bibliography{tenreg}

\end{document}